\documentclass[11pt]{article}
\usepackage[utf8]{inputenc}
\usepackage{amsfonts,latexsym,amsthm,amssymb,amsmath,amscd,euscript,mathrsfs,graphicx}
\usepackage{framed}
\usepackage{fullpage}
\usepackage{color}
\usepackage{enumitem}
\usepackage{authblk}
\usepackage{adjustbox}

\usepackage{algorithm}
\usepackage[noend]{algpseudocode}
\usepackage{todonotes}


\newtheorem{theorem}{Theorem}[section]
\newtheorem{proposition}[theorem]{Proposition}
\newtheorem{lemma}[theorem]{Lemma}

\theoremstyle{definition}
\newtheorem{defn}[theorem]{Definition}

\theoremstyle{remark}
\newtheorem*{remark}{Remark}


\newcommand{\BE}{\mathbb E}

\newcommand{\BI}{\mathbb I}

\newcommand{\BP}{\mathbb P}

\newcommand{\eps}{\varepsilon}
\newcommand{\TV}{d_{\text{TV}}}
\newcommand{\poly}{\text{poly}}
\newcommand{\Tr}{\text{Tr}}

\title{On Distribution Testing in the Conditional Sampling Model}
\author{Shyam Narayanan\thanks{Email: \texttt{shyamsn@mit.edu}. Research supported by the MIT Akamai Fellowship, the NSF Graduate Fellowship, and the Simons Investigator award.}}
\affil{Massachusetts Institute of Technology}
\date{\today}

\begin{document}

\maketitle

\begin{abstract}
    Recently, there has been significant work studying  distribution testing under the Conditional Sampling model. In this model, a query specifies a subset $S$ of the domain, and the output received is a sample drawn from the distribution conditioned on being in $S$. In this paper, we improve query complexity bounds for several classic distribution testing problems in this model.
    
    First, we prove that tolerant uniformity testing in the conditional sampling model can be solved using $\tilde{O}(\eps^{-2})$ queries, which is optimal and improves upon the $\tilde{O}(\eps^{-20})$-query algorithm of Canonne et al. \cite{CanonneRS15}. This bound even holds under a restricted version of the conditional sampling model called the Pair Conditional Sampling model. Next, we prove that tolerant identity testing in the conditional sampling model can be solved in $\tilde{O}(\eps^{-4})$ queries, which is the first known bound independent of the support size of the distribution for this problem. Next, we use our algorithm for tolerant uniformity testing to get an $\tilde{O}(\eps^{-4})$-query algorithm for monotonicity testing in the conditional sampling model, improving on the $\tilde{O}(\eps^{-22})$-query algorithm of Canonne \cite{Canonne15}. Finally, we study (non-tolerant) identity testing under the pair conditional sampling model, and provide a tight bound of $\tilde{\Theta}(\sqrt{\log N} \cdot \eps^{-2})$ for the query complexity, where the domain of the distribution has size $N$. This improves upon both the known upper and lower bounds in \cite{CanonneRS15}.
\end{abstract}

\newpage

\section{Introduction}

\subsection{Distribution Testing in the Sampling Model}

\textbf{Distribution testing} is a fundamental problem in statistics where the goal is to learn properties of a distribution $\mathcal{D}$ over an $N$-element set $[N]$, given several independent samples from $\mathcal{D}$. Given an arbitrary number of samples, it is quite simple to learn the distribution fully, but we wish to learn properties of $\mathcal{D}$ using a \emph{sublinear} number of samples, and preferably as few samples as possible. Unfortunately, one cannot learn the distribution of $\mathcal{D}$, even up to total variation distance $0.1$, without using at least $\Omega(N)$ samples (see, e.g., \cite[Theorem 25]{BatuFRSW00}). Instead, as in many problems in the field of property testing, one may require that the algorithm output ACCEPT with high probability if the distribution $\mathcal{D}$ has some desired property, but output REJECT with high probability if $\mathcal{D}$ has total variation distance at least $\eps$ from every distribution with the desired property. However, if $\mathcal{D}$ neither has the desired property nor has total variation distance at least $\eps$ from every distribution with the desired property, the algorithm is allowed to output either ACCEPT or REJECT. This freedom is what makes it possible for such an algorithm to determine properties of the distribution while only receiving a sublinear number of samples.

The study of distribution testing in the framework of property testing began nearly two decades ago with \cite{BatuFRSW00}, which initiated a long line of study in this area (see the surveys \cite{RubinfeldSurvey, BalakrishnanWSurvey, KamathThesis, CanonneSurvey} for a comprehensive list of results in the field). Problems which have been studied in the distribution testing framework have included \emph{uniformity testing}, or testing whether an unknown distribution $\mathcal{D}$ is uniform; \emph{identity testing}, or testing whether $\mathcal{D}$ is identical to a known distribution $\mathcal{D}^*$; \emph{equivalence testing}, or testing whether $\mathcal{D}_1$ and $\mathcal{D}_2$ are identical where we have sample access to both $\mathcal{D}_1$ and $\mathcal{D}_2$; and \emph{monotonicity testing}. Problems such as estimating the entropy of $\mathcal{D}$ and the support size of $\mathcal{D}$ have also been well-studied. A final important class of studied problems are the \emph{tolerant} versions of uniformity, identity, and equivalence testing, which mean trying to approximate the total variation distance between $\mathcal{D}$ and the Uniform distribution, $\mathcal{D}$ and $\mathcal{D}^*$, and $\mathcal{D}_1$ and $\mathcal{D}_2$, respectively. Tolerant testing provides much stronger functionality than basic hypothesis testing, as it provides meaningful guarantees even if the underlying distribution is known not to satisfy the hypothesis exactly.

While there exist sublinear algorithms for all of the problems listed above, the optimal algorithms are often only slightly sublinear, so they are not significantly more efficient than naive algorithms. For instance, estimating the entropy, the support size, and the distance from the uniform distribution or a fixed distribution $\mathcal{D}^*$ all have optimal sample complexities of $\Theta(N/\log N)$. Even for the simple problem of uniformity testing, the optimal algorithm requires $\Theta(\sqrt{N})$ samples. This motivated different models where one is allowed additional information besides simply sampling from the distribution. Some models involve sampling from distributions with additional structure. For instance, the distributions may be promised to be monotone \cite{RubinfeldS09}, to be $k$-modal \cite{DaskalakisDSVV13}, to be a low-degree Bayesian network \cite{CanonneDKS17, DaskalakisP17, AcharyaBDK18}, or to have some other property \cite{DaskalakisDK18, GheissariLP18, BezakovaBCSV19, DiakonikolasKP19}. In other settings, one is allowed additional types of queries, such as to either the Probability Mass Function (PMF) or the Cumulative Distribution Function (CDF) of the distribution \cite{BatuDKR05, GuhaMV06, RubinfeldS09, CanonneR14}. A related model is the probability-revealing samples model, where one is allowed samples $(x, p(x))$ where $x \sim \mathcal{D}$ and $p(x) = \BP_{y \sim \mathcal{D}}(y = x)$ \cite{OnakS18}. In this paper, however, we primarily study the \emph{Conditional Sampling Model}, which allows for sampling conditioned on some query set $S \subset [N]$. This model was first studied independently by Canonne et al. \cite{CanonneRS15} and Chakraborty et al. \cite{ChakrabortyFGM16}.

\subsection{The Conditional Sampling Model}

In the conditional sampling model, first developed by \cite{CanonneRS15, ChakrabortyFGM16}, the goal is still to test properties of a distribution $\mathcal{D}$ supported over $[N]$, but now we are given a stronger sampling oracle. This time, for each query, we are allowed to choose a subset $S \subseteq [N]$, and the oracle draws from the distribution $\mathcal{D}$ conditioned on $S$. Formally, we have the following.

\begin{defn} \cite{CanonneRS15} \label{CondDef}
    Fix a distribution $\mathcal{D}$ over $[N].$ A \Call{Cond}{} oracle for $\mathcal{D}$ is defined as follows. The oracle is given as input a query set $S \subseteq [N]$ and outputs an element $i \in S,$ where the probability that element $i$ is returned is $D(i)/D(S),$ where $D(i) = \BP_{x \sim \mathcal{D}} (x = i)$ and $D(S) = \BP_{x \sim \mathcal{D}} (x \in S).$ Moreover, each output of the oracle is independent of all previous calls.
\end{defn}

We note that the above definition only makes sense if $D(S) > 0.$ One way to fix this is to make sure that all probabilities are nonzero by slightly modifying any $i$ such that $D(i) = 0$ to be barely positive. In our case, this will not even matter, because every time that we call $\Call{Cond}{S}$, we will have previously sampled at least one element in $S$, so we know that $D(S) > 0$.

\bigskip

The simplest motivation for the conditional sampling model is that for many traditional distribution testing problems, the standard sampling model cannot provide a constant or poly-logarithmic query algorithm, and sometimes can only provide an $N^{1-o(1)}$ query algorithm. However, as we will see, for many standard distribution testing problems, only a poly-logarithmic or often even a constant number of queries to the conditional sampling oracle is required (assuming a fixed error parameter $\eps$). This makes the conditional sampling model very powerful even though at a first glance it may not look significantly more powerful than the regular sampling oracle. Moreover, for many problems, the full extent of the COND sampling power is not even needed. As we will discuss later, some problems can be solved using an oracle which only samples from either all of $[N]$ or pairs of elements in $[N]$ (PAIRCOND queries); or from other restricted versions of COND.

At the same time, various forms of conditional sampling are supported in multiple applied scenarios. For instance, the BlinkDB database system \cite{AgrawalMPMMS13} uses stratified random sampling to provide approximate answers to SQL aggregation queries over large volumes of data. Their system enables generating random samples of the data satisfying user-specified predicates, which are then used to approximate the aggregates of interests (counts, sums, etc). In particular, PAIRCOND queries considered here correspond to simple disjunctive predicates, where an attribute can have one of two fixed values.

Another appealing aspect of the conditional sampling model, as noted in \cite{CanonneRS15}, is that unlike in the standard sampling model, we are now able to deal with adaptive queries, since we can choose at each step which set $S$ to sample from. This leads to a richer class of algorithms than in the standard sampling model, where the only queries allowed are samples from the full data set $\mathcal{D}$. Therefore, the conditional sampling model leads to a much broader range of potential algorithms.

\subsection{Prior Work in the Conditional Sampling Model}

The conditional sampling model was initially studied by Chakraborty et al. \cite{ChakrabortyFGM16} and Canonne et al. \cite{CanonneRS15} independently. Since then, several works have established fast sublinear algorithms for a variety of distribution testing problems. Chakraborty et al. \cite{ChakrabortyFGM16} proved that uniformity testing with error $\eps$ could be done in $\poly(\eps^{-1})$ queries and that identity testing with error $\eps$ could be done in $\poly(\eps^{-1}, \log^*(N))$ queries. They also proved that computing entropy could be done in $\poly(\eps^{-1}, \log N)$ queries, by providing an algorithm for testing any label-invariant property of a distribution, i.e., a property which was invariant under a permutation of the elements of $[N]$. Canonne et al. \cite{CanonneRS15} gave nearly tight bounds for uniformity testing, by providing an $\tilde{O}(\eps^{-2})$-query algorithm and a nearly matching lower bound of $\Omega(\eps^{-2}).$ \cite{CanonneRS15} also provided an $\tilde{O}(\eps^{-4})$-query algorithm for the identity testing problem, which was improved to a nearly optimal $\tilde{O}(\eps^{-2})$-query algorithm by Falahatgar et al. \cite{FalahatgarJOPS15}. \cite{CanonneRS15} also provided an $\tilde{O}(\eps^{-4} \log^5 N)$-query algorithm for equivalence testing, which was also improved by \cite{FalahatgarJOPS15} to $\tilde{O}(\eps^{-5} \cdot \log \log N)$-queries. The best known lower bound for equivalence testing, however, is $\Omega(\sqrt{\log \log N})$, due to \cite{AcharyaCK18}. \cite{CanonneRS15} also provided an $\tilde{O}(\eps^{-20})$-query algorithm for tolerant uniformity testing. Canonne \cite{Canonne15} also studied monotonicity testing in the conditional sampling model, providing an $\tilde{O}(\eps^{-22})$-query algorithm for testing monotonicity. For an in-depth summary of results in the standard sampling model, conditional sampling model, and other related models, we point the interested reader towards Canonne's survey paper \cite{CanonneSurvey}.

We note that many variants of the conditional sampling model have been studied, most of which are more restrictive versions of the standard conditional sampling model. For instance, \cite{ChakrabortyFGM16, AcharyaCK18, KamathT19} also looked at testing in the nonadaptive conditional sampling model, where queries are not allowed to depend on the previous outputs of the oracle. Another variant is the subcube conditioning problem, where $N = 2^n$ and we treat $[N]$ as the set of vertices of an $n$-dimensional cube, but we are only allowed to conditionally sample from subcubes of the $n$-dimensional cube \cite{BlaisCG17, BhattacharyyaC18, CanonneCKLW19, ChenJLW20}. Two more variants, also investigated in \cite{CanonneRS15, ChakrabortyFGM16}, are the PAIRCOND and INTCOND sampling models. In the PAIRCOND model, all samples must either be sampled from the entire set $[N]$ or from the oracle $\Call{Pcond}{x, y},$ which samples from the conditional distribution of $\{x, y\} \subseteq [N].$ In other words, if we ever conditionally sample from a set $S \subseteq [N],$ either $S = [N]$ or $|S| = 2.$ In the INTCOND model, all conditional samples must be conditionally sampled on some interval $S = [a, b] = \{a, a+1, \dots, b\},$ i.e., we can only sample in intervals. In this work, we will provide new upper bounds in the COND model, as well as new upper and lower bounds in the PAIRCOND model.

\subsection{Our Results}

For a fixed $\mathcal{D}$, the \Call{Samp}{} oracle simply draws a random element $i$ from the distribution $\mathcal{D}.$ Recall the \Call{Cond}{} oracle from Definition \ref{CondDef}, and finally, define the \Call{Pcond}{} oracle as follows. \Call{Pcond}{} takes as input $x, y \in [N],$ and returns $x$ with probability $\frac{D(x)}{D(x)+D(y)}$ and returns $y$ with probability $\frac{D(y)}{D(x)+D(y)}.$ Equivalently, the outputs of \Call{Pcond}{$x, y$} and \Call{Cond}{$\{x, y\}$} have the same distributions. In the PAIRCOND model, one is allowed query access to \Call{Samp}{} and \Call{Pcond}{}.

We now describe the main results of this paper. See Table \ref{Table} for a summary of our main results as well as comparison to previous results (both upper and lower bounds). We primarily focus on tolerant testing, though we provide improved bounds on non-tolerant problems as well.

\begin{table}[ht]
    \centering
    \begin{adjustbox}{max width=\textwidth}
    \begin{tabular}{|c|c|c|c|}
    \hline
        Problem & COND & PAIRCOND & SAMP \\
    \hline
        Uniformity Testing & $\tilde{\Theta}(\eps^{-2})$ \cite{CanonneRS15} & $\tilde{\Theta}(\eps^{-2})$ \cite{CanonneRS15} & $\Theta\left(\frac{\sqrt{N}}{\eps^2}\right)$ \cite{Paninski08} \\
        Is $\mathcal{D}$ uniform? & & & \\ \hline
        Identity Testing & $\tilde{O}(\eps^{-2})$ \cite{FalahatgarJOPS15} & $\tilde{O}\left(\frac{\log^4 N}{\eps^4}\right)$ \cite{CanonneRS15}& $\Theta\left(\frac{\sqrt{N}}{\eps^2}\right)$ \cite{ValiantV17}\\
        Does $\mathcal{D} = \mathcal{D}^*$? & & $\tilde{\Omega}(\sqrt{\log N})$ \cite{CanonneRS15} & \\ 
        & & $\boldsymbol{\tilde{\Theta}\left(\frac{\sqrt{\log N}}{\eps^{2}}\right)}$ & \\ \hline
        Tolerant Uniformity & $\tilde{O}(\eps^{-20})$ \cite{CanonneRS15} & $\tilde{O}(\eps^{-20})$ \cite{CanonneRS15} & $O\left(\frac{1}{\eps^2} \cdot \frac{N}{\log N}\right)$ \cite{ValiantV11} \\
        What is $\TV(\mathcal{D}, \mathcal{U})$? & $\boldsymbol{\tilde{O}(\eps^{-2})}$ & $\boldsymbol{\tilde{O}(\eps^{-2})}$ & $\Omega\left(\frac{1}{\eps^2} \cdot \frac{N}{\log N}\right)$ \cite{JiaoHW16}\\ \hline
        Tolerant Identity & $\tilde{O}\left(\frac{\log^5 N}{\eps^5}\right)$ \cite{CanonneRS15} & & $O\left(\frac{1}{\eps^2} \cdot \frac{N}{\log N}\right)$ \cite{ValiantV11} \\
        What is $\TV(\mathcal{D}, \mathcal{D}^*)$? & $\boldsymbol{\tilde{O}(\eps^{-4})}$ & & $\Omega\left(\frac{1}{\eps^2} \cdot \frac{N}{\log N}\right)$ \cite{JiaoHW16}\\ \hline
        Monotonicity & $\tilde{O}(\eps^{-22})$ \cite{Canonne15} & $\tilde{O}\left(\frac{\log^2 N}{\eps^3} + \frac{\log N}{\eps^4}\right)$ \cite{CanonneSurvey} & $O\left(\frac{\sqrt{N}}{\eps^2}\right)$ \cite{AcharyaDK15} \\
        Is $D(1) \ge \cdots \ge D(N)$? & $\boldsymbol{\tilde{O}(\eps^{-4}})$ & & $\Omega\left(\frac{\sqrt{N}}{\eps^2}\right)$ \cite{Paninski08}\\ \hline
    \end{tabular}
    \end{adjustbox}
    \caption{List of query complexity bounds in the COND, PAIRCOND, and standard sampling (SAMP) models. Our new results are in bold. The distribution is always over some finite domain $[N] = \{1, 2, \dots, N\}$. We note that the $\tilde{O}(\eps^{-2})$ upper bound for Tolerant Uniformity in the COND model is directly implied by our $\tilde{O}(\eps^{-2})$ upper bound for Tolerant Uniformity in the PAIRCOND model, since PAIRCOND is a more restrictive model. Also, even if not stated, we note that all of the problems in the COND and PAIRCOND models are known to have a $\Omega(\eps^{-2})$ query lower bound that follows from the Uniformity Testing lower bound in COND due to \cite{CanonneRS15}.}
    \label{Table}
\end{table}

\medskip

\textbf{Tolerant Uniformity Testing:} The first main result we prove is a nearly optimal query complexity algorithm for tolerant uniformity testing, improving on the $\tilde{O}(\eps^{-20})$-query algorithm of \cite{CanonneRS15}. Like the result in \cite{CanonneRS15}, we only need the weaker PAIRCOND model.

\begin{theorem} \label{TolerantUnif}
    Let $\mathcal{U}$ be the uniform distribution over $[N]$. There is an algorithm \Call{TolerantUnif}{} that, given any distribution $\mathcal{D}$ and access to \Call{Samp}{} and \Call{Pcond}{}, uses $\tilde{O}(\eps^{-2})$ queries and determines the total variation distance $\TV(\mathcal{D}, \mathcal{U})$ up to an additive error of $\eps$ with probability at least $2/3$.
\end{theorem}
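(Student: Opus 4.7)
The plan starts from the identity
\[
\TV(\mathcal{D}, \mathcal{U}) \;=\; \sum_i \bigl(D(i) - 1/N\bigr)^+ \;=\; \BE_{i \sim \mathcal{D}}\!\left[\max\!\left(1 - \frac{1}{N\,D(i)},\, 0\right)\right],
\]
which follows because the positive and negative parts of $D(i)-1/N$ carry equal $\ell_1$ mass and $(D(i)-1/N)^+ = D(i)\cdot(1 - 1/(N D(i)))^+$. Writing $g(i) := \max(1 - 1/(N D(i)),\,0) \in [0,1]$, the algorithm draws $T = \tilde{O}(\eps^{-2})$ samples $i_1, \ldots, i_T \sim \mathcal{D}$ via \Call{Samp}{} and outputs $\tfrac{1}{T}\sum_t \widehat g(i_t)$, where $\widehat g$ is a per-sample estimator built from PAIRCOND queries. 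Because $g$ is bounded in $[0,1]$, any per-sample estimator that is nearly unbiased and has $O(1)$ variance will give additive error $O(\eps)$ in the empirical mean by Hoeffding/Chebyshev.

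The main subroutine is to estimate $g(i)$ at a given $i$ using only $\tilde{O}(1)$ PAIRCOND queries---this is the critical economy that drives the final $\tilde{O}(\eps^{-2})$ bound, since estimating each $g(i)$ to precision $\eps$ per sample would cost $\tilde{O}(\eps^{-4})$ overall. I would classify each outer sample by the approximate value of $f(i) := N\,D(i)$: (a) if $f(i) \ll 1$, set $\widehat g(i) = 0$; (b) if $f(i) \gg 1$, set $\widehat g(i)$ close to $1$; (c) in the intermediate regime $f(i) = \Theta(1)$, produce a constant-variance estimate of $1 - 1/f(i)$. All three cases can be distinguished and estimated by running a handful of PAIRCOND$(i, j_s)$ queries with $j_s$ drawn either from $\mathcal{D}$ (via \Call{Samp}{}) or uniformly from $[N]$, and aggregating the resulting Bernoulli bits of bias $D(i)/(D(i)+D(j_s))$. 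The key point enabling the low per-sample query budget is that we only need each $\widehat g(i_t)$ to have $O(1)$ variance, not $O(\eps)$ precision, because the outer average over $T = \tilde{O}(\eps^{-2})$ samples crushes the variance to $O(\eps^2)$.

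The hard part is bounding the bias of $\widehat g(i)$, averaged over $i \sim \mathcal{D}$, by $O(\eps)$. The difficulty is that $D(i)/(D(i)+D(j))$ is a clean function of $f(i)$ only when $D(j) \approx 1/N$, so atypical $j$'s introduce distribution-dependent biases in the reconstruction of $f(i)$ from PAIRCOND bits. I would control this by truncating to $j$'s whose estimated mass lies in a moderate range (using the same PAIRCOND machinery to filter outliers), and by arguing that atypical elements account for only an $O(\eps)$ fraction of $\mathcal{D}$-mass---otherwise $\TV(\mathcal{D},\mathcal{U})$ is already large, in which case a coarser detector (e.g., collision-style) can output an answer near $1$. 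The remaining work is a bias-variance accounting: showing that (i) misclassification across the three buckets occurs with only $O(\eps)$ probability per sample, (ii) the in-bucket estimator has $O(\eps)$ average bias and $O(1)$ variance, and (iii) concentration of the outer mean closes out the additive $O(\eps)$ error. Combining these gives total query complexity $T \cdot \tilde{O}(1) = \tilde{O}(\eps^{-2})$, matching the theorem.
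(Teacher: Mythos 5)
Your outer identity and sampling scheme are fine, but the central claim---that each sampled $i$ can be handled with $\tilde{O}(1)$ PAIRCOND queries while keeping the $\mathcal{D}$-averaged bias of $\widehat g(i)$ down to $O(\eps)$---is exactly where the argument breaks, and it is the point the paper's proof is built to circumvent. The target $g(i)=\max\bigl(1-\tfrac{1}{N D(i)},0\bigr)$ has a kink at $N D(i)=1$, and the positive part is not optional: without it the expectation collapses to $1-|\mathrm{supp}(\mathcal{D})|/N$, not the TV distance. To realize the kink you must effectively decide which side of $1/N$ the element lies on, and for an element with $N D(i)=1\pm\delta$ this costs $\Omega(\delta^{-2})$ PAIRCOND queries. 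Concretely, take $\mathcal{D}$ with essentially all elements at $N D(i)\in\{1-\delta,1+\delta\}$ for some $\eps\ll\delta\ll 1$: the conditional expectation of any estimator built from $\tilde{O}(1)$ comparison bits is (essentially) a low-degree polynomial in the Bernoulli bias $D(i)/(D(i)+D(j))$, and such a function cannot track the one-sided ramp at scale $\delta$ to accuracy $O(\eps)$; the resulting per-sample bias on this instance is far larger than $\eps$, so your step (ii) fails. Your three-way classification (small / large / intermediate $f(i)$) only disposes of the easy extremes; the whole difficulty lives inside the ``intermediate'' bucket at every scale $\delta\in[\eps,1]$. The paper's resolution is not a constant per-sample budget but a multi-scale trade-off (its Lemmas \ref{ZEstimate} and \ref{EstimateCloseTerms}): elements with $|N D(z)-1|\approx 2^{-t}$ are estimated with $\tilde{O}(2^{2t})$ queries apiece, but only $O\bigl((2^{-t}/\eps)^{2}\bigr)$ samples at that scale are needed, so each of the $O(\log\eps^{-1})$ scales costs $\tilde{O}(\eps^{-2})$, and signs at scales finer than the working resolution are absorbed into the finest bucket rather than resolved. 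If you want to keep your framework, you must replace ``$\tilde{O}(1)$ queries per sample'' with this scale-adaptive allocation.

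A secondary gap: your fallback ``if atypical elements carry more than $O(\eps)$ of the mass, then $\TV(\mathcal{D},\mathcal{U})$ is already large and a coarse detector can output an answer near $1$'' is not sound for a \emph{tolerant} tester. Having most mass on elements with $D(j)$ far from $1/N$ is compatible with, say, $\TV(\mathcal{D},\mathcal{U})=1/2$ (half the elements at $2/N$, the rest negligible), and you must still return that value to additive $\eps$; outputting a number near $1$ is only legitimate when no element has $D(x)\approx 1/N$ at all, which is a much stronger condition. The paper handles this regime separately (its Lemmas \ref{LowHighEltMass} and \ref{ConstantApprox} and the case analysis in the proof of Theorem \ref{TolerantUnif}), by locating threshold elements and estimating $\BP_{y\sim\mathcal{U}}\bigl(D(y)\ge \tfrac1N\bigr)$ and $\BP_{z\sim\mathcal{D}}\bigl(D(z)<\tfrac1N\bigr)$ directly via \Call{Compare}{}; some argument of this kind is needed in your write-up as well.
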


The above result is known to be nearly optimal in both the COND and PAIRCOND models, since even the standard uniformity testing problem, i.e., distinguishing between $\mathcal{D} = \mathcal{U}$ and $\TV(\mathcal{D}, \mathcal{U}) > \eps$, requires at least $\Omega(\eps^{-2})$ queries, even in the COND model.

\medskip

\textbf{Tolerant Identity Testing:} Our next main result is an algorithm for tolerant identity testing in the conditional sampling model, which we describe now.

\begin{theorem} \label{TolerantId}
    Let $\mathcal{D}^*$ be some fixed distribution over $[N]$. There is an algorithm \Call{TolerantId}{} that, given any distribution $\mathcal{D}$ and access to \Call{Cond}{}, uses $\tilde{O}(\eps^{-4})$ queries and determines the total variation distance $\TV(\mathcal{D}, \mathcal{D}^*)$ up to an additive error of $\eps$ with probability at least $2/3$.
\end{theorem}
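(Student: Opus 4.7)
The plan is to bootstrap Theorem~\ref{TolerantUnif} into a tolerant identity tester through a bucketing reduction driven by the known target $\mathcal{D}^*$. The principle is that when $[N]$ is partitioned into level sets of $D^*$, the target $\mathcal{D}^*$ restricted to each bucket is essentially uniform, so running tolerant uniformity on each bucket recovers the within-bucket deviation of $\mathcal{D}$ from $\mathcal{D}^*$; combining these with the bucket masses of $\mathcal{D}$ then reconstructs $\TV(\mathcal{D},\mathcal{D}^*)$.

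Concretely, I would partition $[N]$ into $K=O(\log(N)/\eps)$ buckets $B_k=\{i:D^*(i)\in((1+\eps)^{-k-1},(1+\eps)^{-k}]\}$, together with a residual bucket for $i$ with $D^*(i)<\eps/N$ (whose total $D^*$-mass is at most $\eps$ and can be absorbed into error). Write $p_k=D(B_k)$, $q_k=D^*(B_k)$, and let $\mathcal{D}^k,\mathcal{D}^{*,k}$ denote the conditional distributions on $B_k$. By the $(1+\eps)$-bucketing, $\TV(\mathcal{D}^{*,k},\mathcal{U}_{B_k})=O(\eps)$ for each $k$, where $\mathcal{U}_{B_k}$ is uniform on $B_k$. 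Using the pointwise identity $p_k D^k(i)-q_k D^{*,k}(i)=p_k(D^k(i)-D^{*,k}(i))+(p_k-q_k)D^{*,k}(i)$, one obtains
\[
2\TV(\mathcal{D},\mathcal{D}^*)=\sum_k\sum_{i\in B_k}|p_k D^k(i)-q_k D^{*,k}(i)|\le \sum_k|p_k-q_k|+2\sum_k p_k\,\TV(\mathcal{D}^k,\mathcal{D}^{*,k}),
\]
so a good estimate of $\tfrac12\sum_k|\hat p_k-q_k|+\sum_k \hat p_k \hat T_k$, with $\hat T_k$ an estimate of $\TV(\mathcal{D}^k,\mathcal{U}_{B_k})$, approximates $\TV(\mathcal{D},\mathcal{D}^*)$ up to the $O(\eps)$ bucketing slack.

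The algorithm then (i) draws $\tilde O(K/\eps^2)$ calls to \Call{Samp}{} to obtain empirical $\hat p_k$ satisfying $\sum_k|\hat p_k-p_k|=O(\eps)$; (ii) discards buckets whose empirical mass lies below $\eps/K$ (these contribute only $O(\eps)$ in aggregate); and (iii) on each surviving bucket invokes Theorem~\ref{TolerantUnif} on the induced \Call{Cond}{} oracle at precision $\beta_k$ to obtain $\hat T_k\approx \TV(\mathcal{D}^k,\mathcal{U}_{B_k})$. The output is $\tfrac12\sum_k|\hat p_k-q_k|+\sum_k \hat p_k\hat T_k$.

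The main obstacle is tuning the per-bucket precisions $\beta_k$, together with the union-bound amplification, so that the overall query count reaches $\tilde O(\eps^{-4})$. Since the within-bucket error contribution to the final estimate is $\sum_k p_k\beta_k$ and the per-bucket cost of Theorem~\ref{TolerantUnif} is $\tilde O(\beta_k^{-2})$, the allocation must simultaneously keep $\sum_k p_k\beta_k=O(\eps)$ and $\sum_k \beta_k^{-2}$ small; once one folds in the $\log K$ amplification needed so that all $K$ uniformity subroutines succeed simultaneously, together with the $\tilde O(K/\eps^2)$ cost of estimating the $\hat p_k$'s, the bookkeeping lands at $\tilde O(\eps^{-4})$. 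A secondary concern---that \Call{TolerantUnif}{} requires $D(B_k)>0$---is handled automatically by the empirical-mass threshold, since any bucket we query has at least one observed sample.
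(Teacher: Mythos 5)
Your reduction has two genuine gaps, one structural and one quantitative. Structurally, the inequality you rely on,
\[
2\TV(\mathcal{D},\mathcal{D}^*)\;\le\;\sum_k|p_k-q_k|+2\sum_k p_k\,\TV(\mathcal{D}^k,\mathcal{D}^{*,k}),
\]
is only an upper bound, and its slack can be a constant, so the estimator $\tfrac12\sum_k|\hat p_k-q_k|+\sum_k\hat p_k\hat T_k$ is not an $\eps$-additive approximation of $\TV(\mathcal{D},\mathcal{D}^*)$ even with exact subroutine outputs. Concretely, take $N=3$ with $D^*=(1/4,1/4,1/2)$, so the level sets are $B_1=\{1,2\}$ and $B_2=\{3\}$, and let $D=(1/4,0,3/4)$. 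Then $\TV(\mathcal{D},\mathcal{D}^*)=1/4$, while your formula with exact values gives $\tfrac12(|1/4-1/2|+|3/4-1/2|)+\tfrac14\cdot\tfrac12=3/8$: the between-bucket and within-bucket discrepancies partially cancel pointwise, and additively combining their absolute values double-counts. Any correct combination must estimate, per bucket, a quantity that couples the mass ratio $p_k/q_k$ with the within-bucket shape (e.g.\ $\sum_{i\in B_k}\min(D(i),D^*(i))$), which plain tolerant uniformity on $\mathcal{D}^k$ does not provide. This is essentially why the paper works with $\sum_i\min(D(i),D^*(i))=\BE_{i\sim\mathcal{D}^*}[\min(1,D(i)/D^*(i))]$ and builds a multiplicative estimator of $D(i)/D^*(i)$ for $i\sim\mathcal{D}^*$ rather than summing bucketwise distances.

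Quantitatively, your scheme cannot reach a bound independent of $N$, which is the main point of the theorem. You use $K=\Theta(\log N/\eps)$ buckets, and step (i) alone (empirical $\hat p_k$ with $\sum_k|\hat p_k-p_k|=O(\eps)$) needs $\Omega(K/\eps^2)=\Omega(\log N\cdot\eps^{-3})$ samples; any allocation of per-bucket precisions also carries factors of $K$ (e.g.\ a single heavy bucket forces $\beta_k\approx\eps/K$ if the other buckets are to be served at all, costing $K^2\eps^{-2}$). So the bookkeeping lands at $\tilde O(\poly(\log N)\cdot\poly(\eps^{-1}))$, roughly the previously known regime, not $\tilde O(\eps^{-4})$. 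The paper avoids the $\log N$ bottleneck by never comparing against a fixed bucketing of all of $[N]$: for a sampled $i\sim\mathcal{D}^*$ it partitions only $[i-1]$ into $O(1/P^*(i))$ pieces of $D^*$-mass comparable to $D^*(i)$, uses \Call{Cond}{} on unions of these pieces to estimate $P(i)/P([i])$ and $P([i])$ multiplicatively, and handles the failure cases by an inductive renormalization (removing a set of $\mathcal{D}^*$-mass at least $1/3$ and recursing on the conditional distributions), which is where the full power of \Call{Cond}{} beyond fixed-bucket conditioning is needed.
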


While tolerant identity testing in the conditional sampling model has never been \emph{directly} addressed in the previous literature, we note that a $\tilde{O}(\log^5 N/\eps^5)$-query algorithm is implied by work of \cite{CanonneRS15, CanonneR14}. To briefly describe why, \cite{CanonneRS15} provides a $\tilde{O}(\log^5 N/\eps^3)$-query algorithm that they call APPROX-EVAL that closely approximates the probability of a randomly chosen element in the distribution with very high probability. The work of \cite{CanonneR14} is able to solve tolerant identity testing in $\eps^{-2}$ queries to EVAL, and this can be modified to just needing APPROX-EVAL, so the total query complexity is $\tilde{O}(\log^5 N/\eps^5).$ 

While our bounds are not optimal, we provide the first algorithm with query complexity that does not depend on the support size $N$ at all (even logarithmically) and grows only polynomially in $\eps^{-1}$. Moreover, our polynomial dependence on $\eps$ is quite reasonable. However, unlike the tolerant uniformity case, we require the full power of the COND model: indeed, there is a $\sqrt{\log N}$ lower bound even in the standard identity testing for PAIRCOND \cite{CanonneRS15}, which we will improve to $\sqrt{\log N}/\eps^2$. A natural open question is whether the bound of $\tilde{O}(\eps^{-4})$ for tolerant identity testing in COND can be improved to $\tilde{O}(\eps^{-2})$.

\medskip

\textbf{Monotonicity testing:} Next, we improve the best known algorithm for monotonicity testing in COND, using our algorithm for Theorem \ref{TolerantUnif}. In monotonicity testing, we require that algorithm accepts distributions $\mathcal{D}$ that are monotone, i.e., distributions with $D(1) \ge D(2) \ge \cdots \ge D(N)$, with high probability. However, they must reject distributions that are far from monotone, i.e., distributions $\mathcal{D}$ such that $\TV(\mathcal{D}, \mathcal{D}^*) \ge \eps$ for all monotone distributions $\mathcal{D}^*$, with high probability.

The best known algorithm for monotonicity testing in COND, due to Canonne \cite[Theorem 4.1]{Canonne15}, is divided into two subroutines. The first subroutine makes $O(\eps^{-2})$ calls to computing the total variation distance, up to error $\eps$, between some conditional subset and the uniform distribution on that subset. This previously required $O(\eps^{-22})$ queries (since previously each call to tolerant uniformity took $O(\eps^{-20})$ queries) but now only requires $O(\eps^{-4})$ queries. The second subroutine does not involve any calls to tolerant uniformity testing, and uses $\tilde{O}(\eps^{-8})$ queries. Thus, we already have an improved $\tilde{O}(\eps^{-8})$ query algorithm. But with a few additional insights, we can improve the second subroutine to a $\tilde{O}(\eps^{-4})$ query complexity, giving us the following theorem.

\begin{theorem} \label{TestMonotone}
    There is an algorithm \Call{TestMonotone}{} that, given any distribution $\mathcal{D}$ over $[N]$ and access to \Call{Cond}{}, if $\mathcal{D}$ is monotone, the algorithm outputs ACCEPT with probability at least $2/3,$ and if $\mathcal{D}$ is $\eps$-far from all monotone distributions, the algorithm outputs REJECT with probability at least $2/3.$ Moreover, the algorithm uses $\tilde{O}(\eps^{-4})$ queries.
\end{theorem}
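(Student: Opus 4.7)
The plan is to follow the two-subroutine framework of Canonne \cite[Theorem 4.1]{Canonne15} and tighten both components so that each uses $\tilde{O}(\eps^{-4})$ queries. Canonne decomposes $[N]$ using a Birgé-style partition into $O(\log N / \eps)$ intervals and then (i) performs $O(\eps^{-2})$ tolerant uniformity tests, each comparing $\mathcal{D}$ conditioned on some subset against the uniform distribution on that subset, and (ii) runs an auxiliary routine that does not invoke tolerant uniformity and previously cost $\tilde{O}(\eps^{-8})$ queries.

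For the first subroutine, I would simply replace the old $\tilde{O}(\eps^{-20})$-query tolerant uniformity tester with the new \Call{TolerantUnif}{} primitive from Theorem \ref{TolerantUnif}, which uses $\tilde{O}(\eps^{-2})$ queries per call (boosted to success probability $1 - O(\eps^{2})$ via $O(\log \eps^{-1})$ repetitions, so that a union bound over the $O(\eps^{-2})$ calls succeeds with constant probability). This immediately brings the first subroutine from $\tilde{O}(\eps^{-22})$ down to $\tilde{O}(\eps^{-2}) \cdot \tilde{O}(\eps^{-2}) = \tilde{O}(\eps^{-4})$, with no change required to the correctness analysis, since the new tester satisfies the same input-output specification as the old.

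For the second subroutine, the approach is to open up Canonne's analysis, identify the source of the $\eps^{-8}$ factor, and replace the internal estimators with tighter primitives. Two substitutions look promising. First, whenever the routine internally compares $\mathcal{D}$ restricted to an interval against a ``flat'' target, that comparison can itself be carried out by \Call{TolerantUnif}{} at cost $\tilde{O}(\eps^{-2})$ rather than by a heavier estimator. Second, estimates of the total mass $\mathcal{D}(I_j)$ of each Birgé interval, which appear to be computed to pointwise additive accuracy $\eps^2$, can be relaxed to accuracy $\eps$ by aggregating errors across the $O(\log N / \eps)$ intervals in $L_1$ rather than pointwise; this is sound because any violation of monotonicity by more than $\eps$ must already contribute $\Omega(\eps)$ of $L_1$ mass, which survives the coarser estimate. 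Each of these two refinements saves a factor of $\eps^{-2}$, and together they bring the second subroutine down to $\tilde{O}(\eps^{-4})$.

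Combining the two bounds yields total query complexity $\tilde{O}(\eps^{-4})$, and correctness follows from Canonne's reduction together with a union bound over the two subroutines. The main obstacle will be the second subroutine: carefully recasting the internal estimators and checking that the soundness and completeness guarantees of the reduction to Birgé-flattening continue to hold when each internal estimate is made at the coarser $\eps$-scale instead of $\eps^{2}$-scale. All other steps are either direct substitutions or standard amplification arguments.
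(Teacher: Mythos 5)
The first half of your plan coincides with the paper: Canonne's first subroutine (\cite[Lemma 4.10]{Canonne15}) makes $O(\eps^{-2})$ calls to a tolerant uniformity tester on the conditional distributions $\mathcal{D}[I_k]$, and plugging in the $\tilde{O}(\eps^{-2})$-query tester of Theorem \ref{TolerantUnif} brings that step to $\tilde{O}(\eps^{-4})$ with essentially no change to the correctness analysis.

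The genuine gap is in the second subroutine, which is where the new content of the theorem lies. Canonne's second step does not estimate interval masses, nor does it compare restricted distributions against ``flat'' targets; it tests whether the reduced distribution $\mathcal{Q}=\mathcal{D}_{\alpha}^{\text{red}}$ over $[\ell]$ satisfies the exponential property $Q(i)\le(1+\alpha)Q(i-1)$, by sampling $i\sim\mathcal{Q}$ and running \Call{Compare}{} on the pair $(i,i-1)$. The $\tilde{O}(\eps^{-8})$ cost stems from Canonne's structural lemma, which only guarantees that a random $i\sim\mathcal{Q}$ violates the property by a margin $\Theta(\eps^{3})$ with probability $\eps^{2}$: one then needs $\eps^{-2}$ samples, each requiring a \Call{Compare}{} call at accuracy $\Theta(\eps^{3})$, i.e.\ $\tilde{O}(\eps^{-6})$ queries per sample. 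Neither of your proposed substitutions addresses this bottleneck: there is no internal tolerant-uniformity call to swap out, and no $\eps^{2}$-accuracy mass estimate to coarsen (aggregating $\eps$-additive errors over the $O(\log N/\eps)$ intervals would in any case blow up the error). What the paper actually proves is a strictly stronger structural lemma: if $\TV(\mathcal{Q},\mathcal{P}_{\alpha})\ge\eps$, then
\[
\BE_{i\sim\mathcal{Q}}\left[\max\left(0,\,1-(1+\alpha)\tfrac{q_{i-1}}{q_i}\right)\right]\ \ge\ \frac{\alpha\eps}{2},
\]
shown by taking the upper envelope $q_i'=\max_{s\ge i} q_s/(1+\alpha)^{s-i}$, telescoping, and renormalizing. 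Combined with a level-set (reverse-Markov) decomposition, this yields dyadic scales $\tau,\beta$ with $\tau=\Omega(\eps^{2})$, $\tau\beta=\tilde{\Omega}(\eps^{2})$, and $\BP(\text{violation}\ge\tau)\ge\beta$; looping over the $O(\log^{2}\eps^{-1})$ scales, each costs $O(\beta^{-1}\log\eps^{-1})$ samples times $O(\tau^{-2}\log\eps^{-1})$ \Call{Compare}{} queries, i.e.\ $\tilde{O}\bigl(\tau^{-1}(\tau\beta)^{-1}\bigr)=\tilde{O}(\eps^{-4})$. Without this strengthened lemma (or an equivalent), your outline does not reach the claimed bound; indeed your own text defers exactly this step as ``the main obstacle,'' so the proposal as written is incomplete precisely where the proof is nontrivial.
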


A natural open problem is whether this upper bound can be improved to $\tilde{O}(\eps^{-2})$, which matches the known lower bound for monotonicity testing \cite{CanonneRS15, Canonne15}.

\medskip

\textbf{Identity/Uniformity Testing in PAIRCOND:} Our final result is a nearly optimal query complexity algorithm for identity testing in the PAIRCOND model, as well as a nearly matching lower bound, which we now state formally.

\begin{theorem} \label{IdPcondUpper}
    Let $\mathcal{D}^*$ be some fixed distribution over $[N]$. There is an algorithm \Call{PcondId}{} that, given any distribution $\mathcal{D}$ and access to \Call{Samp}{} and \Call{Pcond}{}, if $\mathcal{D} = \mathcal{D}^*$, the algorithm outputs ACCEPT with probability at least $2/3,$ and if $\TV(\mathcal{D}, \mathcal{D}^*) \ge \eps$, the algorithm outputs REJECT with probability at least $2/3.$ Moreover, \Call{PcondId}{} uses $\tilde{O}(\sqrt{\log N} \cdot \eps^{-2})$ queries.
\end{theorem}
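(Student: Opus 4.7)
The plan is to reduce PAIRCOND identity testing to a pair of tests---one on the bucket masses (which is a \Call{Samp}{}-only identity test on a support of size $O(\log N)$) and one on the within-bucket conditional distributions (which needs \Call{Pcond}{})---so that the $\sqrt{\log N}$ factor emerges from applying the tight Valiant--Valiant identity tester \cite{ValiantV17} to the reduced bucket-level problem.

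First I would partition $[N]$ into buckets $B_1, \dots, B_k$, where $B_j = \{i : D^*(i) \in (2^{-j-1}, 2^{-j}]\}$, and absorb the very light elements (those with $D^*(i) \le \eps/(cN)$ for a sufficiently large constant $c$) into a single tail bucket whose mass is $O(\eps)$ and therefore negligible. Since $k = O(\log(N/\eps))$ and all probabilities within a bucket are within a factor of two, $\mathcal{D}^*|_{B_j}$ is constant-factor close to $\mathrm{Unif}(B_j)$. A standard decomposition shows that $\TV(\mathcal{D}, \mathcal{D}^*) \ge \eps$ implies at least one of: \textbf{(a)} $\sum_j |D(B_j) - D^*(B_j)| = \Omega(\eps)$, or \textbf{(b)} $\sum_j D(B_j) \cdot \TV(\mathcal{D}|_{B_j}, \mathcal{D}^*|_{B_j}) = \Omega(\eps)$.

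Case (a) is handled directly by invoking the optimal $\tilde{O}(\sqrt{k}/\eps^2) = \tilde{O}(\sqrt{\log N}/\eps^2)$-sample identity tester of \cite{ValiantV17} on the induced bucket-level distribution, which needs only \Call{Samp}{} queries (trivially available in PAIRCOND). Case (b) is where \Call{Pcond}{} enters, and the idea is to piggyback on the very same \Call{Samp}{} draws used for (a): each sample $x$ lands in bucket $B_{j(x)}$ with probability $D(B_{j(x)})$, which is precisely the weighting appearing in (b). For every such $x$, I would pick a uniformly random $y \in B_{j(x)}$ (no oracle call needed, since the buckets are determined by the known $\mathcal{D}^*$) and invoke \Call{Pcond}{$x,y$}: if $\mathcal{D}|_{B_{j(x)}} = \mathrm{Unif}(B_{j(x)})$ then the coin lands on $x$ with probability exactly $\tfrac{1}{2}$, so any persistent bias across samples signals within-bucket deviation.

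The principal obstacle is guaranteeing that this unified statistic detects the $\Omega(\eps)$-weighted deviation of (b) while charging only a polylogarithmic number of \Call{Pcond}{} calls per \Call{Samp}{} draw, so that the overall query budget remains $\tilde{O}(\sqrt{\log N}/\eps^2)$. A bucket-by-bucket reduction to the non-tolerant uniformity tester naively costs $\tilde{O}(\log N/\eps^2)$, losing the $\sqrt{\log N}$ saving, so I would instead aggregate the \Call{Pcond}{} coin flips across all \Call{Samp}{} hits using an $L_2$-based statistic in the spirit of the tolerant uniformity tester (Theorem \ref{TolerantUnif}), then convert back to the $L_1$-type weighted sum in (b) via Cauchy--Schwarz. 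Buckets with mass $D^*(B_j) \ll \eps/\sqrt{\log N}$, which receive too few \Call{Samp}{} hits to be tested reliably, are absorbed into the tail bucket of case (a); the constant-factor gap between $\mathcal{D}^*|_{B_j}$ and $\mathrm{Unif}(B_j)$ is absorbed into the $\Omega(\eps)$ threshold; and a Bernstein concentration argument seals correctness with probability at least $2/3$.
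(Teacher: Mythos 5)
Your high-level framework (bucket $[N]$ by the known $D^*$-values into $O(\log(N/\eps))$ groups plus a light tail, run the Valiant--Valiant identity tester on the bucket-level distribution using only \Call{Samp}{}, and use \Call{Pcond}{} within buckets against uniformly drawn partners) is exactly the paper's, but your within-bucket test has a genuine flaw. You propose to detect bias of the \Call{Pcond}{$x,y$} coin away from $\tfrac12$ and to ``absorb the constant-factor gap between $\mathcal{D}^*|_{B_j}$ and $\mathrm{Unif}(B_j)$ into the $\Omega(\eps)$ threshold.'' That cannot work: in the null case $\mathcal{D}=\mathcal{D}^*$ the coin lands on $x$ with probability $D^*(x)/(D^*(x)+D^*(y))$, which within a bucket can be anywhere in $[1/3,2/3]$ --- a constant-size deviation from $\tfrac12$ that dwarfs any $O(\eps)$ threshold, so your tester would reject $\mathcal{D}^*$ itself. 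The comparison must be made pair by pair against the known reference value $D^*(x)/(D^*(x)+D^*(y))$, which is what the paper does; the accompanying analytic ingredient your sketch lacks is the lemma that when the bucket reference is within a factor $2$ of uniform and $\TV(\mathcal{D}|_{B},\mathcal{D}^*|_{B})\ge\eps$, then for $i\sim\mathcal{D}|_B$ and $j\sim\mathrm{Unif}(B)$ one has $\BE\left|\frac{D(i)}{D(i)+D(j)}-\frac{D^*(i)}{D^*(i)+D^*(j)}\right|\ge\eps/16$ (Lemma \ref{ExpectationComputation}); nothing in your proposal substitutes for this.

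Your plan for controlling the \Call{Pcond}{} cost --- piggyback a few coin flips per \Call{Samp}{} draw, aggregate with an $L_2$-type statistic, and convert back via Cauchy--Schwarz --- also does not reach the stated bound. The guarantee above is only on the expected \emph{absolute} deviation, and the per-pair deviations carry both signs, so a signed aggregate can cancel; an unbiased squared or cross-product aggregate built from $O(1)$ flips per pair has signal $\Theta(R\cdot\BE[\delta^2])=\Omega(R\eps^2)$ against noise fluctuation $\Theta(\sqrt{R})$, forcing $R=\Omega(\eps^{-4})$ pairs. The paper sidesteps this by exploiting that the test is non-tolerant: under the null every per-pair deviation is exactly $0$, so it applies the level-set bound (Proposition \ref{ReverseMarkov}) to obtain dyadic $(\alpha,\beta)$ with $\alpha=\Omega(\eps)$, $\alpha\beta=\tilde{\Omega}(\eps)$, and $\BP(X\ge\alpha)\ge\beta$; for each of the $O(\log^2\eps^{-1})$ levels it draws $\tilde{O}(\beta^{-1})$ fresh sample pairs and spends $\tilde{O}(\alpha^{-2})$ \Call{Pcond}{} calls per pair, for $\tilde{O}(\eps^{-2})$ total \Call{Pcond}{} queries. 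No piggybacking and no Cauchy--Schwarz conversion are needed, since this phase is already cheaper than the $\tilde{O}(\sqrt{\log N}\cdot\eps^{-2})$ \Call{Samp}{} phase, which is where the $\sqrt{\log N}$ factor comes from.
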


\begin{theorem} \label{IdPcondLower}
    There exists a distribution $\mathcal{D}^*$ with the following property. If any algorithm that, given access to \Call{Samp}{} and \Call{Pcond}{}, outputs ACCEPT with probability at least $2/3$ if $\mathcal{D} = \mathcal{D}^*$ and outputs REJECT with probability at least $2/3$ if $\TV(\mathcal{D}, \mathcal{D}^*) \ge \eps$, then the algorithm must make at least $\Omega\left(\sqrt{\frac{\log N}{\log (\eps^{-1} \cdot \log N)}} \cdot \eps^{-2} \right)$ queries.
\end{theorem}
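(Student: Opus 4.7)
To prove the lower bound, I would use the two-point $\chi^2$/Ingster method on a carefully bucketed hard family. Partition $[N]$ into $L$ buckets $B_1,\dots,B_L$ of sizes $n_j$, and let $\mathcal{D}^*$ assign probability $p_j$ to every element of $B_j$, with $p_j = p_1\rho^{-(j-1)}$ a geometric sequence (integer ratio $\rho\ge 2$) and $n_jp_j=1/L$ so that each bucket carries mass $1/L$. The packing constraint $\sum_j n_j \le N$ then forces $L\log\rho \lesssim \log N$. The hard alternatives $\{\mathcal{D}_\sigma\}$ are indexed by $\sigma\in\{\pm 1\}^L$ with $\sum_j\sigma_j=0$, where every element in $B_j$ is re-weighted to $p_j(1+2\eps\sigma_j)$; normalization is preserved and $\TV(\mathcal{D}_\sigma,\mathcal{D}^*)=\eps$. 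By Yao's principle, it suffices to show that no $q$-query algorithm can distinguish $\mathcal{D}^*$ from a uniformly random $\mathcal{D}_\sigma$.

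The per-query analysis splits cleanly thanks to the per-bucket sign structure. A \Call{Samp}{} query returns a bucket index that is uniformly distributed under $\mathcal{D}^*$ and distributed as $((1\pm 2\eps)/L)_j$ under $\mathcal{D}_\sigma$ (the within-bucket output is uniform under both laws, since all elements of $B_j$ share a probability in each $\mathcal{D}_\sigma$); by a Paninski-style mixture $\chi^2$ computation, at least $\Omega(\sqrt{L}/\eps^2)$ \Call{Samp}{} queries are required. A \Call{Pcond}{$x,y$} query with $x,y$ in the same bucket is an unbiased coin under both hypotheses and gives zero information. A \Call{Pcond}{$x,y$} query across buckets $j<k$ is a Bernoulli whose parameter differs from the $\mathcal{D}^*$ value $\rho^{k-j}/(\rho^{k-j}+1)$ by $\Theta(\eps(\sigma_j-\sigma_k)/\rho^{k-j})$, maximized for adjacent buckets at order $\eps/\rho$.

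Assembling these per-query terms into Ingster's identity
\[
\chi^2\!\bigl(\mathbb{E}_\sigma Q_\sigma, P\bigr) = \mathbb{E}_{\sigma,\sigma'}\!\Bigl[\prod_i \bigl(1+\chi_i(\sigma,\sigma')\bigr)\Bigr] - 1,
\]
where $P,Q_\sigma$ are the transcript laws under $\mathcal{D}^*$ and $\mathcal{D}_\sigma$, the first-order terms vanish since $\mathbb{E}[\sigma]=0$. The second-order expansion then shows that the best single-pair \Call{Pcond}{} strategy accumulates only $\chi^2 = O(q^2\eps^4/\rho^2)$ (so $q=\Omega(\rho/\eps^2)$), that spreading across many pairs only dilutes this signal, and that the \Call{Samp}{} contribution recovers the Paninski scaling. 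Adaptive algorithms are folded in via the standard chain-rule-for-KL reduction to nonadaptive marginals. The net bound is $\Omega(\min(\sqrt{L},\rho)\cdot\eps^{-2})$, which I would optimize by balancing $\rho\approx\sqrt{L}$ subject to $L\log\rho\le\log N$; tuning $L\approx\log N/\log(\eps^{-1}\log N)$ then yields the claimed bound, with the $\log(\eps^{-1}\log N)$ factor tracking the regime where $\rho$ must be at least $\Omega(1/\eps)$ so that the cross-bucket \Call{Pcond}{} bias remains calibrated against the Paninski scale.

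The main technical obstacle is controlling the cross-correlation terms in the Ingster expansion when multiple \Call{Pcond}{} queries share a bucket index $j$: those terms do not factor cleanly over queries, so a careful Fourier/moment accounting on the random $\pm 1$ signs is needed to confirm that no ``spread'' strategy can beat either the single-pair \Call{Pcond}{} bound or the Paninski \Call{Samp}{} bound by more than the claimed logarithmic slack.
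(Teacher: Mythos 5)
Your construction is close in spirit to the paper's (geometrically growing buckets, equal bucket masses, bucket-level $\pm\eps$ perturbations, and a reduction of the \Call{Samp}{} part to Paninski's uniformity bound over the buckets), but there is a genuine gap at the heart of the argument: the treatment of adaptivity and of the \Call{Pcond}{} queries. The ``standard chain-rule-for-KL reduction to nonadaptive marginals'' you invoke is not a valid step here: against a \emph{mixture} of alternatives, the transcript law does not factorize over queries, the first-order cancellation ``$\BE[\sigma]=0$'' fails once query choices are correlated with $\sigma$ through earlier responses, and handling adaptive conditional-query algorithms is precisely the hard part of such lower bounds. The paper sidesteps this entirely by choosing the ratio between adjacent bucket probabilities to be $K=\Theta(\eps^{-2}\log N)$, i.e., \emph{much larger than the query budget} $q=o(\sqrt{R}\,\eps^{-2})$: then any cross-bucket \Call{Pcond}{} answer is the smaller-bucket element except with probability $O(1/K)$, so all \Call{Pcond}{} calls can be simulated without access to $\mathcal{D}$ (a hybrid argument costing $O(q/K)=o(1)$ in total variation), and the problem reduces to a purely nonadaptive \Call{Samp}{} problem where Paninski applies. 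Your plan deliberately takes the ratio $\rho$ small (of order $\sqrt{L}$), which destroys this simulation property: cross-bucket \Call{Pcond}{} answers then genuinely carry $\Theta(\eps/\rho)$ of signal per query, and you would need a full adaptive Ingster-type analysis with correlated signs across overlapping bucket pairs --- exactly the ``main technical obstacle'' you acknowledge but do not resolve. As written, the proposal does not contain a proof of the key indistinguishability claim.

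The parameter accounting also does not deliver the stated bound. Balancing $\rho\approx\sqrt{L}$ against $L\log\rho\lesssim\log N$ gives $L\approx\log N/\log\log N$, not $L\approx\log N/\log(\eps^{-1}\log N)$, and your closing remark that $\rho$ ``must be at least $\Omega(1/\eps)$'' is asserted rather than derived --- indeed your own single-pair calculation ($q=\Omega(\rho/\eps^2)$) would not force it. In the paper the $\log(\eps^{-1}\log N)$ factor arises for a concrete structural reason: the adjacent-bucket ratio must exceed the query budget $\approx\eps^{-2}\sqrt{R}$ for the simulation/hybrid argument to work, which forces $K=\Theta(\eps^{-2}\log N)$ and hence $R=\Theta(\log N/\log(\eps^{-1}\log N))$ buckets. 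To repair your argument along the paper's lines, you should take the ratio between consecutive buckets to be $\Theta(\eps^{-2}\log N)$ rather than $\sqrt{L}$, replace the $\chi^2$ machinery for the \Call{Pcond}{} queries by the hybrid argument showing those queries are simulatable up to $O(q/K)$ total variation, and then invoke Paninski only for the bucket-index samples.
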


While there exists a previously known $\tilde{O}(\eps^{-2})$-query algorithm in the COND model \cite{FalahatgarJOPS15}, their algorithm requires COND instead of just PAIRCOND. Our upper and lower bounds improve upon the results of \cite{CanonneRS15}, which provides an $O(\log^4 N \cdot \eps^{-4})$-query algorithm and a $\Omega\left(\sqrt{\frac{\log N}{\log \log N}}\right)$-query lower bound. Importantly, our upper and lower bounds are now tight, up to a $\poly(\log \eps^{-1}, \log \log N)$ multiplicative factor.

\subsection{A Failed Reduction from Tolerant Uniformity to Tolerant Identity Testing}

In the standard sampling model, uniformity and identity testing can both be solved in $\Theta(\eps^{-2} \cdot \sqrt{N})$ samples, and tolerant uniformity and tolerant identity testing can both be solved in $\Theta(\eps^{-2} \cdot \frac{N}{\log N})$ samples. In fact, work by \cite{DiakonikolasK16, Goldreich16} showed a reduction from uniformity testing to identity testing in the standard model: given an algorithm for uniformity testing, one could do identity testing using an asymptotically equivalent number of queries. Unfortunately, their approach does not appear to extend to the conditional sampling model, so we do not get a $\tilde{O}(\eps^{-2})$-query algorithm for tolerant identity testing in COND for free, given such an algorithm for tolerant uniformity testing.

We briefly outline the ideas of \cite{DiakonikolasK16, Goldreich16} and how they fail to extend to the Conditional Sampling model. For simplicity, we assume we are trying to determine whether $\mathcal{D} = \mathcal{D}^*$ or $\TV(\mathcal{D}, \mathcal{D}^*) \ge \eps$, where $\mathcal{D}^*$ is a known distribution such that for all $1 \le x \le N,$ $D^*(x)$ is a multiple of $\frac{1}{N}.$ In this case, the solution is the following. Consider the distribution $\mathcal{P}^*$ as follows. Sample $x \sim \mathcal{D}^*$, and if $D^*(x) = \frac{k}{N},$ we choose $i$ uniformly between $1$ and $k$. Finally, output $(x, i).$ Note that $\mathcal{P}^*$ is uniform on the set $\{(x, i): 1 \le i \le N \cdot D^*(x)\}.$ However, if we modified the distribution so that $x \sim \mathcal{D}$ but $i$ is still randomly chosen in $[k]$ where $D^*(x) = \frac{k}{n}$, one can verify that this modified distribution $\mathcal{P}$ satisfies $\TV(\mathcal{P}^*, \mathcal{P}) = \TV(\mathcal{D}^*, \mathcal{D})$. Moreover, given sample access to $\mathcal{D},$ one can simulate samples from $\mathcal{P}$ because we just need to generate $i$ randomly from $[N \cdot D^*(x)]$, but we know the value of $D^*(x).$ Therefore, identity and tolerant identity testing of $\mathcal{D}$ reduce to uniformity and tolerant uniformity testing of $\mathcal{P}$, respectively.

Unfortunately, this sort of reduction will not work in the conditional sampling model. While we can simulate samples of $\mathcal{P}$ with $\mathcal{D}$, we have no way of simulating conditional samples of $\mathcal{P}$ with $\mathcal{D}$. For instance, if we wished to compare the probabilities of $(x, i)$ and $(x', i')$ in $\mathcal{P}$ using pair conditional samples, we have no way of doing so besides doing pair conditional samples from $\{x, x'\}$. If $D^*(x)$ is much larger than $D^*(x'),$ then to properly simulate this conditional sampling, we will need to draw a very large number of samples until we even get a single sample $x'$.

\subsection{Outline}

We briefly outline the rest of the paper. In Section \ref{Preliminaries}, we go over some definitions and preliminary results. In Section \ref{ProofOutline}, we outline the ideas of our main results. In Section \ref{TolerantUniformity}, we prove Theorem \ref{TolerantUnif}. In Section \ref{TolerantIdentity}, we prove Theorem \ref{TolerantId}. In Section \ref{PcondIdentity}, we prove Theorem \ref{IdPcondUpper}. We defer the proof of Theorem \ref{IdPcondLower} to Appendix \ref{Lower}, as the proof is very similar to \cite[Theorem 8]{CanonneRS15}. We also include pseudocode for the algorithms (divided into subroutines corresponding to lemmas) in Appendix \ref{Pseudocode}. We remark that all our proofs are self contained and do not require looking at the pseudocode, but we include the pseudocode for the potential convenience of the reader.

\section{Preliminaries} \label{Preliminaries}

First, for any integer $N \ge 1,$ we use $[N]$ to denote the set $\{1, 2, \dots, N\},$ and for integers $b \ge a \ge 1,$ we use $[a:b]$ to denote the set $\{a, a+1, \dots, b\}.$ In this paper, we will usually be working with an unknown distribution over the set $[N],$ unless specified otherwise. For a distribution $\mathcal{D}$ over $[N],$ we write $x \sim \mathcal{D}$ to mean that $x$ was drawn from the distribution $\mathcal{D}.$ For any element $i \in [N],$ we let $D(i) := \BP_{x \sim \mathcal{D}} (x = i),$ and for any subset $S \subseteq [N],$ we let $D(S) := \BP_{x \sim \mathcal{D}} (x \in S) = \sum_{i \in S} D(i).$ Likewise, we define $D^*(i) := \BP_{x \sim \mathcal{D}^*} (x = i)$ for a distribution $\mathcal{D}^*,$ and so on. We also let $\mathcal{U}$ denote the uniform distribution over $[N]$.

Recall that the Total Variation Distance $\TV$ between two distributions $\mathcal{D}_1$ and $\mathcal{D}_2$ is defined as
\[\TV(\mathcal{D}_1, \mathcal{D}_2) := \frac{1}{2} \|\mathcal{D}_1-\mathcal{D}_2\|_1 = \frac{1}{2} \sum\limits_{i = 1}^{N} |D_1(i)-D_2(i)|.\]

Regarding definitions, we recall that in the COND model, we are allowed queries to $\Call{Cond}{S}$, which draws from $x \sim \mathcal{D}$ conditional on $x \in S$, and in the PAIRCOND model, we are allowed samples $\Call{Samp}{},$ which draws from $x \sim \mathcal{D}$ and queries to $\Call{Pcond}{x, y} = \Call{Cond}{\{x, y\}},$ where $x, y$ are distinct elements in $[N].$ If dealing with a distribution $\mathcal{Q} \neq \mathcal{D},$ we will write $\Call{Cond}{}_{\mathcal{Q}}(S)$ to denote sampling from the distribution $\mathcal{Q}$, conditioned on being in $S$.

We note a simple result about total variation distance, which is straightforward to verify.

\begin{proposition} \cite{CanonneR14, CanonneSurvey} \label{BasicTV}
    For distributions $\mathcal{D}, \mathcal{D}^*$ on $[N]$, we have that
\[\TV(\mathcal{D},\mathcal{D}^*) = \sum_{i = 1}^{N} \max(0, D(i)-D^*(i)) = \sum_{i = 1}^{N} \max(0, D^*(i)-D(i)).\]
    We also have that
\[1 - \TV(\mathcal{D},\mathcal{D}^*) = \sum_{i = 1}^{N} \min(D(i), D^*(i)).\]
\end{proposition}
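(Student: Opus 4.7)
The proposition packages together three well-known identities for total variation distance, and the plan is to derive all of them from the single pointwise identity $|a-b| = \max(0,a-b) + \max(0,b-a)$ combined with the fact that $\mathcal{D}$ and $\mathcal{D}^*$ are both probability distributions on $[N]$.

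First I would expand the definition $\TV(\mathcal{D},\mathcal{D}^*) = \frac{1}{2}\sum_{i=1}^N |D(i)-D^*(i)|$ using the pointwise identity above, obtaining
\[
\TV(\mathcal{D},\mathcal{D}^*) = \frac{1}{2}\sum_{i=1}^N \max(0,D(i)-D^*(i)) + \frac{1}{2}\sum_{i=1}^N \max(0,D^*(i)-D(i)).
\]
Next I would use the complementary pointwise identity $a-b = \max(0,a-b) - \max(0,b-a)$ and sum over $i$; since $\sum_i D(i) = \sum_i D^*(i) = 1$, the left side vanishes, so
\[
\sum_{i=1}^N \max(0,D(i)-D^*(i)) = \sum_{i=1}^N \max(0,D^*(i)-D(i)).
\]
Substituting this equality back into the previous display collapses the factor of $\tfrac12$ and yields the first claim.

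For the second claim I would use the pointwise identity $\min(a,b) = a - \max(0,a-b)$ (verified by checking the two cases $a\ge b$ and $a<b$). Summing this over $i \in [N]$ gives
\[
\sum_{i=1}^N \min(D(i),D^*(i)) = \sum_{i=1}^N D(i) - \sum_{i=1}^N \max(0,D(i)-D^*(i)) = 1 - \TV(\mathcal{D},\mathcal{D}^*),
\]
where the last step uses the formula just established for $\TV$. There is no real obstacle here: the entire proof is a short algebraic manipulation, and the only thing to be careful about is invoking the normalization $\sum_i D(i) = \sum_i D^*(i) = 1$ at the right moment to equate the positive and negative parts of $D-D^*$.
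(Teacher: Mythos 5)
Your proof is correct: all three pointwise identities you use ($|a-b|=\max(0,a-b)+\max(0,b-a)$, $a-b=\max(0,a-b)-\max(0,b-a)$, and $\min(a,b)=a-\max(0,a-b)$) hold, and combining them with $\sum_i D(i)=\sum_i D^*(i)=1$ gives exactly the stated identities. The paper does not spell out a proof (it cites the result and calls it straightforward to verify), and your argument is precisely the standard routine verification it has in mind.
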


We next note the following proposition, proven as a part of \cite[Theorem 14]{CanonneRS15}.

\begin{proposition}  \label{HighEltMass} \cite{CanonneRS15}
    Suppose that $\BP_{x \sim \mathcal{D}}\left[D(x) \ge \frac{1}{\kappa \cdot N}\right] \ge 1-\kappa.$ Then, $\TV(\mathcal{D}, \mathcal{U}) \ge 1-2 \kappa.$
\end{proposition}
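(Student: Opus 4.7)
The plan is to identify a single witness set $S \subseteq [N]$ on which $\mathcal{D}$ and $\mathcal{U}$ disagree by at least $1-2\kappa$, and then use the well-known fact that $\TV(\mathcal{D},\mathcal{U}) \ge |D(S) - U(S)|$ for every $S$. The natural choice is the ``heavy'' set $S := \{i \in [N] : D(i) \ge \frac{1}{\kappa N}\}$, which is exactly the set whose mass the hypothesis bounds from below.

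First I would record the two key bounds about $S$. By the hypothesis, $D(S) = \BP_{x \sim \mathcal{D}}[D(x) \ge \frac{1}{\kappa N}] \ge 1-\kappa$. On the other hand, since each element of $S$ contributes at least $\frac{1}{\kappa N}$ to the total mass and $D(S) \le 1$, we must have $|S| \cdot \frac{1}{\kappa N} \le 1$, i.e., $|S| \le \kappa N$. Therefore $U(S) = |S|/N \le \kappa$.

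Combining these,
\[
\TV(\mathcal{D}, \mathcal{U}) \;\ge\; D(S) - U(S) \;\ge\; (1-\kappa) - \kappa \;=\; 1 - 2\kappa,
\]
which is the desired bound. There is essentially no obstacle here: the only step that requires any thought is picking the right set $S$, and the hypothesis practically dictates that choice. The rest is a one-line counting argument plus the elementary inequality $\TV(\mathcal{D}, \mathcal{U}) \ge |D(S) - U(S)|$, which follows directly from Proposition \ref{BasicTV} (for instance, by noting $D(S) - U(S) = \sum_{i \in S}(D(i) - U(i)) \le \sum_{i}\max(0, D(i) - U(i)) = \TV(\mathcal{D}, \mathcal{U})$).
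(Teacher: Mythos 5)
Your proof is correct: taking the heavy set $S = \{i : D(i) \ge \frac{1}{\kappa N}\}$, noting $D(S) \ge 1-\kappa$ while $|S| \le \kappa N$ forces $U(S) \le \kappa$, and invoking $\TV(\mathcal{D},\mathcal{U}) \ge D(S)-U(S)$ is exactly the standard argument behind this fact. The paper itself does not reprove the proposition (it cites \cite{CanonneRS15}), and your argument matches the one used there, so there is nothing to fix.
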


We will also be using the Chernoff bound numerous times. We state it formally here.

\begin{theorem} \cite{DubhashiP09}
    Let $X_1, X_2, \dots, X_n$ be independent random variables bounded between $0$ and some value $A$. Let $X = \sum_{i = 1}^{n} X_i$ and let $\mu = \BE[X].$ Then, for any $\delta \le 1,$
\[\BP\left(X \not\in [(1-\delta) \mu, (1+\delta) \mu]\right) \le 2 \exp\left(-\frac{\delta^2 \mu}{3 A}\right).\]
\end{theorem}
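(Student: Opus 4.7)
The plan is to run the standard Chernoff--Hoeffding moment generating function argument, reduced to $[0,1]$-valued summands by rescaling. First I would normalize: set $Y_i = X_i/A \in [0,1]$, $Y = X/A$, and $\nu = \mu/A$. Both the event $X \notin [(1-\delta)\mu,(1+\delta)\mu]$ and the claimed bound are invariant under this scaling, so it suffices to prove that for independent $Y_i \in [0,1]$ with $Y=\sum Y_i$ and $\nu = \BE[Y]$, one has $\BP(|Y-\nu| > \delta\nu) \le 2\exp(-\delta^2 \nu/3)$ for $\delta \in (0,1]$.

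For the upper tail I would apply Markov to $e^{sY}$ with $s>0$, which yields $\BP(Y \ge (1+\delta)\nu) \le e^{-s(1+\delta)\nu}\BE[e^{sY}]$. To control the MGF I would use convexity of $y \mapsto e^{sy}$ on $[0,1]$ to get $e^{sY_i} \le 1 + Y_i(e^s - 1)$, take expectations to obtain $\BE[e^{sY_i}] \le 1 + \BE[Y_i](e^s-1) \le \exp(\BE[Y_i](e^s-1))$, and multiply across $i$ by independence to get $\BE[e^{sY}] \le \exp(\nu(e^s-1))$. Optimizing at $s = \ln(1+\delta)$ gives
\[ \BP(Y \ge (1+\delta)\nu) \;\le\; \exp\bigl(-\nu\cdot[(1+\delta)\ln(1+\delta)-\delta]\bigr), \]
and the elementary inequality $(1+\delta)\ln(1+\delta)-\delta \ge \delta^2/3$ on $\delta \in [0,1]$ then finishes the upper tail.

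For the lower tail I would repeat the argument with $s<0$. The convexity bound $\BE[e^{sY_i}] \le \exp(\BE[Y_i](e^s-1))$ is valid for all $s\in\BR$, so Markov at parameter $s = \ln(1-\delta) < 0$ gives
\[ \BP(Y \le (1-\delta)\nu) \;\le\; \exp\bigl(-\nu\cdot[(1-\delta)\ln(1-\delta)+\delta]\bigr) \;\le\; \exp(-\delta^2\nu/2), \]
using the elementary inequality $(1-\delta)\ln(1-\delta)+\delta \ge \delta^2/2$ on $\delta \in [0,1]$. Since $\delta^2/2 \ge \delta^2/3$, this is within the claimed bound, and a union bound over the two tails supplies the factor of $2$.

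The only non-mechanical pieces are the two elementary inequalities on $(1\pm\delta)\ln(1\pm\delta)$; each follows by noting that the relevant function and its first derivative vanish at $\delta=0$ and that its second derivative is nonnegative on the appropriate range (for the upper-tail inequality, this requires a short direct check, since $\tfrac{1}{1+\delta}-\tfrac{2}{3}$ changes sign at $\delta=1/2$, but one can instead verify that the first derivative $\ln(1+\delta)-2\delta/3$ is nonnegative throughout $[0,1]$). Since the statement is quoted wholesale from \cite{DubhashiP09}, I anticipate the paper will simply cite it rather than reproduce this standard calculation.
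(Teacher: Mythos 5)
Your proposal is correct and follows the standard Chernoff--Hoeffding moment-generating-function argument (rescale to $[0,1]$, bound $\BE[e^{sY_i}]$ by convexity, optimize at $s=\ln(1\pm\delta)$, and use the elementary inequalities $(1+\delta)\ln(1+\delta)-\delta\ge\delta^2/3$ and $(1-\delta)\ln(1-\delta)+\delta\ge\delta^2/2$ on $[0,1]$, including the correct first-derivative check for the upper-tail inequality). The paper does not prove this statement at all --- it is quoted as a known theorem with a citation to \cite{DubhashiP09}, where essentially this same argument appears --- so there is nothing to compare beyond noting that your derivation matches the standard cited proof.
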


Finally, we note a simple primitive algorithm called \Call{Compare}{}, used in \cite{CanonneRS15}, that will be very useful for us. The algorithm explains how to get a good approximation to the ratio of $D(x)$ and $D(y)$ using \Call{Pcond}{}. We slightly restate the guarantees of \Call{Compare}{} for our convenience (see \cite[Lemma 2]{CanonneRS15} for comparison).

\begin{proposition} \label{Compare}
    For any $\gamma < 1$ and elements $x, y \in [N],$ there is an algorithm \Call{Compare}{$x, y, \gamma$} that uses $O(\gamma^{-2} \log \eps^{-1})$ calls to \Call{Pcond}{} and returns $\alpha$ (which may equal $\infty$) such that
\begin{enumerate}
    \item if $\frac{D(x)}{D(y)} \le 20$, then with probability at least $1-\eps^{10}$, $\left|\alpha - \frac{D(x)}{D(y)}\right| \le \gamma$.
    \item if $\frac{D(x)}{D(y)} \ge \frac{1}{20},$ then with probability at least $1-\eps^{10}$, $\left|\frac{1}{\alpha} - \frac{D(y)}{D(x)}\right| \le \gamma$ (where $1/\infty = 0$).
\end{enumerate}
\end{proposition}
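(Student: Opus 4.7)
The plan is to use the natural empirical estimator: call \Call{Pcond}{$x,y$} a total of $k = \Theta(\gamma^{-2} \log \eps^{-1})$ times, let $n_x$ count the number of samples that return $x$, set $n_y = k - n_x$, and output $\alpha := n_x/n_y$ (with the convention $\alpha = \infty$ when $n_y = 0$). Writing $p := D(x)/(D(x)+D(y))$, each \Call{Pcond}{} call returns $x$ independently with probability $p$, so $\hat p := n_x/k$ is an unbiased estimator of $p$ and $\alpha = \hat p/(1-\hat p)$ is the plug-in estimator for $p/(1-p) = D(x)/D(y)$.

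For the concentration step, a Chernoff bound (as stated in the preliminaries) gives $\BP(|\hat p - p| > \gamma') \le 2\exp(-\Omega(k(\gamma')^2))$, so choosing $\gamma' = \gamma/C$ for a sufficiently large absolute constant $C$ and the hidden constant inside $k$ large enough makes this probability at most $\eps^{10}$. Converting this into a bound on $\alpha$ rests on the algebraic identity
\[\frac{\hat p}{1-\hat p} - \frac{p}{1-p} \;=\; \frac{\hat p - p}{(1-\hat p)(1-p)},\]
which reduces each of the two required guarantees to showing that the denominator on the right is bounded below by an absolute constant. In Case 1, the hypothesis $D(x)/D(y) \le 20$ forces $p \le 20/21$ and hence $1-p \ge 1/21$; once $C$ is large enough that $|\hat p - p| \le 1/42$, we also have $1-\hat p \ge 1/42$, so the right-hand side is at most $(21 \cdot 42)\cdot |\hat p - p|$, which is at most $\gamma$ for $C \ge 42 \cdot 21$. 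Case 2 is symmetric: running the same argument on $1/\alpha = (1-\hat p)/\hat p$ and using that $D(x)/D(y) \ge 1/20$ forces $p \ge 1/21$ (and eventually $\hat p \ge 1/42$) yields $|1/\alpha - D(y)/D(x)| \le \gamma$. The edge cases $\hat p \in \{0, 1\}$ are ruled out by the same high-probability event, since in each case the relevant bound on $p$ keeps the empirical frequency bounded away from the endpoints.

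There is no genuine obstacle to the argument; this is essentially the standard \Call{Compare}{} primitive of \cite[Lemma 2]{CanonneRS15}, and the only care needed is choosing the absolute constant $C$ large enough that the plug-in map $p \mapsto p/(1-p)$ (or its reciprocal) has bounded derivative on the relevant region, so that an additive $\gamma/C$ error in $\hat p$ translates to an additive $\gamma$ error in $\alpha$ (respectively $1/\alpha$). This multiplicative blow-up is a constant and therefore does not affect the asymptotic query bound $O(\gamma^{-2}\log \eps^{-1})$.
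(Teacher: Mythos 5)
Your proposal is correct and follows essentially the same route as the primitive this paper imports from \cite[Lemma 2]{CanonneRS15} (the paper itself only cites it): the empirical ratio $n_x/n_y$ from $O(\gamma^{-2}\log\eps^{-1})$ \Call{Pcond}{} calls, additive concentration of $\hat p = n_x/k$ around $p = D(x)/(D(x)+D(y))$, and the observation that $p \mapsto p/(1-p)$ is Lipschitz on the region guaranteed by the $\le 20$ (resp.\ $\ge 1/20$) hypothesis. The only cosmetic caveat is that you should invoke the additive Chernoff--Hoeffding bound (or derive it from the stated multiplicative form with the constant-probability lower bounds in each case) rather than the paper's multiplicative statement verbatim, but this does not affect correctness or the query count.
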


As a general note, all of our success probabilities in the above result and future results will be of the form $1 - \eps^C$ for some constant $C$. This is because it is easy to get these success probabilities using $O(\log \eps^{-1})$ repetitions, and so that we can union bound over $\poly(\eps^{-1})$ potential failures and still have a high success probability. These exponents may often be larger than necessary (for instance, $1-\eps^5$ in Proposition \ref{Compare} is likely sufficient), though we keep them high just to be safe.


\section{Proof Overview} \label{ProofOutline}

In this section, we provide the general proof outlines for Theorems \ref{TolerantUnif}, \ref{TolerantId}, \ref{TestMonotone}, and \ref{IdPcondUpper}. We do not give an outline of Theorem \ref{IdPcondLower} here, due to its similarity to \cite[Theorem 8]{CanonneRS15}. However, we explain the distribution $\mathcal{D}^*$ used in Theorem \ref{IdPcondLower} and give intuition for the proof in Subsection \ref{LowerOverview}.

\subsection{Overview of Theorem \ref{TolerantUnif}}

Canonne et al. \cite{CanonneRS15} provided an $\tilde{O}(\eps^{-20})$-query algorithm for estimating distance to uniformity. Their algorithm can be broken into two steps. The first step is to find a pair $(x, \hat{D}(x))$ for $x \in [N]$ such that $\hat{D}(x) = (1 \pm \eps) \cdot D(x)$ and $D(x) \in \left[\frac{\eps}{N}, \frac{\eps^{-1}}{N}\right]$ - they use $\tilde{O}(\eps^{-20})$ queries to achieve this. The second step is to use $x$ and the estimate for $D(x)$ to estimate $D(y)$ for a randomly sampled $y$, using \Call{Pcond}{}. However, they need to sample up to $O(\eps^{-2})$ elements $y_1, \dots, y_{O(1/\eps^2)}$ from the distribution, and for each sample $y_i$, they obtain a $1 \pm \eps$ multiplicative estimate for the ratio $\frac{D(x)}{D(y_i)}$, so that they can estimate $D(y_i)$ properly. By ignoring $y$ such that $\frac{D(x)}{D(y)}$ is too small or too large, they show that only $O(\eps^{-6})$ queries are needed for the second step in the worst case.

We therefore need improved algorithms for both steps. To do this, we first deal with the case that all elements $i \in [N]$ satisfy $D(i) \in \left[\frac{1}{2N}, \frac{2}{N}\right].$ For the first step, given some $x$, we first create an unbiased estimator for $\frac{D(y)}{D(x)}$ using an average of $O(1)$ queries to \Call{Pcond}{}. The idea is a ``Geometric Distribution trick''. If we call \Call{Pcond}{$x, y$} many times until we see $x$, the expected number of times we saw $y$ before seeing $x$ equals $\frac{D(y)}{D(x)}$: this can be easily proven using geometric random variables. If $y$ were sampled uniformly from $[N]$, then this is actually an unbiased estimator for $\frac{1}{N \cdot D(x)},$ since the average value of $D(y)$ is $1/N$. We will repeat this procedure $O(\eps^{-2})$ times with a different, randomly chosen $y$ each time to estimate $D(x)$ up to a $1 \pm \eps$ multiplicative factor.

We now want to estimate $\TV(\mathcal{D}, \mathcal{U})$ given a very good estimate for a single item $x$. The natural intuition is to pick $\eps^{-2}$ samples $i$ and estimate $D(i)$ for each $i$, and use these estimates to approximate the total variation distance. However, each $i$ would need $\eps^{-2}$ queries to estimate, for a total query complexity of $O(\eps^{-4})$. To improve this to $\tilde{O}(\eps^{-2}),$ we again try another geometric distribution trick to give an unbiased estimator of $\frac{D(i)}{D(x)}$, where $D(x)$ is already approximately known. However, we need some way of getting an unbiased estimator of $|D(i)-1/N|$ rather than just $D(i)$, since $\TV(\mathcal{D}, \mathcal{U})$ equals $\frac{1}{2} \cdot \sum_i |D(i)-1/N|.$ It suffices to find an unbiased estimator of $D(i)$ that is always on the same side of $1/N$ as $D(i)$. We note that if $D(i) \approx \frac{1}{N} \cdot (1 \pm \delta),$ then we'll need $\delta^{-2}$ queries to $\Call{Pcond}{}$ to know if $D(i) > 1/N$ or $D(i) < 1/N.$ So, it may seem that if $\delta$ is often close to $\eps,$ we'll need $\eps^{-2}$ queries to \Call{Pcond}{} for each of $\eps^{-2}$ samples, meaning an $O(\eps^{-4})$ query complexity again. But if $\delta$ were $O(\eps)$ for all $i$, we actually only need to sample $O(1)$ different $i$'s, since $D(i)$ is already known to be in the tight range $[1 \pm O(\eps)]/N$ for all $i$. Generally, if $D(i) \approx (1 \pm \delta)/N,$ we'll need $\tilde{O}(\delta^{-2})$ queries to $\Call{Pcond}{x, i}$ but only $O(\delta^2/\eps^2)$ samples of such $i$. We look at $\delta = 2^{-j}$ for each $j = 1, 2, \dots, \log \eps^{-1},$ and in total we only use $\tilde{O}(\eps^{-2})$ queries.

One issue, however, is that we are not guaranteed that $D(i) \in \left[\frac{1}{2N}, \frac{2}{N}\right]$ for all $i$. We briefly explain the ideas needed to fix this issue. We create an oracle which essentially only accepts elements $i$ with $D(i)$ close to $\frac{1}{N}.$ The general approach is to sample random elements, and for any random element $x$ picked, we consider the probability that $D(x) \approx D(y)$ if $y$ is drawn from the uniform distribution, and the probability that $D(x) \approx D(z)$ if $z$ is drawn from $\mathcal{D}.$ Checking if $D(x) \approx D(y)$ or $D(x) \approx D(z)$ is easy with \Call{Pcond}{}. Note that if $D(x)$ is much larger than $1/N,$ then if $D(x') \approx D(x),$ $x'$ is much more likely to be drawn from $\mathcal{D}$ than from $\mathcal{U}.$ Likewise, if $D(x)$ is much smaller than $1/N$, then $x'$ is much more likely to be drawn from $\mathcal{U}$ than from $\mathcal{D}.$ So, we will choose some value $x$ if these two probabilities are equal. However, we only need to do this until we have found a single $x$ with $D(x) \approx 1/N$. The oracle can then just accept some $i$ if $\frac{D(x)}{D(i)} = \Theta(1),$ which is easily checkable with \Call{Pcond}{}. We note that there may not always be a good value of $x$ (for instance if $D(x)$ is always outside the range $[1/N, 2/N]$) but we will see when proving this theorem that these extreme cases are not too difficult to fix. We also remark that in this case, we are already at an $\Omega(1)$ distance from uniform.

\subsection{Overview of Theorem \ref{TolerantId}}

For identity testing in the PAIRCOND model, there must be some dependence on $N$. The rough reason for why is if we are trying to determine if $\mathcal{D} = \mathcal{D}^*,$ some values $\mathcal{D}^*(i)$ can be much bigger than other values $\mathcal{D}^*(j)$, so there are groups of elements which can never be compared to each other. So ``now, you will experience the full power of the COND model.''

As done in \cite{CanonneRS15, FalahatgarJOPS15}, we assume WLOG that $D^*(1) \le D^*(2) \le \cdots \le D^*(N)$. Since $\mathcal{D}^*$ is known, we can permute the elements accordingly. Our goal will be to determine
\[\TV(\mathcal{D}, \mathcal{D}^*) = 1 - \sum_{i = 1}^{N} \min(D(i), D^*(i)) = 1 - \BE_{i \sim \mathcal{D}^*} \left[\min\left(1, \frac{D(i)}{D^*(i)}\right)\right].\]
    To actually make use of this observation, we choose some $i \leftarrow \mathcal{D}^*$ and attempt to calculate $\frac{D(i)}{D^*(i)}$. If we do this for several samples of $i$, we can get a good approximation for $\TV(\mathcal{D}, \mathcal{D}^*)$
    
    The first step for approximating $\frac{D(i)}{D^*(i)}$ is to note a crucial idea used for (non-tolerant) identity testing in the COND model \cite{CanonneRS15, FalahatgarJOPS15}: rather than compare $D(i)$ with $D(j)$ when $D^*(i)$ and $D^*(j)$ may be vastly different, we compare $D(i)$ with $D(S)$ for some subset $S$ with $D^*(S) \approx D^*(i)$. We can partition $[i-1]$ into sets $S_1, \dots, S_{k-1}$ so that $0.5 \cdot D^*(i) \le D^*(S_i) \le D^*(i),$ unless $D^*([i-1]) \le 0.5 \cdot D^*(i).$ Ignoring the latter case, we can approximate $\frac{D(i)}{D([i])}$ simulating conditional samples from the set $\{S_1, S_2, \dots, S_{k-1}, S_k\}$ (where $S_k = \{i\}$) using the ideas from our tolerant uniformity algorithm. Then, we approximate $D([i])$ by sampling from $S$, so intuitively, we already have a close multiplicative approximation to $D(i)$ (and thus $\frac{D(i)}{D^*(i)}$ since $D^*(i)$ is known).
    
    
    However, there are actually a few problems with this approach. We'll just focus on the biggest issue, which is that our algorithm for approximating $\frac{D(i)}{D([i])}$ only works if $\frac{D(i)}{D([i])}$ is sufficiently close to the expectation $\frac{D^*(i)}{D^*([i])}$. Note that in non-tolerant identity testing, this is not an issue, since we'll just find out that $\frac{D(i)}{D([i])} \neq \frac{D^*(i)}{D([i])}$ so $\mathcal{D} \neq \mathcal{D}^*$. Even in most scenarios for the tolerant case, this isn't a problem, as it would imply $\frac{D(i)}{D^*(i)}$ is either very large or very small. But we may run into an issue if $\frac{D(i)}{D([i])}$ is much larger than expected, but $D([i])$ is much smaller than expected, causing $D(i)$ to actually be close to $D^*(i)$. In this case, since $1 - \TV(\mathcal{D}, \mathcal{D}^*) = \sum_{j = 1}^{N} \min(D^*(j), D(j)),$ if $D([i]) = \sum_{j = 1}^{N} D(j)$ is very small, we can hopefully ignore the first $i$ terms and compute $\sum_{j = i+1}^{N} \min(D^*(j), D(j))$ instead.
    
    
    
    In the case where we have removed some of the elements, and just have to approximate $\sum_{i \in S} \min(D(i), D^*(i))$ for some $S \subset [N],$ we condition on the samples coming from $S$, but this may cause the values $D^*(i)$ to scale differently from $D(i),$ since $D(S)$ and $D^*(S)$ may differ by much more than a $1 \pm \eps$ factor. As a result, we will actually attempt to solve a slightly more general question of estimating $\BE_{i \sim \mathcal{P}^*} \min(c_1 \frac{P(i)}{P^*(i)}, c_2) = \sum \min(c_1 P(i), c_2 P^*(i))$, for some constants $\eps \le c_1, c_2 \le 1,$ where $\mathcal{P}, \mathcal{P}^*$ are the conditional distributions of $\mathcal{D}, \mathcal{D}^*$ conditioned on $i \in S.$ This will end up being quite similar to estimating $\sum \min(D(i), D^*(i))$: we still sample $i$ from $\mathcal{P}^*$ and approximate $\frac{P(i)}{P^*(i)},$ though there will be some additional technical details to tackle.
    
\subsection{Overview of Theorem \ref{TestMonotone}}

    Our monotonicity test follows the same structure as Canonne's monotonicity test in COND \cite{Canonne15}. We first describe the idea behind Canonne's algorithm, and then explain our improvements.
    
    Canonne's algorithm crucially depends on a result known as the Birg\'{e} decomposition theorem, which is commonly used for monotonicity testing and related problems. The idea is that if $\mathcal{D}$ is monotone, i.e., $D(1) \ge D(2) \ge \cdots \ge D(N),$ we can partition the data $[N]$ into consecutive intervals $I_1, I_2, \dots, I_\ell$, with $I_k$ of size approximately $(1+\eps)^k$. (In distribution testing, this type of partitioning is often called \emph{bucketing}.) If $\mathcal{D}$ is monotone, the Birg\'{e} decomposition theorem tells us that if $\mathcal{D}'$ is the ``flattened'' distribution where the probabilities $D(i)$ have been replaced with $\frac{D(I_k)}{|I_k|},$ then $\TV(\mathcal{D}, \mathcal{D'}) \le \eps.$ So, the first step is to approximate this distance. However, for any fixed $k$, we can determine the distance between $\mathcal{D}$ restricted to $I_k$ and the uniform distribution on $I_k$ using $O(\eps^{-2})$ samples. From this, Canonne notes that if we sample $O(\eps^{-2})$ intervals (where we can sample an interval by drawing $i \sim \mathcal{D}$ and finding the interval $I_k \ni i$) and averaging their total variation distances from uniform, we obtain an $\eps$-approximation to $\TV(\mathcal{D}, \mathcal{D}').$ If this value is much more than $\eps$, then we know $\mathcal{D}$ is not monotone.
    
    But even if $\TV(\mathcal{D}, \mathcal{D}')$ is small, it is still possible that $\mathcal{D}$ is far from monotone. But this implies that $\mathcal{D}'$ is far from monotone. The second step of Canonne's algorithm is to test whether this is the case. (One can verify that if $\mathcal{D}$ is monotone, then so is $\mathcal{D}'$). However, we'll instead look at a slightly different distribution $\mathcal{Q}$ called the \emph{reduced distribution}, where we say the probability of sampling $k$ is $D(I_k).$ Equivalently, we are combining all terms in a bucket $I_k$ into a single element $k$. Note that conditional samples from this distribution can be easily simulated. Moreover, if $\mathcal{D}'$ is monotone, $\mathcal{Q}$ now satisfies what's called the \emph{exponential property}, meaning that $Q(i) \le (1+\eps) Q(i-1)$ for all $i$. But if $\mathcal{D}'$ is far from monotone, $\mathcal{Q}$ is far from satisfying the exponential property.
    
    Thus, for the second step, the main task is an algorithm for testing whether a distribution has the exponential property or is far from it. Canonne notes a very natural algorithm of comparing $\frac{Q(i)}{Q(i-1)}$ with queries to $\Call{Cond}{\{i-1, i\}}$: the difficult part is proving such an algorithm works. Canonne proves that if $\mathcal{Q}$ is $\eps$-far from having the exponential property, then with probability at least $\eps^2$ for a randomly chosen $i$ sampled from $\mathcal{Q},$ that $\frac{Q(i)}{Q(i-1)} > 1 + \eps + \Theta(\eps^3)$. But if $\mathcal{Q}$ had the exponential property, we would always have that $\frac{Q(i)}{Q(i-1)} \le 1 + \eps.$ We can distinguish between these two cases using $\tilde{O}(\eps^{-6})$ queries using $\Call{Compare}{i, i-1, \Theta(\eps^3)}$, and repeating this for $\eps^{-2}$ random samples $i$ gives us an algorithm only needing $\tilde{O}(\eps^{-8})$ queries.
    
    For the first step, we now only need $\tilde{O}(\eps^{-4})$ queries to \Call{Cond}{}, since we can do tolerant uniformity in $\tilde{O}(\eps^{-2})$ queries as opposed to $\tilde{O}(\eps^{-20})$ queries. For the second step, we need to prove a stronger technical lemma. Specifically, we show that there exist some $\alpha, \beta$ such that $\BP\left(\frac{Q(i)}{Q(i-1)} > 1 + \eps + \alpha\right) \ge \beta,$ and $\alpha \cdot \beta \ge \tilde{\Omega}(\eps^2).$ Given such an $\alpha, \beta,$ we can find some $i$ with $\frac{Q(i)}{Q(i-1)} > 1 + \eps,$ by sampling $O(\beta^{-1})$ random $i$, and for each one, using \Call{Compare}{$i, i-1, \frac{\alpha}{10}$} to determine $\frac{Q(i)}{Q(i-1)}$ up to an $\frac{\alpha}{10}$ error. With high probability, we'll find at least one violating $i$. The query complexity is $O(\beta^{-1})$ per $i$ and $\tilde{O}(\alpha^{-2})$ (since \Call{Compare}{} requires this many queries). But since $\beta^{-1} \cdot \alpha^{-2} \le (\beta \cdot \alpha)^{-2} = \tilde{O}(\eps^{-4}),$ the overall algorithm just requires $\tilde{O}(\eps^{-4})$ queries.
    
    To briefly describe the main lemma's proof, the argument constructs a distribution $\mathcal{Q}'$ satisfying the exponential property, but is very close to $\mathcal{Q}$ if we don't have the desired property. The construction of $\mathcal{Q}'$ works as follows: we just find the smallest ``distribution'' $\mathcal{Q}' = \{q_1', q_2', \cdots\}$ that satisfies the exponential property such that each $q_i'$ is at least $q_i$: this is not a real distribution since $\sum q_i' > 1.$ However, we show that if we do not satisfy the property $\BP\left(\frac{Q(i)}{Q(i-1)} > 1 + \eps + \alpha\right) \ge \beta$ for any $\alpha \cdot \beta \ge \tilde{\Omega}(\eps^2),$ then we show that $\|\mathcal{Q}-\mathcal{Q}'\|_1 \lesssim \eps$ using some simple telescoping arguments. But since $q_i \le q_i'$ for all $i$, one can easily show that $\|\mathcal{Q}-\mathcal{Q}'\|_1 = \|\mathcal{Q}''-\mathcal{Q}'\|_1,$ where $\mathcal{Q}''$ is the normalized version of $\mathcal{Q}'$ so that $\sum q_i'' = 1.$ Then, $\mathcal{Q}''$ satisfies the exponential property, whereas $\TV(\mathcal{Q}, \mathcal{Q}'') = o(\eps)$, so $\mathcal{Q}$ is \emph{close} to satisfying the exponential property.
    
\subsection{Overview of Theorem \ref{IdPcondUpper}}

    A crucial observation motivating this proof, also noted in \cite[Theorem 6]{FalahatgarJOPS15}, is that if $\mathcal{D} \neq \mathcal{D}^*,$ then if we select $i$ from $\mathcal{D}$ and $j$ from $\mathcal{U},$ we should expect $\frac{D(i)}{D(j)}$ to be larger than $\frac{D^*(i)}{D^*(j)}$. Intuitively, this is true since drawing $i \leftarrow \mathcal{D}$ is biased in favor of elements with high values of $D(i)$. We formalize this by proving that if all expected probabilities $D^*(i)$ are in the range $\left[\frac{1}{2N}, \frac{2}{N}\right]$, and if we draw $i \leftarrow \mathcal{D}$ and $j \leftarrow \mathcal{U}$, then $\BE\left|\frac{D(i)}{D(i)+D(j)} - \frac{D^*(i)}{D^*(i)+D^*(j)}\right| = \Omega(\eps)$ assuming that $\TV(\mathcal{D}, \mathcal{D}^*) \ge \eps.$
    From here, we sample several pairs $(i, j)$ with $i \leftarrow \mathcal{D}, j \leftarrow \mathcal{U}$ and use \Call{Pcond}{$i, j$} to determine if $\frac{D(i)}{D(i)+D(j)}$ and $\frac{D^*(i)}{D^*(i)+D^*(j)}$ differ substantially. By considering various levels of how accurately we determine $\frac{D(i)}{D(i)+D(j)}$, we show that we only need $\tilde{O}(\eps^{-2})$ total calls to \Call{Pcond}{} over all pairs $(i, j)$.
    
    However, to generalize to arbitrary distributions $\mathcal{D}$ and $\mathcal{D}^*$ using only the power of PAIRCOND, we need a different approach from \cite{FalahatgarJOPS15}. Notably, \cite{FalahatgarJOPS15} was able to utilize the COND model by comparing $D(i)$ with $D(S)$ for some set $S$ with $D^*(S) \approx D^*(i)$, as noted in the overview of Theorem \ref{TolerantId}, but in the PAIRCOND model, we cannot do this. Instead, we bucket $[N]$ into sets $S_1, S_2, \dots, S_{O(\log N)}$ where elements $i$ in $S_k$ have $D^*(i) \approx 2^{-k}$. We similarly can show that if we draw $i$ from $\mathcal{D}$, and then draw $j$ uniformly from the set $S_k$ containing $i$, then one of the following two things occurs. Either $\BE \left|\frac{D(i)}{D(i)+D(j)} - \frac{D^*(i)}{D^*(i)+D^*(j)}\right| = \Omega(\eps)$, or $\TV(\mathcal{S}, \mathcal{S}^*) \ge \eps$, where the distributions $\mathcal{S}, \mathcal{S}^*$ over $[O(\log N)]$ are defined such that $\BP(k \leftarrow \mathcal{S}) = D(S_k)$ and $\BP(k \leftarrow \mathcal{S}^*) = D^*(S_k)$. We verify the first possibility using the argument in the previous paragraph, with $\tilde{O}(\eps^{-2})$ queries. We verify the second possibility using sampling from $\mathcal{S},$ which can be simulated by sampling from $\mathcal{D},$ which is doable with $O\left(\frac{\sqrt{\log N}}{\eps^2}\right)$ queries by \cite{ValiantV17}, since the support size of $\mathcal{S}$ is $O(\log N)$.
    
\section{An $\tilde{O}(\eps^{-2})$-Query Algorithm for Tolerant Uniformity Testing} \label{TolerantUniformity}

In this section, we present an algorithm that makes $\tilde{O}(\eps^{-2})$ queries to \Call{Samp}{} and \Call{Pcond}{} and determines $\TV(\mathcal{D}, \mathcal{U})$ up to an $O(\eps)$ additive error. This algorithm is known to be optimal even in the stronger COND model and even for the standard uniformity testing problem.

The algorithm can be broadly divided into three steps. The first step of the proof, done in subsection \ref{OracleGiven}, determines the distance from uniformity when all probabilities are known to be close to $\frac{1}{N}.$ To get this to work for general distributions, we will assume an oracle which discards elements with probabilities too far away from $\frac{1}{N}.$ In subsection \ref{OracleCreate}, we show how to generate the oracle, which roughly works by first finding a single element $x$ and an approximation $\hat{D}(x) \approx D(x)$, which will not be a $1+\eps$ approximation but will be an $O(1)$ approximation. We then combine the two steps together and finish the proof in subsection \ref{FinishAlgorithm}.

\subsection{An Algorithm with Access to an Oracle} \label{OracleGiven}

In this subsection, our goal is to determine $\sum_{i = 1}^{N} |D(i) - \frac{1}{N}|$ but only among the $D(i)$'s which are within a constant factor of $\frac{1}{N}.$ To make this usable for the general algorithm, we assume we have an oracle that roughly accepts no elements that are not close to $\frac{1}{N}$ but accepts elements that are very close to $\frac{1}{N}$ with some probability. In the case where $D(i)$ is within a constant factor of $\frac{1}{N}$ for all $i$, this subsection immediately implies an algorithm for estimating Total Variation Distance from uniform.

First, we show how to get a good approximation for the probability of a single element $x$.

\begin{lemma} \label{SingleElement}
    Let $s: [N] \to [0, 1]$ be an (unknown) function so that $\sum_{i = 1}^{N} \frac{s(i)}{N} = \gamma_1 \ge 10 \eps$, and $s(i) = 0$ whenever $D(i) \not\in \left[\frac{1}{5N}, \frac{5}{N}\right]$, where $\gamma_1, \gamma_2$ are also unknown. Then, given an oracle $\mathcal{O}$ which accepts an element $i$ with probability $s(i)$, as well as an element $x$ with probability between $\frac{1}{2N}$ and $\frac{2}{N}$, there is an algorithm \Call{SingleElement}{} that can determine $D(x)$ up to a $1 \pm c \frac{\eps}{\gamma_1}$ multiplicative factor with probability at least $0.98$ for some small constant $c$, using an expected $O(\eps^{-2})$ queries to \Call{Pcond}{} and $\mathcal{O}$.
\end{lemma}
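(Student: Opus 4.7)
The plan is to build two empirical averages, $\bar G$ and $\hat\mu$, whose ratio will approximate $D(x)$. The central building block is the \emph{geometric-distribution trick}: for any target $y$, repeatedly call \Call{Pcond}{$x,y$} until the output is $x$, and count the number of $y$'s seen in between; the resulting random variable $G(y)$ satisfies $\BE[G(y)] = D(y)/D(x)$ and uses $O(1)$ expected \Call{Pcond}{} queries whenever $D(y)/D(x) \le 10$. I would truncate this count at $L = \Theta(\log \eps^{-1})$, which makes $G(y)$ almost surely bounded while incurring only $\poly(\eps)$ additive bias; this truncation is essentially free because the hypothesis $s(i) = 0$ whenever $D(i) \notin [1/(5N), 5/N]$ forces $D(y)/D(x) \le 10$ on every $y$ that $\mathcal{O}$ accepts.

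For the main estimator, draw $y_1, \ldots, y_T \sim \mathcal{U}$ uniformly on $[N]$ for $T = \Theta(\eps^{-2})$ (free random bits), query $\mathcal{O}(y_t)$, and set $G_t := G(y_t)$ on acceptance and $G_t := 0$ otherwise. A direct calculation gives
\[
\BE[G_t] \;=\; \tfrac{1}{N} \sum_{i=1}^N s(i) \cdot \tfrac{D(i)}{D(x)} \;=\; \tfrac{\mu}{N\,D(x)}, \qquad \mu \;:=\; \sum_{i=1}^N s(i)\,D(i),
\]
and the support constraint on $s$ forces $\mu \in [\gamma_1/5,\,5\gamma_1]$, so $\BE[G_t] = \Theta(\gamma_1)$. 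Separately, I would estimate $\mu$ directly: draw $z_1, \ldots, z_{T'} \sim \mathcal{D}$ via \Call{Samp}{} for $T' = \Theta(\eps^{-2})$, query $\mathcal{O}(z_t)$, and let $\hat\mu$ be the empirical acceptance rate, so $\BE[\hat\mu] = \mu$. The output is $\hat D(x) := \hat\mu/(N \bar G)$, which has the form $\tfrac{1+e_1}{1+e_2}\,D(x)$, reducing the task to bounding $|e_1|$ and $|e_2|$ separately.

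The key variance bound is the main obstacle. Each $G_t$ takes values in $[0, L]$ but is nonzero only when $\mathcal{O}$ accepts, so $\BE[G_t^2] \le L \cdot \BE[G_t] = O(\gamma_1 L)$. Applying the Chernoff bound to $\sum_t G_t$ with per-variable upper bound $A = L$ and mean $\Theta(T\gamma_1)$ yields, for relative error $\delta = c\eps/\gamma_1$ on $\bar G$, a required $T = \tilde{O}(\gamma_1/\eps^2) = \tilde{O}(\eps^{-2})$. The analogous application to the Bernoulli$(\mu)$ trials defining $\hat\mu$ gives the same rate, and a union bound produces the required $1 \pm c\eps/\gamma_1$ multiplicative approximation with probability $\ge 0.98$. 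The subtle point is that a naive analysis using only $|G_t| \le L$ would give relative error $\tilde{O}(\eps/\sqrt{\gamma_1})$, which is weaker than required; the crucial observation is that the second moment of $G_t$ scales with $\gamma_1$ because most trials produce zero. The total \Call{Pcond}{} cost is $\sum_t \mathbf{1}[\mathcal{O}(y_t) = 1] \cdot O(L) = \tilde{O}(T\gamma_1) = \tilde{O}(\eps^{-2})$ in expectation, and the oracle and \Call{Samp}{} costs are each $O(\eps^{-2})$.
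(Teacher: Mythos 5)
Your proposal follows essentially the same route as the paper: estimate $\mu=\sum_i s(i)D(i)$ by the oracle's acceptance rate on samples from $\mathcal{D}$, estimate $\mu/(N\,D(x))$ by the geometric-counting trick applied to uniformly drawn, oracle-accepted $y$'s, and output the ratio. The only substantive difference is in the concentration step: you truncate the geometric count at $L=\Theta(\log\eps^{-1})$ and use the crude second-moment bound $\BE[G_t^2]\le L\,\BE[G_t]=O(\gamma_1 L)$ with a Chernoff bound, which (as your own accounting shows) forces $T=\tilde{O}(\eps^{-2})$ and hence $\tilde{O}(\eps^{-2})$ oracle/\Call{Pcond}{} queries, a $\log\eps^{-1}$ short of the lemma's stated $O(\eps^{-2})$. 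The paper avoids this loss by observing that, conditioned on acceptance, the count is an untruncated geometric variable whose \Call{Pcond}{} success probability lies in $[1/11,10/11]$ (guaranteed by $D(y)\in[\frac{1}{5N},\frac{5}{N}]$ and $D(x)\in[\frac{1}{2N},\frac{2}{N}]$), so its conditional mean and variance are $O(1)$; the law of total variance then gives $\mathrm{Var}(V)=O(1)$, and a plain Chebyshev bound on the average of $O(\eps^{-2})$ copies already yields additive error $O(\eps)$ with probability $0.98$, with $O(\eps^{-2})$ expected queries and no truncation needed. If you replace your truncation argument by this conditional-variance bound (or equivalently use $\BE[G_t^2]=O(\gamma_1)$ rather than $O(\gamma_1 L)$), your proof matches the lemma's bound exactly; as written it proves the statement only up to a logarithmic factor, which is harmless for the downstream $\tilde{O}(\eps^{-2})$ theorem but not literally the claimed $O(\eps^{-2})$.
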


\begin{proof}
    First, let $K = O(\eps^{-2})$ and sample elements $y_1, \dots, y_K \leftarrow \mathcal{U}$ and $z_1, \dots, z_K \leftarrow \mathcal{D}.$ Note that for $y \leftarrow \mathcal{U},$ the probability of $\mathcal{O}(y)$ accepting is $\sum_{i = 1}^{N} \frac{s(i)}{N} = \gamma_1$, and for $z \leftarrow \mathcal{D},$ the probability of $\mathcal{O}(z)$ accepting is $\gamma_2 := \sum_{i = 1}^{N} s(i) \cdot D(i).$ Thus, by checking if the oracle $\mathcal{O}$ accepts the $y_i$'s and $z_i$'s for each $1 \le i \le K$, we can determine $\tilde{\gamma}_1 \approx \gamma_1$ and $\tilde{\gamma}_2 \approx \gamma_2$, which are accurate up to a $c_1 \cdot \eps$ additive error with probability at least $0.99$ for some small constant $c_1$. Also, recall that $\gamma_1 \ge 10 \eps,$ and as $s(i)$ is nonzero only when $D(i) \in \left[\frac{1}{5N}, \frac{5}{N}\right],$ we have that $\gamma_2 \in \left[\frac{\gamma_1}{5}, 5 \gamma_1\right]$, so $\gamma_2 \ge 2 \eps.$ Therefore, $\tilde{\gamma}_1 \approx \gamma_1$ and $\tilde{\gamma}_2 \approx \gamma_2$ up to a multiplicative factor of $2$ as well.
    
    Now, we attempt to determine $D(x).$ To do so, for each $y_i$ accepted by $\mathcal{O},$ we run $\Call{Pcond}{x, y_i}$, which outputs $y_i$ with probability $\frac{D(y_i)}{D(y_i)+D(x)}$. This probability will be in the range $[1/11, 10/11]$, as $D(y_i)\in \left[\frac{1}{5N}, \frac{5}{N}\right]$ and $D(x) \in \left[\frac{1}{2N}, \frac{2}{N}\right].$ Keep running $\Call{Pcond}{x, y_i}$ until it outputs $x$ and consider the number of times $y_i$ is returned. This is some random variable $V$, which, conditioned on $y_i$, is a Geometric Random variable with mean $\sum_{j = 1}^{\infty} \left(\frac{D(y_i)}{D(y_i)+D(x)}\right)^j = \frac{D(y_i)}{D(x)}$ and variance $O(1).$ Moreover, in expectation we only call $\Call{Pcond}{}$ $O(1)$ times for each $i$.
    
    Next, let $A$ be the indicator random variable of $\mathcal{O}$ accepting $y$. In this case, we output the random variable $V$ computed above, and we know that $\BE\left[V|y, A = 1\right] = \frac{D(y)}{D(x)}$ and $Var(V|y, A = 1) = O(1)$ for all $y$. Otherwise, if $A = 0$, we output $V = 0$. Therefore, $V$ has mean $\frac{1}{D(x)} \cdot \sum_{y = 1}^{N} \frac{1}{N} s(y) \cdot D(y) = \frac{\gamma_2}{N \cdot D(x)}$ (since $y$ is chosen from $\mathcal{U}$) and variance
\[Var(V) = \BE[Var(V|y, A)] + Var(\BE[V|y, A]) = O(1).\]
    The first equality in the above equation follows from the Law of Total Variance. The second inequality is true since if $A = 0,$ then $\BE[V|y, A = 0] = Var(V|y, A = 0) = 0,$ and if $A = 1,$ then as $\frac{D(y)}{D(y)+D(x)} \in \left[\frac{1}{11}, \frac{10}{11}\right]$ for all $y$ accepted by $\mathcal{O},$ $\BE[V|y, A = 1], Var(V|y, A = 1) = O(1).$ By averaging over $K$ samples $V_1, \dots, V_K$, where each $V_i$ depends on $y_i \leftarrow \mathcal{U},$ we get a random variable with mean $\frac{\gamma_2}{N \cdot D(x)}$ and standard deviation $O(\eps)$, so we get some number $\gamma_3$ which equals $\frac{\gamma_2}{N \cdot D(x)} \pm c_2 \eps$ with probability at least $0.98$ for some small constant $c_2$.
    
    Noting that $N \cdot D(x) \in [1/2, 2]$ and $\gamma_2 \ge 2 \eps$, we have that with probability at least $0.98$,
\[\frac{\tilde{\gamma}_2}{\gamma_3} = \frac{\gamma_2 \pm c_1 \eps}{\frac{\gamma_2}{N \cdot D(x)} \pm c_2 \eps} = N \cdot D(x) \cdot \left(1 \pm \frac{c}{5} \cdot \frac{\eps}{\gamma_2}\right).\]
    Finally, $\gamma_2 \ge \frac{1}{5} \cdot \gamma_1,$ so we thus get a $1 \pm c \frac{\eps}{\gamma_1}$ multiplicative approximation to $D(x)$.
\end{proof}

Next, we give an algorithm that estimates $D(z)$ for a general element $z$, and also approximately determines whether $D(z) > D(x)$ or $D(z) < D(x)$.

\begin{lemma} \label{ZEstimate}
    Let $z$ be in $[N]$ such that we know $D(z) \in \left[\frac{1}{5N}, \frac{5}{N}\right].$ Suppose we also know some element $x$ with $D(x) \in \left[\frac{1}{2N}, \frac{2}{N}\right],$ along with an estimate $\tilde{D}(x)$ such that $\frac{\tilde{D}(x)}{D(x)} \in [1-\beta, 1+\beta]$ for some known $\eps \le \beta \le \frac{1}{1600}.$ Then, there is an algorithm \Call{ZEstimate}{} that gives an estimate $\tilde{D}(z)$ of $D(z)$ with the following properties:
\begin{enumerate}
    \item If $|N \cdot D(z) - 1| \ge 160 \cdot \beta,$ then with probability at least $1-\eps^9$, $\frac{N \cdot \tilde{D}(z) - 1}{N \cdot D(z) - 1} \in \left[\frac{1}{3}, 2\right]$.
    \item If $|N \cdot D(z) - 1| \le 160 \cdot \beta,$ then with probability at least $1-\eps^9$, $|N \cdot \tilde{D}(z) - 1| \le 320 \cdot \beta$.
    \item The algorithm uses $O\left(\min\left((N \cdot D(z) - 1)^{-2}, \beta^{-2}\right) \cdot \log \eps^{-1}\right)$ queries to \Call{Pcond}{}.
\end{enumerate}
\end{lemma}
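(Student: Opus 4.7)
The natural approach is to invoke $\Call{Compare}{z, x, \gamma}$ (Proposition~\ref{Compare}) at some precision $\gamma$ to obtain $\alpha$ with $|\alpha - D(z)/D(x)| \le \gamma$, and set $\tilde{D}(z) := \alpha \cdot \tilde{D}(x)$. Writing $\alpha = D(z)/D(x) + \delta$ with $|\delta| \le \gamma$ and $\tilde{D}(x) = D(x)(1+\beta')$ with $|\beta'| \le \beta$, a direct expansion gives
\[
N\tilde{D}(z) - 1 \;=\; (ND(z) - 1) \,+\, ND(z)\,\beta' \,+\, ND(x)(1+\beta')\,\delta,
\]
so using $ND(z) \le 5$ and $ND(x) \le 2$, the additive error $|N\tilde{D}(z) - ND(z)|$ is bounded by $5\beta + 3\gamma$. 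If $\gamma = \beta$ were always used, the query cost would be $\tilde O(\beta^{-2})$, which is wasteful when $|ND(z)-1| \gg \beta$; the main task is therefore to adapt $\gamma$ to the unknown quantity $|ND(z)-1|$.

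\textbf{Adaptive doubling scheme.} Let $\gamma_k := 2^{-k}\gamma_0$ for $k = 0, 1, \ldots, K$, where $\gamma_0$ is a suitably small absolute constant and $K$ is chosen so that $\gamma_K = \beta$. Sequentially call $\Call{Compare}{z, x, \gamma_k}$ to obtain $\alpha_k$, set $\tilde{D}_k(z) := \alpha_k \tilde{D}(x)$, and halt as soon as $|N\tilde{D}_k(z) - 1| \ge 200\,\gamma_k$; if the condition is never triggered, output $\tilde{D}_K(z)$. Throughout, $\gamma_k \ge \beta$, so the error bound above simplifies to $|N\tilde{D}_k(z) - ND(z)| \le 8\gamma_k$.

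\textbf{Correctness.} Assume all Compare calls succeed. \emph{Case 1} ($|ND(z)-1| \ge 160\beta$): once $\gamma_k \le |ND(z)-1|/208$, we have $|N\tilde{D}_k(z)-1| \ge |ND(z)-1| - 8\gamma_k \ge 200\gamma_k$, forcing a halt at some $k^*$ with $\gamma_{k^*} \ge |ND(z)-1|/416$. At the halt, $|ND(z)-1| \ge 200\gamma_{k^*} - 8\gamma_{k^*} = 192\gamma_{k^*}$, so the error is at most $\tfrac{1}{24}|ND(z)-1|$ and the ratio $(N\tilde{D}_{k^*}(z)-1)/(ND(z)-1)$ lies in $[23/24,\,25/24] \subset [1/3,\,2]$. \emph{Case 2} ($|ND(z)-1| \le 160\beta$): for every $k < K$ we have $\gamma_k \ge 2\beta$, hence $|N\tilde{D}_k(z)-1| \le 160\beta + 8\gamma_k < 200\gamma_k$ (since $160\beta < 192\gamma_k$), so no premature halt occurs; at $k = K$ the same bound yields $|N\tilde{D}_K(z)-1| \le 168\beta \le 320\beta$.

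\textbf{Complexity and success probability.} The total query cost is dominated by the last Compare call (the query counts $O(\gamma_k^{-2}\log\eps^{-1})$ form a geometric series), giving $O(\gamma_{k^*}^{-2}\log\eps^{-1}) = O(\min((ND(z)-1)^{-2},\,\beta^{-2})\log\eps^{-1})$. A union bound over the $K+1 = O(\log\eps^{-1})$ Compare invocations (each succeeding with probability at least $1 - \eps^{10}$) yields the stated $1-\eps^9$ success guarantee, after a trivial constant-factor boost of the per-call success probability if needed. The main obstacle is simply calibrating the constants---the stopping threshold $200\gamma_k$, the precision ratio $2$, and the lemma's constants $160,\,320$---so that both the ``premature halt in case 2'' and the ``excessive error in case 1'' failure modes are simultaneously ruled out; everything else follows from the single error-decomposition identity above.
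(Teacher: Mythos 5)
Your proof is correct and takes essentially the same approach as the paper's: call \Call{Compare}{$z, x, \cdot$} at geometrically decreasing precision, multiply by $\tilde{D}(x)$, and halt as soon as the estimate's distance from $\frac{1}{N}$ exceeds a fixed multiple of the current precision, with the bottom precision level tied to $\beta$ (the paper's constants differ and it outputs $\frac{1}{N}$ at the bottom level, but this is cosmetic). The one sub-case your Case 1 argument skips is $160\beta \le |N \cdot D(z) - 1| \le 208\beta$, where the halt need not be forced before $\gamma_K = \beta$; there the fall-through output still has error at most $8\beta \le |N \cdot D(z) - 1|/20$, so the ratio lies in $[19/20, 21/20] \subset [1/3, 2]$ and condition 1 holds anyway.
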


\begin{remark}
    By setting $c = \frac{1}{160}$ in Lemma \ref{SingleElement}, we have that $\beta \le \frac{1}{1600}$ is an acceptable assumption.
\end{remark}

\begin{proof}
    Fix some integer $i \ge 1$ with $2^{-i} \ge 40 \cdot \beta$, and suppose we run $\Call{Compare}{z, x, 2^{-i}/20}$ to get $\alpha$, which is a $2^{-i}/20$ additive approximation to $\frac{D(z)}{D(x)}.$ Moreover, we know $\tilde{D}(x)$ and we know that $\frac{\tilde{D}(x)}{D(x)} \in [1-\beta, 1+\beta],$ so 
\[\alpha \cdot \tilde{D}(x) = \left(\frac{D(z)}{D(x)} \pm \frac{2^{-i}}{20}\right) \cdot D(x) \cdot \frac{\tilde{D}(x)}{D(x)} = \left(D(z) \pm \frac{2^{-i}}{20} \cdot D(x)\right) \cdot (1 \pm \beta).\]
    Now, noting that $\frac{2^{-i}}{20} \cdot D(x) \le 2^{-i} \cdot \frac{1}{10 N}$ and that $\beta \le 0.025 \cdot 2^{-i},$ we have that $|\alpha \cdot \tilde{D}(x) - D(z)| \le 2^{-i} \cdot \frac{1.025}{10N} + 0.025 \cdot 2^{-i} \cdot \frac{5}{N} \le \frac{2^{-(i+2)}}{N}.$ For $D(z) < \frac{2}{N},$ we can improve this to $2^{-i} \cdot \frac{1.025}{10N} + 0.025 \cdot \frac{2}{N} \le \frac{2^{-i} \cdot (1/6)}{N}.$
    
    Therefore, our algorithm will work as follows. First, we set $i = 1.$ For each $i$, we will run $\Call{Compare}{z, x, 2^{-i}/20}$ and multiply the returned value by $\tilde{D}(x)$ to get an approximation to $D(z).$ If our approximation is not in the range $\left[\frac{1}{N} \cdot (1-2^{-i}), \frac{1}{N} \cdot (1 + 2^{-i})\right],$ we output the approximation as our estimate $\tilde{D}(z).$ Otherwise, we increment $i$, until we reach $2^{-i} < 40 \beta.$ In this case, we simply output $\tilde{D}(z) = \frac{1}{N}.$
    
    Note that for $D(z) > \frac{2}{N},$ with probability at least $1 - \eps^{10},$ we will just output $\tilde{D}(z)$ on the iteration $i = 1$ and that $|\tilde{D}(z)-D(z)| \le \frac{1}{8N},$ which means that in fact $\frac{N \cdot \tilde{D}(z) - 1}{N \cdot D(z) - 1} \in \left[\frac{7}{8}, \frac{9}{8}\right].$ The total number of calls to \Call{Pcond}{} in this case is just $O(\log \eps^{-1}).$ Otherwise, if $80 \beta \le 2^{-(j+1)} \le |N \cdot D(z) - 1| \le 2^{-j},$ with probability at least $1-\eps^9,$ for all $i < j-1$ we will not output an estimate $\tilde{D}(z)$ and we will output an estimate $\tilde{D}(z)$ at least by the time of reaching $i = j+2.$ Note that $40 \beta \le 2^{-(j+2)}$. The estimate will be off by at most $\frac{2^{-(j-1)}}{6 N}$ but $|N \cdot D(z) - 1| \ge 2^{-(j+1)} = \frac{2^{-(j-1)}}{4},$ so $\frac{N \cdot \tilde{D}(z)-1}{N \cdot D(z)-1} \in \left[\frac{1}{3}, \frac{5}{3}\right].$ The number of queries in this case is $O(2^{2j} \cdot \log \eps^{-1}).$ Otherwise, we must have that $|N \cdot D(z)-1| \le 160 \beta,$ so we will either never output an estimate until the end (so $\tilde{D}(z) = \frac{1}{N}$), or we output $\tilde{D}(z)$ at some step $i$, so $2^{-(i+1)} < |N \cdot D(z)-1|$. In this case, $|\tilde{D}(z)-D(z)| \le \frac{2^{-i}}{6 N} \le \frac{|N \cdot D(z)-1|}{3 N},$ so $|N \cdot \tilde{D}(z)-1| \le 2 \cdot |N \cdot D(z)-1| \le 320 \beta.$ The number of queries in this case is $O(\beta^{-2} \log \eps^{-1}).$
\end{proof}

Finally, we give an estimate for $\TV(\mathcal{D}, \mathcal{U}).$ In reality, we estimate $\sum |D(i) - \frac{1}{N}|$ just over elements accepted by the oracle (weighted by the probability of being accepted by the oracle).

\begin{lemma} \label{EstimateCloseTerms}
    Let $x, s, \mathcal{O}, \gamma_1, \gamma_2$ be as in Lemma \ref{SingleElement}. Then, using an expected $O(\eps^{-2} \log^{2} \eps^{-1})$ queries to $\Call{Pcond}{}$ and $\mathcal{O},$ there is an algorithm \Call{EstimateCloseTerms}{} that can determine $\sum_{i = 1}^{N} s(i) \cdot |D(i) - \frac{1}{N}|$ up to an additive $O(\eps)$ error with probability at least $0.97$.
\end{lemma}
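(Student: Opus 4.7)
The plan is to first call \Call{SingleElement}{} to obtain $\tilde D(x)$ with $\tilde D(x)/D(x) \in [1-\beta, 1+\beta]$ for $\beta = c\eps/\gamma_1$ (with the constant $c$ chosen so that $\beta \le 1/1600$), using $O(\eps^{-2})$ queries. From here I would estimate $X := \sum_i s(i) |D(i) - 1/N|$ by a dyadic decomposition combined with an unbiased ``geometric-trick'' estimator for $D(z)$. Set $\delta_j := 2^{-j}$, $J := \lceil \log_2 \eps^{-1} \rceil$, and $S_j := \{i : |ND(i) - 1| \in (\delta_{j+1}, \delta_j]\}$ for $0 \le j \le J$. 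The elements with $|ND(i) - 1| \le \delta_J$ contribute at most $O(\delta_J \cdot \gamma_1) = O(\eps)$ to $X$ and are safely absorbed into the final additive error.

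For each level $j = 0, 1, \dots, J$, I would draw $K_j := \Theta(\delta_j^2/\eps^2)$ uniformly random samples $z_k \sim \mathcal{U}$ and, for each $z_k$ accepted by $\mathcal{O}$, perform $T := O(\delta_j^{-2} \log \eps^{-1})$ independent ``geometric trials'' on \Call{Pcond}{$x, z_k$}: each trial repeatedly calls \Call{Pcond}{$x, z_k$} until $x$ is returned, recording $V_t$, the number of $z_k$'s returned beforehand. Since each $V_t$ is geometric with mean $D(z_k)/D(x)$ and variance $O(1)$ (because $D(x), D(z_k) \in [1/(5N), 5/N]$), the average $\bar V := \tfrac{1}{T}\sum_t V_t$ satisfies $\BE[\bar V] = D(z_k)/D(x)$ and $\text{Var}(\bar V) = O(1/T)$, so $\hat D(z_k) := \bar V \cdot \tilde D(x)$ obeys $\BE[\hat D(z_k)] = D(z_k)\cdot(1 \pm \beta)$ and $\text{Var}(\hat D(z_k)) = O(\delta_j^2/N^2)$. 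Letting $\sigma(z_k)$ be the sign of $\hat D(z_k) - 1/N$ and $\mathbf{1}_{j,k}$ the indicator that $\mathcal{O}(z_k) = 1$ together with $|\hat D(z_k) - 1/N|$ lying in the dyadic band around $\delta_j/N$, the final output is
\[
\hat X := \sum_{j=0}^{J} \hat X_j, \qquad \hat X_j := \frac{N}{K_j} \sum_{k=1}^{K_j} \mathbf{1}_{j,k} \cdot \sigma(z_k) \bigl(\hat D(z_k) - 1/N\bigr).
\]

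The error analysis splits into three pieces. (i) The $(1 \pm \beta)$ multiplicative bias on $\hat D(z_k)$ contributes a per-sample additive bias of $O(\beta/N)$, which, weighted by $\mu_j := \sum_{i \in S_j} s(i)/N$ and summed over $j$, totals $O(\beta \gamma_1) = O(\eps)$. (ii) For $z \in S_j$ the standard deviation of $\hat D(z)$ is $O(\delta_j/(N \sqrt{\log \eps^{-1}}))$, a $\sqrt{\log \eps^{-1}}$ factor smaller than the gap $|D(z) - 1/N| \ge \delta_{j+1}/N$, so a Bernstein-type bound gives sign-failure probability $\eps^{\Omega(1)}$, contributing $O(\eps)$ after a union bound. (iii) At level $j$, the second moment of a single term is $O(\mu_j \delta_j^2/N^2)$, whence $\text{Var}(\hat X_j) = O(\mu_j \eps^2)$ and $\sum_j \text{Var}(\hat X_j) = O(\gamma_1 \eps^2) = O(\eps^2)$; Chebyshev then gives $|\hat X - X| = O(\eps)$ with probability at least $0.97$. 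The per-level query count is $K_j \cdot T = O(\eps^{-2} \log \eps^{-1})$, which summed over $J = O(\log \eps^{-1})$ levels (plus the $O(\eps^{-2})$ from the initial call to \Call{SingleElement}{}) gives the claimed $O(\eps^{-2} \log^2 \eps^{-1})$ total.

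The main obstacle is arranging the per-level subroutine so that a single batch of $T$ geometric draws per sample simultaneously (a) classifies $z$ into the correct shell $S_j$, (b) determines $\sigma(z)$ with failure probability $\eps^{\Omega(1)}$, and (c) produces the variance-reduced estimator $\hat D(z)$. A secondary subtlety is that $V_t$ has unbounded tails (being geometric), so the additive concentration $|\bar V - D(z)/D(x)| = O(\delta_j)$ required for correct sign determination needs a Bernstein rather than a Hoeffding bound, and I must verify that the single $\log \eps^{-1}$ factor absorbed into $T$ is enough to survive the union bound over all $\sum_j K_j = O(\eps^{-2})$ samples across levels.
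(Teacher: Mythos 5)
There is a genuine gap, and it is exactly the point you flag as your ``main obstacle'': your scheme needs each element to be assigned to its \emph{correct} dyadic shell (and only once), but no number of \Call{Pcond}{} queries can achieve this for elements whose deviation $|N D(z)-1|$ sits near a dyadic boundary. Because you run each level $j$ as an \emph{independent} experiment and decide membership in band $j$ using that level's own estimate $\hat D(z)$, an element with $|ND(z)-1|$ within $O(\delta_j/\sqrt{\log \eps^{-1}})$ of the boundary $\delta_{j+1}$ is captured at level $j$ and at level $j+1$ with two unrelated constant probabilities, so it can be double-counted or dropped; a short Gaussian-approximation computation shows the resulting per-element bias is a constant fraction of $|D(z)-1/N|$ for such $z$. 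Since the distribution can concentrate all of its accepted mass just off a boundary (e.g.\ $D(i)=\frac{1+\delta_{j+1}+\Theta(\delta_j/\sqrt{\log\eps^{-1}})}{N}$ for all relevant $i$), the total bias of $\hat X$ can be $\Theta(\gamma_1)$, which is not $O(\eps)$. A second, related flaw is that within a level you reuse the same batch of geometric trials both for the indicator $\mathbf{1}_{j,k}$ and for the value $\hat D(z_k)-1/N$; conditioning on the indicator biases the value (selection bias), which your step (i)--(iii) error analysis does not account for.

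The paper's proof is organized precisely to avoid needing correct classification. For each sampled $z$ it runs a single adaptive procedure (Lemma \ref{ZEstimate}) that assigns $z$ to exactly one of the indicators $\BI_{+,t},\BI_{-,t},\BI_T$ --- correctness of the band is never needed, only that the assignment is unique and that the \emph{sign} is right when $|ND(z)-1|$ is large (guaranteed by the $[\frac13,2]$ ratio property of Lemma \ref{ZEstimate}). Then, crucially, the value fed into $Z_{\pm,t}$ is a \emph{fresh}, independent averaged geometric estimate $\bar V$ of $D(z)/D(x)$, so that conditioned on the classification $\BE[Z_{+,t}\mid z,\BI_{+,t}=1] = N\tilde D(x)\frac{D(z)}{D(x)}-1$ exactly. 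With these two devices the identity $g_T+\sum_t (g_{+,t}+g_{-,t}) = \sum_z s(z)\,|D(z)-\tfrac1N|$ holds regardless of where boundary elements land, and the only bias left is the controlled $O(\eps/\gamma_1)\cdot(g+q)$ term coming from $\tilde D(x)$. To repair your argument you would need to import both devices: per-sample unique assignment (one adaptive classification per drawn $z$, not one experiment per level) and independence between the classifying randomness and the value estimate; the dyadic sample/accuracy trade-off $K_j\cdot T_j = \tilde O(\eps^{-2})$ you use is fine and matches the paper's budget.
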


\begin{proof}
    Let $\tilde{D}(x)$ be our estimate of $D(x)$ based on Lemma \ref{SingleElement}, where we know that $D(x) \in \left[\frac{1}{2N}, \frac{2}{N}\right]$ and $\frac{\tilde{D}(x)}{D(x)} = 1 + O(\eps/\gamma_1)$ (with probability at least $0.98$). For any $z \sim \mathcal{U}$ with $D(z) \in \left[\frac{1}{5N}, \frac{5}{N}\right],$ let $\tilde{D}(z)$ be our guess for $D(z)$ based on Algorithm \ref{ZEstimate} with $\beta = \frac{\eps}{160 \gamma_1}$. Note that $\tilde{D}(z)$ is a random variable even if we fix $z$. However, for any $z$ with $|N \cdot D(z) - 1| > \frac{\eps}{\gamma_1},$ with $1-\eps^{9}$ probability, $\frac{N \cdot \tilde{D}(z) - 1}{N \cdot D(z)-1}$ is in the range $\left[\frac{1}{3}, 2\right].$

    Now, we choose $T$ so that $2^{-T} = \Theta(\eps/\gamma_1)$. While we don't know $\gamma_1,$ we know $\tilde{\gamma}_1 = \Theta(\gamma_1),$ so we can choose $T$. For each $0 \le t \le T-1,$ let $\BI_{+, t}(z)$ be the indicator event that $\mathcal{O}$ accepts $z$ and $2^{-(t+1)} < N \cdot \tilde{D}(z)-1 \le 2^{-t}$. (For $t = 1,$ we also let $\BI_{+, 1} (z)$ indicate when $\tilde{D}(z) > \frac{2}{N}$, i.e., $N \cdot \tilde{D}(z) - 1 > 1$ also means $\BI_{+, 1}(z) = 1$.) Likewise, let $\BI_{-, t}(z)$ be the indicator event that $\mathcal{O}$ accepts $z$ and $2^{-(t+1)} < 1-N \cdot \tilde{D}(z) < 2^{-t}$ (we also let $\BI_{-, 1} (z)$ indicate when $\tilde{D}(z) < \frac{1}{2N}$), and let $\BI_T(z)$ be the indicator event that $\mathcal{O}$ accepts $z$ and $|N \cdot \tilde{D}(z)-1| \le 2^{-T}.$ Next, let $q_{+, t}(z) = \BP(\BI_{+, t}(z))$, $q_{-, t}(z) = \BP(\BI_{-, t}(z))$, and $q_T(z) = \BP(\BI_T(z)).$ Also, let $q_{+, t} = \BP(\BI_{+, t}),$ $q_{-, t} = \BP(\BI_{-, t}),$ and $q_T = \BP(\BI_T),$ where the probability is now also over $z \leftarrow \mathcal{U}$. Finally, let $g_{+, t} = \BE[\BI_{+, t}(z) \cdot (N \cdot D(z)-1)],$ $g_{-, t} = \BE[\BI_{-, t}(z) \cdot (1-N \cdot D(z))],$ and $g_T = \BE[\BI_T(z) \cdot |N \cdot D(z)-1|],$ where the expectations again are also over $z \leftarrow \mathcal{U}.$
    
    Now, fix some $0 \le t \le T-1$ and consider $\delta = 2^{-(t+1)}.$ For a sample $z \leftarrow \mathcal{U}$, we create a random variable $Z_{+, t}$ as follows. First, we determine if $\mathcal{O}$ accepts $z$ and, if so, we use the algorithm of Lemma \ref{ZEstimate} with $\beta = 2^{-t + O(1)}$ to sample the indicator variable $\BI_{+, t}(z)$ and determine if $\BI_{+, t}(z) = 1.$ While a sample from $\BI_{+, t}(z)$ technically requires setting $\beta = \frac{\eps}{160 \gamma_1},$ we note that setting $\beta = 2^{-t+O(1)} > \frac{\eps}{160 \gamma_1}$ to sample the indicator variable $\BI_{+, t}(z)$ has the same distribution as sampling $\BI_{+, t}(z)$ with $\beta = \frac{\eps}{160 \gamma_1}$, with at most $O(\eps^9)$ error. If either $\mathcal{O}$ rejects $z$ or $\BI_{+, t}(z) = 0,$ we set $Z_{+, t} = 0.$ 
    Else, if $\mathcal{O}$ accepts $z$ and $\BI_{+, t}(z) = 1$, we saw in Lemma \ref{SingleElement} that using an expected $O(1)$ queries to \Call{Pcond}{}, we could create a random variable $V$ with expectation $\frac{D(z)}{D(x)}$ and variance $O(1)$ (since $\frac{D(z)}{D(x)} = \Theta(1)$). Thus, by averaging $O\left(\frac{1}{\delta^2}\right)$ copies of the random variable $V$, we get a random variable $\bar{V}$, which when conditioned on $z$, has expectation $\frac{D(z)}{D(x)}$ and variance $O(\delta^2)$. We will finally return $Z_{+, t} = N \cdot \tilde{D}(x) \cdot \bar{V} - 1$ in this case. In total, we use an expected $O(\delta^{-2} \log \eps^{-1})$ queries to \Call{Pcond}{} and $\mathcal{O}$ to generate $Z_{+, t}.$
    
    Conditioned on $z$, we have that 
\begin{align*}
    \BE[Z_{+, t}|z] &= \BP(\BI_{+, t}(z) = 1) \cdot  \BE[Z_{+, t}|z, \BI_{+, t}(z) = 1] \\
    &= q_{+, t}(z) \cdot \left(N \cdot \tilde{D}(x) \cdot \frac{D(z)}{D(x)} - 1\right) \\
    &= q_{+, t}(z) \cdot \left((N \cdot D(z) - 1) \cdot \frac{\tilde{D}(x)}{D(x)} + \frac{\tilde{D}(x)-D(x)}{D(x)}\right).
\end{align*}
    Taking the expectation over $z$, we have that
\begin{align*}
    \BE[Z_{+, t}] &= \frac{\tilde{D}(x)}{D(x)} \cdot \BE[q_{+, t}(z) \cdot (N \cdot D(z)-1)] + \frac{\tilde{D}(x) - D(x)}{D(x)} \cdot \BE[q_{+, t}(z)] \\
    &= \frac{\tilde{D}(x)}{D(x)} \cdot g_{+, t} + \frac{\tilde{D}(x)-D(x)}{D(x)} \cdot q_{+, t} \\
    &= g_{+, t} + \frac{\tilde{D}(x)-D(x)}{D(x)} \cdot \left(g_{+, t}+q_{+, t}\right).
\end{align*}

    Next, note that since $Var(\bar{V}|z, \BI_{+, t} = 1) = O(\delta^{2}),$ and since $N \cdot \tilde{D}(x) = O(1),$ $Var(Z_{+, t}|z, \BI_{+, t} = 1) = O(\delta^{2})$. We also have $Var(Z_{+, t}|z, \BI_{+, t} = 0) = 0.$ We also have that $\BE[Z_{+, t}|z, \BI_{+, t} = 1] = N \cdot D(z) \cdot \frac{\tilde{D}(x)}{D(x)} - 1 = O(\delta)$ for all $z$ that can allow $\BI_{+, t} = 1,$ since $\delta > \frac{\eps}{\gamma_1}$ and $\BE[Z_{+, t}|z, \BI_{+, t} = 0] = 0.$ Therefore, by the Law of Total Variance,
\[Var(Z_{+, t}) = \BE[Var(Z_{+, t}|z, \BI_{+, t} = 1)] + Var(\BE[Z_{+, t}|z, \BI_{+, t} = 0]) = O(\delta^{2}).\]

    Therefore, by using $O(\delta^{-2} \log \eps^{-1})$ queries to \Call{Pcond}{} and $\mathcal{O},$ we can generate a random variable $Z_{+, t}$ with mean $g_{+, t} + O\left(\frac{\eps}{\gamma_1}\right) \cdot (g_{+, t}+q_{+, t})$ and variance $O(\delta^2)$ for $\delta = 2^{-(t+1)}.$ By the same argument, we can also generate a random variable $Z_{-, t}$ with mean $g_{-, t} + O\left(\frac{\eps}{\gamma_1}\right) \cdot (g_{-, t}+q_{-, t})$ and variance $O(\delta^2)$. Finally, we can also generate $Z_T$ with mean $g_T + O\left(\frac{\eps}{\gamma_1}\right) \cdot (g_T+q_T)$ and variance $O\left(\delta^2\right)$ for $\delta = 2^{-T}.$ By generating $O((\delta/\eps)^2 \cdot \log \eps^{-1})$ repetitions of $Z_{+, t}$ and averaging them, we can get a random variable $W_{+, t}$ with the same mean but variance $O(\eps^2/(\log \eps^{-1})),$ using $O(\eps^{-2} \log^2 \eps^{-1})$ queries to \Call{Pcond}{} and $\mathcal{O}.$ The same is true for $W_{-, t}$ and $W_T$.
    
    Our final estimate will be
\[W := W_T + \sum_{t = 1}^{T-1} (W_{+, t} + W_{-, t}),\]
    where $W_T, W_{+, t}, W_{-, t}$ are all determined using independent samples. We have that
\begin{align*}
\BE[W] &= g_T + O\left(\frac{\eps}{\gamma_1}\right) (g_T+q_T) + \sum_{t = 1}^{T-1} \left(g_{+, t}+g_{-, t} + O\left(\frac{\eps}{\gamma_1}\right) \cdot (g_{+, t}+g_{-, t}+q_{+, t} + q_{-, t})\right) \\
&= \left(1 + O\left(\frac{\eps}{\gamma_1}\right)\right) \cdot \left(g_T + \sum_{t = 1}^{T-1} (g_{+, t} + g_{-, t})\right) + O\left(\frac{\eps}{\gamma_1}\right) \cdot \left(q_T + \sum_{t = 1}^{T-1} (q_{+, t} + q_{-, t})\right).
\end{align*}
    Now, note that if $z$ is accepted by $\mathcal{O}$ and we draw a sample $\tilde{D}(z),$ then exactly one $\BI_{+, t}, \BI_{-, t}, \BI_T$ be a $1$, and all others be $0$. Therefore,
\[g_T + \sum_{t = 1}^{T-1} (g_{+, t} + g_{-, t}) = \BE_{z \sim \mathcal{U}} \left(\BI(\mathcal{O} \text{ accepts } z) \cdot |N \cdot D(z) - 1|\right) = \sum_{z = 1}^{N} s(z) \cdot \left|D(z) - \frac{1}{N}\right| = O(\gamma_1),\]
    since $s(z) > 0$ implies that $D(z) = O(1/N).$ Also,
\[q_T + \sum_{t = 1}^{T-1} (q_{+, t} + q_{-, t}) = \BE_{z \sim \mathcal{U}} \left(\BI(\mathcal{O} \text{ accepts } z)\right) = \BP_{z \sim \mathcal{U}} (\mathcal{O} \text{ accepts } z) = \sum_{z = 1}^{N} s(z) \cdot \frac{1}{N} = \gamma_1.\]

    Therefore, 
\[\BE[W] = \left(1 + O\left(\frac{\eps}{\gamma_1}\right)\right) \cdot \left(\sum\limits_{z = 1}^{N} s(z) \cdot \left|D(z) - \frac{1}{N}\right|\right) + O\left(\frac{\eps}{\gamma_1}\right) \cdot \gamma_1 = \left(\sum\limits_{z = 1}^{N} s(z) \cdot \left|D(z) - \frac{1}{N}\right|\right) + O(\eps).\]
    Moreover, since generating the $W_{+, t}$'s, $W_{-, t}$'s, and $W_T$ use independent queries, we have
\[Var(W) = Var(W_T) + \sum_{t = 1}^{T-1} Var(W_{+, t}) + \sum_{t = 1}^{T-1} Var(W_{-, t}) = O(\log \eps^{-1}) \cdot O\left(\frac{\eps^2}{\log \eps^{-1}}\right) = O(\eps^2).\]
    Therefore, with probability at least $0.97,$ we have that $W = O(\eps) + \sum\limits_{z = 1}^{N} s(z) \cdot \left|D(z) - \frac{1}{N}\right|.$
\end{proof}

\subsection{Creating the Oracle} \label{OracleCreate}

In this section, we create an oracle $\mathcal{O}'$ which is a slightly modified version of the oracle $\mathcal{O}$ used in Subsection \ref{OracleGiven}. In some cases for $\mathcal{D},$ we will not be able to create such an oracle, but we show in the next subsection that in both cases, regardless of whether we have found an oracle or not, we can still determine $\TV(\mathcal{D}, \mathcal{U})$ up to an additive error of $O(\eps)$.

We first prove the following lemma, which we note is very similar in idea to Proposition \ref{HighEltMass}.

\begin{lemma} \label{LowHighEltMass}
    Suppose $\TV(\mathcal{D}, \mathcal{U}) \le 1-3\eps.$ Then, there exists an integer $t$ such that $- \log_{1.01} \eps^{-1} \le t \le \log_{1.01} \eps^{-1} - 1$ and that
\[\BP_{x \sim \mathcal{D}} \left(D(x) \in \left[\frac{1.01^t}{N}, \frac{1.01^{t+1}}{N}\right]\right) \ge \frac{\eps}{210 \log \eps^{-1}}, \hspace{0.3cm} \BP_{x \sim \mathcal{U}} \left(D(x) \in \left[\frac{1.01^t}{N}, \frac{1.01^{t+1}}{N}\right]\right) \ge \frac{\eps}{210 \log \eps^{-1}}.\]
\end{lemma}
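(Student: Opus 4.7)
The plan is to exploit the overlap identity from Proposition \ref{BasicTV}. Writing
\[\mu := 1 - \TV(\mathcal{D},\mathcal{U}) = \sum_{i=1}^N \min(D(i), 1/N) \ge 3\eps,\]
I would partition $[N]$ into buckets $B_t := \{i : D(i) \in [1.01^t/N, 1.01^{t+1}/N)\}$ for integer $t$, and let $p_t^U := |B_t|/N = \BP_{x \sim \mathcal{U}}[x \in B_t]$ and $p_t^D := \sum_{i \in B_t} D(i) = \BP_{x \sim \mathcal{D}}[x \in B_t]$. These are precisely the two probabilities that the lemma asks us to lower bound simultaneously for some $t$.

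First I would verify that the contribution of $B_t$ to $\mu$ equals $\min(p_t^U, p_t^D)$. For $t \ge 0$, every $i \in B_t$ satisfies $D(i) \ge 1/N$, so the contribution is $|B_t|/N = p_t^U$, and $p_t^D \ge 1.01^t \, p_t^U \ge p_t^U$. For $t \le -1$, every $i \in B_t$ satisfies $D(i) < 1/N$, the contribution is $p_t^D$, and $p_t^U \ge p_t^D$ since $1.01^{t+1} \le 1$. In either case the contribution to $\mu$ equals $\min(p_t^U, p_t^D)$.

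Next I would bound the contributions from buckets outside $\mathcal{T} := \{t \in \BZ : -\log_{1.01}\eps^{-1} \le t \le \log_{1.01}\eps^{-1}-1\}$. If $t > \log_{1.01}\eps^{-1}-1$, every $i \in B_t$ has $D(i) > \eps^{-1}/(1.01N)$, so there are at most $1.01\eps N$ such elements in total (else $\sum_i D(i) > 1$), contributing at most $1.01\eps$ to $\mu$ via the $\mathcal{U}$-mass bound. Symmetrically, if $t < -\log_{1.01}\eps^{-1}$, every $i \in B_t$ has $D(i) < 1.01\eps/N$, so the total $\mathcal{D}$-mass on such elements is at most $1.01\eps$, again bounding the contribution by $1.01\eps$. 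Hence buckets inside $\mathcal{T}$ account for at least $3\eps - 2(1.01\eps) = 0.98\eps$ of $\mu$.

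Finally, since $|\mathcal{T}| \le 2\log_{1.01}\eps^{-1}$, averaging yields some $t \in \mathcal{T}$ with $\min(p_t^U, p_t^D) \ge 0.98\eps/(2\log_{1.01}\eps^{-1})$. Using $\ln 1.01 > 1/105$, we get $\log_{1.01}\eps^{-1} < 105\log\eps^{-1}$ (in either the natural or base-$2$ convention, the latter giving an even better constant), which yields the claimed $\min(p_t^U, p_t^D) \ge \eps/(210\log\eps^{-1})$. There is no substantial obstacle here: the argument is essentially an averaging argument after decomposing the overlap mass by bucket. The only delicate points are (i) noting that the per-bucket contribution is \emph{exactly} $\min(p_t^U, p_t^D)$ because each bucket sits entirely on one side of $1/N$, and (ii) observing that contributions from outside $\mathcal{T}$ are $O(\eps)$ by elementary counting (very-heavy elements are too few under $\mathcal{U}$; very-light elements carry too little mass under $\mathcal{D}$).
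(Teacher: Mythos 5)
Your proof is correct and follows essentially the same route as the paper's: bucket the probabilities by factors of $1.01$, use $1-\TV(\mathcal{D},\mathcal{U})=\sum_i \min(D(i),1/N)$ bucketwise, discard the out-of-range buckets at cost $O(\eps)$, and average over the at most $2\log_{1.01}\eps^{-1}$ remaining buckets (the paper instead assumes WLOG that $\eps^{-1}$ is an integer power of $1.01$, so it loses only $2\eps$ rather than your $2.02\eps$). One small tightening in the final arithmetic: with $1/\ln 1.01 < 101$ you get $0.98\eps/(2\log_{1.01}\eps^{-1}) \ge \eps/(207\ln\eps^{-1}) \ge \eps/(210\log\eps^{-1})$, whereas the looser bound $\log_{1.01}\eps^{-1} < 105\ln\eps^{-1}$ combined with your $0.98$ factor falls just barely short of the stated constant (your base-$2$ remark also fixes this).
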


\begin{proof}
    Assume WLOG that $\eps^{-1}$ is an integer power of $1.01$. For each integer $t: - \log_{1.01} \eps^{-1} \le t \le \log_{1.01} \eps^{-1} - 1$, define $p_t = \BP_{x \sim \mathcal{D}}\left(D(x) \in \left[\frac{1.01^t}{N}, \frac{1.01^{t+1}}{N}\right]\right)$ and $q_t = \BP_{x \sim \mathcal{U}}\left(D(x) \in \left[\frac{1.01^t}{N}, \frac{1.01^{t+1}}{N}\right]\right)$. Also let $p_{-} = \BP_{x \sim \mathcal{D}}(D(x) \le \eps/N)$, $q_{-} = \BP_{x \sim \mathcal{U}}(D(x) \le \eps/N)$, $p_{+} = \BP_{x \sim \mathcal{D}}(D(x) \ge \eps^{-1}/N)$, and $q_{+} = \BP_{x \sim \mathcal{U}}(D(x) \ge \eps^{-1}/N)$.
    
    Now, note that by Proposition \ref{BasicTV},
\[1-\TV(\mathcal{D}, \mathcal{U}) = \sum_{i = 1}^{N} \min\left(D(i), \frac{1}{N}\right) \le \min(p_-, q_-) + \min(p_+, q_+) + \sum_{t = - \log_{1.01} \eps^{-1}}^{\log_{1.01} \eps^{-1} - 1} \min(p_t, q_t).\]
    However, note that $p_- \le \eps,$ since there are at most $N$ elements $x$ with $D(x) \le \frac{\eps}{N}.$ Likewise, $q_+ \le \eps,$ since there are at most $\eps \cdot N$ elements $x$ with $D(x) \ge \frac{\eps^{-1}}{N}.$ Therefore, if $\TV(\mathcal{D}, \mathcal{U}) \le 1-3 \eps,$ we have that $1-\TV(\mathcal{D}, \mathcal{U}) \ge 3 \eps,$ so there must exist some $t$ in the desired range such that $\min(p_t, q_t) \ge \frac{\eps}{2 \log_{1.01} \eps^{-1}} \ge \frac{\eps}{210 \log \eps^{-1}}.$
\end{proof}

Now, for any $x \in [N],$ recall that we can use \Call{Compare}{} to compare $x$ and some other element $w$. Using the notation of Proposition \ref{Compare}, if we set $\gamma = 0.01,$ \Call{Compare}{$w, x, \gamma$} returns $\alpha$ such that if $\frac{D(w)}{D(x)} \in [0.99, 1.01],$ then $\alpha \in [0.98, 1.02]$ with probability $1-\eps^{10}$, but if $\frac{D(w)}{D(x)} \not\in [0.97, 1.03],$ then $\alpha \not\in [0.98, 1.02]$ with probability $1-\eps^{10}.$ To make use of this observation, we first define the following.

\begin{defn} \label{dandu}
    For any element $x \in [n],$ let $d(x) = \BP_{w \sim \mathcal{D}}  \left(\Call{Compare}{w, x, 0.01}\right) \in [0.98, 1.02]$, and let $u(x) = \BP_{w \sim \mathcal{U}}  \left(\Call{Compare}{w, x, 0.01}\right) \in [0.98, 1.02]$.
\end{defn}

Next, we will find some element $x$ as well as a constant-factor approximation $\hat{D}(x)$ of $D(x)$.

\begin{lemma} \label{ConstantApprox}
    There exists an algorithm \Call{ConstantApprox}{} using $O(\eps^{-2} \log^5 \eps^{-1})$ queries to \Call{Pcond}{} and \Call{Samp}{} that returns a set $S$ with elements of the form $(x_r, \hat{D}(x_r))$ with the following guarantees.
\begin{enumerate}
    \item For all $(x, \hat{D}(x)) \in S,$ with probability at least $1-\eps^6,$ we have that $\frac{\hat{D}(x)}{D(x)} \in [0.9, 1.1]$ and $\hat{D}(x) \in \left[\frac{0.9 \eps}{N}, \frac{1.1 \eps^{-1}}{N}\right].$
    \item If $\TV(\mathcal{D}, \mathcal{U}) \le 1-3 \eps,$ then with probability at least $1-\eps^6,$ $S$ is nonempty.
    \item With probability at least $1-\eps^6,$ at least one of the following is true:
    \begin{enumerate}
        \item Some $(x, \hat{D}(x)) \in S$ satisfies $\hat{D}(x) \in \left[\frac{5}{9N}, \frac{9}{5N}\right]$.
        \item If some $(x, \hat{D}(x)) \in S$ satisfies $D(x) \ge \frac{1}{N},$ the $x$ with the smallest such $\hat{D}(x)$ satisfies $\BP_{w \sim \mathcal{U}} \left(\frac{1}{N} \le D(w) \le 0.8 \cdot D(x)\right) \le 2 \eps$. Likewise, if some $(x, \hat{D}(x)) \in S$ satisfies $D(x) \le \frac{1}{N},$ then the $x$ with the largest such $\hat{D}(x)$ satisfies $\BP_{w \sim \mathcal{D}} \left(\frac{1}{N} \ge D(w) \ge 1.25 \cdot D(x)\right) \le 2 \eps.$
    \end{enumerate}
\end{enumerate}
\end{lemma}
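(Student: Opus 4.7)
Plan: The core identity driving the algorithm is $N \cdot D(x) \approx d(x)/u(x)$, where $d$ and $u$ are from Definition \ref{dandu}. Letting $T_x = \{w : D(w)/D(x) \in [0.99, 1.01]\}$, Proposition \ref{Compare} with $\gamma = 0.01$ (plus a union bound over sampled $w$) guarantees \Call{Compare}{$w, x, 0.01$} returns a value in $[0.98, 1.02]$ whenever $w \in T_x$ and outside $[0.98, 1.02]$ whenever $D(w)/D(x) \notin [0.97, 1.03]$. Hence $d(x) = (1 \pm 0.01) D(x) \cdot |T_x|$ and $u(x) = |T_x|/N$, so $\hat{D}(x) := d(x)/(N u(x))$ is a constant-factor approximation to $D(x)$.

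Algorithm: Sample $R = \tilde O(\eps^{-1})$ candidates, half uniformly from $[N]$ (no queries needed) and half from $\mathcal{D}$ via \Call{Samp}{}. Then sample a single pool of $M = \tilde O(\eps^{-1})$ elements $w \sim \mathcal{D}$ and one pool of $M$ elements $w \sim \mathcal{U}$; for each candidate $x_r$ run \Call{Compare}{$w, x_r, 0.01$} against every $w$ in both pools, and set $\hat d(x_r), \hat u(x_r)$ to be the empirical fractions whose output lies in $[0.98, 1.02]$. A multiplicative Chernoff bound (using $\mathrm{Var}(\mathrm{Ber}(p)) \le p$) shows that whenever the truth exceeds $\tau := \Theta(\eps / \log \eps^{-1})$, each estimate is within a $[0.95, 1.05]$ factor of the truth with failure probability $\eps^7$. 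Include $(x_r, \hat d(x_r)/(N \hat u(x_r)))$ in $S$ iff both estimates exceed $\tau$. Total cost: $R \cdot 2M$ calls to \Call{Compare}{}, each of $O(\log \eps^{-1})$ \Call{Pcond}{} queries, for $\tilde O(\eps^{-2})$ queries, matching the claimed $O(\eps^{-2} \log^5 \eps^{-1})$ bound.

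Verifying the three guarantees: Property 1 follows from the $d/u$ identity and the multiplicative accuracy, combined with the observation that the filter $\hat d, \hat u \ge \tau$ plus $d(x), u(x) = \Theta(N D(x) u(x)), \Theta(u(x))$ forces $\hat D(x)$ into $[0.9 \eps/N, 1.1 \eps^{-1}/N]$. For Property 2, Lemma \ref{LowHighEltMass} supplies a level $t^*$ with $p_{t^*}, q_{t^*} \ge \eps/(210 \log \eps^{-1}) \ge \tau$ whenever $\TV(\mathcal{D}, \mathcal{U}) \le 1 - 3\eps$; taking $R$ with $R \cdot \eps/\log \eps^{-1} \gg \log(\eps^{-6})$ hits $t^*$ with probability $1-\eps^6$, and any such candidate has $d(x^*) \ge p_{t^*}, u(x^*) \ge q_{t^*}$ and passes the filter. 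For Property 3, suppose $(x, \hat D(x)) \in S$ is the member with smallest $\hat D(x)$ subject to $D(x) \ge 1/N$, and suppose $\BP_{w \sim \mathcal{U}}(1/N \le D(w) \le 0.8 D(x)) > 2\eps$; dyadic pigeonhole over the $O(\log \eps^{-1})$ sub-levels in $[1/N, 0.8 D(x)]$ produces a level $t'$ with $q_{t'} \ge \Omega(\eps/\log \eps^{-1}) \ge \tau$, and since every $w$ in level $t'$ has $D(w) \ge 1/N$, each such $w$ contributes at least as much to $p_{t'}$ as to $q_{t'}$, giving $p_{t'} \ge q_{t'} \ge \tau$. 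Then our sampling would have produced a candidate in level $t'$ with smaller $\hat D$, contradicting minimality; the symmetric argument handles the largest-$\hat D$-below-$1/N$ case.

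The hard part will be making Property 3 hold \emph{unconditionally}, without invoking the $\TV(\mathcal{D}, \mathcal{U}) \le 1-3\eps$ hypothesis used in Lemma \ref{LowHighEltMass}. My plan is to lean on the coarse inequalities $p_{t'} \ge q_{t'}$ (for levels above $1/N$) and $p_{t'} \le q_{t'}$ (below $1/N$), which are weaker than the full hypothesis but enough to push both probabilities above the sampling threshold $\tau$. A secondary subtlety is that Compare's ambiguous annulus $[0.97, 1.03] \setminus [0.99, 1.01]$ is not under our control; I would argue that any consistent tie-breaking rule produces an effective $T_x$ sandwiched between the closed level sets for these two intervals, and every such sandwich still produces $\hat D(x)$ inside $[0.9, 1.1] \cdot D(x)$, so the $d/u$ identity survives the ambiguity.
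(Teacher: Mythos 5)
Your proposal follows the paper's proof essentially step for step: the same estimator $\hat{D}(x) = \tilde{d}(x)/(N\tilde{u}(x))$ built from \textsc{Compare} with $\gamma = 0.01$, the same $\tilde{O}(\eps^{-1})\times\tilde{O}(\eps^{-1})$ candidate-versus-pool sampling with a Chernoff-justified threshold at $\Theta(\eps/\log\eps^{-1})$, Lemma \ref{LowHighEltMass} for Property 2, and, for Property 3, exactly the paper's key observation that a $1.01$-multiplicative band lying above $1/N$ has $\mathcal{D}$-mass at least its $\mathcal{U}$-mass (and the reverse below $1/N$), so a heavy band would have produced a competing candidate in $S$; your pigeonhole-plus-minimality contradiction is just the contrapositive of the paper's ``smallest heavy level $t_1$'' argument, and your concern about making Property 3 unconditional is resolved precisely the way you planned.

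One step is wrong as written, though. In Property 1 you claim the filter $\hat{d},\hat{u} \ge \tau$ \emph{forces} $\hat{D}(x) \in [0.9\eps/N, 1.1\eps^{-1}/N]$. It does not: $\hat{u}(x) \ge \tau$ together with $\hat{d}(x) \le 1$ only gives $\hat{D}(x) \le 1/(\tau N) = O(\eps^{-1}\log\eps^{-1}/N)$, and symmetrically $\hat{d}(x)\ge\tau$, $\hat{u}(x)\le 1$ only gives $\hat{D}(x) \ge \Omega\bigl(\eps/(N\log\eps^{-1})\bigr)$ -- both off from the claimed window by $\Theta(\log\eps^{-1})$ factors, so candidates violating Property 1 can survive your filter. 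The fix is what the paper does: explicitly discard any pair with $\hat{D}(x_r) \notin [0.9\eps/N, 1.1\eps^{-1}/N]$, and then observe that the discard cannot remove the witnesses needed for Properties 2 and 3, since those witnesses come from bands contained in $[\eps/N,\eps^{-1}/N]$ (resp.\ just above $1/N$ or just below $1/N$), hence have in-range estimates. Two smaller points: your ``effective $T_x$ sandwich'' for the ambiguous annulus is not quite the right formalization, because \textsc{Compare} is randomized and membership is fractional; the clean argument is that every $w$ contributes to $d(x)$ and to $u(x)$ with the \emph{same} weight $p(w,x)$, and for all $w$ of non-negligible weight the ratio of the two contributions is $N D(w) \in [0.97,1.03]\cdot N D(x)$, so the weights cancel in $d(x)/u(x)$. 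Also, in Property 3 the pigeonhole must be over the $1.01$-multiplicative levels of Lemma \ref{LowHighEltMass}, not literally dyadic bands: with factor-$2$ bands, a candidate drawn from a heavy band need not have $d,u$ above the filter threshold, since same-band elements no longer compare as approximately equal.
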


\begin{proof}
    First, choose $R = O(\eps^{-1} \log^2 \eps^{-1})$ and sample elements $x_1, \dots, x_R \leftarrow \mathcal{D}.$ Also, sample elements $w_1, \dots, w_R \leftarrow \mathcal{D}$ and $y_1, \dots, y_R \leftarrow \mathcal{U}.$
    
    For each pair $i, r$ with $1 \le i, r \le R$, we run $\Call{Compare}{w_i, x_r, 0.01}$ and $\Call{Compare}{y_i, x_r, 0.01}.$ Let $\tilde{d}(x_r) = \frac{1}{R} \cdot \#\{i \in [R]: \Call{Compare}{w_i, x_r, 0.01} \in [0.98, 1.02]\},$ and let $\tilde{u}(x_r) = \frac{1}{R} \cdot \#\{i \in [R]: \Call{Compare}{y_i, x_r, 0.01} \in [0.98, 1.02]\}.$ We note that $\tilde{d}(x_r)$ is distributed as $\frac{1}{R} \cdot \operatorname{Bin}(R, d(x_r))$ and $\tilde{u}(x_r)$ is distributed as $\frac{1}{R} \cdot \operatorname{Bin}(R, u(x_r)).$ Therefore, by a basic application of the Chernoff bound, if $d(x_r), u(x_r) \ge \frac{\eps}{400 \log \eps^{-1}},$ then assuming $R \ge C \eps^{-1} \log^2 \eps^{-1}$ for a sufficiently large constant $C$, we have that $\tilde{d}(x_r) \in [0.99 \cdot d(x_r), 1.01 \cdot d(x_r)]$ and $\tilde{u}(x_r) \in [0.99 \cdot u(x_r), 1.01 \cdot u(x_r)]$ with probability at least $1-\eps^8$. Moreover, if $d(x_r) \le \frac{\eps}{300 \log \eps^{-1}},$ we have that $\tilde{d}(x_r) \le \frac{\eps}{250 \log \eps^{-1}}$, and if $u(r) \le \frac{\eps}{300 \log \eps^{-1}},$ we have that $\tilde{u}(x_r) \le \frac{\eps}{250 \log \eps^{-1}}$.
    
    Now, the algorithm proceeds as follows. For each $r$, we check whether both $\tilde{d}(x_r) > \frac{\eps}{250 \log \eps^{-1}}$ and $\tilde{u}(x_r) > \frac{\eps}{250 \log \eps^{-1}}$. In this case, we let $\hat{D}(x_r) = \frac{\tilde{d}(x_r)}{\tilde{u}(x_r)} \cdot \frac{1}{N}.$ We will ignore any $\hat{D}(x_r) \not\in \left[\frac{0.9 \eps}{N}, \frac{1.1 \eps^{-1}}{N}\right].$ With probability at least $1-\eps^6$, for any $x_r$ with either $d(x_r)$ or $u(x_r)$ less than $\frac{\eps}{300 \log \eps^{-1}}$, we will output either $\tilde{d}(x_r) \le \frac{\eps}{250 \log \eps^{-1}}$ or $\tilde{u}(x_r) \le \frac{\eps}{250 \log \eps^{-1}}$. Therefore, for any $x_r$ with both $\tilde{d}(x_r), \tilde{u}(x_r) > \frac{\eps}{250 \log \eps^{-1}},$ we must have that $d(x_r), u(x_r) \ge \frac{\eps}{300 \log \eps^{-1}},$ which means that $\tilde{d}(x_r)$ and $\tilde{u}(x_r)$ are accurate up to a multiplicative error of $0.01$ by a simple application of Chernoff. So, for all $r$ such that we output some $\hat{D}(x_r),$ we have that $\hat{D}(x_r) \in \left[\frac{0.99}{1.01} \cdot \frac{1}{N} \cdot \frac{d(x_r)}{u(x_r)}, \frac{1.01}{0.99} \cdot \frac{1}{N} \cdot \frac{d(x_r)}{u(x_r)}\right]$.
    
    Next, we will look at the ratio $\frac{d(x_r)}{u(x_r)}.$ For each $w$, define $p(w, x_r)$ to be the probability that $\Call{Compare}{w, x_r, 0.01} \in [0.98, 1.02].$ Then, $d(x_r) = \sum_w D(w) \cdot p(w, x_r)$ and $u(x_r) = \sum_w \frac{1}{N} \cdot p(w, x_r)$. However, recalling that $p(w, x_r) \le \eps^{10}$ whenever $D(w) \not\in [0.97 D(x_r), 1.03 D(x_r)],$ we have that
\[d(x_r) = O(\eps^{10}) + \sum_{w: D(w) \in [0.97 D(x_r), 1.03 D(x_r)]} D(w) \cdot p(w, x_r)\]
    and
\[u(x_r) = O(\eps^{10}) + \sum_{w: D(w) \in [0.97 D(x_r), 1.03 D(x_r)]} \frac{1}{N} \cdot p(w, x_r).\]
    However, note that
\[\sum_{w: D(w) \in [0.97 D(x_r), 1.03 D(x_r)]} D(w) \cdot p(w, x_r) \in [0.97, 1.03] \cdot N \cdot D(x_r) \cdot \sum_{w: D(w) \in [0.97 D(x_r), 1.03 D(x_r)]} \frac{1}{N} \cdot p(w, x_r)\]
    due to our restriction of $D(w) \in [0.97 D(x_r), 1.03 D(x_r)]$. Therefore, if $d(x_r), u(x_r) \ge \frac{\eps}{300 \log \eps^{-1}},$ the $O(\eps^{10})$ additive errors are negligible, and we have that $d(x_r) \in [0.96, 1.04] \cdot N \cdot D(x_r) \cdot u(x_r)$. Thus, whenever we output $\hat{D}(x_r),$ we have that 
\[\hat{D}(x_r) \in [0.96, 1.04] \cdot N \cdot D(x_r) \cdot \left[\frac{0.99}{1.01} \cdot \frac{1}{N}, \frac{1.01}{0.99} \cdot \frac{1}{N}\right] \subseteq [0.9 \cdot D(x_r), 1.1 \cdot D(x_r)].\]

    Thus, we have proven that with probability at least $1-\eps^6,$ all returned $\hat{D}(x_r)$'s are accurate. Next, we prove the second condition, i.e., if $\TV(\mathcal{D}, \mathcal{U}) \le 1-3 \eps,$ then at least one $r$ will result in a $\hat{D}(x_r)$ being returned. To see why, if $\TV(\mathcal{D}, \mathcal{U}) \le 1-3 \eps,$ then by Lemma \ref{LowHighEltMass} there exists an integer $t$ such that $[1.01^t, 1.01^{t+1}] \subset [\eps, \eps^{-1}]$ and
\[\BP_{x \sim \mathcal{D}} \left(D(x) \in \left[\frac{1.01^t}{N}, \frac{1.01^{t+1}}{N}\right]\right) \ge \frac{\eps}{210 \log \eps^{-1}}, \hspace{0.3cm} \BP_{x \sim \mathcal{U}} \left(D(x) \in \left[\frac{1.01^t}{N}, \frac{1.01^{t+1}}{N}\right]\right) \ge \frac{\eps}{210 \log \eps^{-1}}.\]
    Therefore, with probability at least $1-\eps^7,$ some $x_r$ with $D(x_r) \in \left[\frac{1.01^t}{N}, \frac{1.01^{t+1}}{N}\right]$ will be sampled. It is clear that $d(x_r), u(x_r) \ge \frac{\eps}{210 \log \eps^{-1}},$ and we have seen that for all such $x_r$, we will output $\hat{D}(x_r) \in [0.9 D(x_r), 1.1 D(x_r)],$ with failure probability at most $\eps^6.$ Moreover, $\hat{D}(x_r) \in \left[0.9 \cdot \frac{\eps}{N}, 1.1 \cdot \frac{\eps^{-1}}{N}\right],$ since $[1.01^t, 1.01^{t+1}] \in [\eps, \eps^{-1}].$

    Finally, we verify the third condition.
    Suppose no $x_r$ has a returned $\hat{D}(x_r) \in \left[\frac{5}{9N}, \frac{9}{5N}\right],$ but that some $x_r$ has $\hat{D}(x_r) \ge \frac{9}{5N}$. Choose such an $x_r$ that minimizes $\hat{D}(x_r)$. Now, let $t_1$ be the smallest nonnegative integer $t \le \log_{1.01} \eps^{-1}-1$ such that $\BP_{w \sim \mathcal{U}} \left(D(w) \in \left[\frac{1.01^t}{N}, \frac{1.01^{t+1}}{N}\right]\right) \ge \frac{\eps}{210 \log \eps^{-1}}.$ Then, since $\frac{1.01^{t_1}}{N} \ge \frac{1}{N},$ we will also have that $\BP_{w \sim \mathcal{D}}\left(D(w) \in \left[\frac{1.01^{t_1}}{N}, \frac{1.01^{t_1+1}}{N}\right]\right) \ge \frac{\eps}{210 \log \eps^{-1}}.$ Thus, some $x$ with $D(x) \in \left[\frac{1.01^{t_1}}{N}, \frac{1.01^{t_1+1}}{N}\right]$ and some output $\hat{D}(x)$ of $D(x)$ will be returned, with probability at least $1-\eps^6$. Therefore, since $\hat{D}(x) \ge \frac{9}{5N},$ we have that
\[0.9 \cdot D(x_r) \le \hat{D}(x_r) \le \hat{D}(x) \le 1.1 \cdot D(x) \le 1.1 \cdot \frac{1.01^{t_1+1}}{N}.\]
    This implies that $0.8 \cdot D(x_r) \le 1.01^{t_1}/N,$ so $\BP\left(\frac{1}{N} \le D(w) \le 0.8 \cdot D(x_r)\right) \le t_1 \cdot \frac{\eps}{210 \log \eps^{-1}} \le \eps.$
    
    Now, if no such $t_1$ exists, then $\BP_{w \sim \mathcal{U}}\left(\frac{1}{N} \le D(w) \le \frac{\eps^{-1}}{N}\right) \le \eps$ and $\BP_{w \sim \mathcal{U}} \left(D(w) \ge \frac{\eps^{-1}}{N}\right) \le \eps,$ so we even have that $\BP_{w \sim \mathcal{U}}\left(\frac{1}{N} \le D(w)\right) \le 2 \eps.$ Finally, we note that the proof for the other direction, i.e., if $D(x) \le \frac{1}{N},$ is identical.
\end{proof}

We are now ready to prove the main lemma of this subsection. Informally, we show that as long as we found some $D(x)$ that is very close to $\frac{1}{N}$, we also find an oracle $\mathcal{O}'$ which essentially separates between elements with probabilities less than $\frac{1}{N}$ and probabilities greater than $\frac{1}{N},$ but allows for elements with probabilities close to $\frac{1}{N}$ to be ``unknown.''

\begin{lemma} \label{Oracle}
    Suppose that Lemma \ref{ConstantApprox} outputs some $(x, \hat{D}(x))$ such that $\hat{D}(x) \in \left[\frac{5}{9N}, \frac{9}{5N}\right]$ and $\hat{D}(x) \in [0.9 \cdot D(x), 1.1 \cdot D(x)].$ Then, there exists an algorithm $\Call{Oracle}{}$ using $O(\eps^{-2} \log^3 \eps^{-1})$ additional queries to \Call{Pcond}{} and \Call{Samp}{} that generates a randomized oracle $\mathcal{O}'$ which takes as input an element $z \in [N]$ and outputs either $0, 1,$ or $-1$. Moreover, the oracle satisfies the following four properties for all $z \in [N]$. 
\begin{enumerate}
    \item If $D(z) > \frac{5}{N},$ then $\mathcal{O}'(z) = 1$ with probability at least $1-O(\eps^6).$
    \item If $D(z) < \frac{1}{5N},$ then $\mathcal{O}'(z) = -1$ with probability at least $1-O(\eps^6).$ 
    \item If $\frac{1}{5N} \le D(z) \le \frac{1}{N},$ then $\mathcal{O}'(z)$ is either $0$ or $-1$ with probability at least $1-O(\eps^6).$
    \item If $\frac{1}{N} \le D(z) \le \frac{5}{N},$ then $\mathcal{O}'(z)$ is either $0$ or $1$ with probability at least $1-O(\eps^6)$.
\end{enumerate}
    Finally, calling the oracle $\mathcal{O}'$ requires $O(\log \eps^{-1})$ calls to \Call{Pcond}{}.
\end{lemma}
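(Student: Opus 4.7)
The plan is to build $\mathcal{O}'$ directly from the pair $(x,\hat D(x))$: on input $z\in[N]$, the oracle invokes $\alpha \leftarrow \Call{Compare}{z,x,0.01}$ once, forms the estimate $\tilde D(z):=\alpha\cdot \hat D(x)$ (adopting the convention $\infty\cdot c=\infty$), and outputs $1$ if $\tilde D(z)>1.25/N$, $-1$ if $\tilde D(z)<0.8/N$, and $0$ otherwise. By Proposition~\ref{Compare}, this single compare call uses $O(\log\eps^{-1})$ queries to $\Call{Pcond}{}$ and succeeds with probability at least $1-\eps^{10}=1-O(\eps^6)$, matching both the per-call query bound and the $1-O(\eps^6)$ error guarantee. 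No queries are strictly needed for setup; the $O(\eps^{-2}\log^3\eps^{-1})$ setup budget in the lemma can optionally be spent on sharpening $\hat D(x)$ by a bootstrap step (feeding the oracle above into Lemma~\ref{SingleElement} to refine $\hat D(x)$ to $(1\pm O(\eps/\gamma_1))D(x)$, then re-running the oracle with the sharper reference), but the minimal construction already meets all four stated guarantees.

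\textbf{Correctness.} First pin down $D(x)\in[0.505/N,\,2/N]$ from $\hat D(x)\in[5/(9N),9/(5N)]$ and $\hat D(x)\in[0.9D(x),1.1D(x)]$. Conditioning on the success of the single compare call, split on the range of $D(z)$. In Cases (2)--(4) one has $D(z)\le 5/N$, hence $D(z)/D(x)\le 20$, and item~(1) of Proposition~\ref{Compare} yields $|\alpha-D(z)/D(x)|\le 0.01$; writing $\hat D(x)=rD(x)$ with $r\in[0.9,1.1]$, this gives the master bound
\[ |\tilde D(z)-D(z)|\;\le\;0.1\,D(z)+0.011\,D(x)\;\le\;0.1\,D(z)+0.022/N.\]
Plugging in the three sub-ranges of $D(z)$ yields $\tilde D(z)\le 0.242/N<0.8/N$ in Case~(2), $\tilde D(z)\le 1.122/N<1.25/N$ in Case~(3), and $\tilde D(z)\ge 0.878/N>0.8/N$ in Case~(4). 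Case~(1), $D(z)>5/N$, requires a further split on the size of $D(z)/D(x)$: if $D(z)/D(x)\le 20$, item~(1) still gives $\tilde D(z)\ge 0.9\cdot 5/N-0.022/N>1.25/N$; if $D(z)/D(x)>20$, item~(2) gives $1/\alpha\le 1/20+0.01$, hence $\alpha>16$ and $\tilde D(z)>16\cdot 5/(9N)>1.25/N$. In every subcase the correct output is forced.

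\textbf{Main obstacle.} The only delicate step is the joint choice of the two thresholds. They must simultaneously keep $\tilde D(z)$ out of the $+1$-bucket for all $z$ with $D(z)\le 1/N$, out of the $-1$-bucket for all $z$ with $D(z)\ge 1/N$, and firmly in the correct outer bucket once $D(z)$ escapes $[1/(5N),5/N]$, all while the $1\pm 0.1$ multiplicative slack on $\hat D(x)$ and the $0.01$ additive slack of \Call{Compare}{} conspire worst around $D(z)\approx 1/N$. The displayed $0.1 D(z)+0.022/N$ error bound, combined with $D(x)\le 2/N$, is exactly what shows that the symmetric window $[0.8/N,\,1.25/N]$ is wide enough to absorb the slack near $1/N$ yet narrow enough to rule out the wrong verdict outside $[1/(5N),5/N]$. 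No more subtle argument is needed beyond this constant arithmetic, followed by a union bound over the single failure mode of each compare call.
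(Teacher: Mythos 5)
Your proposal is correct and follows essentially the same route as the paper: a single call to \Call{Compare}{$z,x,0.01$} against the reference element $x$, followed by constant-threshold case analysis using $D(x)\in[0.505/N,2/N]$, with the $D(z)/D(x)>20$ regime handled via the reciprocal guarantee of Proposition~\ref{Compare}. The only cosmetic difference is that you threshold the rescaled estimate $\alpha\cdot\hat D(x)$ at $0.8/N$ and $1.25/N$, whereas the paper thresholds $\alpha$ itself at $0.45$ and $2.2$; the query counts and $1-O(\eps^6)$ guarantees match, and your optional bootstrap of $\hat D(x)$ is unnecessary but harmless.
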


\begin{proof}
    By our assumption about what was returned by Lemma \ref{ConstantApprox}, we have that $D(x) \in \left[\frac{\eps}{2 N}, \frac{2 \eps^{-1}}{N}\right]$.

    The oracle works as follows. For any $z$, the oracle runs \Call{Compare}{$z, x, 0.01$} and returns some $\alpha.$ If $\alpha \in [0.45, 2.2],$ then we know that $\frac{D(z)}{D(x)} \in [0.4, 2.5],$ so $D(z) \in \left[\frac{1}{5N}, \frac{5}{N}\right],$ so $\mathcal{O}'$ returns $0$. If $\alpha < 0.45,$ then we know that $\frac{D(z)}{D(x)} < 0.5,$ so $D(z) < \frac{1}{N},$ so $\mathcal{O}'$ returns $-1$. Finally, if $\alpha > 2.2,$ then we know that $\frac{D(z)}{D(x)} > 2,$ so $D(z) > \frac{1}{N},$ so $\mathcal{O}'$ returns $1$. Finally, note that calling the oracle just requires calling \Call{Compare}{$z, x, 0.01$}, which needs $O(\log \eps^{-1})$ calls to \Call{Pcond}{}.
\end{proof}

\subsection{Finishing the Algorithm} \label{FinishAlgorithm}
In this section, we show how to combine subsections \ref{OracleGiven} and \ref{OracleCreate} to prove Theorem \ref{TolerantUnif}.

\begin{lemma} \label{GivenGoodElt}
    Suppose Lemma \ref{ConstantApprox} finds some $(x, \hat{D}(x))$ such that $\hat{D}(x) \in \left[\frac{5}{9N}, \frac{9}{5N}\right]$ and $\hat{D}(x) \in [0.9 \cdot D(x), 1.1 \cdot D(x)].$ Then, there is an algorithm \Call{GivenGoodElt}{} that uses $O(\eps^{-2} \log^3 \eps^{-1})$ additional queries to \Call{Pcond}{} and \Call{Samp}{} and  with probability at least $0.9$ returns $\TV(\mathcal{D}, \mathcal{U})$ with error $O(\eps)$.
\end{lemma}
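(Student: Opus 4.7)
My plan for Lemma \ref{GivenGoodElt}:

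The high-level strategy is to use the oracle $\mathcal{O}'$ produced by Lemma \ref{Oracle} to partition the contribution to $2\TV(\mathcal{D},\mathcal{U}) = \sum_i |D(i) - 1/N|$ into three pieces indexed by the outputs $+1$, $0$, $-1$ of $\mathcal{O}'$, then estimate each piece with $\tilde{O}(\eps^{-2})$ queries. First I would invoke Lemma \ref{Oracle} on the given pair $(x,\hat{D}(x))$ to produce the oracle $\mathcal{O}'$ at cost $O(\eps^{-2}\log^3 \eps^{-1})$. Write $a_i, b_i, c_i$ for the probabilities that $\mathcal{O}'(i)$ returns $+1, 0, -1$, so that $a_i + b_i + c_i = 1$. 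By Lemma \ref{Oracle}, $a_i \le O(\eps^6)$ whenever $D(i) \le 1/N$ and $c_i \le O(\eps^6)$ whenever $D(i) \ge 1/N$; moreover $b_i \le O(\eps^6)$ when $D(i) \notin [1/(5N), 5/N]$.

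The key identity I would establish is
\[
2\TV(\mathcal{D},\mathcal{U}) \;=\; (p_+ - q_+) \;+\; (q_- - p_-) \;+\; T_0 \;+\; O(\eps^6),
\]
where $p_\sigma := \BP_{w \sim \mathcal{D}}[\mathcal{O}'(w) = \sigma]$, $q_\sigma := \BP_{w \sim \mathcal{U}}[\mathcal{O}'(w) = \sigma]$, and $T_0 := \sum_i b_i \cdot |D(i) - 1/N|$. To derive the identity, split $\sum_i a_i |D(i)-1/N|$ by the sign of $D(i) - 1/N$: the ``wrong-sign'' portion (indices with $D(i) \le 1/N$) is bounded by $O(\eps^6) \cdot \TV(\mathcal{D},\mathcal{U}) = O(\eps^6)$, while the ``right-sign'' portion equals $p_+ - q_+$ up to the same $O(\eps^6)$ correction, since $p_+ - q_+ = \sum_i a_i(D(i)-1/N)$ differs from $\sum_{D(i)>1/N} a_i(D(i)-1/N)$ by exactly the wrong-sign term. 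A symmetric argument for $c_i$, combined with $\sum_i b_i |D(i)-1/N| = T_0$ by definition, yields the identity after summing over $\sigma \in \{+1,0,-1\}$.

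Next, I would estimate each of the three pieces independently. The quantities $p_\pm$ and $q_\pm$ can each be estimated to additive error $O(\eps)$ with high probability by drawing $O(\eps^{-2}\log\eps^{-1})$ samples from $\mathcal{D}$ (via \Call{Samp}{}) or $\mathcal{U}$ and running $\mathcal{O}'$ on each, using a standard Chernoff bound. To estimate $T_0$, I would reuse $\mathcal{O}'$ as the oracle $\mathcal{O}$ required by Lemma \ref{EstimateCloseTerms}, defining ``accept'' to mean ``$\mathcal{O}'$ outputs $0$''; the associated function $s(i) = b_i$ satisfies $s(i) \le O(\eps^6)$ rather than strictly zero outside $[1/(5N), 5/N]$, but this perturbation changes $\gamma_2 = \sum_i s(i) D(i)$ by only $O(\eps^6)$, and the potentially-high-variance ``extreme element accepted'' event inside Lemma \ref{EstimateCloseTerms} occurs with probability at most $O(\eps^{-2}) \cdot O(\eps^6) = O(\eps^4)$ across the $O(\eps^{-2})$ uniform samples used there, so it can be conditioned away. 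Lemma \ref{EstimateCloseTerms} additionally requires $\gamma_1 \ge 10\eps$; I handle the opposite case by first estimating $\gamma_1$ to additive error $O(\eps)$ with $O(\eps^{-2}\log\eps^{-1})$ fresh oracle queries, and if the estimate falls below a threshold like $20\eps$, outputting $\hat{T}_0 = 0$, which is correct up to $O(\eps)$ additive error since then $T_0 \le \gamma_1 + \gamma_2 \le 6\gamma_1 + O(\eps^6) = O(\eps)$.

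Combining everything, the output $\widehat{\TV} := \tfrac12\bigl[(\hat{p}_+ - \hat{q}_+) + (\hat{q}_- - \hat{p}_-) + \hat{T}_0\bigr]$ has additive error $O(\eps)$ with probability at least $0.9$ after a union bound over the constant number of constituent estimators. The total query complexity is $O(\eps^{-2}\log^3\eps^{-1})$, dominated by the oracle construction of Lemma \ref{Oracle} and the invocation of Lemma \ref{EstimateCloseTerms} (with each oracle call expanded into $O(\log\eps^{-1})$ \Call{Pcond}{} queries). The main obstacle is accounting for the $O(\eps^6)$-imperfection of $\mathcal{O}'$ in the boundary regions: one must verify both that the decomposition identity survives this imperfection and that the variance-control argument powering Lemma \ref{EstimateCloseTerms} is not wrecked by a rare extreme-element acceptance, which is precisely the union bound calculation sketched above.
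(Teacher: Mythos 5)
Your proposal is correct and follows essentially the same route as the paper's proof: converting $\mathcal{O}'$ into the accept-oracle for Lemma \ref{EstimateCloseTerms} to handle the $|D(i)-\frac{1}{N}|$ mass of the ``$0$'' outputs, estimating the probabilities of outputs $\pm 1$ under $\mathcal{D}$ and $\mathcal{U}$ by direct sampling, summing via $a_i+b_i+c_i=1$, absorbing the $O(\eps^6)$ oracle imperfection, and checking $\gamma_1 \ge 10\eps$ before invoking Lemma \ref{EstimateCloseTerms}. The query accounting ($O(\eps^{-2}\log^2\eps^{-1})$ oracle/sample calls, each oracle call costing $O(\log \eps^{-1})$ \Call{Pcond}{} queries) also matches the paper's.
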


\begin{proof}
    First, we use Lemma \ref{Oracle} to get the oracle $\mathcal{O}',$ and convert this to the oracle $\mathcal{O}$ of Subsection \ref{OracleGiven} by having the oracle $\mathcal{O}$ accept an input $i$ if and only if $\mathcal{O}'$ returns $0$ on input $i$. Note that $\mathcal{O}$ satisfies the requirements except that the probability that $\mathcal{O}$ accepts an element $i$ with $D(i) \not\in \left[\frac{1}{5N}, \frac{5}{N}\right]$ is at most $\eps^6$ rather than just never accepting. This, however, is fine, since we will only make at most $\tilde{O}(\eps^{-2})$ calls to $\mathcal{O}$. Importantly, note that if we sum $s(i) \cdot |D(i) - \frac{1}{N}|$ over the $i$ such that $D(i) \not\in \left[\frac{1}{5N}, \frac{5}{N}\right]$, where $s(i) = \BP(\mathcal{O}'(i) = 0)$, we get $O(\eps^6),$ since $\sum_i D(i), \sum_i \frac{1}{N} \le 1$ and $s(i) \le \eps^6$ for all such $i$. Therefore, using $O(\eps^{-2} \log^2 \eps^{-1})$ queries to \Call{Pcond}{} and our oracle $\mathcal{O}$ which we have created, we can determine $\sum_{i = 1}^{N} s(i) \cdot |D(i)-\frac{1}{N}|$ up to an $O(\eps)$ additive error using Lemma \ref{EstimateCloseTerms}, where $s(i)$ is the probability of $\mathcal{O}'$ returning $0$ on $i$. For Lemma \ref{EstimateCloseTerms} to work, we will need $\gamma_1 = \sum_{i = 1}^{N} s(i) \cdot \frac{1}{N}$ to be at least $10 \eps,$ but this can be checked easily, and if $\gamma_1 = O(\eps),$ then $\sum_{i = 1}^{N} s(i) \cdot |D(i) - \frac{1}{N}| = O(\eps)$ so we can estimate it as $0$.
    
    Now, let $r(i)$ be the probability that $\mathcal{O}'$ returns $-1$ on $i$ and $t(i)$ be the probability that $\mathcal{O}'$ returns $1$ on $i$. Note that $\BP_{y \sim \mathcal{U}}(\mathcal{O}'(y) = -1) = \sum_{i = 1}^{N} r(i) \cdot \frac{1}{N}$ and $\BP_{z \sim \mathcal{D}}(\mathcal{O}'(z) = -1) =  \sum_{i = 1}^{N} r(i) \cdot D(i)$. Likewise, $\BP_{y \sim \mathcal{U}}(\mathcal{O}'(y) = 1) = \sum_{i = 1}^{N} t(i) \cdot \frac{1}{N}$ and $\BP_{z \sim \mathcal{D}}(\mathcal{O}'(z) = 1) =  \sum_{i = 1}^{N} t(i) \cdot D(i)$.
    
    Now, we note that since $t(i) = O(\eps^6)$ for all $i$ such that $D(i) < \frac{1}{N}$ and $r(i) = O(\eps^6)$ for all $i$ such that $D(i) > \frac{1}{N},$ we have that $\sum_{i = 1}^{N} r(i) \cdot (\frac{1}{N} - D(i)) = O(\eps^6) + \sum_{i = 1}^{N} r(i) \cdot |\frac{1}{N} - D(i)|.$ Likewise, we have that $\sum_{i = 1}^{N} t(i) \cdot (D(i) - \frac{1}{N}) = O(\eps^6) + \sum_{i = 1}^{N} t(i) \cdot |D(i) - \frac{1}{N}|.$ Recall that we know $\sum_{i = 1}^{N} s(i) \cdot |D(i)-\frac{1}{N}|$ up to an $O(\eps)$ additive factor, and that we can compute $\BP_{y \sim \mathcal{U}}(\mathcal{O}'(y) = -1)$, $\BP_{z \sim \mathcal{D}}(\mathcal{O}'(z) = -1)$, $\BP_{y \sim \mathcal{U}}(\mathcal{O}'(y) = 1)$, and $\BP_{z \sim \mathcal{D}}(\mathcal{O}'(z) = 1)$ each up to an $O(\eps)$ error using $O(\eps^{-2})$ queries to $\mathcal{O}'$ with probability at least $0.9$, where we either sample from $\mathcal{D}$ or $\mathcal{U}$. Therefore, we can compute
\[\sum_{i = 1}^{N} s(i) \cdot \left|D(i)-\frac{1}{N}\right| + \sum_{i = 1}^{N} r(i) \cdot \left|D(i)-\frac{1}{N}\right| + \sum_{i = 1}^{N} t(i) \cdot \left|D(i)-\frac{1}{N}\right| = \sum_{i = 1}^{N} \left|D(i)-\frac{1}{N}\right| = 2 \cdot \TV(\mathcal{D}, \mathcal{U})\]
    up to an $O(\eps)$ factor, where we used the fact that $r(i)+s(i)+t(i) = 1$ for all $i$.
    
    In total, we used $O(\eps^{-2} \log^2 \eps^{-1})$ queries to \Call{Pcond}{}, \Call{Samp}{}, and $\mathcal{O}'.$ But since each call to $\mathcal{O}'$ uses $O(\log \eps^{-1})$ calls to \Call{Pcond}{}, the final query complexity of $O(\eps^{-2} \log^3 \eps^{-1}).$
\end{proof}

We are now ready to prove Theorem \ref{TolerantUnif}.

\begin{proof}[Proof of Theorem \ref{TolerantUnif}]
    We assume that the output of Lemma \ref{ConstantApprox} satisfies the guarantees, ignoring the $O(\eps^6)$ failure probability.

    First, suppose that in Lemma \ref{ConstantApprox}, we return $S = \{\}$. Then, we can say that $\TV(\mathcal{D}, \mathcal{U}) = 1$ and we are off by at most $3 \eps.$
    
    Next, suppose that in Lemma \ref{ConstantApprox}, we return some $(x, \hat{D}(x)) \in S$ with $\hat{D}(x) \in \left[\frac{5}{9N}, \frac{9}{5N}\right].$ Then, we can use Lemma \ref{GivenGoodElt} to finish the proof.

    Otherwise, we have that in Lemma \ref{ConstantApprox}, at least some pair $(x, \hat{D}(x)) \in S$ was found with $\hat{D}(x) \in \left[\frac{0.9 \eps}{N}, \frac{1.1 \eps^{-1}}{N}\right],$ but every such $(x_r, \hat{D}(x_r)) \in S$ satisfies $\hat{D}(x_r) \not\in \left[\frac{5}{9N}, \frac{9}{5N}\right].$ We will show first how to estimate $\BP_{y \sim \mathcal{U}} \left(D(y) \ge \frac{1}{N}\right)$ and then how to estimate $\BP_{z \sim \mathcal{D}} \left(D(z) < \frac{1}{N}\right).$ We combine these together to get the final estimate.
    
    Suppose there exists some $x_r$ returned by Lemma \ref{ConstantApprox} such that $\hat{D}(x_r) \ge \frac{1}{N}$. In that case, choose $x_r$ such that $\hat{D}(x_r) \le \hat{D}(x_{r'})$ for all $x_{r'}$ returned by Lemma \ref{ConstantApprox} with $\hat{D}(x_{r'}) \ge \frac{1}{N}.$ With probability at least $1-\eps^6,$ all $\hat{D}(x_r)$'s are accurate up to a $1 \pm 0.1$ factor, so $D(x_r) \ge \frac{3}{2N}$ since $\hat{D}(x_r) \ge \frac{9}{5N}$. We also know that
\[\BP_{y \sim \mathcal{U}} \left(\frac{1}{N} \le D(y) < 0.8 D(x_r)\right) \le 2\eps\]
    by Lemma \ref{ConstantApprox}. Therefore, the probability over $y \sim \mathcal{U}$ that \Call{Compare}{$y, x_r, 0.01$} is at least $0.78$ equals $\BP_{y \sim \mathcal{U}} \left(D(y) \ge \frac{1}{N}\right)$, up to a $3 \eps$ additive error. This is true because 
\[\BP_{y \sim \mathcal{U}}\left(\Call{Compare}{y, x_r, 0.01} \ge 0.78\right) \ge \BP_{y \sim \mathcal{U}}\left(\frac{D(y)}{D(x_r)} \ge 0.79 \right) - \eps \ge \BP_{y \sim \mathcal{U}} \left(D(y) \ge \frac{1}{N}\right) - 3 \eps,\]
    but 
\[\BP_{y \sim \mathcal{U}}\left(\Call{Compare}{y, x_r, 0.01} \ge 0.78 \right) \le \BP_{y \sim \mathcal{U}}\left(\frac{D(y)}{D(x_r)} \ge 0.77\right) + \eps \le \BP_{y \sim \mathcal{U}} \left(D(y) \ge \frac{1}{N}\right) + \eps.\]
    We can estimate $\BP_{y \sim \mathcal{U}}\left(\Call{Compare}{y, x_r, 0.01} \ge 0.78\right)$ up to a $2\eps$ error, using $O(\eps^{-2})$ samples of $y_i \leftarrow \mathcal{U}$ and computing $\Call{Compare}{y_i, x_r, 0.01}$ for each of them.
    Now, if no such $x_r$ with $\hat{D}(x_r) \ge \frac{1}{N}$ exists, then saw in the proof of Lemma \ref{ConstantApprox} that even
\[\BP_{y \sim \mathcal{U}} \left(\frac{1}{N} \le D(y)\right) \le 2\eps.\]
    Thus, we estimate $\BP_{x \sim \mathcal{U}} (D(x) \ge \frac{1}{N})$ as $0$, which is correct up to a $2 \eps$ additive error.
    
    Similarly, suppose there exists some $x_s$ such that $\hat{D}(x_s) < \frac{1}{N}$. In that case, choose $x_s$ such that $\hat{D}(x_s) \ge \hat{D}(x_{s'})$ for all $x_{s'}$ returned by Lemma \ref{ConstantApprox} with $\hat{D}(x_{s'}) < \frac{1}{N}.$ Also, with probability at least $1-\eps^6,$ all $\hat{D}(x_r)$'s are accurate up to a $1 \pm 0.1$ factor, so $D(x_s) \le \frac{2}{3N}$ since $\hat{D}(x_s) \le \frac{5}{9N}$. We also know that
\[\BP_{z \sim \mathcal{D}} \left(\frac{1}{N} > D(z) > 1.25 D(x_r)\right) < 2\eps\]
    by Lemma \ref{ConstantApprox}. Therefore, the probability over $z \sim \mathcal{D}$ that \Call{Compare}{$z, x_s, 0.01$} is at most $1.27$ equals $\BP_{z \sim \mathcal{D}} \left(z < \frac{1}{N}\right)$, up to a $3 \eps$ additive error. This is true because 
\[\BP_{z \sim \mathcal{D}}\left(\Call{Compare}{z, x_s, 0.01} \le 1.27\right) \ge \BP_{z \sim \mathcal{D}}\left(\frac{D(z)}{D(x_s)} \le 1.26\right) - \eps \ge \BP_{z \sim \mathcal{D}} \left(D(z) < \frac{1}{N}\right) - 3 \eps,\]
    but 
\[\BP_{z \sim \mathcal{D}}\left(\Call{Compare}{z, x_s, 0.01} \le 1.27\right) \le \BP_{z \sim \mathcal{D}}\left(\frac{D(z)}{D(x_s)} \le 1.28\right) + \eps \le \BP_{z \sim \mathcal{D}} \left(D(z) < \frac{1}{N}\right) + \eps.\]
    We can estimate $\BP_{z \sim \mathcal{D}}\left(\Call{Compare}{z, x_s, 0.01} \le 1.27\right)$ up to a $2\eps$ error, using $O(\eps^{-2})$ samples of $z_i \leftarrow \mathcal{D}$ and computing $\Call{Compare}{z_i, x_s, 0.01}$ for each of them.
    Now, if no $x_s$ exists, then we know that
\[\BP_{z \sim \mathcal{D}} \left(D(z) < \frac{1}{N}\right) \le 2\eps.\]
    by the same proof as in of Lemma \ref{ConstantApprox}. Namely, for any distribution $\mathcal{D}$, $\BP_{z \sim \mathcal{D}} \left(D(z) < \frac{\eps}{N}\right) \le \eps,$ and if $\BP_{z \sim \mathcal{D}} \left(\frac{1}{N} > D(z) \ge \frac{\eps}{N}\right) \ge \eps,$ then $\BP_{z \sim \mathcal{D}}\left(D(z) \in \left[\frac{1.01^{-(t+1)}}{N}, \frac{1.01^{-t}}{N}\right]\right) \ge \frac{\eps}{210 \log \eps^{-1}}$ for some $0 \le t < \log_{1.01} \eps^{-1}$, which also implies that $\BP_{z \sim \mathcal{U}}\left(D(z) \in \left[\frac{1.01^{-(t+1)}}{N}, \frac{1.01^{-t}}{N}\right]\right) \ge \frac{\eps}{210 \log \eps^{-1}}$. Thus, Lemma \ref{ConstantApprox} would find some $(x, \hat{D}(x))$ with $D(x)$ in the range $\left[\frac{1.01^{-(t+1)}}{N}, \frac{1.01^{-t}}{N}\right] \subset \left[\frac{\eps}{N}, \frac{1}{N}\right].$

    To finish, note that
\[\BP_{x \sim \mathcal{U}} \left(D(x) \ge \frac{1}{N}\right) + \BP_{x \sim \mathcal{D}}\left(D(x) < \frac{1}{N}\right) = \sum_{x: D(x) \ge \frac{1}{N}} \frac{1}{N} + \sum_{x: D(x) < \frac{1}{N}} D(x) = \sum_{x = 1}^{N} \min\left(D(x), \frac{1}{N}\right),\]
    which equals $1-\TV(\mathcal{D}, \mathcal{U})$ by Proposition \ref{BasicTV}. Therefore, we can estimate $\TV(\mathcal{D}, \mathcal{U})$ up to an $O(\eps)$ additive error, which concludes all cases.
\end{proof}

\section{An $\tilde{O}(\eps^{-4})$-query algorithm for Tolerant Identity Testing} \label{TolerantIdentity}

In this section, we present an algorithm that, given a known distribution $\mathcal{D}^*$ over $[N]$, makes $\tilde{O}(\eps^{-4})$ queries to \Call{Cond}{} with distribution $\mathcal{D}$ and determines $\TV(\mathcal{D}, \mathcal{D}^*)$ up to an $O(\eps)$ additive error. The dependence on the support size $N$ is optimal (i.e., no dependence), though it is possible that the dependence on $\eps$ can be improved to $\tilde{O}(\eps^{-2})$.

First, assume that $\mathcal{D}^*$ is ordered so that $D^*(1) \le D^*(2) \le \cdots \le D^*(N).$ We are allowed to permute the elements of both $\mathcal{D}$ and $\mathcal{D}^*$ with the same permutation, since this will not affect $\TV(\mathcal{D}, \mathcal{D}^*)$ and we can still make the same COND queries (after the same permutation is applied).

We will consider a slightly more general distribution problem. Suppose we have two distributions $\mathcal{P}$ and $\mathcal{P}^*$ over $[M],$ where $\mathcal{P}^*$ is known and $P^*(1) \le P^*(2) \le \cdots \le P^*(M)$, and we are given $\Call{Cond}{}$ access to $\mathcal{P}.$ Our goal is to determine $\sum_{i = 1}^{M} \min(c_1 P(i), c_2 P^*(i))$ up to an additive $O(\eps)$ error, where $c_1, c_2 \le 1$ are known constants. Since $\sum_{i = 1}^{M} P(i) = \sum_{i = 1}^{M} P^*(i) = 1,$ if either $c_1 = O(\eps)$ or $c_2 = O(\eps),$ then $\sum_{i = 1}^{M} \min(c_1 P(i), c_2 P^*(i)) = O(\eps)$ so we can just output $0$ as our estimate. For the case where $c_1, c_2 \gg \eps,$ we give an inductive approach. Namely, we show how to find some set $S \subset [M]$ such that either $P(S) \ge \frac{1}{3}$ or $P^*(S) \ge \frac{1}{3}$ and estimate $\sum_{i \in S} \min(c_1 P(i), c_2 P^*(i))$ up to an $O(\eps)$ additive error. To estimate $\sum_{i \not\in S} \min(c_1 P(i), c_2 P^*(i)),$ we can modify the distributions $P, P^*$ to be conditioned on $i \not\in S.$ Both $c_1$ and $c_2$ will either increase or stay the same, and either $c_1$ or $c_2$ will multiply by a factor of at most $\frac{2}{3},$ since either $P(S) \ge \frac{1}{3}$ or $P^*(S) \ge \frac{1}{3}.$ Therefore, we only need to repeat this process $O(\log \eps^{-1})$ times, until either $c_1$ or $c_2$ is $O(\eps).$ Our final error will be $O(\eps \cdot \poly \log \eps^{-1}),$ but we can fix this by replacing $\eps$ with $\eps' = \frac{\eps}{\poly \log \eps^{-1}}.$

\begin{theorem} \label{PartialDetermining}
    Suppose $\mathcal{P}, \mathcal{P}^*$ are distributions over $[M]$, where $\mathcal{P}^*$ is known, $P^*(1) \le P^*(2) \le \cdots \le P^*(M)$, and we have $\Call{Cond}{}$ access to $\mathcal{P}.$ Also, let $1 \ge c_1, c_2 \ge \eps$ be known constants. Then, there is an algorithm \Call{PartialDetermining}{} that uses $O(\eps^{-4} \log^6 \eps^{-1})$ queries to $\Call{Cond}{}_{\mathcal{P}}$, such that with probability at least $1-\eps$, the algorithm finds a set $S$ such that $P^*(S) \ge \frac{1}{3}$ as well as an estimate of $\sum_{i \in S} \min(c_1 P(i), c_2 P^*(i))$ which is accurate to an additive $O(\eps \log \eps^{-1})$ error.
\end{theorem}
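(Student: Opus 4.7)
My plan is to use the knowledge of $\mathcal{P}^*$ to pick $S$ with no queries, and then reduce the estimation problem to estimating an expectation by sampling. Specifically, I would set $S := [1:i^*]$ where $i^* := \min\{i : P^*([1:i]) \ge 1/3\}$, so that $P^*(S) \ge 1/3$ by construction, and rewrite the target as
\[
E_S := \sum_{i \in S} \min(c_1 P(i), c_2 P^*(i)) = P^*(S)\cdot \mathbb{E}_{i \sim \mathcal{P}^*|_S}\!\left[\min(c_1 r_i, c_2)\right], \qquad r_i := P(i)/P^*(i).
\]
I then draw $K = \tilde{O}(\eps^{-2})$ samples $i_1,\dots,i_K$ from $\mathcal{P}^*|_S$ (free, since $\mathcal{P}^*$ is known) and output $P^*(S)$ times the empirical average of $\min(c_1 \hat{r}_{i_j}, c_2)$. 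Since $\min(c_1 r, c_2) \in [0,1]$, Hoeffding's inequality reduces the task to producing each $\hat r_{i_j}$ so that $\min(c_1 \hat r_{i_j}, c_2)$ is accurate to additive $O(\eps)$ with probability $1-\eps^3$, after which a union bound over the $K$ samples yields the overall error.

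For each sampled $i \in S$ with $i \ge 2$, I would estimate $r_i$ by adapting the tolerant-uniformity machinery from Section \ref{TolerantUniformity}. Greedily partition $[1:i-1]$ into consecutive intervals $T_1,\dots,T_{k-1}$ with $P^*(T_j) \in [\tfrac12 P^*(i), \tfrac32 P^*(i)]$, which is possible because the sort order forces $P^*(j) \le P^*(i)$ for every $j < i$. Set $A := \{i\} \cup T_1 \cup \cdots \cup T_{k-1}$. The queries $\Call{Cond}{}_{\mathcal{P}}(A)$, and more generally $\Call{Cond}{}_{\mathcal{P}}$ on unions of buckets, provide full conditional-sampling access to the meta-distribution $\tilde{\mathcal{P}}$ on $k$ buckets, whose reference version $\tilde{\mathcal{P}}^*$ is constant-factor uniform. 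Applying the geometric-distribution trick of Lemma \ref{SingleElement} with $\tilde{\mathcal{P}}$ in place of $\mathcal{D}$ and with the distinguished atom $\{i\}$, supported by the $\Call{Compare}{}$ primitive and the refined estimator of Lemma \ref{ZEstimate}, lets me compute $\tilde{P}(\{i\}) = P(i)/P(A)$ to the required accuracy using $\tilde{O}(\eps^{-2})$ queries. Combined with a one-off plain-sampling estimate $\hat{P}(S)$ of $P(S) = P(A)$ to additive error $O(\eps)$ (another $\tilde{O}(\eps^{-2})$ queries total), I set $\hat r_i := \hat P(S) \cdot \tilde P(\{i\})/P^*(i)$.

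The main obstacle, and the reason the construction is not completely routine, is the issue flagged in the overview of Theorem \ref{TolerantId}: when $P(S)$ is drastically smaller than $P^*(S)$, the meta-distribution $\tilde{\mathcal{P}}$ can be far from uniform and the per-sample estimator may degrade. I plan to handle this via a case split on $\hat{P}(S)$. If $\hat{P}(S) \le O(\eps/c_1)$, then $E_S \le c_1 \cdot P(S) \le O(\eps)$ and the algorithm safely outputs $0$; otherwise $P(S) = \Omega(\eps/c_1)$, which, together with $c_1 \ge \eps$, forces enough of the mass of $\mathcal{P}$ onto $A$ for the tolerant-uniformity primitives to converge. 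Moreover, because $\min(c_1 r_i, c_2)$ is clipped at $c_2$, large estimates $\hat r_i$ can be truncated without loss, so only the regime $r_i \lesssim c_2/c_1$ requires real multiplicative accuracy. The at most $O(1)$ outlier indices $i \in S$ with $P^*([1:i-1]) < \tfrac12 P^*(i)$ (where partitioning fails) necessarily have $P^*(i) = \Omega(1)$ and can be handled separately by $\tilde{O}(\eps^{-2})$ pairwise $\Call{Compare}{}$ calls against an anchor element. Aggregating, every sample uses $\tilde{O}(\eps^{-2})$ queries to $\Call{Cond}{}_{\mathcal{P}}$, for a total of $\tilde{O}(\eps^{-4})$; the $\log \eps^{-1}$ factor in the final additive error is absorbed by the union bound over per-sample events and the $\log \eps^{-1}$ overhead already present in Section \ref{TolerantUniformity}.
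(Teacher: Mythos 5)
There is a genuine gap, and it is in the query accounting at the heart of your plan. You propose a flat scheme: draw $K=\tilde{O}(\eps^{-2})$ samples $i\sim\mathcal{P}^*|_S$ and estimate each $\min(c_1\hat r_i,c_2)$ to additive $O(\eps)$, claiming each such per-sample estimate costs $\tilde{O}(\eps^{-2})$ queries. The machinery you invoke (the geometric trick, \Call{Compare}{}, and the bucketed meta-distribution, i.e.\ the analogues of Lemmas \ref{Est1}--\ref{Est3}) delivers a $1\pm\delta$ \emph{multiplicative} estimate of $P(i)/P(A)$ at cost $\tilde{O}(\eps^{-2}\delta^{-2})$, not $\tilde{O}(\delta^{-2})$: the bucket $\{i\}$ can carry as little as $\poly(\eps)/k$ of the conditional mass and the meta-distribution $\tilde{\mathcal{P}}$ need not be anywhere near uniform over buckets (only $\tilde{\mathcal{P}}^*$ is), so the anchor-bucket probabilities entering the estimator can themselves be as small as $\Theta(\eps/\log\eps^{-1})$, which is exactly why Lemma \ref{Est2} pays $\eps^{-2}\delta^{-2}$. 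In the regime $c_1 r_i\approx c_2=\Theta(1)$ (say $c_1=c_2=1$, $r_i=1/2$), additive $O(\eps)$ accuracy on $\min(c_1 r_i,c_2)$ forces $\delta=O(\eps)$, i.e.\ $\tilde{O}(\eps^{-4})$ queries per sample, and your flat scheme then totals $\tilde{O}(\eps^{-6})$, not $\tilde{O}(\eps^{-4})$; clipping at $c_2$ does not help, since the hard case is $c_1 r_i$ just below $c_2$. The missing idea is the precision--sample-count balancing the paper uses (mirroring Lemma \ref{EstimateCloseTerms}): classify each sampled $z$ by the dyadic scale $2^{-t}$ at which $c_1 P(z)/P^*(z)$ separates from $c_2$ (the indicators $\BI_{-,t}$), estimate it only to precision $\delta_t\approx 2^{-t}$ (cost $\tilde{O}(\eps^{-2}2^{2t})$ per sample), and use only $O((2^{-t}/\eps)^2\log\eps^{-1})$ samples at level $t$ because each such summand $c_2-c_1 r$ is bounded by $O(2^{-t})$; the product is $\tilde{O}(\eps^{-4})$ per level and $\tilde{O}(\eps^{-4}\log^6\eps^{-1})$ overall. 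Without this (or an equivalent variance-reduction device, plus the care the paper takes that the estimators are nearly unbiased conditioned on the high-probability events), the stated complexity is not achieved.

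Two secondary problems: (i) your choice $S=[1:i^*]$ (the low-probability prefix) breaks the greedy partition step far more often than you allow. If $P^*$ grows geometrically with ratio $4$, then $P^*([1:i-1])<\tfrac12 P^*(i)$ for \emph{every} $i$, so the "at most $O(1)$ outliers, each with $P^*(i)=\Omega(1)$" claim is false, and the proposed fallback of pairwise \Call{Compare}{} against an anchor fails precisely because no element of comparable $P^*$-scale need exist --- this is the scale-mismatch obstruction that motivates using full \Call{Cond}{} power in the first place. The paper avoids this by anchoring at the median index $L$ with $P^*([L])\ge\tfrac12$ and only ever estimating ratios for $z\ge L$, so $P^*([z-1])\ge\tfrac12 P^*(z)$ except for the single edge case $P^*(L)\ge\tfrac13$, which it treats separately. (ii) An additive-$O(\eps)$ estimate of $P(S)$ is not enough: if $P(S)=\Theta(\sqrt{\eps})$ (above your $O(\eps/c_1)$ cutoff), the induced relative error $\Theta(\sqrt{\eps})$ in $\hat r_i$ produces additive error $\Theta(\sqrt{\eps})$ in $\min(c_1\hat r_i,c_2)$; you need a relative $(1\pm\delta)$ estimate of $P([z])$ as in Lemma \ref{Est1}. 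Both of these are patchable, but the query-budget issue in the first paragraph is structural.
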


    To begin the proof, we first let $L$ be the smallest integer such that $P^*([L]) = \BP_{x \sim \mathcal{P}^*} (x \le L) \ge \frac{1}{2}.$ Then, $P^*([L]) \ge \frac{1}{2}$ and $P^*([L:M]) \ge \frac{1}{2}$. We attempt to make $S = [L:M]$ or $S = [z]$ for some $z \ge L$. Note that for any set $S \subset [M]$,
\begin{align*}
\sum_{x \in S} \min(c_1 P(x), c_2 P^*(x)) &= \sum_{x \in S} P^*(x) \cdot \min\left(c_1 \cdot \frac{P(x)}{P^*(x)}, c_2\right) \\
&= \BE_{x \sim \mathcal{P}^*} \left(\BI(x \in S) \cdot \min\left(c_1 \cdot \frac{P(x)}{P^*(x)}, c_2\right)\right),
\end{align*}
    where $\BI(x \in S)$ is the indicator variable of $x \in S.$ Our rough goal will therefore be to provide estimates of $\frac{P(x)}{P^*(x)}$ for $x$ drawn from $\mathcal{P}^*$ with $x \in S.$

    First, suppose that $P^*(L) \ge \frac{1}{3}.$ Then, we let $S = \{L\}.$ To estimate $\sum_{x \in S} \min (c_1 P(x), c_2 P^*(x)) = \min(c_1 P^*(L), c_2 P(L)),$ we just need to estimate $P(L)$ up to an $\eps$ additive error, since $c_1, c_2 \le 1$ and we already know $P^*(z).$ This, however, can be done with $O(\eps^{-2} \log \eps^{-1})$ samples to $\mathcal{P}$ with failure probability $1 - \eps^{10}$ by a simple Chernoff bound argument. Otherwise, $P^*([z-1]) > \frac{1}{2} P^*(z),$ since if $z > L$ then $P^*([z-1]) \ge \frac{1}{2}$ and for $z = L,$ $P^*(z) < \frac{1}{3} \le \frac{2}{3} \cdot P^*([z]).$ Therefore, we can partition $[z-1]$ into sets $S_1, \dots, S_{k-1}$ such that $\frac{1}{2} P^*(z) \le P^*(S_i) \le P^*(z)$ for all $1 \le i \le k-1,$ using a simple greedy procedure \cite{FalahatgarJOPS15}. We finally let $S_k := \{z\}.$ Thus, $S_1, \dots, S_k$ partition $[z]$ so that $\frac{1}{2k} \le \frac{P^*(S_i)}{P([z])} \le \frac{2}{k}$ for all $1 \le i \le k$.
    
    To estimate $\frac{P(z)}{P^*(z)},$ write 
\[\frac{P(z)}{P^*(z)} = \frac{P([z])}{P^*(z)} \cdot \frac{P(S_j)}{P([z])} \cdot \frac{P(z)}{P(S_j)}\]
    for some $1 \le j \le k$ to be chosen later. Now, let $\mathcal{Q}, \mathcal{Q}^*$ be distributions over $[k]$ so that $Q(i) = \frac{P(S_i)}{P([z])}$ and $Q^*(i) = \frac{P^*(S_i)}{P^*([z])}.$ Then, recalling that $S_k = \{z\}$ and that $P^*([z]) = Q^*([k]) = 1$, we have that
\[\frac{P(z)}{P^*(z)} = \frac{1}{P^*(z)} \cdot P([z]) \cdot Q(j) \cdot \frac{Q(k)}{Q(j)}.\]
    
    We will  assume $\Call{Cond}{}$ access to both $\mathcal{P}$ and $\mathcal{Q}.$ We note that $\Call{Cond}{}$ access to $\mathcal{Q}$ can easily be simulated by $\Call{Cond}{}$ access to $\mathcal{P},$ since we just condition on a subset $T \subset [z]$ which is the union of some $S_i$'s and return which $S_i$ the $\Call{Cond}{}_{\mathcal{P}}(T)$ query outputs an element in. We will show how to output a $1 \pm \delta$ multiplicative approximation to each of $P([z])$, $Q_j$ for some $j$, and $\frac{Q(k)}{Q(j)}$ using a small number of queries. Moreover, our estimates will also be \emph{nearly unbiased}. These algorithms will not work under a few extreme cases, but we will deal with these cases accordingly.
    
\begin{lemma} \label{Est1}
    Fix $\delta \ge \eps.$ Then, there is an algorithm \Call{Est1}{} that uses $O(\eps^{-1} \cdot \log \eps^{-1} \cdot \delta^{-2})$ queries to $\Call{Cond}{}_{\mathcal{P}}$, such that conditioned on some event $E_1$ with $\BP(E_1) \ge 1-\eps^6,$ the following happens:
\begin{enumerate}
    \item If $P([z]) \ge \frac{\eps}{2},$ the algorithm outputs a random variable $\tilde{P}([z])$ such that $\frac{\tilde{P}([z])}{P([z])} \in [1 - \delta, 1 + \delta]$ whenever $E_1$ is true, and $\BE[\tilde{P}([z])|E_1] \in [1-\eps, 1+\eps] \cdot P([z])$.
    \item If $P([z]) < \frac{\eps}{2},$ the algorithm outputs $\tilde{P}([z]) \le \eps$ whenever $E_1$ is true.
\end{enumerate}
\end{lemma}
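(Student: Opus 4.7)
The plan is to take a very direct empirical-fraction estimator: set $n = C \cdot \eps^{-1} \delta^{-2} \log \eps^{-1}$ for a sufficiently large absolute constant $C$, call $\Call{Cond}{}_{\mathcal{P}}([M])$ exactly $n$ times to obtain i.i.d.\ samples $x_1,\dots,x_n \sim \mathcal{P}$, set $X := |\{i : x_i \in [z]\}|$, and output
\[
\tilde{P}([z]) := \frac{X}{n}.
\]
The query budget is $n = O(\eps^{-1} \delta^{-2} \log \eps^{-1})$, which matches the claim. Note that unconditionally $\BE[\tilde{P}([z])] = P([z])$ and $\tilde{P}([z]) \in [0,1]$ always; these two facts will drive everything.

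Next I would define the event $E_1$ directly as the event on which the desired conclusions hold. Concretely, let $E_1$ be the event that ``if $P([z]) \ge \eps/2$, then $|\tilde{P}([z]) - P([z])| \le \delta \cdot P([z])$; and if $P([z]) < \eps/2$, then $\tilde{P}([z]) \le \eps$.'' Since the true value of $P([z])$ determines which clause applies, $E_1$ is a well-defined event on the sample space. The first main step is then a Chernoff application to show $\BP(E_1) \ge 1 - \eps^6$. In Case 1, $\mu := \BE[X] = n \cdot P([z]) \ge n\eps/2$, and the multiplicative Chernoff bound gives
\[
\BP(|X - \mu| > \delta \mu) \le 2\exp\!\left(-\delta^2 \mu / 3\right) \le 2\exp\!\left(-C \log \eps^{-1}/6\right) \le \eps^6
\]
for $C$ chosen large enough. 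In Case 2, $\mu < n\eps/2$, and the one-sided Chernoff bound yields $\BP(X > n\eps) \le \BP(X \ge 2\mu) \le \exp(-\mu/3)$, and one checks that replacing $\mu$ by the cruder bound $\Omega(n\eps)$ via a standard ``increase-to-threshold'' step gives $\BP(X > n\eps) \le \exp(-\Omega(n\eps)) \le \eps^6$ for $C$ large. This directly yields both conclusions whenever $E_1$ holds.

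Finally I need to establish the near-unbiasedness statement in Case 1. Here I would use the total expectation decomposition
\[
P([z]) \;=\; \BE[\tilde{P}([z])] \;=\; \BE[\tilde{P}([z]) \mid E_1] \, \BP(E_1) \;+\; \BE[\tilde{P}([z]) \mid \neg E_1] \, \BP(\neg E_1).
\]
Since $\tilde{P}([z]) \in [0,1]$, the last term is at most $\BP(\neg E_1) \le \eps^6$, and $\BP(E_1) \in [1 - \eps^6, 1]$. Rearranging gives $\BE[\tilde{P}([z]) \mid E_1] \in [P([z]) - \eps^6, (P([z]) + \eps^6)/(1-\eps^6)]$, which since $P([z]) \ge \eps/2$ is contained in $[1-\eps, 1+\eps] \cdot P([z])$ for small enough $\eps$ (the slack $\eps^6$ is comfortably below $\eps \cdot \eps/2$). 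There is essentially no obstacle here; the whole lemma is a standard concentration-plus-bias-control calculation. The only thing one has to be slightly careful about is choosing $E_1$ as an ``outcome-defined'' event (not something that references the algorithm's internals or depends on the unknown $P([z])$ in an awkward way) so that the conditional expectation bound is meaningful---defining $E_1$ directly as the conjunction of the two target conclusions handles this cleanly.
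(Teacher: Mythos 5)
Your proposal is correct and is essentially the paper's own proof: the paper likewise takes $O(\eps^{-1}\delta^{-2}\log\eps^{-1})$ samples from $\mathcal{P}$, outputs the empirical fraction landing in $[z]$, defines $E_1$ as exactly the two target conclusions, applies Chernoff in each case, and notes that conditioning on $E_1$ shifts the expectation by at most $\eps^6 \le \eps\cdot P([z])$. Your explicit law-of-total-expectation step just spells out the bias-control sentence the paper states more tersely.
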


\begin{proof}
    Our algorithm will be quite straightforward. Namely, we sample $R = O(\eps^{-1} \cdot \log \eps^{-1} \cdot \delta^{-2})$ samples $x_1, \dots, x_R$ from $\mathcal{P}$ and let $\tilde{P}([z])$ denote the fraction of the $x_i$'s such that $x_i \le z.$ We let $E_1$ be the event $\frac{\tilde{P}([z])}{P([z])} \in [1-\delta, 1+\delta]$ in the case $P([z]) \ge \frac{\eps}{2}$ and the event $\tilde{P}([z]) \le \eps$ in the case $P([z]) < \frac{\eps}{2}.$ Since each $x_i$ has a $P([z])$ probability of being in $[z],$ the estimate $\tilde{P}([z])$ has distribution $\frac{1}{R} \cdot \operatorname{Bin}(R, P([z]))$, so $\BE[\tilde{P}([z])] = P([z])$. Moreover, if $P([z]) \ge \frac{\eps}{2},$ we have that $\frac{\tilde{P}([z])}{P([z])} \in [1 - \delta, 1 + \delta]$ with probability at least $1-\eps^6$ by the Chernoff bound. Even if we condition on $E_1$, the expectation of $\tilde{P}([z])$ changes by at most $\eps^6$ which is at most $\eps \cdot P([z])$. Likewise, if $P([z]) \le \frac{\eps}{2},$ then $\tilde{P}([z]) \le \eps$ with probability at least $1-\eps^6$, by the Chernoff bound.
\end{proof}

\begin{lemma} \label{Est2}
    Fix $\delta \ge \eps.$ Then, there is an algorithm \Call{Est2}{} that uses $O(\eps^{-2} \log^5 \eps^{-1} \cdot \delta^{-2})$ queries to $\Call{Cond}{}_{\mathcal{Q}}$, such that conditioned on some event $E_2$ with $\BP(E_2) \ge 1-O(\eps^6),$ the following happens:
\begin{enumerate}
    \item Suppose that $\TV(\mathcal{Q}, \mathcal{U}) < 1 - 3 \eps$, where $\mathcal{U}$ is the uniform distribution over $[k]$. Then, the algorithm outputs some pair $(j, \tilde{Q}(j))$ with $j \in [k]$ such that $Q(j) \in \left[\frac{2}{3} \cdot \frac{\eps}{k}, \frac{3}{2} \cdot \frac{\eps^{-1}}{k}\right]$ and $\frac{\tilde{Q}(j)}{Q(j)} \in [1 - \delta, 1+\delta]$ whenever $E_2$ is true. Moreover, conditioning on any fixed $j$ being returned, $\frac{\BE[\tilde{Q}(j)|E_2]}{Q(j)} \in [1-\eps, 1+\eps].$
    \item If $\TV(\mathcal{Q}, \mathcal{U}) \ge 1 - 3 \eps$, then conditioned on $E_2,$ the algorithm either outputs NULL or a pair $(j, \tilde{Q}(j))$ with $Q(j) \in \left[\frac{2}{3} \cdot \frac{\eps}{k}, \frac{3}{2} \cdot \frac{\eps^{-1}}{k}\right]$, $\frac{\tilde{Q}(j)}{Q(j)} \in [1 - \delta, 1+\delta]$, and conditioned on any fixed $j$ being returned, $\frac{\BE[\tilde{Q}(j)|E_2]}{Q(j)} \in [1-\eps, 1+\eps]$.
\end{enumerate}
\end{lemma}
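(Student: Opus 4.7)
The plan is to obtain the output of Lemma \ref{Est2} in two stages: first run the algorithm of Lemma \ref{ConstantApprox} (adapted to the distribution $\mathcal{Q}$ on $[k]$, with \Call{Samp}{} and \Call{Pcond}{} for $\mathcal{Q}$ simulated by the given \Call{Cond}{}$_\mathcal{Q}$ oracle) to obtain a constant-factor candidate, and then refine that candidate into a $(1\pm\delta)$-multiplicative, nearly unbiased estimate. The first stage costs $O(\eps^{-2}\log^5\eps^{-1})$ queries and, with probability $1 - O(\eps^6)$, returns a set $S$ of pairs $(j,\hat Q(j))$ with $\hat Q(j)/Q(j)\in[0.9,1.1]$ and $\hat Q(j) \in [0.9\eps/k,\,1.1\eps^{-1}/k]$; moreover $S$ is nonempty whenever $\TV(\mathcal{Q},\mathcal{U})\le 1-3\eps$. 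If $S$ is empty, I output NULL, which is allowed because this only happens under case (2). Otherwise I pick any $(j,\hat Q(j))\in S$, which automatically satisfies $Q(j)\in[\hat Q(j)/1.1,\hat Q(j)/0.9]\subseteq[2\eps/(3k),\,3\eps^{-1}/(2k)]$ as required.

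To refine $\hat Q(j)$, I would imitate the ``geometric distribution trick'' from Lemma \ref{SingleElement}, using as my oracle the \Call{Compare}{$y,j,0.01$}-based test that accepts $y$ iff the returned ratio lies in $[0.98,1.02]$. Writing $s(y)$ for its acceptance probability, $u(j)=\BE_{y\sim\mathcal{U}}s(y)$, and $\gamma_2=\BE_{y\sim\mathcal{Q}}s(y)$, the proof of Lemma \ref{ConstantApprox} shows $u(j),\gamma_2=\tilde\Omega(\eps)$ and $\gamma_2/u(j)=(1\pm 0.04)\,kQ(j)$. I then draw $R'=\tilde\Theta(\delta^{-2}\eps^{-2})$ fresh samples $y_i\sim\mathcal{U}$; for each I run the oracle and, if it accepts, call $\Call{Pcond}{j,y_i}$ repeatedly until $j$ is returned, recording the count $V_i$ (and setting $V_i=0$ if rejected). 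Since $V_i\mid y_i,\text{accept}$ is geometric with mean $Q(y_i)/Q(j)=\Theta(1)$ and bounded variance, the unconditional $V_i$ has mean $\gamma_2/(kQ(j))$ and variance $O(u(j))$, so the average $\tilde V$ is a $(1\pm\delta/3)$-multiplicative estimate of $\gamma_2/(kQ(j))$ with probability $1-O(\eps^{10})$. In parallel, evaluating $s$ on $R'$ fresh draws $y\sim\mathcal{Q}$ gives $\tilde\gamma_2$ with $\tilde\gamma_2/\gamma_2\in[1-\delta/3,1+\delta/3]$. I output $\tilde Q(j):=\tilde\gamma_2/(k\tilde V)$, which then satisfies $\tilde Q(j)/Q(j)\in[1-\delta,1+\delta]$. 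The refinement uses $O(\log\eps^{-1})$ \Call{Pcond}{} queries per sample, so the total is $\tilde O(\delta^{-2}\eps^{-2})$, and combined with the first stage it meets the budget $O(\eps^{-2}\log^5\eps^{-1}\cdot\delta^{-2})$.

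Let $E_2$ be the intersection of the success event of Lemma \ref{ConstantApprox} with the Chernoff concentrations of $\tilde V$ and $\tilde\gamma_2$; then $\BP(E_2)\ge 1-O(\eps^6)$. For the near-unbiasedness, note that $\tilde V$ and $\tilde\gamma_2$ are unconditional sample means with $\BE[\tilde V]=\gamma_2/(kQ(j))$ and $\BE[\tilde\gamma_2]=\gamma_2$ exactly (conditionally on the fixed $j$, which was selected using independent samples). A standard Taylor expansion for a ratio of concentrated means bounds the relative bias by $O(\mathrm{Var}(\tilde V)/\BE[\tilde V]^2)=O(1/(R'\,u(j)))=O(\eps)$ for the chosen $R'$; furthermore, clamping $\tilde Q(j)$ into the (constant-factor-enlarged) range established by Step~1 makes $|\tilde Q(j)|=O(\eps^{-2})\,Q(j)$ deterministically, so conditioning on $E_2$ shifts the expectation by at most $O(\eps^6)\cdot|\tilde Q(j)|=O(\eps^4)\,Q(j)$. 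Together these give $\BE[\tilde Q(j)\mid E_2,j\text{ returned}]\in(1\pm\eps)\,Q(j)$. The main technical obstacle will be keeping the ratio-estimator's bias and the $E_2$-conditioning bias simultaneously within $\eps$ without inflating $R'$ past the $\tilde O(\delta^{-2}\eps^{-2})$ budget; the deterministic clamping above is the key device that lets the two bias contributions be controlled independently, and if issues remain one can replace the ratio-of-means with a one-step jackknife bias correction at no asymptotic query cost.
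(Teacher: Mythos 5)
Your proposal is correct and takes essentially the same route as the paper: stage one runs Lemma \ref{ConstantApprox} on $\mathcal{Q}$ (returning NULL if it finds nothing), and stage two refines the chosen $j$ with the geometric-trick ratio estimator and controls the bias via independence of the two fresh sample sets, a Taylor expansion of the reciprocal, and the $O(\eps^6)$-conditioning shift — your estimator $\tilde{\gamma}_2/(k\tilde{V})$ is just the mirror image of the paper's $\tilde{q}(j)/(k\tilde{u}(j))$, transplanting Lemma \ref{SingleElement} with the \Call{Compare}{}-based acceptance test. The one detail to add is truncating each geometric count at $O(\log \eps^{-1})$ (as the paper does with $\min(X_i, C\log\eps^{-1})$), since otherwise the $V_i$ are unbounded, so the Chernoff concentration and the claimed worst-case $O(\log\eps^{-1})$ queries per sample hold only in expectation.
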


\begin{proof}
    First, by running the procedure of Lemma \ref{ConstantApprox} with $\Call{Pcond}{}_\mathcal{Q}$ and $\Call{Samp}{}_{\mathcal{Q}}$ access, we find some $(j, \hat{Q}(j))$ with $\hat{Q}(j) \in \left[\frac{0.9 \eps}{k}, \frac{1.1 \eps^{-1}}{k}\right]$ and $\frac{\hat{Q}(j)}{Q(j)} \in [0.9, 1.1]$ if $\TV(\mathcal{Q}, \mathcal{U}) \le 1-3 \eps.$ If we don't find such a pair (i.e., Lemma \ref{ConstantApprox} returns $S = \{\}$) then we just return NULL.
    
    Else, let $j$ be the first $j$ returned by Lemma \ref{ConstantApprox}. Recall that in Lemma \ref{ConstantApprox}, we proved that $d(j), u(j) \ge \frac{\eps}{300 \log \eps^{-1}},$ where $d(j), u(j)$ are defined in Definition \ref{dandu} (we assume the guarantees of Lemma \ref{ConstantApprox} are met). To approximate $Q(j)$, we modify the approach of Lemma \ref{ConstantApprox}. We choose $R = O(\eps^{-2} \log^3 \eps^{-1} \cdot \delta^{-2})$ and sample $x_1, \dots, x_R \leftarrow \mathcal{Q}$ and $y_1, \dots, y_R \leftarrow \mathcal{U}$. Also, we define $\tilde{u}(j) = \frac{1}{R} \cdot \#\{i \in [R]: \Call{Compare}{}_{\mathcal{Q}}(y_i, j, 0.01) \in [0.98, 1.02]\},$ so $\tilde{u}_j$ has distribution $\frac{1}{R} \cdot \operatorname{Bin}(R, u(j)).$ Next, for each $i$, we create a random variable $X_i$ and run $\Call{Compare}{}_{\mathcal{Q}}(x_i, j, 0.01)$. If the returned value is between $0.98$ and $1.02,$ then we keep calling $\Call{Cond}{}_{\mathcal{Q}}(\{x_i, j\})$ until $x_i$ is returned, and define $X_i$ to be the number of times we see $j$ returned before the first time $x_i$ is returned. Otherwise, $X_i = 0.$ Finally, we let $\tilde{q}(j)$ be the average of $\min(X_1, C \log \eps^{-1}), \dots, \min(X_R, C \log \eps^{-1})$ for some sufficiently large constant $C$. In our implementation, for each $1 \le i \le R,$ we will stop calling $\Call{Cond}{\{x_i, j\}}$ once we have already made $C \log \eps^{-1}$ calls to $\Call{Cond}{\{x_i, j\}}$.
    
    If we define $p(x, j)$ to be the probability that $\Call{Compare}{}_{\mathcal{Q}}(x, j, 0.01) \in [0.98, 1.02],$ then
\[\BE[X_i] = \sum_{x = 1}^{k} Q(x) \cdot p(x, j) \cdot \frac{Q(j)}{Q(x)} = Q(j) \cdot \sum_{x = 1}^{k} p(x, j) = k \cdot Q(j) \cdot \sum_{x = 1}^{k} \frac{p(x, j)}{k} = k \cdot Q(j) \cdot u(j),\]
    To see why, recall that to create the variable $X_i,$ we first sample $x_i \leftarrow \mathcal{Q}$, then $X_i$ is only nonzero if $\Call{Compare}{}_{\mathcal{Q}}(x_i, j, 0.01) \in [0.98, 1.02],$ in which case $X_i$ is a Geometric random variable with parameter $p = \frac{Q(x_i)}{Q(x_i)+Q(j)}$ and thus has mean $\frac{Q(j)}{Q(x_i)}.$ The last equality is true since $u(j)$ is just the probability that $\Call{Compare}{}_{\mathcal{Q}}(x, j, 0.01) \in [0.98, 1.02],$ where $x$ is now uniformly distributed. Therefore, $\BE[\min(X_i, C \log \eps^{-1})] \le k \cdot Q(j) \cdot u(j).$ However, for any $x$ with $\frac{Q(x)}{Q(j)} \in [0.97, 1.03],$ if we choose $C$ large enough, then $\BE\left[\min\left(\operatorname{Geom}\left(\frac{Q(x_i)}{Q(x_i)+Q(j)}\right), C \log \eps^{-1}\right)\right] = (1 \pm O(\eps^6)) \cdot \frac{Q(j)}{Q(x_i)}$. Thus,
\begin{align*}
    \BE[\min(X_i, C \log \eps^{-1})] &\ge (1 - O(\eps^6)) \cdot \sum_{x: Q(x) \in [0.97, 1.03] \cdot Q(j)} Q(x) \cdot p(x, j) \cdot \frac{Q(j)}{Q(x)} \\
    &= (1 - O(\eps^6)) \cdot k \cdot Q(j) \cdot \left[u(j) - \sum_{x: Q(x) \not\in [0.97, 1.03] \cdot Q(j)} \frac{p(x,j)}{k}\right] \\
    &\ge (1-O(\eps^6)) \cdot k \cdot Q(j) \cdot [u(j) - O(\eps^{10})],
\end{align*}
    since for $x$ with $\frac{Q(x)}{Q(j)} \not\in [0.97, 1.03],$ the probability of $\Call{Compare}{}_{\mathcal{Q}}(x, j, 0.01) \in [0.98, 1.02]$ is at most $\eps^{10}.$ But since $u(j) \ge \frac{\eps}{300 \log \eps^{-1}}$ and $k \cdot Q(j) \ge \frac{2}{3} \eps,$ we have that $\BE[\max(X_i, C \log \eps^{-1})] = k \cdot Q(j) \cdot u(j) \cdot (1 \pm \frac{\eps}{10}) \ge \frac{\eps^2}{500 \log \eps^{-1}}.$ But since $\max(X_i, C \log \eps^{-1})$ is bounded by $C \log \eps^{-1},$ the Chernoff bound tells us that the average of $O(\eps^{-2} \log^3 \eps^{-1} \cdot \delta^{-2})$ samples $X_i$ will be within a $1 \pm \frac{\delta}{10}$ multiplicative factor of $\BE[X_i]$ with probability at least $\eps^{-6}.$ Thus, $\BE[\tilde{q}(j)] = k \cdot Q(j) \cdot u(j) \cdot (1 \pm \frac{\eps}{10})$, and with probability at least $1-\eps^6,$ $\tilde{q}(j) = k \cdot Q(j) \cdot u(j) \cdot (1 \pm \frac{\delta}{4}).$
    
    We output $\left(j, \frac{\tilde{d}(j)}{k \cdot \tilde{u}(j)}\right),$ unless no $(j, \hat{Q}(j))$ was found by Lemma \ref{ConstantApprox}, in which case we return NULL. We know that since $u(j) \ge \frac{\eps}{300 \log \eps^{-1}}$ and $\tilde{u}_j \sim \frac{1}{R} \cdot \operatorname{Bin}(R, u(j)),$ with probability $1 - \eps^6,$ $\tilde{u}(j) \in (1 \pm \frac{\delta \cdot \sqrt{\eps}}{4}) \cdot u(j).$ Therefore, $\frac{\tilde{w}(j)}{k \cdot \tilde{u}(j)} = Q(j) \cdot \frac{1 \pm \frac{\delta}{4}}{1 \pm \frac{\delta \sqrt{\eps}}{4}} = Q(j) \cdot (1 \pm \delta)$ with probability at least $1-3\eps^6.$ We let $E_2$ be the event that the claims in Lemma \ref{ConstantApprox} are satisfied, as well as that $\tilde{d}(j), \tilde{u}(j)$ are accurate up to a $1 \pm \frac{\delta}{4}$ and $1 \pm \frac{\delta \sqrt{\eps}}{4}$ multiplicative approximation, respectively, if Lemma \ref{ConstantApprox} doesn't return NULL.
    
    To compute $\BE[\tilde{Q}(j)|E_2]$ for a fixed $j$, first note that if $\tilde{u}(j) = (1 + \gamma) u(j)$ for $\gamma \in [-0.5, 0.5],$ then $\frac{1}{\tilde{u}(j)} = \frac{1}{u(j)} \cdot (1 - \gamma + O(\gamma^2)).$ Since $\tilde{u}(j) \in (1 \pm \frac{\delta \cdot \sqrt{\eps}}{4}) \cdot u(j)$ assuming $E_2,$ $\frac{1}{\tilde{u}(j)} = \frac{1}{u(j)} \cdot \left(2 - \frac{\tilde{u}(j)}{u(j)} + O(\eps)\right).$ As $\tilde{u}(j)$ has distribution $\frac{1}{R} \cdot \operatorname{Bin}(R, u(j)),$ and conditioning on an event with $O(\eps^6)$ failure probability will not change $\BE[\tilde{u}(j)]$ by more than $O(\eps^6)$, we have $\BE[\tilde{u}(j)|E_2] = u(j) \cdot (1 \pm \eps).$ Therefore, $\BE\left[\frac{1}{\tilde{u}(j)}\Big\vert E_2\right] = \frac{1}{u(j)} \cdot \left(2 - \frac{\BE[\tilde{u}(j)|E_2]}{u(j)} + O(\eps)\right) = \frac{1 + O(\eps)}{u(j)}.$ Also, since $\tilde{q}(j)$ is bounded by $O(\log \eps^{-1})$, $\BE[\tilde{q}(j)] \ge \Omega(\frac{\eps^{-2}}{\log \eps^{-1}}),$ and $E_2$ occurs with probability $1-O(\eps^6)$, conditioning on $E_2$ marginally affects the expectation of $\tilde{q}(j)$, and we will still have that $\BE[\tilde{q}(j)|E_2] = k \cdot Q(j) \cdot u(j) \cdot (1 \pm O(\eps))$. Thus, conditioned on $E_2$ (and $j$ being returned for some fixed $j$), the expected value of $\frac{\tilde{q}(j)}{k \cdot \tilde{u}(j)}$ is $\BE[\tilde{q}(j)|E_2] \cdot \BE\left[\frac{1}{\tilde{u}(j)}\big\vert E_2\right] \cdot \frac{1}{k} = Q(j) \cdot (1 \pm O(\eps)).$ We can split the expectation into a product because for any fixed $j$, $\tilde{q}(j)$ and $\tilde{u}(j)$ are independent conditioned on $E_2$ and $j$, as we used disjoint samples to compute $\tilde{q}(j)$ and $\tilde{u}(j)$ once we found $j$.
\end{proof}

\begin{lemma} \label{Est3}
    Fix $\delta \ge \eps$ and let $j$ be as returned in Lemma \ref{Est2}. Then, there is an algorithm \Call{Est3}{} that uses an expected $O(\eps^{-2} \log^2 \eps^{-1} \cdot \delta^{-2})$ queries to $\Call{Cond}{}_{\mathcal{Q}}$, such that conditioned on some event $E_3$ with $\BP(E_3) \ge 1-O(\eps^6),$ the following happens:
\begin{enumerate}
    \item If $\frac{Q(k)}{Q(j)} \in [0.1 \eps^2, 10 \eps^{-2}],$ then the algorithm finds an estimator $Y$ of $\frac{Q(k)}{Q(j)}$ such that $Y \in \left[(1-\delta) \cdot \frac{Q(k)}{Q(j)}, (1+\delta) \cdot \frac{Q(k)}{Q(j)}\right].$ Moreover, $\BE[Y|E_3] \in \left[(1-O(\eps)) \cdot \frac{Q(k)}{Q(j)}, (1+O(\eps)) \cdot \frac{Q(k)}{Q(j)}\right].$ 
    \item If $\frac{Q(k)}{Q(j)} \le 0.1 \eps^{2}$, then $Y \le 0.2 \eps^{-2}$, and if $\frac{Q(k)}{Q(j)} \ge 10 \eps^{-2},$ then $Y \ge 5 \eps^{-2}.$
\end{enumerate}
\end{lemma}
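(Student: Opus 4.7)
The plan is to use the pair-conditional style oracle $\Call{Cond}{}_{\mathcal{Q}}(\{j, k\})$, which returns $k$ with probability $p := Q(k)/(Q(j)+Q(k))$ and $j$ with probability $q := 1 - p$, so that the target quantity $Q(k)/Q(j)$ equals $p/q$. I would perform $N := C \eps^{-2} \delta^{-2} \log \eps^{-1}$ independent queries for a sufficiently large absolute constant $C$, record the counts $K$ of $k$-returns and $J := N - K$ of $j$-returns, and output
\[ Y := \frac{K}{J+1}, \]
taking the $+1$ in the denominator both to ensure the estimator is always well-defined and, as we shall see, to exploit a clean closed form for its expectation.

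\textbf{Concentration and case analysis.} Let $E_3$ be a Chernoff concentration event tailored to the three regimes: when both $Np$ and $Nq$ exceed $\Omega(\delta^{-2}\log \eps^{-1})$ (which corresponds to the moderate regime $\min(p, q) \geq 0.05 \eps^2$), $E_3$ will require $K \in (1\pm\delta/20) Np$ and $J \in (1 \pm \delta/20) Nq$; otherwise $E_3$ only requires that the rare side concentrates tightly around its small expectation. Standard multiplicative Chernoff bounds give $\BP(E_3) \geq 1 - O(\eps^6)$ once $C$ is large enough. Under $E_3$ in the moderate regime, $Y = \frac{K}{J+1} = \frac{K}{J}\cdot\frac{J}{J+1}$, and both factors are within $1 \pm O(\delta)$ of their intended values (the second is in fact within $1 - O(\delta^2)$), yielding $Y \in (1 \pm \delta) \cdot p/q$. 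In the two extreme regimes, the Chernoff bound on the rare side forces either $K$ or $J$ to be within a small constant of its tiny expectation, directly yielding $Y \leq 0.2\eps^{-2}$ or $Y \geq 5\eps^{-2}$ as required.

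\textbf{Expectation condition.} The nontrivial part is $\BE[Y\mid E_3] \in (1 \pm O(\eps))\cdot p/q$, which does not follow from a $(1 \pm \delta)$ concentration statement alone. To handle it, I would use the exact identity $\BE[1/(J+1)] = (1 - p^{N+1})/((N+1)q)$ for $J \sim \operatorname{Bin}(N, q)$, which upon writing $K = N - J$ gives the closed form
\[ \BE\!\left[\tfrac{K}{J+1}\right] \;=\; \tfrac{p(1 - p^N)}{q} \;=\; \tfrac{p}{q}\bigl(1 - p^N\bigr). \]
In the moderate regime, $p \leq 1 - 0.05\eps^2$, so $p^N \leq \exp(-0.05\eps^2 N) \leq \eps$ once $C$ is chosen large enough, and the unconditional expectation is already within $(1 \pm O(\eps))\cdot p/q$. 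Since $Y \leq N$ trivially and $\BP(\overline{E_3}) \leq \eps^{10}$, conditioning on $E_3$ perturbs the expectation by at most $N \cdot \eps^{10}/\BP(E_3)$, far below $\eps \cdot p/q$ in all moderate-regime cases. The main obstacle will be simultaneously tuning the constant $C$ so that both the Chernoff tail ($\BP(\overline{E_3}) \leq \eps^{10}$) and the bias condition ($p^N \leq \eps$) are met throughout the whole permissible range $\delta \in [\eps, 1]$; once that is done, the rest of the analysis is routine bookkeeping.
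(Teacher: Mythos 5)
Your proposal is correct, but it takes a genuinely different route from the paper. The paper's \textsc{Est3} is a two-stage procedure: it first spends $O(\eps^{-2}\log\eps^{-1})$ pair queries on $\{j,k\}$ to get a constant-factor estimate $\alpha$ of $Q(k)/Q(j)$ (which also dispatches the extreme regimes), and then builds the refined estimate by averaging $T$ copies of a geometric-trick random variable (the number of times one element appears before the other), truncated at a level chosen using $\alpha$ so that truncation perturbs the mean by only a $(1\pm\eps)$ factor; near-unbiasedness is thus an approximation argument about truncated geometrics, and the query count is only bounded in expectation. You instead make a single deterministic batch of $N = C\eps^{-2}\delta^{-2}\log\eps^{-1}$ pair queries and output $Y=K/(J+1)$, controlling the bias \emph{exactly} via the binomial identity $\BE[K/(J+1)]=\tfrac{p}{q}(1-p^{N})$ (which I checked: $\BE[(N-J)/(J+1)]=(N+1)\BE[1/(J+1)]-1=\tfrac{1-p^{N+1}}{q}-1$), with $p^{N}\le\eps$ in the moderate regime since $q=\Omega(\eps^2)$ there; the $(1\pm\delta)$ concentration and the extreme-regime bounds follow from Chernoff exactly as you sketch, and the final "conditioning on $E_3$ barely moves the mean" step is the same as the paper's. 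Your approach buys an exact bias formula in place of the truncation analysis, a worst-case rather than expected query bound, and in fact one fewer $\log\eps^{-1}$ factor; the paper's two-stage design buys nothing essential here beyond matching its general template of rough-then-refined estimation. Two small points to tidy up: be consistent about the failure probability (you invoke both $O(\eps^6)$ and $\eps^{10}$; either suffices, just fix one), and note that the downstream use in Lemma \ref{Est} actually needs the stronger bound $Y\le 0.2\eps^{2}$ in the small-ratio regime (the $\eps^{-2}$ in the statement appears to be a typo) — your Chernoff bound on the rare side gives $Y=O(p)+O(\log\eps^{-1}/N)=O(\eps^2)$, so your argument delivers this as well, but you should state it explicitly.
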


\begin{proof}
    First, we sample $R = O(\eps^{-2} \log \eps^{-1})$ queries of $\Call{Cond}{}_{\mathcal{Q}}(\{j, k\}).$ If $\frac{Q(k)}{Q(j)} \in [0.05 \eps^2, 20 \eps^{-2}].$ a simple application of the Chernoff bound tells us that with at least $1-\eps^6$ probability, the sample ratio of the number of times $k$ is returned to the number of times $j$ is returned will be correct up to a factor of $1 \pm 0.1$. Likewise, with $1-\eps^6$ probability, if $\frac{Q(k)}{Q(j)} < 0.05 \eps^2$, the sample ratio will be at most $0.06 \eps^2,$ and if $\frac{Q(k)}{Q(j)} > 20 \eps^{-2},$ the sample ratio will be at least $18 \eps^{-2}.$
    
    Let $E_3'$ be the event that the above sample ratio is sufficiently accurate. We know that $\BP(E_3') \ge 1-O(\eps^6).$ Now, assuming $E_3'$, if our estimate is not in the range $[0.06 \eps^2, 18 \eps^{-2}],$ we output the sample ratio as our estimate $Y$, and we know that the true ratio $\frac{Q(k)}{Q(j)}$ is not in the range $[0.1 \eps^2, 10 \eps^{-2}].$ Otherwise, we know that our sample ratio, which we will call $\alpha,$ is accurate up to a factor of $1 \pm 0.1,$ and that $\frac{Q(k)}{Q(j)} \in [0.05 \eps^2, 20 \eps^{-2}].$
    
    Now, consider the following algorithm. If $\alpha \ge 1,$ we create random variables $X_1, \dots, X_T$ where $T = O(\log^2 \eps^{-1} \cdot \delta^{-2}).$ For each $1 \le t \le T,$ we create $X_t$ by sampling from $\Call{Cond}{}_{\mathcal{Q}}(\{j, k\})$ until $k$ is returned, and letting $X_t$ be the number of times we saw $j$ returned before $k$ was returned. We know this is a Geometric random variable with parameter $p = \frac{Q(j)}{Q(j)+Q(k)}$ and thus has mean $\frac{Q(k)}{Q(j)}$. We will truncate this random variable at $O(\alpha \cdot \log \eps^{-1}),$ i.e., we will really let $X_t = \min\left(C \alpha \cdot \log \eps^{-1}, \operatorname{Geom}\left(\frac{Q(j)}{Q(j)+Q(k)}\right)\right)$ for some large constant $C$. That way, since $\alpha$ is within a $1 \pm 0.1$ factor of $\frac{Q(k)}{Q(j)}$ we will still have $\BE[X_t] = \frac{Q(k)}{Q(j)} \cdot (1 \pm \eps),$ conditioned on $E_3'$. However, the Chernoff bound tells us that $Y := \frac{1}{T} (X_1+X_2+\cdots+X_T) \in [1-\frac{\delta}{2}, 1+\frac{\delta}{2}] \cdot \BE[X_t]$ with probability at least $1-\eps^6,$ conditioned on $E_3.$ Moreover, the number of calls to $\Call{Cond}{}_{\mathcal{Q}}$ in expectation is $O(\alpha \cdot T) = O(\eps^{-2} \log^2 \eps^{-1} \cdot \delta^{-2}).$

    Likewise, if $\alpha < 1,$ we create random variables $X_1, \dots, X_T$ where $T = O(\alpha^{-1} \cdot \log^2 \eps^{-1} \cdot \delta^{-2}).$ For each $1 \le t \le T,$ we create $X_t$ by sampling from $\Call{Cond}{}_{\mathcal{Q}}(\{j, k\})$ until $k$ is returned, and letting $X_t$ be the number of times we saw $k$ returned before $j$ was returned, but we also truncate this random variable at $O(\log \eps^{-1}).$ Thus, $X_t = \min\left(C \cdot \log \eps^{-1}, \operatorname{Geom}\left(\frac{Q(j)}{Q(j)+Q(k)}\right)\right)$ for some large constant $C$. Since $\alpha \le 1$, this means $\frac{Q(k)}{Q(j)} \le 1.2$, so we still have $\BE[X_t] = \frac{Q(k)}{Q(j)} \cdot (1 \pm \eps),$ conditioned on $E_3'$. However, the Chernoff bound tells us that $Y := \frac{1}{T} (X_1+X_2+\cdots+X_T) \in [1-\frac{\delta}{2}, 1+\frac{\delta}{2}] \cdot \BE[X_t]$ with probability at least $1-\eps^6,$ conditioned on $E_3'$. Moreover, the expected number of calls to $\Call{Cond}{}_{\mathcal{Q}}$ is $O(T) = O(\eps^{-2} \log^2 \eps^{-1} \cdot \delta^{-2}).$

    Finally, let $E_3$ be the event that $E_3'$ is true and that if our initial estimate $\alpha$ is in the range $[0.06 \eps^2, 18 \eps^{-2}],$ then $Y \in [1 - \frac{\delta}{2}, 1 + \frac{\delta}{2}] \cdot \BE[X_t|E_3'].$ Clearly, $\BP(E_3) \ge 1-O(\eps^6).$ If we condition on $E_3$ instead of $E_3'$ we trivially achieve all the guarantees by our previous analysis, except possibly the bound on $\BE[Y|E_3].$ However, note that $Y$ is uniformly bounded by $O(\eps^{-2} \log \eps^{-1})$ assuming $E_3'$ and $\BE[Y|E_3'] = \Omega(\eps^2),$ so conditioning on $E_3$ instead of $E_3'$, where $\BP(E_3|E_3') \ge 1-O(\eps^6),$ can only change the expectation of $Y$ by a $1 \pm \eps$ multiplicative error. Thus, $\BE[Y|E_3] = (1 \pm O(\eps)) \cdot \frac{Q(k)}{Q(j)}$.
\end{proof}

Next, we show how to combine Lemmas \ref{Est1}, \ref{Est2}, and \ref{Est3} to get a good estimator for $c_1 \cdot \frac{P(z)}{P^*(z)}.$

\begin{lemma} \label{Est}
    Suppose $z \ge L$ and $\eps \le \delta \le \frac{1}{10}$ are fixed. Then, there is an algorithm \Call{Est}{} using $O(\eps^{-2} \log^5 \eps^{-1} \cdot \delta^2)$ queries to $\Call{Cond}{}_{\mathcal{P}}$ and $\Call{Cond}{}_{\mathcal{Q}}$, such that conditioned on some event $E_4 := E_4(z, \delta)$ with $\BP(E_4) \ge 1-O(\eps^6),$ the following happens.
\begin{enumerate}
    \item If $\sum_{i \le z} \min(c_1 P(i), c_2 P^*(i)) > 6 \eps,$ then assuming $E_4$:
    \begin{enumerate}
        \item If $c_1 \cdot \frac{P(z)}{P^*(z)} < \eps,$ we return an estimator $X := X(z, \delta)$ such that $X \le 2 \eps.$
        \item If $c_1 \cdot \frac{P(z)}{P^*(z)} > \frac{5}{3},$ we return an estimator $X$ such that $X \ge \frac{3}{2}.$
        \item If $\eps \le c_1 \cdot \frac{P(z)}{P^*(z)} \le \frac{5}{3},$ we return an estimator $X$ such that $X \in [1-4\delta, 1+4\delta] \cdot \frac{P(z)}{P^*(z)}$ and $\BE[X|E_4] \in [1-4\eps, 1+4\eps] \cdot \frac{P(z)}{P^*(z)}$.
    \end{enumerate}
    \item If $\sum_{i \le z} \min(c_1 P(i), c_2 P^*(i)) \le 6 \eps,$ then assuming $E_4$, we either will return $([z], 0)$ or return some $X$ with the same guarantees as above.
\end{enumerate}
\end{lemma}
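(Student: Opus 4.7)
The plan is to exploit the factorization
\[c_1\cdot\frac{P(z)}{P^*(z)}\;=\;\frac{c_1}{P^*(z)}\cdot P([z])\cdot Q(j)\cdot\frac{Q(k)}{Q(j)}\]
and estimate the three rightmost factors with independent queries by calling Lemmas~\ref{Est1}, \ref{Est2}, and~\ref{Est3} at error parameter $\delta$, obtaining outputs $\tilde P([z])$, $(j,\tilde Q(j))$, and $Y$. Set $E_4:=E_1\cap E_2\cap E_3$, so $\BP(E_4)\ge 1-O(\eps^6)$ by union bound, and let $X:=\frac{c_1}{P^*(z)}\cdot\tilde P([z])\cdot\tilde Q(j)\cdot Y$. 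The algorithm outputs $X$ except in two degenerate sub-cases where it outputs $([z],0)$: first, if \Call{Est1}{} returns $\tilde P([z])\le\eps$ (under $E_1$ this forces $P([z])\le 2\eps$, so $\sum_{i\le z}\min(c_1P(i),c_2P^*(i))\le c_1P([z])\le 2\eps$, landing us in case~2 of the lemma), and second, if \Call{Est2}{} returns NULL. The NULL case requires $\TV(\mathcal{Q},\mathcal{U})\ge 1-3\eps$; since $Q^*(m)\in[1/(2k),2/k]$ for all $m$, partitioning $[k]$ into the ``heavy'' set $T:=\{m:Q(m)\ge 1/k\}$ (with $|T|\le 3\eps k$, hence $\mathcal{Q}^*$-mass $O(\eps)$) and its complement (with $\mathcal{Q}$-mass at most $3\eps$) bounds the sum by $O(\eps)$ as well, so returning $([z],0)$ is again valid.

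In the main branch we analyze $X$ under $E_4$. The first two factors satisfy $\tilde P([z])=(1\pm\delta)P([z])$ and $\tilde Q(j)=(1\pm\delta)Q(j)$ pointwise, and their conditional expectations are $(1\pm\eps)$-multiplicative approximations of the truth. When $R:=Q(k)/Q(j)\in[0.1\eps^2,10\eps^{-2}]$, Lemma~\ref{Est3} gives $Y=(1\pm\delta)R$ pointwise with matching $(1\pm\eps)$-mean. Using $P([z])Q(k)=P(z)$ and the independence of the three estimators conditional on $(E_4,j)$ (which holds because they use disjoint queries), we conclude $X=(1\pm 4\delta)\cdot c_1 P(z)/P^*(z)$ pointwise and $\BE[X\mid E_4,j]=(1\pm 4\eps)\cdot c_1 P(z)/P^*(z)$, proving case~(c). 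For cases~(a) and~(b) we invoke the one-sided saturation bounds of Lemma~\ref{Est3}: if $R\le 0.1\eps^2$ then $Y\le 0.2\eps^{-2}$, and combined with the estimator upper bound $\tilde Q(j)\le(1+\delta)\cdot\tfrac{3\eps^{-1}}{2k}$ from Lemma~\ref{Est2} and $P^*(z)\ge P^*([z])/(2k)$ (from $Q^*(k)\le 2/k$), a direct calculation gives $X\le 2\eps$; case~(b) is handled symmetrically via $Y\ge 5\eps^{-2}$.

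The main obstacle is the analysis of the extreme regimes rather than the central multiplicative case, because each estimator behaves only one-sidedly outside its good range: we must check that the saturation bounds of Lemma~\ref{Est3} compose correctly with Lemmas~\ref{Est1} and~\ref{Est2} to yield the \emph{asymmetric} guarantees (one-sided upper bound in~(a), one-sided lower bound in~(b), multiplicative approximation in~(c)) demanded by the lemma statement. The numeric constants $\tfrac{5}{3}$, $\tfrac{3}{2}$, $2\eps$, the $6\eps$ threshold, etc.\ may need slight tweaking to absorb the $(1\pm 4\delta)$ slack from compounding three $(1\pm\delta)$-factors with $\delta\le 1/10$, and to close the small gap between the $O(\eps)$ bound produced in the NULL case and the nominal $6\eps$ threshold. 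Summing the query complexities of the three subroutines yields $O(\eps^{-2}\log^5\eps^{-1}\cdot\delta^{-2})$, dominated by Lemma~\ref{Est2}.
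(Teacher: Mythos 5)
Your construction is essentially the paper's: the factorization $c_1\frac{P(z)}{P^*(z)}=\frac{c_1}{P^*(z)}\cdot P([z])\cdot Q(j)\cdot\frac{Q(k)}{Q(j)}$, the product estimator $X$, the event $E_4=E_1\cap E_2\cap E_3$, and the saturation-based case analysis all coincide, and your NULL-case bound via heavy/light buckets of $\mathcal{Q}$ is just a rephrasing of the paper's direct use of Proposition \ref{BasicTV} (your $9\eps$ versus $6\eps$ is indeed only constant slack). However, there is one substantive deviation that creates a genuine gap: you return $([z],0)$ when $\tilde P([z])\le\eps$, whereas the paper thresholds at $\eps/c_1$. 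This is not a constant to be tweaked, since $c_1$ may be as small as $\eps$. When your check does not trigger you can only conclude $P([z])\gtrsim\eps$, not $P([z])\ge\frac{\eps}{2c_1}$, and the latter is exactly what makes the saturation regimes of Lemma \ref{Est3} align with the thresholds $\eps$ and $\frac{5}{3}$ on $c_1\frac{P(z)}{P^*(z)}$: in the regime $\frac{Q(k)}{Q(j)}>10\eps^{-2}$, the paper's lower bound $X\ge c_1\cdot\frac{k}{2}\cdot\frac{\eps}{2c_1}\cdot(1-\delta)\cdot\frac{2}{3}\cdot\frac{\eps}{k}\cdot 5\eps^{-2}\ge\frac{3}{2}$ works only because the $c_1$ cancels; with your threshold the same computation gives only $X\gtrsim c_1$.

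Concretely, take $c_1=\eps$, $P([z])\approx 2\eps$ (so your check passes), $P^*(z)\approx\frac{1}{k}$, $Q(j)\approx\frac{\eps}{k}$, and $\frac{Q(k)}{Q(j)}\approx\eps^{-3}$. Then $c_1\frac{P(z)}{P^*(z)}\approx 2>\frac{5}{3}$, your algorithm is committed to outputting an $X$, yet the stated guarantee of Lemma \ref{Est3} in the saturated regime is only $Y\ge 5\eps^{-2}$, giving merely $X=O(\eps)$, so guarantee (b) fails (and, since in this regime the truth is only forced above $\approx\frac{5c_1}{3}$ rather than $\frac{5}{3}$, the coverage needed for guarantee (c) breaks as well). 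Note this configuration has $\sum_{i\le z}\min(c_1P(i),c_2P^*(i))\le c_1P([z])=O(\eps^2)$, so case 2 of the lemma applies, but case 2 still requires any returned $X$ to meet the guarantees; the paper escapes precisely because its $\eps/c_1$ threshold triggers and it legitimately returns $([z],0)$. The fix is exactly the paper's choice: return $([z],0)$ when $\tilde P([z])\le\frac{\eps}{c_1}$ (still sound, since then the sum is at most $2\eps$), which restores $P([z])\ge\frac{\eps}{2c_1}$ in the main branch. One further small point: your case (a) calculation needs the evidently intended bound $Y\le 0.2\eps^{2}$ in the low-saturation regime (the $0.2\eps^{-2}$ in the statement of Lemma \ref{Est3} is a sign typo in the exponent), which is what the paper's proof actually uses.
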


\begin{proof}
    We use Lemma \ref{Est1} to find an estimate $\tilde{P}([z]),$ then Lemma \ref{Est2} to find some pair $(j, \tilde{Q}(j)),$ and then Lemma \ref{Est3} to find $Y$ (assuming Lemma \ref{Est2} did not return NULL). We let $E_4$ indicate that $E_1$ from Lemma \ref{Est1}, $E_2$ from Lemma \ref{Est2}, and $E_3$ from Lemma \ref{Est3} (assuming Lemma \ref{Est2} didn't return NULL) are all true. Clearly, $\BP(E_4) \ge 1-O(\eps^6).$
    
    Now, suppose Lemma \ref{Est1} returns $\tilde{P}([z]) \le \frac{\eps}{c_1}.$ Then, we know that $P([z]) \le 2 \frac{\eps}{c_1},$ which means that 
\[\sum_{i \le z} \min(c_1 P(i), c_2 P^*(i)) \le \sum_{i \le z} c_1 P(i) \le c_1 \cdot 2 \frac{\eps}{c_1} = 2 \eps.\]
    Therefore, we can output $S = [z]$ and our estimate as $0$, i.e., we output $([z], 0).$
    
    Otherwise, suppose that Lemma \ref{Est2} returns NULL. Then, we have that $\TV(\mathcal{Q}, \mathcal{U}) \ge 1-3 \eps,$ so by Proposition \ref{BasicTV}, $\sum_{j = 1}^{k} \min\left(\frac{1}{k}, Q(j)\right) \le 3 \eps.$ Therefore,
\[\sum_{i \le z} \min(c_1 P(i), c_2 P^*(i)) \le \sum_{i \le z} \min(P(i), P^*(i)) \le \sum_{j \le k} \min\left(Q(j), Q^*(j)\right) \le \sum_{j \le k} \min\left(Q(j), \frac{2}{k}\right) \le 6 \eps,\]
    where we used the facts that $c_1, c_2 \le 1,$ $\sum_{i \in S_j} P(i) = \frac{Q(j)}{P([z])} \le Q(j),$ $\sum_{i \in S_j} P^*(i) = \frac{Q^*(j)}{P^*([z])} \le Q^*(j),$ and $Q^*(j) \le \frac{2}{k}$ for all $1 \le j \le k,$ due to how we chose the sets $S_1, \dots, S_{k-1}, S_k.$ Therefore, we can again output $S = [z]$ and our estimate as $0$, i.e., we output $([z], 0)$.
    
    Otherwise, since we conditioned on $E_4 = E_1 \cap E_2 \cap E_3$, we get estimates $\tilde{P}([z]) \in [1-\delta, 1+\delta] \cdot P([z])$ such that $P([z]) \ge \frac{\eps}{2 c_1}$, $\tilde{Q}(j) \in [1-\delta, 1+\delta] \cdot Q(j)$ such that $Q(j) \in \left[\frac{2}{3} \cdot \frac{\eps}{k}, \frac{3}{2} \cdot \frac{\eps^{-1}}{k}\right],$ and $Y \in [1-\delta, 1+\delta] \cdot \frac{Q(k)}{Q(j)},$ unless $\frac{Q(k)}{Q(j)} \not\in \left[0.1 c_1 \eps^2, 10 c_1^{-1} \eps^{-2}\right].$ We let $X := c_1 \cdot \frac{1}{P^*(z)} \cdot \tilde{P}([z]) \cdot \tilde{Q}(j) \cdot Y$ be our estimate for $c_1 \frac{P(z)}{P^*(z)}.$ Note that since $\frac{1}{2} \le P^*([z]) \le 1,$ and by our partitioning of $[z]$ into $S_1, \dots, S_k,$ we have that $\frac{1}{2k} \le \frac{P^*(z)}{P^*([z])} \le \frac{2}{k}.$ Therefore, $\frac{1}{4k} \le P^*(z) \le \frac{2}{k}.$
    
    Now, if $\frac{Q(k)}{Q(j)} < 0.1 \eps^2,$ then 
\[c_1 \frac{P(z)}{P^*(z)} = c_1 \cdot \frac{1}{P^*(z)} \cdot P([z])\cdot Q(j) \cdot \frac{Q(k)}{Q(j)} \le c_1 \cdot 4k \cdot 1 \cdot \frac{3}{2} \cdot \frac{\eps^{-1}}{k} \cdot 0.1 \eps^2 \le 0.6 \eps,\]
    and conditioned on $E_4,$
\[X = c_1 \cdot \frac{1}{P^*(z)} \cdot \tilde{P}([z])\cdot \tilde{Q}(j) \cdot Y \le c_1 \cdot 4k \cdot 1 \cdot (1+\delta) \cdot \frac{3}{2} \cdot \frac{\eps^{-1}}{k} \cdot 0.2 \eps^2 \le 2 \eps,\]
    for $\delta \le \frac{1}{10}.$ Likewise, if $\frac{Q(k)}{Q(j)} > 10 \eps^2,$ then 
\[c_1 \frac{P(z)}{P^*(z)} = c_1 \cdot \frac{1}{P^*(z)} \cdot P([z])\cdot Q(j) \cdot \frac{Q(k)}{Q(j)} \ge c_1 \cdot \frac{k}{2} \cdot \frac{\eps}{2 c_1} \cdot \frac{2}{3} \cdot \frac{\eps}{k} \cdot 10 \eps^{-2} \ge \frac{5}{3},\]
    and conditioned on $E_4,$
\[X = c_1 \cdot \frac{1}{P^*(z)} \cdot \tilde{P}([z])\cdot \tilde{Q}(j) \cdot Y \ge c_1 \cdot \frac{k}{2} \cdot \frac{\eps}{c_1} \cdot (1-\delta) \cdot \frac{2}{3} \cdot \frac{\eps}{k} \cdot 5 \eps^{-2} \ge 1.5,\]
    for $\delta \le \frac{1}{10}.$
    
    Finally, if $\frac{Q(k)}{Q(j)} \in [0.1 \eps^2, 10 \eps^{-2}],$ then conditioned on $E_4,$ $\tilde{P}([z]) \in [1-\delta, 1+\delta] \cdot P([z]),$ $\tilde{Q}(j) \in [1-\delta, 1+\delta] \cdot Q(j)$, and $Y \in [1-\delta, 1+\delta] \cdot \frac{Q(k)}{Q(j)}.$ Moreover, conditioned on $E_1, E_2, E_3,$ we have that $\BE[\tilde{P}(z)] \in [1-\eps, 1+\eps] \cdot P([z]),$ $\BE[\tilde{Q}(j)] \in [1-\eps, 1+\eps] \cdot Q(j)$, and $\BE[Y] \in [1-\eps, 1+\eps] \cdot \frac{Q(k)}{Q(j)}.$ Due to using independent samples, this means that $X \in [1-4 \delta, 1+4 \delta] \cdot c_1 \cdot \frac{P(i)}{P^*(i)},$ and $\BE[X_{z, \delta}|E_1, E_2, E_3] \in [1-4\eps, 1+4\eps] \cdot c_1 \cdot \frac{P(i)}{P^*(i)}.$
    
    Overall, assuming we have not already returned $([z], 0)$ we have that, assuming $E_4,$ if $c_1 \cdot \frac{P(z)}{P^*(z)} \ge \frac{5}{3},$ we will output an estimate $X \ge \frac{3}{2},$ if $c_1 \cdot \frac{P(z)}{P^*(z)} \le \eps,$ then $X \le 2 \eps,$ and otherwise, $X \in [1-4 \delta, 1+4 \delta] \cdot c_1 \cdot \frac{P(z)}{P^*(z)}$ and $\BE[X|E_4] \in [1-4\eps, 1+4\eps] \cdot c_1 \cdot \frac{P(z)}{P^*(z)}.$
\end{proof}

We are now ready to finish the proof of Theorem \ref{PartialDetermining}. The rest of the proof will be similar to that of Lemmas \ref{ZEstimate} and \ref{EstimateCloseTerms}.

\begin{proof}[Proof of Theorem \ref{PartialDetermining}]
    Fix $L \le z \le M$, where $L$ is the smallest integer such that $\BP_{x \sim \mathcal{P}^*} (x \le L) \ge \frac{1}{2}$, and $T$ such that $2^{-T} = c \eps$ for some small constant $c$. We show a method for estimating $c_1 \cdot \frac{P(z)}{P^*(z)}$ and create indicator random variables $\BI_{+, t}(z), \BI_{-, t}(z)$ to go along with this for each $1 \le t \le T-1$, as well as $\BI_T(z)$. Recall that Lemma \ref{Est}, conditioned on some event $E_4 =: E_4(z, \delta)$ occurring with probability $1-O(\eps^6),$ either outputs an estimate $X(z, \delta))$ that is a $1 \pm \delta$ multiplicative approximation of $c_1 \frac{P(z)}{P^*(z)},$ or returns $([z], 0),$ meaning that $\sum_{i \le z} \min(c_1 P(i), c_2 P^*(i)) \le 6 \eps$. We set $\delta_i = \frac{1}{20} \cdot 2^{-i}$, beginning with $i = 1.$ At any step, we compute $X(z, \delta_i)$ using Lemma \ref{Est} and check if $X(z, \delta_i) \in [(1-2^{-i}) c_2, (1+2^{-i}) c_2].$ If so, we increment $i$. We repeat this process of incrementing $i$ (i.e., dividing $\delta$ by $2$) and running Lemma \ref{Est} until one of the following three things occurs: either $X(z, \delta_i) \not\in [(1-2^{-i}) c_2, (1+2^{-i}) c_2],$ or $([z], 0)$ is returned instead of an estimate $X$, or $i \ge T$. If $([z], 0)$ is ever returned, we will simply output $S = [z]$ and $\sum_{x \in S} \min(c_1 P(x), c_2 P^*(x)) = 0.$ Else, for some $1 \le t \le T-1,$ if $t$ is the first value such that $X(z, \delta_t) \not\in [(1-2^{-t}) c_2, (1+2^{-t}) c_2]$, then if $X(z, \delta_t) \le (1-2^{-t}) c_2,$ then we set $\BI_{-, t}(z) := 1$ and if $X(z, \delta_t) \ge (1+2^{-t}) c_2,$ then we set $\BI_{+, t}(z) := 1$. However, if we reach $i = T,$ we just set $\BI_T(z) = 1.$ All variables not set to $1$ will be $0$.

    For the rest of the proof, we implicitly condition on the events $E_4 = E_4(z; \delta)$ being true for every call to the algorithm of Lemma \ref{Est}. We will only make $\tilde{O}(\eps^{-2})$ calls to the algorithm \ref{Est}, and as $\BP(E_4) \ge 1-\eps^6,$ the event we are conditioning on will happen with probability $1-O(\eps^3).$
    
    Now, when running the above procedure on some element $z$, we will always have that exactly one of the indicator variables is $1$ (unless $([z], 0)$ is returned). If $c_1 \cdot \frac{P(z)}{P^*(z)} \ge c_2,$ the nonzero indicator is either $\BI_{+, t}(z)$ for some $t$ or $\BI_{T}(z)$. Next, if $c_1 \cdot \frac{P(z)}{P^*(z)} = c_2 (1 - \gamma)$ for some $\gamma \ge \eps,$ then exactly one value $\BI_{-, t}$ will be nonzero for some $t = \log_2 \gamma^{-1} \pm O(1).$ Finally, if $c_1 (1 - \eps) \le c_1 \cdot \frac{P(z)}{P^*(z)} \le c_2,$ either $\BI_T(z)$ or some $\BI_{-, t}(z)$ will be nonzero: in the latter case, $t = \log_2 \eps^{-1} \pm O(1).$ Define $q_{+, t}(z) := \BP(\BI_{+, t}(z)),$ $q_{-, t}(z) := \BP(\BI_{-, t}(z)),$ and $q_T(z) := \BP(\BI_T(z)),$ where we implicitly condition on $E_4$ being true for all calls to the algorithm of Lemma \ref{Est} and that $([z], 0)$ is never returned by the algorithm of Lemma \ref{Est}. Therefore, if $c_1 \cdot \frac{P(z)}{P^*(z)} \le c_2,$ then $\sum_{t = 1}^{T-1} q_{+, t}(z) = 0;$ and if $c_1 \cdot \frac{P(z)}{P^*(z)} \ge c_2,$ then $\sum_{t = 1}^{T-1} q_{-, t}(z) = 0.$
    This means that for any $L \le z \le M,$ 
\begin{align*}
\min\left(c_1 \cdot \frac{P(z)}{P^*(z)}, c_2\right) &= \sum_{t = 1}^{T-1} q_{+, t}(z) \cdot c_2 + q_T(z) \cdot \min\left(c_1 \cdot \frac{P(z)}{P^*(z)}, c_2\right) + \sum_{t = 1}^{T-1} q_{-, t}(z) \cdot \left(c_1 \cdot \frac{P(z)}{P^*(z)}\right) \\
&= c_2 + O(\eps) - \sum_{t = 1}^{t-1} q_{-, t}(z) \cdot \left(c_2 - c_1 \cdot \frac{P(z)}{P^*(z)}\right),
\end{align*}
    where we used the fact that $\sum_{t = 1}^{T-1} q_{+, t}(z) + q_T(z) + \sum_{t = 1}^{T-1} q_{-, t}(z)$, and if $q_T(z) > 0$ then $c_1 \cdot \frac{P(z)}{P^*(z)} = c_2 (1 \pm O(\eps)).$

    Since we want to compute
\[\sum_{z = L}^{M} \min(c_1 P(z), c_2 P^*(z)) = \BE_{z \sim \mathcal{P}^*} \left[\BI(z \ge L) \min\left(c_1 \cdot \frac{P(z)}{P^*(z)}, c_2\right)\right],\]
where $\BI(z \ge L)$ is the indicator function of $z \ge L$, it suffices to approximate
\[\BE_{z \sim \mathcal{P}^*} \left[\BI(z \ge L) \cdot q_{-, t}(z) \cdot \left(c_2 - c_1 \cdot \frac{P(z)}{P^*(z)}\right)\right] = \BE_{z \sim \mathcal{P}^*} \left[\BI(z \ge L) \cdot \BI_{-, t}(z) \cdot \left(c_2 - c_1 \cdot \frac{P(z)}{P^*(z)}\right)\right]\]
    for each $1 \le t \le T-1,$ where the final expectation is over $z$ drawn from $\mathcal{P}^*$ and the randomness in the algorithm determining $\BI_{-, t}(z)$, and is implicitly conditioned on all events $E_4$ being true and $([z], 0)$ never being returned for all calls to Lemma \ref{Est}. To approximate this quantity, we set $\delta := \delta_t = \frac{1}{20} \cdot 2^{-t}$ and we sample $S = O((\delta/\eps)^2 \log \eps^{-1})$ samples $z_1, z_2, \dots, z_S \leftarrow \mathcal{P}^*.$ For each sample $z_s$, we determine $\BI_{-, t}(z_s),$ which can be done using $O(\eps^{-2} \log^5 \eps^{-1} \cdot \delta^{-2})$ calls to $\Call{Cond}{}_{\mathcal{P}}$ (since we run Lemma \ref{Est} for $\delta_1, \delta_2, \dots, \delta_{t+O(1)}$ and then we can stop). If $z_s < L$ or $\BI_{-, t}(z_s) = 0,$ then we can just set some variable $Z_s = 0.$ Otherwise, we again run Lemma \ref{Est} on $z_s$ but with fresh randomness and with $\delta = \delta_t$, which we do to get an estimate $X_s$ such that $X_s \in [1 - 4\delta, 1 + 4\delta] \cdot c_1 \frac{P(z_s)}{P^*(z_s)},$ and we set $Z_s = c_2 - X_s.$ Importantly, since we are using fresh randomness, if $z_s \ge L,$ then $\frac{P(z_s)}{P^*(z_s)} \le 1$ and $\BE[X_s|z_s, \BI_{-, t}(z_s) = 1] \in [1-4\eps, 1+4\eps] \cdot c_1 \cdot \frac{P(z_s)}{P^*(z_s)},$ unless $\frac{P(z_s)}{P^*(z_s)} < \eps,$ in which case we have $X = c_1 \cdot \frac{P(z_s)}{P^*(z_s)} \pm O(\eps)$ uniformly, since $c_1 \cdot \frac{P(z_s)}{P^*(z_s)} \le \eps$ and $X \le 2 \eps$ always by Condition 1a) of Lemma \ref{Est}. Therefore, $\BE[Z_s|z_s, \BI_{-, t}(z_s) = 1] = c_2 - c_1 \cdot \frac{P(z_s)}{P^*(z_s)} + O(\eps).$ Moreover, since $c_1 \frac{P(z_s)}{P^*(z_s)} = c_2 \cdot (1-O(\delta)),$ we always have that $Z_s = O(\delta)$ (even if $z_s < L$ or $\BI_{-, t}(z_s) = 0$). Therefore,
\[\BE_{z \sim \mathcal{P}^*} \left[\BI(z \ge L) \cdot \BI_{-, t}(z) \cdot \left(c_2 - c_1 \cdot \frac{P(z)}{P^*(z)}\right)\right] = \BE[Z_s] + O(\eps)\]
    and $Z_s$ is uniformly bounded in magnitude by $O(\delta)$. Therefore, by sampling $S = O((\delta/\eps)^2 \log \eps^{-1})$ samples of $z_1, \dots, z_S$ and computing $Z_s$ for each $1 \le s \le S$, with probability at least $1-\eps^2,$ we will have that the average of the $Z_s$'s is within $O(\eps)$ of $\BE_{z \sim \mathcal{P}^*} \left[\BI(z \ge L) \cdot \BI_{-, t}(z) \cdot \left(c_2 - c_1 \cdot \frac{P(z)}{P^*(z)}\right)\right].$ (The exception is if Lemma \ref{Est} ever returns $([z_s], 0)$ for some sampled $z_s,$ but by our assumption that $E_4(z_s, \delta)$ is always true, in this case we can instead return $([z_s], 0)$.) Therefore, the overall error will be $O(\eps \log \eps^{-1})$, since we need to compute this for all $1 \le t \le T-1 = O(\log \eps^{-1})$. Moreover, each $t$ will require $O((\delta/\eps)^2 \cdot \eps^{-2} \log^5 \eps^{-1} \cdot \delta^{-2}) = O(\eps^{-4} \log^5 \eps^{-1})$ queries to $\Call{Cond}{}_{\mathcal{P}},$ so the total number of queries to $\Call{Cond}{}_{\mathcal{P}}$ is $O(\eps^{-4} \log^6 \eps^{-1})$.
\end{proof}

We are now ready to prove Theorem \ref{TolerantId}.

\begin{proof}[Proof of Theorem \ref{TolerantId}]
    It suffices to show that we can estimate $\TV(\mathcal{D}, \mathcal{D}^*)$ up to an $O(\eps^{-1} \log^2 \eps^{-1})$ additive error using $\tilde{O}(\eps^{-4})$ queries, since we can then replace $\eps$ with $\eps' = \frac{\eps}{\log^2 \eps^{-1}}$.
    
    Let $T_0 \supset T_1 \supset T_2 \supset \cdots \supset T_r$ be subsets of $[N]$ with $T_0 = [N]$ - we will decide the remaining sets later. Also, let $\mathcal{P}_k$ be the distribution $\mathcal{D}$ conditioned on being in $T_k,$ and let $\mathcal{P}_k^*$ be the distribution $\mathcal{D}^*$ conditioned on being in $T_k$. Also, let $c_{1, k} = D(T_k)$ and $c_{2, k} = D^*(T_k).$ Importantly, note that $\mathcal{P}_0 = \mathcal{D},$ $\mathcal{P}_0^* = \mathcal{D}^*,$ and for any $i \in T_k$, $c_{1, k} P_k(i) = D(i)$ and $c_{2, k} P_k^*(i) = D^*(i).$
    
    Now, our algorithm proceeds as follows. Suppose we have determined $T_k$. Then, we can compute $c_{2, k}$ as $\sum_{i \in T_k} D^*(i),$ and we can estimate $c_{1, k}$ up to an additive $\eps$ error using $\tilde{O}(\eps^{-2})$ samples from the distribution $\mathcal{D}$ and determining what fraction of the samples are in $T_k$.
    
    If either our estimate $\tilde{c}_{1, k}$ (for $c_{1, k}$) or $c_{2, k}$ is less than $2 \eps,$ then either $c_{1, k}$ or $c_{2, k}$ is at most $3 \eps,$ so
\[\sum_{i \in T_k} \min(D(i), D^*(i)) \le 3 \eps.\]
    In this case, set $S = T_k.$ Else, $c_{1, k}, \tilde{c}_{1, k}, c_{2, k} \ge \eps,$ so we can find a set $S \subset T_k$ such that $D^*(S) \ge \frac{D^*(T_k)}{3}$ and determine
\[\sum_{i \in S} \min(\tilde{c}_{1, k} P_k(i), c_{2, k} P_k^*(i))\]
    up to an additive $O(\eps \log \eps^{-1})$ error by Theorem \ref{PartialDetermining}, since we can conditionally sample subsets in $T_k$. But noting that $\sum_{i \in S} P_k(i), \sum_{i \in S} P_k^*(i) \le 1$ and that $|\tilde{c}_{1, k} - c_{1, k}| \le \eps,$ we therefore have that our estimate is an $O(\eps \log \eps^{-1})$ additive error approximation to
\[\sum_{i \in S} \min(c_{1, k} P_k(i), c_{2, k} P_k^*(i)) = \sum_{i \in S} \min(D(i), D^*(i)).\]
    
    We now let $T_{k+1} = T_k \backslash S$ for each $k$. At each stage, we can determine $\sum_{i \in T_{k+1} \backslash T_k} \min(D(i), D^*(i))$ up to an additive $O(\eps \log \eps^{-1})$ error. Moreover, $D^*(T_{k+1}) \le \frac{2}{3} D^*(T_k)$, so this process will only continue $O(\log \eps^{-1})$ times until we reach some $r$ such that either $D(T_r) \le \eps$ or $D^*(T_r) \le O(\eps),$ in which case we can just estimate $\sum_{i \in T_r} \min(D(i), D^*(i))$ as $0$. Therefore, by adding our estimates for $\sum_{i \in T_k \backslash T_{k+1}} \min(\tilde{c}_{1, k} P_k(i), \tilde{c}_{2, k} P_k^*(i))$ for all $0 \le k \le r$ and $T_{r+1} = \emptyset,$ we get an $O(\eps \log^2 \eps^{-1})$ additive error for $\sum_{i = 1}^{N} \min(D(i), D^*(i)),$ which equals $1 - \TV(\mathcal{D}, \mathcal{D}^*)$ by Proposition \ref{BasicTV}.
\end{proof}

\section{An $\tilde{O}(\eps^{-4})$-Query Algorithm for Monotonicity Testing} \label{Monotonicity}

In this section, we improve upon Canonne's $\tilde{O}(\eps^{-22})$-query algorithm \cite{Canonne15} by showing a $\tilde{O}(\eps^{-4})$-query algorithm. The improved algorithm follows Canonne's overall structure, but uses our improved tolerant uniformity algorithm and some additional observations. Recall that $\mathcal{D}$ over $[N]$ is defined to be \emph{monotone} if $D(1) \ge D(2) \ge \cdots \ge D(N)$, and that we are trying to distinguish between $\mathcal{D}$ being monotone or $\eps$-far from all monotone distributions in Total Variation Distance..

\subsection{Preliminaries}

In this section, we explain the definitions and prior results used by Canonne \cite{Canonne15}, as well as the results that Canonne proved that we can use as a black box.

First, we need to describe the oblivious decomposition, as well as the reduced and flattened distributions.

\begin{defn} \label{oblivious}
    Given some parameter $\alpha \ge \frac{1}{\sqrt{N}}$ and a domain size $N$, the \emph{oblivious decomposition} of $[N]$ is the partition $\mathcal{I}_\alpha = (I_1, I_2, \dots, I_\ell),$ where $\ell = \Theta\left(\frac{\ln (\eps N + 1)}{\eps}\right) = \Theta\left(\frac{\log N}{\eps}\right)$, $|I_k| = \lfloor (1+\eps)^k \rfloor$ for $1 \le k \le \ell.$ This decomposition is called \emph{oblivious} because it only depends on $N$, not on any distribution $\mathcal{D}$.
\end{defn}

\begin{defn} \label{redflat}
    Recall the notation in Definition \ref{oblivious}. Now, given a distribution $\mathcal{D}$ and parameter $\alpha,$ define $\mathcal{D}_{\alpha}^{\text{red}}$ to be the \emph{reduced distribution} on $[\ell]$, meaning that for all $1 \le k \le \ell$, $\mathcal{D}_{\alpha}^{\text{red}}(k) = D(I_k)$. Define $\Phi_{\alpha}(\mathcal{D})$ to be the \emph{flattened distribution} with relation to $\mathcal{I}_{\alpha}$, meaning that for all $1 \le k \le \ell$ and for all $i \in I_k,$ $\Phi_{\alpha}(\mathcal{D})(i) = \frac{D(I_k)}{|I_k|}.$
\end{defn}

For the rest of this section, we implicitly define $I_k$ and $\ell$ depending on $N, \alpha$.

Note that given conditional sample access to $\mathcal{D},$ it is easy to simulate conditional samples from $\mathcal{D}_{\alpha}^{\text{red}}$. If we wish to sample $k \leftarrow \mathcal{D}_{\alpha}^{\text{red}}(T),$ we just sample $i \leftarrow \mathcal{D}(S)$ for $S = \bigcup_{x \in I_k: k \in T} x,$ and then find the $k$ such that $i \in I_k.$ We can also sample from $\Phi_{\alpha}(D)$ by sampling $k \leftarrow \mathcal{D}_{\alpha}^{\text{red}}$ and then uniformly sampling from $I_k$: we will not need to conditionally sample from $\Phi_{\alpha}(D)$.

We also have the following theorem due to Birg\'{e}.

\begin{theorem} \cite{Birge87} \label{Birge}
    If $\mathcal{D}$ is monotone, then $\TV(\mathcal{D}, \Phi_{\alpha}(\mathcal{D})) \le \alpha.$
\end{theorem}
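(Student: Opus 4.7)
The plan is a direct bucket-by-bucket computation, exploiting monotonicity both within and across buckets. Writing
\[
2\,\TV(\mathcal{D}, \Phi_\alpha(\mathcal{D})) = \sum_{k=1}^\ell \sum_{i \in I_k} \left|D(i) - \bar{D}_k\right|, \qquad \bar{D}_k := \frac{D(I_k)}{|I_k|},
\]
I would first establish an \emph{intra-bucket} bound, and then telescope across buckets using the geometric growth of the $|I_k|$.

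For the intra-bucket step, let $I_k = [A_k, B_k]$. Since $\mathcal{D}$ is monotone non-increasing, $D(i) \in [D(B_k), D(A_k)]$ for all $i \in I_k$. A standard rearrangement argument shows that among all sequences in $[D(B_k), D(A_k)]$, the sum of absolute deviations from the mean is maximized by the two-level ``half/half'' configuration, giving
\[
\sum_{i \in I_k} |D(i) - \bar{D}_k| \le \tfrac{|I_k|}{2}\bigl(D(A_k) - D(B_k)\bigr).
\]
In particular, buckets of size $1$ contribute nothing.

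For the telescoping step, monotonicity across buckets ($A_{k+1} = B_k+1$) gives $D(A_k)-D(B_k) \le D(A_k) - D(A_{k+1})$, so Abel summation yields
\[
\sum_{k=1}^\ell \tfrac{|I_k|}{2}\bigl(D(A_k) - D(A_{k+1})\bigr) = \sum_{k=1}^\ell \tfrac{|I_k|-|I_{k-1}|}{2}\,D(A_k),
\]
with the conventions $|I_0| = 0$ and $D(A_{\ell+1}) = 0$. The geometric growth $|I_k| \approx (1+\alpha)|I_{k-1}|$ gives $|I_k|-|I_{k-1}| \lesssim \alpha |I_{k-1}|$, while monotonicity across buckets gives $D(A_k) \le D(B_{k-1})$ and $|I_{k-1}| D(B_{k-1}) \le D(I_{k-1})$ (since $D(B_{k-1})$ is the minimum of $D$ on $I_{k-1}$). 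Combining,
\[
2\,\TV(\mathcal{D}, \Phi_\alpha(\mathcal{D})) \;\lesssim\; \alpha \sum_{k \ge 2} D(I_{k-1}) \;\le\; \alpha.
\]

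The main obstacle is the \emph{boundary term} corresponding to the transition where $|I_k|$ first exceeds $1$: let $K$ be the smallest index with $|I_K| \ge 2$, so $K = \Theta(1/\alpha)$. The Abel summation produces an extra contribution $\tfrac{|I_K|}{2}\,D(A_K)$ for which the ``$|I_{k-1}| D(A_k) \le D(I_{k-1})$'' trick is borderline. The fix is to use that $A_K = \Theta(1/\alpha)$ together with the fact that $\mathcal{D}$ is a monotone probability distribution: $\sum_{i=1}^{A_K} D(i) \le 1$ combined with $D(A_K) \le D(i)$ for $i \le A_K$ forces $D(A_K) \le 1/A_K = O(\alpha)$, absorbing the boundary term into the $O(\alpha)$ overall bound. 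Integer rounding in $|I_k| = \lfloor (1+\alpha)^k\rfloor$ contributes only a $(1+O(\alpha))$-multiplicative distortion on buckets of size $\ge 2$, which does not affect the final estimate.
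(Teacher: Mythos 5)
The paper itself does not prove this statement (it is quoted from Birg\'{e} 1987), so I am judging your argument on its own terms. Your strategy is the standard one and is mostly sound: the intra-bucket bound $\sum_{i\in I_k}|D(i)-\bar D_k|\le \tfrac{|I_k|}{2}(D(A_k)-D(B_k))$ is correct (the objective is convex in each $D(i)$, so the maximum over the box $[D(B_k),D(A_k)]^{|I_k|}$ is attained at a two-level endpoint configuration), the Abel summation identity is correct, dropping size-$1$ buckets is the right move, and your treatment of the boundary bucket via $D(A_K)\le 1/A_K=O(\alpha)$ with $A_K=\Theta(1/\alpha)$ is exactly right.

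The genuine gap is the rounding step. The inequality $|I_k|-|I_{k-1}|\lesssim\alpha|I_{k-1}|$ is false for the floor-based sizes whenever $2\le|I_{k-1}|\ll 1/\alpha$: for instance if $(1+\alpha)^{k-1}=2.99$ and $(1+\alpha)^{k}=3.01$ then $|I_k|-|I_{k-1}|=1$ while $\alpha|I_{k-1}|\approx 0.01$, and the size jump is a multiplicative factor $3/2$, not $1+O(\alpha)$; so your closing remark that rounding causes ``only a $(1+O(\alpha))$-multiplicative distortion on buckets of size $\ge 2$'' is incorrect. Nor is the issue free to dismiss: repairing it naively, by charging each unit jump (one per integer size $m$) with $D(A_k)\le 1/A_k\approx\alpha/m$ and summing over $m=2,\dots,1/\alpha$, yields only $O(\alpha\log\alpha^{-1})$, a genuine loss. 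The clean fix stays inside your framework: before Abel summation, upper-bound $|I_k|\le(1+\alpha)^k$, so that the telescoped increments are exactly $\alpha(1+\alpha)^{k-1}$, and then use $(1+\alpha)^{k-1}\le 2|I_{k-1}|$, which holds for every $k$ because $\lfloor x\rfloor\ge x/2$ for all real $x\ge 1$; combined with $|I_{k-1}|D(A_k)\le|I_{k-1}|D(B_{k-1})\le D(I_{k-1})$ this bounds the telescoped sum by $2\alpha\sum_k D(I_{k-1})\le 2\alpha$, and together with your boundary term gives $\TV(\mathcal{D},\Phi_\alpha(\mathcal{D}))=O(\alpha)$. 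Finally, note that the theorem as cited asserts the constant-$1$ bound $\TV(\mathcal{D},\Phi_\alpha(\mathcal{D}))\le\alpha$; even after the fix your argument gives $O(\alpha)$ (roughly $2\alpha$), which is enough for the qualitative use in this paper but recovering Birg\'{e}'s constant requires a sharper accounting than this sketch provides.
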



We next define the ``exponential property'' of a distribution.

\begin{defn} \label{exponential}
    For fixed $N, \alpha, \ell$ as in Definition \ref{oblivious}, we say that a distribution $\mathcal{Q}$ over $[\ell]$ satisfies the \emph{exponential property} if and only if $\mathcal{Q}(k+1) \le (1+\alpha) Q(k)$ for all $1 \le k < \ell.$
\end{defn}

From now on, we write $\mathcal{M}$ as the set of monotone distributions over $[N],$ and $\mathcal{P}_{\alpha}$ as the set of distributions over $[\ell]$ with the exponential property. We also write $\TV(\mathcal{D}, \mathcal{M}) := \min_{\mathcal{D}' \in \mathcal{M}} \TV(\mathcal{D}, \mathcal{D}'),$ and likewise, for $\mathcal{Q}$ a distribution over $[\ell],$ we write $\TV(\mathcal{Q}, \mathcal{P}_{\alpha}) := \min_{\mathcal{Q}' \in \mathcal{P}_{\alpha}} \TV(\mathcal{Q}, \mathcal{Q}').$

We have the following two facts, noted by Canonne \cite{Canonne15}.

\begin{proposition} \label{MonoProp1}
    If $\mathcal{D}$ is monotone over $[N],$ then $\mathcal{D}_{\alpha}^{\text{red}}$ satisfies the exponential property.
\end{proposition}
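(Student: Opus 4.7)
The plan is to prove the pointwise inequality $D(I_{k+1}) \le (1+\alpha) D(I_k)$ for each $1 \le k < \ell$, which is exactly the exponential property for $\mathcal{D}_\alpha^{\text{red}}$ after unwrapping the definition $\mathcal{D}_\alpha^{\text{red}}(k) = D(I_k)$. I would attack this by reducing an interval-mass comparison to a per-element comparison that monotonicity handles for free.

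The first step is to exploit the fact that the oblivious decomposition partitions $[N]$ into \emph{consecutive} intervals, so every index in $I_{k+1}$ is strictly larger than every index in $I_k$. Combined with monotonicity $D(1) \ge D(2) \ge \cdots \ge D(N)$, this gives
\[\min_{i \in I_k} D(i) \;\ge\; \max_{j \in I_{k+1}} D(j).\]
Since the average of $D$ over $I_k$ is at least its minimum, and the average over $I_{k+1}$ is at most its maximum, one obtains
\[\frac{D(I_k)}{|I_k|} \;\ge\; \min_{i \in I_k} D(i) \;\ge\; \max_{j \in I_{k+1}} D(j) \;\ge\; \frac{D(I_{k+1})}{|I_{k+1}|},\]
which rearranges to $D(I_{k+1}) \le \frac{|I_{k+1}|}{|I_k|} \cdot D(I_k)$.

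The second step is the size-ratio bound $|I_{k+1}|/|I_k| \le 1+\alpha$, which plugs into the previous display to give $D(I_{k+1}) \le (1+\alpha) D(I_k)$. For indices $k$ where $(1+\alpha)^k$ is not close to an integer, this follows immediately from $|I_k| = \lfloor (1+\alpha)^k \rfloor$ by writing $|I_{k+1}| \le (1+\alpha)^{k+1} = (1+\alpha)\cdot(1+\alpha)^k$ and comparing to $|I_k|$. The main (and only) obstacle is handling the floor function for small $k$, where consecutive floors can be equal (giving ratio $1$, which is fine) or can jump discretely. This is the standard mild technicality of the Birg\'e decomposition; it is absorbed either by checking the small-$k$ cases by hand against the specific variant in Definition~\ref{oblivious}, or by noting that the decomposition is chosen precisely so that consecutive interval sizes grow by at most a factor of $1+\alpha$. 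Once this bookkeeping is done, the proposition follows.
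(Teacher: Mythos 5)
The paper never actually proves Proposition \ref{MonoProp1} (it is quoted from \cite{Canonne15}), so there is no internal proof to compare against; judging your argument on its own merits: your first step is the right and standard one. Consecutiveness of the intervals plus monotonicity gives $\min_{i \in I_k} D(i) \ge \max_{j \in I_{k+1}} D(j)$, hence $\frac{D(I_k)}{|I_k|} \ge \frac{D(I_{k+1})}{|I_{k+1}|}$, i.e. $\mathcal{D}_{\alpha}^{\text{red}}(k+1) \le \frac{|I_{k+1}|}{|I_k|} \cdot \mathcal{D}_{\alpha}^{\text{red}}(k)$ for every $k$.

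The gap is your second step, and it is not the ``standard mild technicality'' you defer: the size-ratio bound $|I_{k+1}| \le (1+\alpha)|I_k|$ is false for the floor-based decomposition of Definition \ref{oblivious}, and no amount of small-$k$ case-checking repairs it. With $|I_k| = \lfloor (1+\alpha)^k \rfloor$ and $\alpha < 1$ (in the application $\alpha = \eps/4$), the sizes start $1,1,\dots,1,2,\dots$, and at the first increment the ratio is $2 > 1+\alpha$; more generally, as long as $|I_k| < 1/\alpha$ the sizes increase by one at a time, and each such jump from $m$ to $m+1$ has ratio $1 + 1/m > 1+\alpha$. Nor is the decomposition ``chosen precisely so that consecutive interval sizes grow by at most a factor of $1+\alpha$'' --- no partition into integer-sized intervals whose sizes increase from $1$ can have that property. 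In fact the issue is not only with your proof but with the statement read completely literally: the uniform distribution is monotone, its reduced distribution has $Q(k+1)/Q(k) = |I_{k+1}|/|I_k|$, which exceeds $1+\alpha$ at every size increment below $1/\alpha$, so it violates Definition \ref{exponential} as written. The precise formulation (and the one your first step already proves) states the exponential property with the actual ratios, $Q(k+1) \le \frac{|I_{k+1}|}{|I_k|} Q(k)$ --- equivalently, that the flattened distribution $\Phi_{\alpha}(\mathcal{D})$ is monotone --- and that formulation is also what makes Proposition \ref{MonoProp2} an exact identity. So: correct strategy and a correct first half, but the step you wave off is exactly where the constant $1+\alpha$ cannot be salvaged, and the argument (and the definition it targets) must be run with the size ratios instead.
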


\begin{proposition} \label{MonoProp2}
    For any distribution $\mathcal{D},$ $\TV(\Phi_{\alpha}(\mathcal{D}), \mathcal{M}) = \TV(\mathcal{D}_{\alpha}^{\text{red}}, \mathcal{P}_{\alpha}).$
\end{proposition}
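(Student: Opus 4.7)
The plan is to prove the equality via the two one-sided inequalities, using a natural correspondence between distributions on $[\ell]$ and distributions on $[N]$ that are flat on the buckets $I_1, \dots, I_\ell$. Given $\mathcal{Q}$ on $[\ell]$, define its flat-bucket lift $\widetilde{\mathcal{Q}}$ on $[N]$ by $\widetilde{\mathcal{Q}}(i) := Q(k)/|I_k|$ for $i \in I_k$; conversely, the reduction $\mathcal{P}_\alpha^{\text{red}}$ pushes any $\mathcal{P}$ on $[N]$ down to $[\ell]$. Because the bucket sizes satisfy $|I_{k+1}| \ge (1+\alpha) |I_k|$ (by design of the oblivious decomposition, modulo the standard Birg\'e adjustment at the first few size-one buckets), the exponential property on $\mathcal{Q}$ is equivalent to the lift $\widetilde{\mathcal{Q}}$ being monotone on $[N]$.

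For the $\le$ direction, let $\mathcal{Q}^* \in \mathcal{P}_\alpha$ achieve the minimum in $\TV(\mathcal{D}_\alpha^{\text{red}}, \mathcal{P}_\alpha)$ and set $\mathcal{D}^* := \widetilde{\mathcal{Q}^*}$, so $\mathcal{D}^* \in \mathcal{M}$ by the correspondence above. Since both $\Phi_\alpha(\mathcal{D})$ and $\mathcal{D}^*$ are flat on each bucket,
\[\TV(\Phi_\alpha(\mathcal{D}), \mathcal{D}^*) = \frac{1}{2} \sum_{k=1}^\ell |I_k| \cdot \left|\frac{D(I_k)}{|I_k|} - \frac{Q^*(k)}{|I_k|}\right| = \frac{1}{2} \sum_{k=1}^\ell |D(I_k) - Q^*(k)| = \TV(\mathcal{D}_\alpha^{\text{red}}, \mathcal{Q}^*),\]
which gives $\TV(\Phi_\alpha(\mathcal{D}), \mathcal{M}) \le \TV(\mathcal{D}_\alpha^{\text{red}}, \mathcal{P}_\alpha)$.

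For the $\ge$ direction, let $\mathcal{D}^* \in \mathcal{M}$ achieve the minimum in $\TV(\Phi_\alpha(\mathcal{D}), \mathcal{M})$, and set $\mathcal{Q}^* := (\mathcal{D}^*)_\alpha^{\text{red}}$; by Proposition \ref{MonoProp1} we have $\mathcal{Q}^* \in \mathcal{P}_\alpha$. Using the defining identity $\Phi_\alpha(\mathcal{D})(I_k) = D(I_k)$ together with the triangle inequality applied bucket-by-bucket,
\[\TV(\mathcal{D}_\alpha^{\text{red}}, \mathcal{Q}^*) = \frac{1}{2} \sum_{k=1}^\ell |D(I_k) - D^*(I_k)| = \frac{1}{2} \sum_{k=1}^\ell \left| \sum_{i \in I_k} \bigl(\Phi_\alpha(\mathcal{D})(i) - D^*(i)\bigr) \right| \le \TV(\Phi_\alpha(\mathcal{D}), \mathcal{D}^*).\]
The only real point to check is that the lift $\widetilde{\mathcal{Q}^*}$ is indeed monotone, which amounts to the bucket-size inequality $(1+\alpha) |I_k| \le |I_{k+1}|$; everything else is just rearranging sums and a one-line triangle inequality. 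Intuitively, $\le$ is saying that an optimal ``structured'' approximant on $[\ell]$ pulls back to a flat, monotone approximant on $[N]$ with no loss, while $\ge$ is saying that pushing any monotone approximant down to bucket totals can only shrink its $\ell_1$ deviation from $\mathcal{D}_\alpha^{\text{red}}$.
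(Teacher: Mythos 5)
There is no in-paper proof to compare against here: the paper imports both Proposition \ref{MonoProp1} and Proposition \ref{MonoProp2} from \cite{Canonne15} as stated facts. Your route is the standard one and the two displayed computations are both correct: for two distributions that are flat on every bucket, the total variation distance equals that of their reduced versions (your first display), and pushing an arbitrary monotone approximant down to bucket totals can only shrink the $\ell_1$ deviation, by the bucket-wise triangle inequality together with Proposition \ref{MonoProp1} (your second display). So the skeleton --- lift the optimal element of $\mathcal{P}_\alpha$ for ``$\le$'', reduce the optimal element of $\mathcal{M}$ for ``$\ge$'' --- is exactly right.

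The step that carries all the content, however, is the claim that the flat lift of a distribution with the exponential property is monotone, which you reduce to the bucket-size inequality $(1+\alpha)|I_k| \le |I_{k+1}|$ and then wave off as an issue only for ``the first few size-one buckets.'' With the decomposition of Definition \ref{oblivious}, $|I_k| = \lfloor (1+\alpha)^k \rfloor$, this inequality fails well beyond the size-one prefix: consecutive buckets of equal size recur throughout the low range (e.g.\ for $\alpha = 0.1$, $\lfloor 1.1^{8}\rfloor = \lfloor 1.1^{9}\rfloor = 2$, so the ratio is $1 < 1+\alpha$). Worse, the \emph{reverse} inequality $|I_{k+1}| \le (1+\alpha)|I_k|$ is what Proposition \ref{MonoProp1} needs (and it fails at jumps such as $1 \to 2$), so no rounding convention makes both directions of your correspondence literally true simultaneously. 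The honest resolution, as in Canonne's treatment, is that the exponential property of Definition \ref{exponential} should be read relative to the actual size ratios, i.e.\ $Q(k+1)/|I_{k+1}| \le Q(k)/|I_k|$ (equivalently, one pretends $|I_k| = (1+\alpha)^k$ exactly); under that convention ``exponential property $\Leftrightarrow$ flat lift is monotone'' is definitional and the rest of your proof goes through verbatim. So: right approach and right algebra, but the parenthetical dismissal is precisely where the conventions behind the statement (and behind Propositions \ref{MonoProp1}--\ref{MonoProp2} themselves, which are only exactly true under that idealized reading) need to be pinned down rather than assumed.
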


Finally, we note the following folklore result about nonnegative random variables.

\begin{proposition} \label{ReverseMarkov}
    Suppose $X$ is a nonnegative random variable bounded by $1$, such that $\BE X \ge \eps.$ Then, there exist $\alpha = 2^{-a}, \beta = 2^{-b}$ with $a, b$ nonnegative integers such that $\alpha \ge \Theta(\eps)$, $\alpha \cdot \beta \ge \Theta(\eps/\log \eps^{-1}),$ and $\BP(X \ge \alpha) \ge \beta$.
\end{proposition}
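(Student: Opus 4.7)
The plan is to carry out a dyadic decomposition of the range of $X$ and then apply the pigeonhole principle. For each integer $a \ge 0$, let $p_a := \BP\bigl(X \in (2^{-(a+1)}, 2^{-a}]\bigr)$. Since $X \le 1$, these events partition $\{X > 0\}$, and upper bounding $X$ by $2^{-a}$ on each piece gives
\[
\eps \le \BE X \le \sum_{a \ge 0} 2^{-a} p_a.
\]

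Next I would truncate the tail. Choose $A = \lceil \log_2(2/\eps) \rceil = O(\log \eps^{-1})$, so that $2^{-A} \le \eps/2$. Then $\sum_{a \ge A} 2^{-a} p_a \le 2^{-A} \sum_{a \ge A} p_a \le \eps/2$, and therefore
\[
\sum_{a = 0}^{A-1} 2^{-a} p_a \;\ge\; \BE X - \eps/2 \;\ge\; \eps/2.
\]
Since this is a sum of only $A = O(\log \eps^{-1})$ terms, the pigeonhole principle yields some index $a^* \in \{0, 1, \dots, A-1\}$ with $2^{-a^*} p_{a^*} \ge \eps/(2A) = \Omega(\eps/\log \eps^{-1})$. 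In particular, $p_{a^*} > 0$.

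Now I set $\alpha := 2^{-(a^*+1)}$, which is a power of $2$ with nonnegative integer exponent. Because $a^* + 1 \le A$, one gets $\alpha \ge 2^{-A} \ge \eps/4 = \Theta(\eps)$, and $\BP(X \ge \alpha) \ge p_{a^*}$ since the event $X \in (2^{-(a^*+1)}, 2^{-a^*}]$ forces $X > \alpha$. Finally, I define $\beta := 2^{-b}$ where $b$ is the smallest nonnegative integer with $2^{-b} \le p_{a^*}$; then $\beta \le p_{a^*}$ (so the probability bound survives) while $\beta \ge p_{a^*}/2$ by the minimality of $b$. Combining these estimates,
\[
\alpha \cdot \beta \;\ge\; \tfrac{1}{2}\,\alpha \cdot p_{a^*} \;=\; \tfrac{1}{4} \cdot 2^{-a^*} p_{a^*} \;\ge\; \Omega\!\left(\tfrac{\eps}{\log \eps^{-1}}\right),
\]
giving all three required conclusions.

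I do not expect any serious obstacle: the only care needed is in the bookkeeping to ensure $\alpha$ stays $\Theta(\eps)$ after shifting the exponent by one (to get a strict lower bound on $X$ against a dyadic threshold) and to ensure $\beta$ is a power of $2$ at most $p_{a^*}$ without losing more than a factor of $2$. Both are routine consequences of the dyadic setup.
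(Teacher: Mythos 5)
Your proof is correct: the dyadic bucketing of the range of $X$, truncation at level $2^{-A}$ with $A = O(\log \eps^{-1})$, pigeonhole to find a heavy bucket $a^*$, the shift to $\alpha = 2^{-(a^*+1)}$ (needed because the buckets are open on the left), and the rounding of $p_{a^*}$ down to a power of two all check out, and together they give $\alpha \ge \eps/4$, $\BP(X \ge \alpha) \ge \beta$, and $\alpha\beta = \Omega(\eps/\log\eps^{-1})$. Note that the paper states Proposition \ref{ReverseMarkov} only as a folklore fact and supplies no proof, so there is nothing to diverge from; your argument is exactly the standard one the paper implicitly relies on, and it fills the gap correctly.
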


\subsection{The Algorithm}

Canonne \cite{Canonne15} develops the algorithm in two steps, as follows:
\begin{enumerate}
    \item First, he approximates $\TV(\mathcal{D}, \Phi_{\alpha}(\mathcal{D}))$ using $O(\eps^{-2})$ samples and queries to a tolerant uniformity testing algorithm. With the worse bound of $\tilde{O}(\eps^{-20})$ queries for tolerant uniformity, this took $\tilde{O}(\eps^{-22})$ total queries.
    \item Second, he shows how to test whether $\mathcal{D}_{\alpha}^{\text{red}}$ is in $\mathcal{P}_\alpha$ or far from $\mathcal{P}_\alpha$, using $\tilde{O}(\eps^{-8})$ queries.
\end{enumerate} 
These two combined, for $\alpha = \eps/4,$ are sufficient to test monotonicity, as we briefly explain. Since we can do tolerant uniformity testing in $\tilde{O}(\eps^{-2})$ samples, the first step now only takes $\tilde{O}(\eps^{-4})$ queries. For the second step, we make some modifications to do the test in $\tilde{O}(\eps^{-4})$ queries.

Formally, we have the following result, due to Canonne.

\begin{lemma} \cite[Lemma 4.10]{Canonne15} \label{MonoPart1}
    Given \Call{Cond}{} access to $\mathcal{D},$ there is an algorithm \Call{DistToFlat}{} that, for any $0 < \eps, \alpha \le \frac{1}{2}$, makes $O(\eps^{-2})$ samples from $\mathcal{D}$ and $O(\eps^{-2})$ calls to \Call{TolerantUnif}{}$_{\mathcal{D}[I_k]},$ for some choices of $I_k$ and returns an $\eps/4$-additive approximation to $\TV(\mathcal{D}, \Phi_{\alpha}(\mathcal{D}))$ with failure probability at most $\frac{1}{10}.$ By $\mathcal{D}[I_k],$ we mean the distribution $\mathcal{D}$ conditioned in $I_k$, which is a distribution with support size $|I_k| \approx (1 + \alpha)^k$.
\end{lemma}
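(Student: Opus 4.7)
The plan is to exploit the fact that $\Phi_\alpha(\mathcal{D})$ is constant on each interval $I_k$, so the total variation distance decomposes bucket by bucket. Specifically, since $\Phi_\alpha(\mathcal{D})(i) = D(I_k)/|I_k|$ for all $i \in I_k$, one has the identity
\[
\TV(\mathcal{D}, \Phi_\alpha(\mathcal{D})) \;=\; \frac{1}{2}\sum_{k=1}^{\ell} \sum_{i \in I_k} \left| D(i) - \frac{D(I_k)}{|I_k|} \right| \;=\; \sum_{k=1}^{\ell} D(I_k)\cdot \TV\bigl(\mathcal{D}[I_k],\, \mathcal{U}[I_k]\bigr),
\]
where $\mathcal{U}[I_k]$ denotes the uniform distribution over $I_k$. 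The key observation is that $D(I_k)$ is exactly the probability that a draw $i\sim\mathcal{D}$ lands in $I_k$, so the right-hand side equals $\mathbb{E}_{k\sim\mathcal{D}_\alpha^{\mathrm{red}}}\bigl[\TV(\mathcal{D}[I_k],\mathcal{U}[I_k])\bigr]$, which is an expectation of a $[0,1]$-valued random variable that we can sample by drawing from $\mathcal{D}$.

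Given this reformulation, \Call{DistToFlat}{} will proceed as follows. First, draw $m = \Theta(\eps^{-2})$ independent samples $i_1,\ldots,i_m\sim\mathcal{D}$ using $\Call{Cond}{[N]}$, and for each $i_j$ identify the index $k_j$ of the bucket $I_{k_j}\ni i_j$ (an $O(1)$-time lookup given the $I_k$'s are fixed intervals determined only by $N$ and $\alpha$). Second, for each sampled bucket $I_{k_j}$, invoke $\Call{TolerantUnif}{}$ on the conditional distribution $\mathcal{D}[I_{k_j}]$ with additive error parameter $\eps/16$ and success probability boosted to $1-O(\eps^2/\log\eps^{-1})$ via standard $O(\log\eps^{-1})$ majority-vote repetition; this yields an estimate $\hat{\tau}_j$ with $|\hat\tau_j - \TV(\mathcal{D}[I_{k_j}],\mathcal{U}[I_{k_j}])|\le \eps/16$ except on an event of probability $O(\eps^2/\log\eps^{-1})$. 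Note that conditional sampling from $\mathcal{D}[I_{k_j}]$ is just $\Call{Cond}{S}$ for $S\subseteq I_{k_j}$, which is directly supported. Finally, output $\hat{T} := \frac{1}{m}\sum_{j=1}^m \hat\tau_j$.

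For correctness, union-bound over the $m=O(\eps^{-2})$ calls so that with probability $\ge 0.95$ all $\hat\tau_j$'s are $\eps/16$-accurate; conditioning on this, each $\hat\tau_j$ lies within $\eps/16$ of an i.i.d.\ draw of the random variable $\TV(\mathcal{D}[I_K],\mathcal{U}[I_K])$ where $K\sim\mathcal{D}_\alpha^{\mathrm{red}}$. Since this variable is bounded in $[0,1]$ with mean $\TV(\mathcal{D},\Phi_\alpha(\mathcal{D}))$, a Hoeffding bound with $m=\Theta(\eps^{-2})$ gives $|\hat T - \mathbb{E}[\hat\tau_j]|\le \eps/16$ with probability $\ge 0.95$, and combining these yields $|\hat T - \TV(\mathcal{D},\Phi_\alpha(\mathcal{D}))|\le \eps/8 < \eps/4$ with overall failure probability at most $1/10$.

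The only mild subtlety is a bookkeeping one: we must ensure that the $O(\eps^{-2})$ \emph{calls} to \Call{TolerantUnif}{} bound in the statement tolerates the internal $O(\log\eps^{-1})$ boosting, which it does because the lemma counts calls rather than queries. A minor edge case is that intervals $I_k$ with $|I_k|=1$ trivially contribute $0$ to the TV sum and can be handled without invoking \Call{TolerantUnif}{}. There is no deep obstacle here — the decomposition turns the problem into a direct Monte Carlo estimation of an expectation of a bounded random variable, each sample of which is computable by one tolerant uniformity call.
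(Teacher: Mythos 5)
This lemma is not proved in the paper at all: it is imported verbatim as a black box from Canonne (\cite[Lemma 4.10]{Canonne15}), so there is no in-paper proof to compare against. Your argument is the natural (and, in essence, the original) one: the exact identity $\TV(\mathcal{D},\Phi_\alpha(\mathcal{D}))=\sum_k D(I_k)\,\TV(\mathcal{D}[I_k],\mathcal{U}[I_k])=\BE_{k\sim\mathcal{D}_\alpha^{\text{red}}}\bigl[\TV(\mathcal{D}[I_k],\mathcal{U}[I_k])\bigr]$ is correct, sampling buckets by drawing from $\mathcal{D}$ and averaging per-bucket tolerant-uniformity estimates is exactly the intended use of the subroutine (conditional access to $\mathcal{D}[I_k]$ is just \Call{Cond}{} restricted to subsets of $I_k$), and the Hoeffding plus union-bound bookkeeping gives the claimed $\eps/4$ accuracy with failure probability $1/10$. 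So the proof is correct in substance.

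The one place your write-up is muddled is the call-count accounting. Since \Call{TolerantUnif}{} as stated only succeeds with probability $2/3$, you cannot skip amplification (a $1/3$ fraction of adversarially wrong estimates in $[0,1]$ could shift the average by $\Omega(1)$), and your median-of-$O(\log\eps^{-1})$ boosting makes the number of \emph{calls} $O(\eps^{-2}\log\eps^{-1})$, not $O(\eps^{-2})$; the remark that the bound ``tolerates the boosting because the lemma counts calls rather than queries'' gets this backwards. The clean fix is to state the lemma (as Canonne effectively does) with $O(\eps^{-2})$ calls to a distance-to-uniformity estimator whose failure probability is driven down to $O(\eps^2)$, absorbing the $\log\eps^{-1}$ inside each call; either way the discrepancy is only a $\log\eps^{-1}$ factor and is harmless for the downstream $\tilde{O}(\eps^{-4})$ bound in Theorem \ref{TestMonotone}.
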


Moreover, we prove the following lemma, improving on \cite[Theorem 4.15]{Canonne15}.

\begin{lemma} \label{MonoPart2}
    Let $\alpha = \eps/4.$ Then, there is an algorithm \Call{ExpoTester}{} that, given any distribution $\mathcal{Q}$ over $[\ell]$ and access to $\Call{Cond}{}_{\mathcal{Q}}$, if $\mathcal{Q} \in \mathcal{P}_{\alpha}$, the algorithm outputs ACCEPT with probability at least $9/10,$ and if $\TV(\mathcal{Q}, \mathcal{P}_{\alpha}) \ge \eps/4$, the algorithm outputs REJECT with probability at least $9/10.$ Moreover, the algorithm uses $\tilde{O}(\eps^{-4})$ queries.
\end{lemma}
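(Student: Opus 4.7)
The plan is to implement the multi-scale ratio test sketched in the overview of Section 3.3. The algorithm \Call{ExpoTester}{} loops over dyadic pairs $(\gamma, \beta) = (2^{-a}, 2^{-b})$ with $\gamma \ge \Omega(\eps^2)$ and $\gamma\beta \ge \Omega(\eps^2/\log\eps^{-1})$. For each pair it draws $O(\beta^{-1} \log \eps^{-1})$ indices $i \sim \mathcal{Q}$ via $\Call{Cond}{}_{\mathcal{Q}}([\ell])$, runs $\Call{Compare}{i,i-1,\gamma/10}$ on each, and rejects if any returned estimate exceeds $1+\alpha+\gamma/2$. If no scale triggers rejection, the algorithm accepts. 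Completeness will follow from Proposition \ref{Compare} together with a single union bound over the $\poly(\eps^{-1})$ calls to Compare: when $\mathcal{Q} \in \mathcal{P}_\alpha$, every true ratio is at most $1+\alpha$, so with probability $\ge 9/10$ every estimate lands within $\gamma/10$ of its true ratio and no scale ever fires.

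Soundness is the technical heart of the proof, and I will establish it via the structural lemma suggested in the overview. Define the pointwise envelope
\[q_k' := \max_{m \ge k} \frac{q_m}{(1+\alpha)^{m-k}},\]
which is the pointwise-minimal sequence dominating $\mathcal{Q}$ that satisfies $q'_{k+1} \le (1+\alpha) q'_k$. Let $S := \sum_k q'_k \ge 1$ and $\mathcal{Q}'' := \mathcal{Q}'/S$, so $\mathcal{Q}'' \in \mathcal{P}_\alpha$. A triangle inequality in $\ell_1$ gives $\|\mathcal{Q}-\mathcal{Q}''\|_1 \le \|\mathcal{Q}-\mathcal{Q}'\|_1 + \|\mathcal{Q}'-\mathcal{Q}''\|_1 = 2(S-1)$, whence $\TV(\mathcal{Q}, \mathcal{Q}'') \le S-1$, and the hypothesis $\TV(\mathcal{Q}, \mathcal{P}_\alpha) \ge \eps/4$ forces $S-1 \ge \eps/4$. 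Setting $d_k := q'_k - q_k \ge 0$, the envelope satisfies the recursion $(1+\alpha) d_k \le [q_{k+1} - (1+\alpha) q_k]^+ + d_{k+1}$, and a one-line telescoping sum over $k$ (using $d_\ell = 0$) will yield the key inequality
\[\alpha(S-1) \le \sum_k [q_{k+1} - (1+\alpha) q_k]^+.\]

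Substituting $\alpha = \eps/4$ gives $\sum_k [q_{k+1}-(1+\alpha)q_k]^+ \ge \eps^2/16$. Introducing $Y := [1 - (1+\alpha) Q(i-1)/Q(i)]^+ \in [0,1]$ for $i \sim \mathcal{Q}$ (with $Y := 1$ if $Q(i-1) = 0$), this rearranges to $\BE_{i\sim\mathcal{Q}}[Y] \ge \eps^2/16$. Proposition \ref{ReverseMarkov} then produces dyadic $\gamma \ge \Omega(\eps^2)$ and $\beta$ with $\gamma\beta \ge \Omega(\eps^2/\log\eps^{-1})$ such that $\BP_{i\sim\mathcal{Q}}(Y \ge \gamma) \ge \beta$. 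Whenever $Y \ge \gamma$ one checks that $Q(i)/Q(i-1) \ge (1+\alpha)/(1-\gamma) \ge 1+\alpha+\gamma$, so at the corresponding scale we will sample a witnessing $i$ with probability $\ge 9/10$, and Compare (accurate to within $\gamma/10$) reliably flags it.

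For query complexity, each scale $(a,b)$ costs $O(2^b \log\eps^{-1}) \cdot O(2^{2a}\log\eps^{-1})$ queries. Using $2^a \le O(\eps^{-2})$ and $2^{a+b} \le O(\eps^{-2}\log\eps^{-1})$, we get $2^{b+2a} = 2^a \cdot 2^{a+b} \le \tilde O(\eps^{-4})$, so each scale costs $\tilde O(\eps^{-4})$; summing over the $O(\log^2\eps^{-1})$ relevant pairs stays within $\tilde O(\eps^{-4})$. I expect the main obstacle to be the telescoping inequality $\alpha(S-1) \le \sum_k [q_{k+1}-(1+\alpha) q_k]^+$: the factor $\alpha^{-1}$ gained on the right is exactly what upgrades Canonne's weaker ``detection probability $\Theta(\eps^2)$ at threshold $\Theta(\eps^3)$'' to our stronger guarantee $\gamma\beta \ge \tilde\Omega(\eps^2)$, which in turn drives the improvement from $\tilde O(\eps^{-8})$ to $\tilde O(\eps^{-4})$.
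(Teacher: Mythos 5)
Your proposal is correct and follows essentially the same route as the paper: the identical multi-scale Compare algorithm driven by Proposition \ref{ReverseMarkov}, with soundness reduced to the structural inequality $\alpha\sum_k(q_k'-q_k)\le\sum_k[q_{k+1}-(1+\alpha)q_k]^+$ for the same envelope $q_k'=\max_{m\ge k}q_m/(1+\alpha)^{m-k}$ and the same normalization argument bounding $\TV(\mathcal{Q},\mathcal{P}_\alpha)$ by $\sum_k(q_k'-q_k)$. The only (harmless, arguably cleaner) deviation is that you derive this inequality by telescoping the one-step recursion $q_k'=\max\bigl(q_k,\,q_{k+1}'/(1+\alpha)\bigr)$, whereas the paper sums geometric series over the blocks between right-to-left maxima of $q_i/(1+\alpha)^i$.
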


The overall algorithm, given these two pieces, works the same way as in \cite{Canonne15}. First, we estimate $\TV(\mathcal{D}, \Phi_{\alpha}(D))$ up to error $\eps/4$: by Lemma \ref{MonoPart1} and Theorem \ref{TolerantUnif}, this uses $\tilde{O}(\eps^{-4})$ queries since we can simulate $\Call{Cond}{}_{\mathcal{D}[I_k]}$ with calls to $\Call{Cond}{}_{\mathcal{D}}$. Then, run Lemma \ref{MonoPart2} on $\mathcal{Q} = \mathcal{D}_{\alpha}^{\text{red}}$, which also uses $\tilde{O}(\eps^{-4})$ queries. If Lemma \ref{MonoPart1} returns a total variation distance more than $\eps/2$ or if Lemma \ref{MonoPart2} returns REJECT, the overall algorithm outputs REJECT. Otherwise, we output ACCEPT.

To prove that this works, first suppose $\mathcal{D}$ is monotone. Then, by Theorem \ref{Birge}, $\TV(\mathcal{D}, \Phi_{\alpha}(\mathcal{D})) \le \eps/4,$ so our output from Lemma \ref{MonoPart1} is at most $\eps/2$ with probability at least $9/10$. Moreover, since $\mathcal{D}$ is monotone, by Proposition \ref{MonoProp1}, $\mathcal{D}_{\alpha}^{\text{red}} \in \mathcal{P}_{\alpha},$ so Lemma \ref{MonoPart2} will output ACCEPT with probability at least $9/10$. Therefore, the overall algorithm outputs ACCEPT with probability at least $4/5$.

Now, suppose that $\TV(\mathcal{D}, \mathcal{M}) \ge \eps.$ Then, by the Triangle Inequality, either $\TV(\mathcal{D}, \Phi_{\alpha}(\mathcal{D})) > 3 \eps/4$ or $\TV(\Phi_{\alpha}(\mathcal{D}), \mathcal{M}) = \TV(\mathcal{D}_{\alpha}^{\text{red}}, \mathcal{P}_{\alpha}) \ge \eps/4$, where we used Proposition \ref{MonoProp2}. If $\TV(\mathcal{D}, \Phi_{\alpha}(\mathcal{D})) > 3 \eps/4$, then our output from Lemma \ref{MonoPart1} is more than $\eps/2$ with probability at least $9/10$. Also, if $\TV(\mathcal{D}_{\alpha}^{\text{red}}, \mathcal{P}_{\alpha}) \ge \eps/4$, then Lemma \ref{MonoPart2} will output REJECT with probability at least $9/10$. Therefore, the overall algorithm outputs REJECT with probability at least $9/10$.

Thus, assuming Lemma \ref{MonoPart1} and \ref{MonoPart2}, we have the full monotonicity tester. So, we just need to prove Lemma \ref{MonoPart2}.

\subsection{Testing the Exponential Property: Proof of Lemma \ref{MonoPart2}}

First, we prove a lemma which leads to the algorithm. This lemma improves on \cite[Lemma 4.13]{Canonne15}, which is Canonne's main lemma for testing the exponential property.

\begin{lemma}
    Fix $\eps, \alpha \le 1/2$ and $\ell \ge 2.$ Suppose that $\mathcal{Q}$ is a distribution over $[\ell]$ with $\TV(\mathcal{Q}, \mathcal{P}_{\alpha}) \ge \eps.$ Then,
\[\BE_{i \sim \mathcal{Q}} \left[\max\left(0, 1 - (1+\alpha) \cdot \frac{q_{i-1}}{q_i}\right)\right] \ge \frac{\alpha \cdot \eps}{2}.\]
    For $i = 1,$ we define $\max\left(0, 1 - (1+\alpha) \cdot \frac{q_{i-1}}{q_i}\right)$ to be $0$.
\end{lemma}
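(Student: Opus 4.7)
My approach is to construct an explicit distribution $\mathcal{Q}'' \in \mathcal{P}_\alpha$ that is forced to be close to $\mathcal{Q}$ unless the average violation is large, then to relate the closeness to the violations by a telescoping argument. Specifically, I would define the pointwise-minimal pseudo-distribution majorizing $\mathcal{Q}$ with the exponential property,
\[q_i' := \max_{j \geq i} \frac{q_j}{(1+\alpha)^{j-i}},\]
so that $q_i' \geq q_i$ for every $i$ and $q_{i+1}' \leq (1+\alpha)\, q_i'$ (because the max defining $q_{i+1}'$ ranges over a subset of the indices defining $q_i'$, rescaled by $(1+\alpha)$). Setting $T := \sum_i q_i' \geq 1$ and $\mathcal{Q}'' := \mathcal{Q}'/T$ gives an honest distribution in $\mathcal{P}_\alpha$.

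The first step will be a triangle inequality. Since $q_i' \geq q_i$ pointwise, we have $\|\mathcal{Q} - \mathcal{Q}'\|_1 = T-1$, and $\|\mathcal{Q}' - \mathcal{Q}''\|_1 = T(1 - 1/T) = T-1$ directly. Hence
\[2\eps \;\leq\; 2\,\TV(\mathcal{Q}, \mathcal{Q}'') \;\leq\; \|\mathcal{Q} - \mathcal{Q}'\|_1 + \|\mathcal{Q}' - \mathcal{Q}''\|_1 \;=\; 2(T-1),\]
so the hypothesis $\TV(\mathcal{Q}, \mathcal{P}_\alpha) \geq \eps$ forces $T - 1 \geq \eps$.

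The second step, which is the main technical point, is to bound $T - 1 \leq \tfrac{1}{\alpha} \sum_j V_j$, where $V_j := \max(0, q_j - (1+\alpha) q_{j-1})$. I would change variables to $r_i := q_i/(1+\alpha)^i$ and $R_i := \max_{j \geq i} r_j = q_i'/(1+\alpha)^i$; the point is that the constraint $q_{i+1}' \leq (1+\alpha) q_i'$ becomes the monotonicity statement $R_{i+1} \leq R_i$. Writing $R_i = r_{j^*(i)}$ for the argmax and telescoping along the ascending path gives $R_i - r_i \leq \sum_{k \geq i} (r_{k+1} - r_k)_+$. Swapping the order of summation and using $\sum_{i=1}^{k}(1+\alpha)^i \leq (1+\alpha)^{k+1}/\alpha$,
\[T - 1 \;=\; \sum_i (1+\alpha)^i (R_i - r_i) \;\leq\; \frac{1}{\alpha}\sum_k (1+\alpha)^{k+1}(r_{k+1} - r_k)_+ \;=\; \frac{1}{\alpha}\sum_j V_j.\]

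Combining the two steps yields $\sum_j V_j \geq \alpha(T-1) \geq \alpha\eps \geq \alpha\eps/2$. Since
\[\sum_j V_j \;=\; \sum_j q_j \cdot \max\!\left(0, 1 - (1+\alpha) \tfrac{q_{j-1}}{q_j}\right) \;=\; \BE_{i \sim \mathcal{Q}}\!\left[\max\!\left(0, 1 - (1+\alpha) \tfrac{q_{i-1}}{q_i}\right)\right],\]
the lemma follows. The only delicate part is the telescoping bound on $R_i - r_i$ and the subsequent index swap; both reduce to routine manipulations once the reparameterization $r_i = q_i/(1+\alpha)^i$ is adopted, so the argument is essentially a clean computation.
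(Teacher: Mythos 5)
Your proposal is correct and follows essentially the same route as the paper: the same minimal majorant $q_i' = \max_{j \ge i} q_j/(1+\alpha)^{j-i}$, the same key inequality $\alpha\sum_i(q_i'-q_i) \le \sum_j \max(0, q_j-(1+\alpha)q_{j-1})$, and the same normalization-to-$\mathcal{P}_\alpha$ step. The only differences are cosmetic — you verify the key inequality by the reparameterization $r_i = q_i/(1+\alpha)^i$ with a summation swap instead of the paper's block-by-block argument over right-to-left maxima, and your direct triangle inequality even yields the slightly stronger bound $\alpha\eps$ in place of $\alpha\eps/2$.
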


\begin{proof}
    Let $q_i := Q(i)$ be the probability of selecting $i$ from a sample of $\mathcal{Q}$, and let $S = \{s_1, s_2, \dots, s_r\}$ be the set of right-to-left maxima of $q_i/(1+\alpha)^i.$ By this, we mean that $s \in S$ if and only if $q_s/(1+\alpha)^s > (q_t)/(1+\alpha)^t$ for all $t > s.$ Note that $\ell \in S$ always. Now, for any $i \in [\ell],$ let $q_i' = \max_{s \ge i} q_s/(1+\alpha)^{s-i}$. Note that if $s \in S,$ then $q_s' = q_s,$ and that $q_i' \le (1+\alpha) q_{i-1}'$ for all $2 \le i \le \ell.$
    
    For ease of notation, we define $s_0 = 0$. Now, for any $1 \le j \le r,$ if $s_j - s_{j-1} = 1,$ then $q_i = q_i'$ for all $s_{j-1} + 1 \le i \le s_j.$ Else, let $a = s_{j-1}+1$ and $b = s_j$. Note that $q_i \le q_i'$ for all $a \le i \le b,$ with equality when $i = b.$ We have that
\begin{equation} \label{Eq1} \sum_{i = a+1}^{b} (q_i - (1+\alpha) q_{i-1}) = (q_b - q_a) - \alpha \cdot \sum_{i = a}^{b-1} q_i. \end{equation}
    However, we have that for all $i$ between $a$ and $b$ (inclusive) that $q_i' = q_b/(1+\alpha)^{b-i}.$ Therefore,
\[\sum_{i = a}^{b} (q_i' - q_i) = \sum_{i = a}^{b} q_i' - \sum_{i = a}^b q_i = \sum_{i' = 0}^{\infty} \frac{q_b}{(1+\alpha)^{i'}} - \sum_{i'= 1}^{\infty} \frac{q_a'}{(1+\alpha)^{i'}} - \sum_{i = a}^{b} q_i\]
\begin{equation} \label{Eq2} = q_b \cdot \frac{1+\alpha}{\alpha} - q_a' \cdot \frac{1}{\alpha} - \sum_{i = a}^{b} q_i = \frac{1}{\alpha} \cdot (q_b-q_a') - \sum_{i = a}^{b-1} q_i \le \frac{1}{\alpha} \cdot (q_b-q_a) - \sum_{i = a}^{b-1} q_i. \end{equation}
    where we used the fact that $q_a \le q_a'$. Combining Equations \eqref{Eq1} and \eqref{Eq2}, we have that
\[\alpha \cdot \sum_{i = a}^{b} (q_i' - q_i) \le \sum_{i = a+1}^{b} (q_i-(1+\alpha) q_{i-1}).\]
    By replacing $(q_i-(1+\alpha) q_{i-1})$ with $\max(q_i-(1+\alpha) q_{i-1}, 0),$ and adding this over all $a = s_{j-1}+1, b = s_j$ from $j = 1$ to $r,$ we get that
\[\alpha \cdot \sum_{i = 1}^{\ell} (q_i'-q_i) \le \sum_{i = 2}^{\ell} \max(q_i-(1+\alpha) q_{i-1}, 0) = \BE_{i \sim \mathcal{Q}} \max\left(0, 1 - (1+\alpha) \cdot \frac{q_{i-1}}{q_i}\right).\]

    Now, note that $\sum_{i = 1}^{\ell} q_i = 1,$ so if $\sum_{i = 1}^{\ell} q_i' = 1 + \rho,$ then $\sum_{i = 1}^{\ell} |q_i' - q_i| = \rho$ and $\sum_{i = 1}^{\ell} \left|q_i' - \frac{q_i'}{1+\rho}\right| = \rho.$ Thus, $\sum_{i - 1}^{\ell} \left|q_i - \frac{q_i'}{1+\rho}\right| \le 2 \rho,$ Since $\sum_{i = 1}^{\ell} \frac{q_i'}{1+\rho} = 1$ and $q_i' \le (1+\alpha) q_{i-1}'$ for all $2 \le i \le \ell,$ the distribution with probability $\frac{q_i'}{1+\rho}$ is in $\mathcal{P}_{\alpha}$. Therefore $\TV(\mathcal{Q}, \mathcal{P}_\alpha) \le 2 \rho.$ Summarizing, we have that
\[\BE_{i \sim \mathcal{Q}} \max\left(0, 1 - (1+\alpha) \cdot \frac{q_{i-1}}{q_i}\right) \ge \alpha \cdot \sum_{i = 1}^{\ell} (q_i'-q_i) \ge \alpha \cdot \frac{\TV(\mathcal{Q}, \mathcal{P}_{\alpha})}{2} \ge \frac{\alpha \cdot \eps}{2}. \qedhere\]
\end{proof}

Now, we are ready to prove Lemma \ref{MonoPart2}.

\begin{proof}[Proof of Lemma \ref{MonoPart2}]
    Let $X$ be the random variable representing $\max\left(0, 1 - (1+\alpha) \cdot \frac{q_{i-1}}{q_i}\right)$ (which we say equals $0$ if $i = 1$). Note that $X$ is nonnegative, bounded by $1$, and $\BE[X] \ge \frac{\alpha \cdot \eps}{2}.$ Thus, by Proposition \ref{ReverseMarkov}, there exist $\tau = 2^{-t}, \beta = 2^{-b}$ such that $\tau \ge \Theta(\eps \cdot \alpha) = \Theta(\eps^2),$ $\tau \cdot \beta \ge \Theta(\eps^2/\log \eps^{-1})$, and $\BP(X \ge \tau) \ge \beta.$ But, if $1 - (1+\alpha) \cdot \frac{q_{i-1}}{q_i} \ge \tau,$ then $\frac{q_{i-1}}{q_i} \le \frac{1-\tau}{1+\alpha},$ so $\frac{q_i}{q_{i-1}} \ge \frac{1+\alpha}{1-\tau} \ge 1 + \alpha + \tau.$
    
    The algorithm now works as follows. Recall that $\alpha = \frac{\eps}{4}.$ Now, for all $\tau = 2^{-t}, \beta = 2^{-b}$ with $b, t$ nonnegative integers, such that $\tau \ge \Theta(\eps \cdot \alpha) = \Theta(\eps^2),$ $\tau \cdot \beta \ge \Theta(\eps^2/\log \eps^{-1})$, we sample $R = O(\beta^{-1} \log \eps^{-1})$ random $i_1, i_2, \dots, i_R \leftarrow \mathcal{Q}.$ For each such $i_r$, if $i_r \ge 2$, we run \Call{Compare}{$i_r, i_r-1, \frac{\tau}{3}$}. This procedure uses $O(\tau^{-2} \cdot \log \eps^{-1})$ calls to \Call{Cond}{}, and with failure probability $\eps^{10},$ returns a value strictly more than $1+\alpha + \frac{\tau}{2}$ if $\frac{q_i}{q_{i-1}} \ge 1 + \alpha + \tau$ but strictly less than $1+\alpha + \frac{\tau}{2}$ if $\frac{q_i}{q_{i-1}} \le 1 + \alpha.$ If for any $(\tau, \beta, i_r)$, the \Call{Compare}{} procedure returns a value more than $1 + \alpha + \frac{\tau}{2},$ we output REJECT; else, we output ACCEPT.
    
    To see why this works, first note that we make at most one call to \Call{Compare}{} for each triple $(\tau, \beta, i_r).$ However, there are only $O(\log \eps^{-1})$ choices for $\tau = 2^{-t}, \beta = 2^{-b},$ and $R \le O(\beta^{-1} \log \eps^{-1}) = \tilde{O}(\eps^{-2}),$ so we only call \Call{Compare}{} at most $\tilde{O}(\eps^{-2})$ times. Now, if $\mathcal{Q}$ satisfies the exponential property, then for all $i_r,$ we have that $\frac{q_i}{q_{i-1}} \le 1 + \alpha.$ Therefore, since each call to \Call{Compare}{$i_r, i_r-1, \frac{\tau}{3}$} returns a value less than $1 + \alpha + \frac{\tau}{2}$ with probability at least $1-\eps^{10}$, the algorithm successfully outputs ACCEPT with probability at least $1-O(\eps^7)$. Conversely, if $\TV(\mathcal{Q}, \mathcal{P}_{\alpha}) \ge \eps,$ then for some choice $\tau = 2^{-t}, \beta = 2^{-b},$ we have that $\BP\left(\frac{q_i}{q_{i-1}} \ge 1 + \alpha + \tau\right) \ge \beta.$ Therefore, for that choice of $\tau, \beta,$ since we sample $O(\beta^{-1} \log \eps^{-1})$ values $i_1, \dots, i_R,$ at least one of those $i$ values will satisfy $\frac{q_i}{q_{i-1}} \ge 1 + \alpha + \tau$ with probability at least $0.99.$ But for that value $i = i_r$, we call \Call{Compare}{$i_r, i_r-1, \frac{\tau}{3}$} and get a value that is more than $1 + \alpha + \frac{\tau}{2}$ with probability at least $1-\eps^{10}.$ Thus, with probability at least $0.99 - \eps^{10} \ge 9/10,$ the algorithm successfully outputs REJECT.
    
    Finally, the total number of queries to \Call{Cond}{} per choice $\tau, \beta$ is $O(\tau^{-2} \cdot \log \eps^{-1}) \cdot O(\beta^{-1} \cdot \log \eps^{-1}) \ge O(\eps^{-4} \cdot \log^4 \eps^{-1}),$ since $\tau \cdot \beta \ge \eps^2/\log \eps^{-1}.$ But there are only $O(\log^2 \eps^{-1})$ choices for $\tau, \beta,$ so in total, we make $O(\eps^{-4} \cdot \log^6 \eps^{-1}) = \tilde{O}(\eps^{-4})$ queries to \Call{Cond}{}.
\end{proof}

\section{Identity Testing in the PAIRCOND model} \label{PcondIdentity}

In this section, we prove Theorem \ref{IdPcondUpper}. In other words, we show that in the PAIRCOND model, we can test whether an unknown distribution $\mathcal{D}$ equals $\mathcal{D}^*$ or is $\eps$-far in Total Variation Distance from $\mathcal{D}^*$, using $\tilde{O}\left(\frac{\sqrt{\log N}}{\eps^2}\right)$ samples.

We first prove the following lemma.

\begin{lemma} \label{ExpectationComputation}
Suppose that $\mathcal{P}$ is an unknown probability distribution over $[m]$ with probabilities $P(1), \dots, P(m),$ and suppose $\mathcal{P}^*$ is a known probability distribution with probabilities $P^*(1), \dots, P^*(m),$ with $P^*(i) \in \left[\frac{1}{2m}, \frac{2}{m}\right]$. Finally, suppose that $\TV(\mathcal{P}, \mathcal{P}^*) \ge \eps.$ Then, we have that if $i$ is drawn from $\mathcal{P}$ and $j$ is drawn uniformly from $\mathcal{U},$ the uniform distribution over $[m]$, then
\[\mathop{\BE}_{\substack{i \sim \mathcal{P} \\ j \sim \mathcal{U}}} \left|\frac{P(i)}{P(i)+P(j)} - \frac{P^*(i)}{P^*(i)+P^*(j)}\right| \ge \frac{\eps}{16}.\]
\end{lemma}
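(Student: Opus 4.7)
The plan is to reduce the claim to a lower bound on a symmetric double sum, via the algebraic identity
\[
\frac{P(i)}{P(i)+P(j)} - \frac{P^*(i)}{P^*(i)+P^*(j)} \;=\; \frac{P(i)P^*(j) - P^*(i)P(j)}{(P(i)+P(j))(P^*(i)+P^*(j))}.
\]
Call the right-hand side $f(i,j)$. Notice that $f(j,i) = -f(i,j)$, so $|f(i,j)|$ is symmetric in $i, j$. This will let me exploit a symmetrization: the expectation I want, call it $E$, satisfies
\[
2E \;=\; \mathop{\mathbb{E}}_{i \sim \mathcal{P}, j \sim \mathcal{U}} |f(i,j)| + \mathop{\mathbb{E}}_{j \sim \mathcal{P}, i \sim \mathcal{U}} |f(i,j)| \;=\; \sum_{i,j} \frac{P(i)+P(j)}{m}\,|f(i,j)|.
\]
The factor $(P(i)+P(j))$ cancels the same factor in the denominator of $|f(i,j)|$, leaving $\sum_{i,j} \frac{1}{m} \cdot \frac{|P(i)P^*(j) - P^*(i)P(j)|}{P^*(i)+P^*(j)}$. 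Using the hypothesis $P^*(i), P^*(j) \le 2/m$, the ratio $1/[m(P^*(i)+P^*(j))]$ is at least $1/4$, which reduces the problem to lower-bounding $\sum_{i,j} |P(i)P^*(j) - P^*(i)P(j)|$.

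Next I would lower-bound this double sum. Write $\delta(i) := P(i) - P^*(i)$, so $\sum_i \delta(i) = 0$ and $\sum_i |\delta(i)| = 2\operatorname{TV}(\mathcal{P},\mathcal{P}^*) \ge 2\varepsilon$. Substituting gives $P(i)P^*(j) - P^*(i)P(j) = \delta(i)P^*(j) - P^*(i)\delta(j)$. Split $[m]$ into $A = \{i : \delta(i) > 0\}$ and $B = \{i : \delta(i) \le 0\}$, and note $\sum_{i \in A}\delta(i) = \sum_{j \in B}|\delta(j)| \ge \varepsilon$. For $(i,j) \in A \times B$ both terms $\delta(i)P^*(j)$ and $P^*(i)|\delta(j)|$ are nonnegative with the same sign, so
\[
\sum_{(i,j) \in A \times B} |\delta(i)P^*(j) - P^*(i)\delta(j)| \;=\; \Bigl(\sum_{i \in A}\delta(i)\Bigr)\Bigl(\sum_{j \in B}P^*(j)\Bigr) + \Bigl(\sum_{i \in A} P^*(i)\Bigr)\Bigl(\sum_{j \in B}|\delta(j)|\Bigr) \;\ge\; \varepsilon,
\]
using that the two $P^*$-mass sums add to $1$. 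The same bound holds on $B \times A$, so the full double sum is at least $2\varepsilon$. Combining with the symmetrization gives $2E \ge \tfrac{1}{4}\cdot 2\varepsilon = \tfrac{\varepsilon}{2}$, hence $E \ge \varepsilon/4$, which comfortably implies the stated $\varepsilon/16$ bound.

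The main step that actually requires thought is the sign analysis in step three: the absolute values $|\delta(i)P^*(j) - P^*(i)\delta(j)|$ are difficult to lower-bound uniformly (e.g., they can vanish on the diagonal when $P/P^*$ is constant), so restricting to cross-pairs $A \times B$ and $B \times A$ to force the two terms inside the absolute value to have matching signs is the key trick. Everything else is routine: the identity is a one-line calculation, the symmetrization only uses $|f(i,j)|=|f(j,i)|$, and the factor $1/4$ loss comes directly from the assumption that $P^*(i) \in [1/(2m), 2/m]$ (only the upper bound $P^*(i) \le 2/m$ is needed here).
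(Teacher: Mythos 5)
Your proof is correct, and it reaches a slightly stronger conclusion ($\BE \ge \eps/4$ rather than $\eps/16$) by a route that differs from the paper's in a meaningful way. Both arguments start from the same identity for the difference of the two conditional probabilities and both symmetrize over the pair $(i,j)$ — the paper pulls out the weight $\frac{P(i)}{P(i)+P(j)}$ and uses $\frac{P(i)}{P(i)+P(j)}+\frac{P(j)}{P(i)+P(j)}=1$, which is the same cancellation your swapped-expectation trick achieves. Where you diverge is in the final lower bound: the paper introduces the ratios $r_i=P(i)/P^*(i)$, bounds $\frac{P^*(i)P^*(j)}{P^*(i)+P^*(j)}\ge\frac{1}{4m}$ (this is where it uses the \emph{lower} bound $P^*(i)\ge\frac{1}{2m}$), and then shows $\sum_{i,j}|r_i-r_j|\ge\frac{\eps}{2}m^2$ via a per-row argument that needs the \emph{upper} bound $P^*(i)\le\frac{2}{m}$; you instead keep the cross-products $P(i)P^*(j)-P^*(i)P(j)=\delta(i)P^*(j)-P^*(i)\delta(j)$ and lower-bound their absolute sum by $2\eps$ through the $A\times B$ sign-matching argument, which needs no pointwise bound on $P^*$ at all — the only place the hypothesis enters your proof is $P^*(i)+P^*(j)\le\frac{4}{m}$. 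The two quantities are of course related (your $|P(i)P^*(j)-P^*(i)P(j)|$ equals $P^*(i)P^*(j)|r_i-r_j|$), but your decomposition uses less of the hypothesis (only the upper bound on $P^*$), avoids dividing by $P^*$, and yields a better constant; the paper's ratio formulation ties in more directly with how $\TV$ is expressed in terms of $r_i$. Either way the stated $\eps/16$ bound follows.
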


\begin{proof}
    First, let $r_i = \frac{P(i)}{P^*(i)}.$ Then, note that
\begin{align*}
    P(i) \cdot \left|\frac{P(i)}{P(i)+P(j)} - \frac{P^*(i)}{P^*(i)+P^*(j)}\right| &= \frac{P(i)}{P(i)+P(j)} \cdot \left|P(i) - \frac{P^*(i) (P(i)+P(j))}{P^*(i)+P^*(j)}\right| \\
    &= \frac{P(i)}{P(i)+P(j)} \cdot \frac{|P(i) P^*(j) - P^*(i) P(j)|}{P^*(i)+P^*(j)} \\
    &= \frac{P(i)}{P(i)+P(j)} \cdot \frac{P^*(i) P^*(j) \cdot |r_i-r_j|}{P^*(i)+P^*(j)} \\
    &\ge \frac{1}{4m} \cdot \frac{P(i)}{P(i)+P(j)} \cdot |r_i-r_j|,
\end{align*}
    since $\frac{P^*(i) P^*(j)}{P^*(i)+P^*(j)} \ge \frac{1}{4m}$.
    
    Therefore, 
\begin{align*}
    \mathop{\BE}_{\substack{i \sim \mathcal{P} \\ j \sim \mathcal{U}}} \left|\frac{P(i)}{P(i)+P(j)} - \frac{P^*(i)}{P^*(i)+P^*(j)}\right| &= \sum\limits_{i, j = 1}^{m} P(i) \cdot \frac{1}{m} \cdot \left|\frac{P(i)}{P(i)+P(j)} - \frac{P^*(i)}{P^*(i)+P^*(j)}\right| \\
    &\ge \frac{1}{4m^2} \cdot \sum_{i, j = 1}^{m} \frac{P(i)}{P(i)+P(j)} \cdot |r_i-r_j| \\
    &= \frac{1}{4m^2} \cdot \sum\limits_{1 \le i < j \le m} |r_i-r_j| \\
    &= \frac{1}{8m^2} \cdot \sum_{i, j = 1}^{m} |r_i-r_j|.
\end{align*}

    Next, since $\sum P(i) = \sum P^*(i) = 1$ but $\sum |P(i)-P^*(i)| \ge 2\eps$ since $\TV(\mathcal{P}, \mathcal{P}^*) \ge \eps,$ we have that $\sum_{i: r_i \ge 1} (r_i-1) P^*(i) = \sum_{i: r_i \le 1} (1-r_i) P^*(i) \ge \eps.$ But since $P^*(i) \le \frac{2}{m}$ for all $i$, this means that $\sum_{i: r_i \ge 1} (r_i-1) \ge m \cdot \frac{\eps}{2}$ and $\sum_{i: r_i \le 1} (1-r_i) \ge m \cdot \frac{\eps}{2}.$ Thus, for any $r_i,$ if $r_i \ge 1$ then
\[\sum_{j = 1}^{m} |r_i-r_j| \ge \sum_{r_j \le 1} (r_i-r_j) \ge \sum_{r_j \le 1} (1-r_j) \ge \frac{\eps}{2} \cdot m,\]
    and if $r_i \le 1$ then 
\[\sum_{j = 1}^{m} |r_i-r_j| \ge \sum_{r_j \ge 1} (r_j - r_i) \ge \sum_{r_j \ge 1} (r_j - 1) \ge \frac{\eps}{2} \cdot m.\]
    Therefore,
\[\sum_{i, j = 1}^{m} |r_i-r_j| \ge m \cdot \frac{\eps}{2} \cdot m \ge \frac{\eps}{2} \cdot m^2,\]
    so we get the final bound of $\frac{1}{8m^2} \cdot \frac{\eps}{2} m^2 = \frac{\eps}{16}$.
\end{proof}

    Recall we are trying to determine if $\mathcal{D} = \mathcal{D}^*$ or $\TV(\mathcal{D}, \mathcal{D}^*) \ge \eps.$ The algorithm proceeds as follows. First, we split $[N]$ into sets $S_1, S_2, \dots, S_{\log (10 N/\eps)}$ where $i \in S_k$ if and only if $2^{-k} < D_i^* \le 2 \cdot 2^{-k}.$ Let $K = \log \frac{10 N}{\eps}$, and define $S_{K + 1} := [n=N] \backslash \left(\bigcup_{i = 1}^{K} S_i\right)$, so that $S_1, \dots, S_{K+1}$ partition $[N]$. We define $\mathcal{S}$ as the distribution over $[K + 1]$ with $\BP_{x \sim \mathcal{S}}(x = k) := D(S_k) = \BP_{x \sim \mathcal{D}}(x \in S_k).$ Likewise, we define $\mathcal{S}^*$ as the distribution over $[K + 1]$ with $\BP_{x \sim \mathcal{S}^*}(x = k) := D^*(S_k) = \BP_{x \sim \mathcal{D}^*}(x \in S_k).$ Also, for all $1 \le k \le K+1,$ we define $\mathcal{P}_k$ as the conditional distribution of $x \sim \mathcal{D}$ conditioned on $x \in S_k,$ and $\mathcal{P}_k^*$ as the conditional distribution of $x \sim \mathcal{D}^*$ conditioned on $x \in S_k.$ We finally define $s_k = \BP_{x \sim \mathcal{S}}(x = k)$, $s_k^* = \BP_{x \sim \mathcal{S}^*}(x = k)$, $P_{k}(i) = \BP_{x \sim \mathcal{P}_k}(x = i),$ and $P_{k}^*(i) = \BP_{x \sim \mathcal{P}_k^*}(x = i).$ (Note: we use $s_k$ and $s_k^*$ instead of $S_k$ and $S_k^*$ to avoid confusion with the sets $S_k$.)
    
    First, we wish to compare the distributions $\mathcal{S}$ and $\mathcal{S}^*.$ Since $\mathcal{S}^*$ is a known distribution, we can test whether $\mathcal{S} = \mathcal{S}^*$ or $d_{TV}(\mathcal{S}, \mathcal{S}^*) \ge \frac{\eps}{10}$ using $O\left(\frac{\sqrt{\log (N/\eps)}}{\eps^2}\right)$ samples \cite{ValiantV17}, since the support sizes of $\mathcal{S}$ and $\mathcal{S}^*$ equal $K = O(\log (N/\eps))$. However, since a sample from $\mathcal{S}$ can be simulated by a single call to \Call{Samp}{} for $\mathcal{D},$ we only need $O\left(\frac{\sqrt{\log (N/\eps)}}{\eps^2}\right)$ calls to \Call{Samp}{}. If the test tells us that $d_{TV}(\mathcal{S}, \mathcal{S}^*) \ge \frac{\eps}{10}$ then we know $\mathcal{S} \neq \mathcal{S}^*$, so $\mathcal{D} \neq \mathcal{D}^*$ and we can output NO. Otherwise, we know that $d_{TV}(\mathcal{S}, \mathcal{S}^*) \le \frac{\eps}{10}.$
    
    Assuming we haven't yet output NO, we let $\mathcal{D}'$ be the distribution where we draw $k$ according to $\mathcal{S},$ and then draw $i$ according to $\mathcal{P}_k^*$. It is simple to see that $d_{TV}(\mathcal{D}^*, \mathcal{D}') = d_{TV}(\mathcal{S}^*, \mathcal{S}) \le \frac{\eps}{10}.$
    
    Now, suppose $X$ is a random variable constructed as follows. First, draw $k \sim \mathcal{S}$, and draw $i \sim \mathcal{P}_k$ and $j$ uniformly from $S_k$. Then, $X$ is defined as $\left|\frac{D(i)}{D(i)+D(j)} - \frac{D^*(i)}{D^*(i)+D^*(j)}\right|.$ Note that $0 \le X \le 1.$ Moreover, 
\[\BE X = \mathop{\BE}_{k \sim \mathcal{S}} \mathop{\BE}_{\substack{i \sim \mathcal{P}_k \\j \sim Unif[S_k]}} \left|\frac{D(i)}{D(i)+D(j)} - \frac{D^*(i)}{D^*(i)+D^*(j)}\right|.\]

    We next note the following proposition.
    
\begin{proposition} \label{ExpectationLowerBound}
    We have that $\BE X \ge \frac{1}{16} \cdot \left(d_{TV}(\mathcal{D}', \mathcal{D}) - \frac{3 \eps}{10}\right)$.
\end{proposition}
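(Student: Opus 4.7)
The plan is to reduce to Lemma~\ref{ExpectationComputation} bucket-by-bucket, then account for the error contributed by the tail bucket $S_{K+1}$.

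First I would rewrite $\BE X$ by conditioning on $k$. Since $i$ and $j$ are both supported on $S_k$ and $P_k(i) = D(i)/D(S_k)$, the ratio $\frac{D(i)}{D(i)+D(j)}$ equals $\frac{P_k(i)}{P_k(i)+P_k(j)}$, and similarly for $\mathcal{D}^*$. So
\[
\BE X \;=\; \sum_{k=1}^{K+1} s_k \cdot \mathop{\BE}_{\substack{i \sim \mathcal{P}_k \\ j \sim \mathrm{Unif}(S_k)}} \left|\frac{P_k(i)}{P_k(i)+P_k(j)} - \frac{P_k^*(i)}{P_k^*(i)+P_k^*(j)}\right|.
\]

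Next I would check that for $k \le K$, the distribution $\mathcal{P}_k^*$ satisfies the near-uniformity hypothesis of Lemma~\ref{ExpectationComputation} on the support $S_k$ of size $m_k := |S_k|$. By construction $D^*(i) \in (2^{-k}, 2 \cdot 2^{-k}]$ for $i \in S_k$, so $D^*(S_k) \in (m_k 2^{-k}, 2 m_k 2^{-k}]$, giving $P_k^*(i) \in [\tfrac{1}{2 m_k}, \tfrac{2}{m_k}]$. Applying Lemma~\ref{ExpectationComputation} with $\mathcal{P}=\mathcal{P}_k$, $\mathcal{P}^*=\mathcal{P}_k^*$ and distance parameter $\TV(\mathcal{P}_k,\mathcal{P}_k^*)$ yields, for each $k \le K$,
\[
\mathop{\BE}_{\substack{i \sim \mathcal{P}_k \\ j \sim \mathrm{Unif}(S_k)}} \left|\frac{P_k(i)}{P_k(i)+P_k(j)} - \frac{P_k^*(i)}{P_k^*(i)+P_k^*(j)}\right| \;\ge\; \frac{\TV(\mathcal{P}_k,\mathcal{P}_k^*)}{16}.
\]
Dropping the (nonnegative) $k=K+1$ term from the sum gives $\BE X \ge \tfrac{1}{16} \sum_{k=1}^{K} s_k \cdot \TV(\mathcal{P}_k,\mathcal{P}_k^*)$.

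I would then relate this to $\TV(\mathcal{D},\mathcal{D}')$. Since $D(i) = s_{k(i)} P_{k(i)}(i)$ and $D'(i) = s_{k(i)} P_{k(i)}^*(i)$ for $i \in S_{k(i)}$, we get the clean identity
\[
\TV(\mathcal{D},\mathcal{D}') \;=\; \sum_{k=1}^{K+1} s_k \cdot \TV(\mathcal{P}_k,\mathcal{P}_k^*).
\]
So $\BE X \ge \tfrac{1}{16}\bigl(\TV(\mathcal{D},\mathcal{D}') - s_{K+1} \cdot \TV(\mathcal{P}_{K+1},\mathcal{P}_{K+1}^*)\bigr) \ge \tfrac{1}{16}\bigl(\TV(\mathcal{D},\mathcal{D}') - s_{K+1}\bigr)$.

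The only remaining step, and the mildly subtle one, is to bound $s_{K+1}$ by $\tfrac{3\eps}{10}$. By choice of $K = \log(10N/\eps)$, every $i \in S_{K+1}$ has $D^*(i) \le 2^{-K} = \eps/(10N)$, so $s_{K+1}^* = D^*(S_{K+1}) \le \eps/10$. On the other hand, we have already ensured (before invoking this proposition) that $\TV(\mathcal{S},\mathcal{S}^*) \le \eps/10$, so $|s_{K+1} - s_{K+1}^*| \le 2\TV(\mathcal{S},\mathcal{S}^*) \le \eps/5$, whence $s_{K+1} \le \eps/10 + \eps/5 = 3\eps/10$. Substituting yields $\BE X \ge \tfrac{1}{16}\bigl(\TV(\mathcal{D},\mathcal{D}') - \tfrac{3\eps}{10}\bigr)$, as desired. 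No step is a real obstacle; the main thing to be careful about is separating the $k \le K$ buckets (where Lemma~\ref{ExpectationComputation} applies cleanly) from the tail bucket $k=K+1$, and using the already-established closeness of $\mathcal{S}$ and $\mathcal{S}^*$ to control the latter.
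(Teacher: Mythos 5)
Your proof is correct and follows essentially the same route as the paper: condition on the bucket index $k$, verify that $P_k^*(i) \in [\tfrac{1}{2|S_k|}, \tfrac{2}{|S_k|}]$ so Lemma~\ref{ExpectationComputation} applies per bucket, use the identity $\TV(\mathcal{D},\mathcal{D}') = \sum_k s_k\,\TV(\mathcal{P}_k,\mathcal{P}_k^*)$, and bound the tail bucket via $s_{K+1}^* \le \eps/10$ together with $\TV(\mathcal{S},\mathcal{S}^*) \le \eps/10$. No gaps.
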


\begin{proof}
    First, we condition on $k$ for $1 \le k \le K$. In this case, note that for any $i, j \in S_k$, $\frac{D^*(i)}{D^*(i)+D^*(j)} = \frac{P_{k}^*(i)}{P_{k}^*(i)+P_{k}^*(j)}$ and $\frac{D(i)}{D(i)+D(j)} = \frac{P_{k}(i)}{P_{k}(i)+P_{k}(j)}$. Moreover, since $2^{-k} \le D^*(i) \le 2 \cdot 2^{-k}$ for all $i \in S_k,$ we have that $P_{k}^*(i) \in \left[\frac{1}{2 |S_k|}, \frac{2}{|S_k|}\right]$ for all $i \in S_k$. Therefore, by Lemma \ref{ExpectationComputation}, we have that 
\[\mathop{\BE}_{\substack{i \sim \mathcal{P}_k \\j \sim Unif[S_k]}} \left|\frac{D(i)}{D(i)+D(j)} - \frac{D^*(i)}{D^*(i)+D^*(j)}\right| \ge \frac{1}{16} \cdot d_{TV}(\mathcal{P}_k^*, \mathcal{P}_k)\]
    for some constant $c > 0$. Taking the expected value over $k$, we get that
\[\BE X \ge \sum\limits_{k = 1}^{K} \frac{1}{16} \cdot s_k \cdot d_{TV}(\mathcal{P}_k^*, P_k) \ge \frac{1}{16} \cdot \left(\sum\limits_{k = 1}^{K + 1} s_k \cdot d_{TV}(\mathcal{P}_k^*, P_k)\right) - \frac{1}{16} \cdot s_{K+1},\]
    since $d_{TV}$ is always in the range $[0, 1].$ Now, note that $\sum_{k = 1}^{K+1} s_k \cdot d_{TV}(\mathcal{P}_k^*, \mathcal{P}_k) = d_{TV}(\mathcal{D}', \mathcal{D})$ by definition of $\mathcal{D}'.$ Also, $s_{K+1}^* \le \frac{\eps}{10},$ since $|S_{K+1}| \le n$ and each element $i \in S_{K+1}$ satisfies $P^*(i) \le \frac{\eps}{10 n}$. But since $d_{TV}(\mathcal{S}, \mathcal{S}^*) \le \frac{\eps}{10},$ this means that $s_{K+1} \le \frac{\eps}{10} + 2 \cdot \frac{\eps}{10} \le \frac{3\eps}{10}.$ Thus, we have that $\BE X \ge \frac{1}{16} \cdot \left(d_{TV}(\mathcal{D}', \mathcal{D}) - \frac{3\eps}{10}\right),$ as desired.
\end{proof}

    Therefore, if $d_{TV}(\mathcal{D}, \mathcal{D}^*) \ge \eps,$ we have that $d_{TV}(\mathcal{D}', \mathcal{D}) \ge \frac{9 \eps}{10}$ by the triangle inequality, so $\BE X \ge \frac{\eps}{32}$ by Proposition \ref{ExpectationLowerBound}. Therefore, by Proposition \ref{ReverseMarkov}, there exist constants $\alpha = 2^{-a}, \beta = 2^{-b}$ for $a, b$ nonnegative integers, such that $\alpha = \Omega(\eps),$ $\alpha \cdot \beta = \Omega\left(\frac{\eps}{\log 1/\eps}\right)$ and $\BP(X \ge \alpha) \ge \beta.$ However, if $\mathcal{D} = \mathcal{D}',$ then $X$ is uniformly $0$.
    
    Now, for each pair $(\alpha, \beta),$ we let $R = O(\beta^{-1} \cdot \log \eps^{-1})$, and we choose $R$ samples $i_1, i_2, \dots, i_{R}$ from the distribution $\mathcal{D}.$ For each $i_r$ for $1 \le r \le R,$ we let $k_r$ denote the index of the set $S_k$ that contains $i_r$ and draw $j_r$ uniformly from $S_{k_r}$. This is indeed equivalent to drawing $k_r \sim \mathcal{S}, i_r \sim \mathcal{P}_{k_r},$ and $j_r \sim Unif[S_{k_r}]$. Therefore, if $d_{TV}(\mathcal{D}, \mathcal{D}^*) \ge \eps$, with at least $9/10$ probability, some $\alpha, \beta, i_r, j_r$ satisfies 
\[\left|\frac{D(i_r)}{D(i_r)+D(j_r)} - \frac{D^*(i_r)}{D^*(i_r)+D^*(j_r)}\right| \ge \alpha.\]
    However, if $\mathcal{D} = \mathcal{D}^*$, then for all $\alpha, \beta, i_r, j_r,$ we have $\frac{D(i_r)}{D(i_r)+D(j_r)} - \frac{D^*(i_r)}{D^*(i_r)+D^*(j_r)} = 0.$
    
    Now, for each $(\alpha, \beta)$ and $1 \le r \le R$, we use $O(\alpha^{-2} \log \eps^{-1})$ Pairwise Conditional samples from $\mathcal{D}_{i_r, j_r}$ to determine $\frac{D(i_r)}{D(i_r)+D(j_r)}$ up to an $\frac{\alpha}{3}$ additive factor. If $d_{TV}(\mathcal{D}, \mathcal{D}^*) \ge \eps,$ then there will be some $(\alpha, \beta)$ and some $r \le R$ such that our estimate for $\frac{D(i_r)}{D(i_r)+D(j_r)}$ differs from $\frac{D^*(i_r)}{D^*(i_r)+D^*(j_r)}$ by at least $\frac{2 \alpha}{3}$ with probability at least $1-\eps^{10}$ by the Chernoff bound. However, if $\mathcal{D} = \mathcal{D}^*,$ then for all $(\alpha, \beta)$ and all $r \le R,$ our estimate for $\frac{D(i_r)}{D(i_r)+D(j_r)}$ differs from $\frac{D^*(i_r)}{D^*(i_r)+D^*(j_r)}$ by at most $\frac{\alpha}{3}$ with probability at least $1-\eps^{10}$ by the Chernoff bound. Therefore, the total number of samples from the total distribution and from Pair conditional samples is at most
\[\sum_{(\alpha, \beta)} O(\alpha^{-2} \log \eps^{-1} \cdot \beta^{-1} \cdot \log \eps^{-1}) = O(\log^2 \eps^{-1}) \cdot \sum_{(\alpha, \beta)} \alpha^{-2} \beta^{-1}.\]
    But since we only need to look at $\alpha = \Omega(\eps),$ $\alpha \cdot \beta = \Omega(\eps/\log \eps^{-1})$, and $\alpha, \beta$ as negative powers of $2$, the sum on the right hand side is $O(\eps^{-2} \cdot \log \eps^{-1}),$ so the total number of calls to \Call{Samp}{} and \Call{Pcond}{} in this step is at most $O(\eps^{-2} \cdot \log^3 \eps^{-1}).$ Adding this to the initial $O\left(\frac{\sqrt{\log (N/\eps)}}{\eps^2}\right)$ calls to \Call{Samp}{}, we have made a total of $\tilde{O}\left(\frac{\sqrt{\log N}}{\eps^2}\right)$ queries.
    
\section*{Acknowledgments}

I would like to thank Piotr Indyk for many helpful discussions, reading and editing drafts of this paper, and pointing me to useful references. I would also like to thank Ted Pyne for reading and editing a draft of this paper. I would also like to thank Cl\'{e}ment Canonne for answering several questions about the state of the art in the conditional sampling model. Finally, I would like to thank Ronitt Rubinfeld, Talya Eden, Sandeep Silwal, and Tal Wagner for helpful discussions.


\newcommand{\etalchar}[1]{$^{#1}$}

\newpage
\appendix 

\section{A Near-Tight Lower Bound for Identity Testing in PAIRCOND} \label{Lower}

In this section, we prove Theorem \ref{IdPcondLower}. As we noted in the introduction, the proof is very similar to \cite[Theorem 8]{CanonneRS15}, and to maintain consistency, we will adopt a similar proof structure.

\subsection{The Distribution $\mathcal{D}^*$ and Proof Intuition} \label{LowerOverview}

Let $K = \Theta\left(\eps^{-2} \log N\right)$ and let $R = \Theta\left(\frac{\log N}{\log K}\right) = \Theta\left(\frac{\log N}{\log (\eps^{-1} \log N)}\right),$ so that $N = K + K^2 + K^3 + \dots + K^{2R}.$ Now, for each $1 \le r \le 2R,$ let $B_r$ be the interval of integers starting from $1 + \sum_{i = 1}^{r-1} K^i$ and ending with $\sum_{i = 1}^{r} K^i,$ so that $|B_r| = K^r,$ and the $B_r$'s partition $[N].$ We now define $\mathcal{D}^*$ as follows. For $1 \le i \le N,$ if $i \in B_r,$ then $D^*(i) = \frac{1}{2R \cdot K^r}.$ This way, each bucket has equal probability, and for each fixed bucket, the elements all have the same probability.

We will show that it is difficult to distinguish between this distribution and a distribution randomly selected from $\mathcal{P},$ where $\mathcal{P}$ is a collection of distributions each having total variation distance at least $\frac{\eps}{2}$ from $\mathcal{D}^*$. We select a distribution $\mathcal{D} \leftarrow \mathcal{P}$ based on a random string $s \in \{0, 1\}^R.$ For each $1 \le r \le R,$ if $s = 0,$ then for all $i \in B_{2r-1},$ we choose $D(i) = \frac{1-\eps}{2R \cdot K^{2r-1}}$ and for all $i \in B_{2r},$ we choose $D(i) = \frac{1+\eps}{2R \cdot K^{2r}}$. This way, it is clear that for all strings $s$, $\sum_{i = 1}^{N} D(i) = 1$, so $\mathcal{D}$ is in fact a distribution, and that $\TV(\mathcal{D}, \mathcal{D}^*) = \frac{\eps}{2}$ for all $\mathcal{D}$.

For intuition as to why it is difficult to distinguish between $\mathcal{D}^*$ and $\mathcal{D} \leftarrow \mathcal{P},$ first note that intuitively, \Call{Pcond}{} is useless. This is because if we ever call \Call{Pcond}{$x, y$} and $x$ and $y$ are not in the same bucket, we will almost always get the one in the smaller bucket, but if $x$ and $y$ are in the same bucket, \Call{Pcond}{$x, y$} is equivalent to choosing a random element in $\{x, y\}$ regardless of whether our distribution is $\mathcal{D}^*$ or $\mathcal{D}.$ Thus, the only useful information we get is from \Call{Samp}{}. However, the only real information we get from \Call{Samp}{} is which bucket the sampled element is in. This is because beyond that, we are just sampling a uniformly random element in the bucket regardless of whether our distribution is $\mathcal{D}^*$ or $\mathcal{D}.$ However, we have $2R$ buckets, and it is known that in the sampling model, at least $\sqrt{R} \cdot \eps^{-2}$ samples are needed to test uniformity \cite{Paninski08}. And indeed, $\mathcal{D}^*$ which is uniform on the buckets and $\mathcal{D}$, when restricted to the buckets, has half of its elements with probability $\frac{1+\eps}{2 R}$ and  half of its elements with probability $\frac{1-\eps}{2 R},$ and thus has total variation distance $\frac{\eps}{2}$ from uniform. This suggests that we need $\Omega\left(\sqrt{\frac{\log N}{\log (\eps^{-1} \log N)}} \cdot \eps^{-2}\right)$ queries.

\subsection{Preliminaries}

First, we need the following well-known lemma, known as the Data Processing Inequality for Total Variation Distance.

\begin{lemma} \label{DataProcessing}
    Let $\mathcal{D}, \mathcal{D}'$ be two distributions over some probability space $\Omega.$ Let $F$ be a randomized function over $\Omega,$ which can be thought of as a distribution over functions $f$ on $\Omega.$ In other words, $F(\mathcal{D})$ is the distribution of $f(x)$ where $x \leftarrow \mathcal{D}$ and $f \leftarrow F$ (and likewise for $F(\mathcal{D}')$). Then, we have that
\[\TV(F(\mathcal{D}), F(\mathcal{D}')) \le \TV(\mathcal{D}, \mathcal{D}').\]
\end{lemma}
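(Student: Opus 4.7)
The plan is to use the coupling characterization of total variation distance. Recall that for any two distributions $\mu, \nu$ on a common space, $\TV(\mu,\nu)$ equals the infimum over all couplings $(X,X')$ with marginals $\mu$ and $\nu$ of $\BP(X \ne X')$, and this infimum is attained by an optimal (maximal) coupling. I will invoke this as a black box; it is the standard equivalent definition of TV distance.

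First, I would fix an optimal coupling $(X,X')$ of $\mathcal{D}$ and $\mathcal{D}'$ so that $\BP(X \ne X') = \TV(\mathcal{D},\mathcal{D}')$. Next, I would sample $f \sim F$ independently of $(X,X')$ and consider the pair $(f(X), f(X'))$. By construction, $f(X)$ has the distribution $F(\mathcal{D})$ and $f(X')$ has the distribution $F(\mathcal{D}')$, so this pair is a (not necessarily optimal) coupling of $F(\mathcal{D})$ and $F(\mathcal{D}')$. The key observation is that whenever $X = X'$, applying the \emph{same} realized function $f$ gives $f(X) = f(X')$; hence
\[
\BP\bigl(f(X) \ne f(X')\bigr) \;\le\; \BP(X \ne X') \;=\; \TV(\mathcal{D},\mathcal{D}').
\]
Since $\TV(F(\mathcal{D}), F(\mathcal{D}'))$ is the infimum of $\BP(Y \ne Y')$ over couplings $(Y,Y')$ of $F(\mathcal{D})$ and $F(\mathcal{D}')$, the displayed bound immediately yields $\TV(F(\mathcal{D}), F(\mathcal{D}')) \le \TV(\mathcal{D}, \mathcal{D}')$.

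There is no real obstacle here; the only subtle point is handling the independence of $f$ from $(X,X')$ cleanly and observing that $F$ being a \emph{randomized} function just means we draw $f \sim F$ once and apply it to both coordinates of the coupling. As an alternative route, if one prefers to avoid couplings, I would instead use the variational formula $\TV(\mu,\nu) = \sup_A |\mu(A) - \nu(A)|$: for any measurable event $A$ in the range, set $g_A(x) := \BP_{f \sim F}(f(x) \in A) \in [0,1]$, note that $F(\mathcal{D})(A) - F(\mathcal{D}')(A) = \BE_{\mathcal{D}} g_A - \BE_{\mathcal{D}'} g_A$, and bound the right-hand side by $\TV(\mathcal{D},\mathcal{D}')$ using the fact that $g_A$ takes values in $[0,1]$. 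Taking the supremum over $A$ finishes the argument.
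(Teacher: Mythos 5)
Your proposal is correct. The paper itself states Lemma \ref{DataProcessing} as a well-known fact and gives no proof, so there is nothing to compare against; your coupling argument is the standard one and is complete: take a maximal coupling $(X,X')$ of $\mathcal{D},\mathcal{D}'$, draw a single $f \sim F$ independently, observe that $(f(X),f(X'))$ couples $F(\mathcal{D})$ and $F(\mathcal{D}')$ (the marginals are right precisely because the \emph{same} $f$ applied to one coordinate still has the law of ``$x \leftarrow \mathcal{D}$, $f \leftarrow F$''), and use that TV lower-bounds the disagreement probability of any coupling. Your alternative route via $g_A(x) = \BP_{f \sim F}(f(x) \in A) \in [0,1]$ and the bound $|\BE_{\mathcal{D}} g_A - \BE_{\mathcal{D}'} g_A| \le \TV(\mathcal{D},\mathcal{D}')$ is equally valid and avoids invoking existence of an optimal coupling (which in any case could be sidestepped by taking near-optimal couplings and passing to the limit); in the paper's discrete setting of transcripts either argument goes through with no measurability issues.
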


We will also need the following result, which is used to prove a lower bound for uniformity testing in the sampling model.

\begin{theorem} \cite[rephrased]{Paninski08} \label{SampLowerBound}
    Let $m \ge 1$ be a positive integer, and let $\mathcal{U}$ be the uniform distribution over $[2m]$. Next, draw random $s_1, \dots, s_m \leftarrow \{0, 1\}$ and let $\mathcal{Q}_s$ be a distribution over $[2m]$ such that $Q_s(2i-1) = \frac{1-\eps}{2m}$ and $Q_s(2i) = \frac{1+\eps}{2m}$ if $s_i = 0$, and $Q_s(2i-1) = \frac{1+\eps}{2m}$ and $Q_s(2i) = \frac{1-\eps}{2m}$ if $s_i = 1.$
    
    Then, if $q = o(\sqrt{m} \cdot \eps^{-2}),$ no algorithm can distinguish between $q$ samples drawn from $\mathcal{U}$ and $q$ samples drawn from $\mathcal{Q}_s$ with advantage $\Omega(1)$, where $s$ is unknown and drawn beforehand.
\end{theorem}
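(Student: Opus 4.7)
Proof proposal. I would prove this classical Paninski-style lower bound by showing that, after averaging over the random string $s$, the distribution of $q$ samples drawn from $\mathcal{Q}_s$ is almost indistinguishable in total variation from $q$ samples drawn from $\mathcal{U}$, and then invoking Lemma \ref{DataProcessing} to rule out any test with constant advantage. The plan has three ingredients: symmetrization of the sample to a histogram, Poissonization to decouple the counts across elements, and a per-bucket chi-squared calculation that exhibits the cancellation responsible for the $\sqrt{m}/\eps^2$ barrier.

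First, by the data processing inequality it suffices to bound $\TV(\mathcal{U}^{\otimes q},\, \BE_s\, \mathcal{Q}_s^{\otimes q}) = o(1)$. Both laws are invariant under permutations of the $q$ samples, so the sufficient statistic is the histogram $(c_1,\ldots,c_{2m})$ with $c_j = \#\{t : x_t = j\}$. I would next replace the fixed sample count $q$ by a Poisson$(q)$ count. A standard de-Poissonization step (using that $|\mathrm{Poi}(q) - q| = O(\sqrt{q})$ with high probability and that adding or removing $O(\sqrt q)$ samples changes the law in TV by at most $o(1)$) transfers the final bound back to the fixed-$q$ setting at the cost of an additive $o(1)$. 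Under Poissonization, the $c_j$ become mutually independent Poissons, and pairing $(a_i, b_i) := (c_{2i-1}, c_{2i})$ gives pairs that are independent across $i \in [m]$.

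Second, set $\lambda := q/(2m)$. Under $\mathcal{U}$, each $(a_i, b_i)$ has law $P_i = \mathrm{Poi}(\lambda) \otimes \mathrm{Poi}(\lambda)$; under $\mathcal{Q}_s$ it has law $\mathrm{Poi}((1-\eps)\lambda) \otimes \mathrm{Poi}((1+\eps)\lambda)$ when $s_i = 0$ and the transposed product when $s_i = 1$. Averaging over $s$ uniform on $\{0,1\}^m$ factorizes across $i$, so the mixture $R := \BE_s[\mathcal{Q}_s^{\otimes q}]$ is a product $\bigotimes_i R_i$ with $R_i$ the equal mixture of the two tilted product Poisson laws. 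I would control TV via chi-squared using $\TV(P,R)^2 \le \tfrac{1}{4}\chi^2(R \| P)$ together with tensorization $1 + \chi^2(R \| P) = \prod_{i=1}^m \bigl(1 + \chi^2(R_i \| P_i)\bigr)$. Using the Poisson moment generating function $\BE_{X \sim \mathrm{Poi}(\lambda)}[z^X] = e^{\lambda(z-1)}$, a direct computation starting from the likelihood ratio $R_i/P_i = \tfrac{1}{2}[(1-\eps)^a(1+\eps)^b + (1+\eps)^a(1-\eps)^b]$ and squaring under $P_i$ yields
\[
1 + \chi^2(R_i \| P_i) \;=\; \tfrac{1}{2}\bigl(e^{2\lambda\eps^2} + e^{-2\lambda\eps^2}\bigr) \;=\; \cosh(2\lambda\eps^2),
\]
so that $\chi^2(R_i \| P_i) \le 2(\lambda\eps^2)^2$ whenever $\lambda\eps^2 = O(1)$. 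Tensorizing gives $\chi^2(R \| P) \le (1 + 2(\lambda\eps^2)^2)^m - 1 = O(m \lambda^2 \eps^4) = O(q^2 \eps^4 / m)$, which is $o(1)$ precisely when $q = o(\sqrt{m}\,\eps^{-2})$. Thus $\TV(\mathcal{U}^{\otimes q}, \BE_s\, \mathcal{Q}_s^{\otimes q}) = o(1)$, so no algorithm can distinguish with constant advantage.

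The main obstacle is verifying the cancellation inside the per-bucket chi-squared: when one expands $((1-\eps)^a(1+\eps)^b + (1+\eps)^a(1-\eps)^b)^2$ under $P_i$, the two cross terms give $2\BE[(1-\eps^2)^{a+b}] = 2 e^{-2\lambda\eps^2}$ and the two squared terms give $2\BE[(1-\eps)^{2a}]\,\BE[(1+\eps)^{2b}] = 2 e^{2\lambda\eps^2}$, so the linear-in-$\eps$ contributions vanish exactly and only $\cosh(2\lambda\eps^2)$ survives. This cancellation is exactly what the averaging over $s$ buys, and it is the reason the complexity scales as $\sqrt{m}/\eps^2$ rather than the naive $1/\eps^2$ one would get from a single perturbed coordinate.
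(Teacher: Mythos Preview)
The paper does not supply its own proof of this statement; it is quoted (rephrased) from \cite{Paninski08} and used as a black box in the proof of Lemma~\ref{15Orig}. Your argument is the standard and correct route to this classical lower bound---Poissonize to decouple the element counts, average over $s$ so the mixture factors across pairs, compute the per-pair $\chi^2$ divergence via the Poisson MGF to get $\cosh(2\lambda\eps^2)-1$, and tensorize---and it matches what appears in the original reference. There is nothing to correct; just note that since the paper treats this theorem as a citation rather than proving it, there is no ``paper's proof'' to compare against.
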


\subsection{The Proof}

In this subsection, we prove the following variant of Theorem \ref{IdPcondLower}:

\begin{theorem}
    Let $A$ be any (adaptive) algorithm, which makes $q = o(\sqrt{R} \cdot \eps^{-2})$ calls to \Call{Samp}{}, followed by $q$ calls to \Call{Pcond}{}. Then,
\[\left|\BP_{\mathcal{D} \leftarrow \mathcal{P}} \Big[A^{\mathcal{D}} \text{ outputs ACCEPT}\Big] - \BP \Big[A^{\mathcal{D}^*} \text{ outputs ACCEPT}\Big]\right| < \frac{1}{3},\]
    where $\mathcal{A}^{\mathcal{D}}$ implies the algorithm has has been given \Call{Samp}{} and \Call{Pcond}{} access to the distribution $\mathcal{D}$ (and likewise for $\mathcal{A}^{\mathcal{D}^*}$).
\end{theorem}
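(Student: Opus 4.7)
The plan is to show that under both $\mathcal{D}^*$ and a random $\mathcal{D} \leftarrow \mathcal{P}$, the algorithm's transcript is close in total variation to one that depends only on a quantity that is itself hard to distinguish via Paninski's theorem (Theorem \ref{SampLowerBound}). The geometric blow-up in bucket sizes is chosen precisely so that \Call{Pcond}{} queries leak almost no information about the random string $s$, while the \Call{Samp}{} queries reveal essentially only the bucket distribution, which is uniform on $[2R]$ under $\mathcal{D}^*$ and the Paninski instance on $[2R]$ under $\mathcal{D}$.

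By a standard Yao-style fixing of internal randomness, I may assume $A$ is deterministic, so the acceptance event is a fixed function of the transcript $\tau$ of query-answer pairs. By Lemma \ref{DataProcessing} it suffices to upper bound $\TV(\tau_{\mathcal{D}^*}, \tau_{\mathcal{P}})$, where $\tau_{\mathcal{P}}$ denotes the transcript under $\mathcal{D} \leftarrow \mathcal{P}$ (a mixture over $s$).

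For the \Call{Samp}{} phase, conditioned on landing in bucket $B_r$, the sampled element is uniform in $B_r$ under both $\mathcal{D}^*$ and any $\mathcal{D} \in \mathcal{P}$. Hence the joint distribution of the $q$ samples can be generated by first drawing the ``bucket trace'' $(b_1, \ldots, b_q)$ from the induced bucket distribution and then independently filling in each $x_i$ uniformly from $B_{b_i}$. The bucket distribution is uniform on $[2R]$ under $\mathcal{D}^*$, and under $\mathcal{D} \leftarrow \mathcal{P}$ it matches the Paninski instance of Theorem \ref{SampLowerBound} on $[2R]$ (bias $\pm\eps$ per pair, signs set by $s$). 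Theorem \ref{SampLowerBound} thus shows these traces are $o(1)$-close in TV when $q = o(\sqrt{R}/\eps^2) = o(\sqrt{\log N/\log(\eps^{-1}\log N)}/\eps^2)$, so the sample phase contributes $o(1)$ TV.

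For the \Call{Pcond}{} phase, consider a query \Call{Pcond}{$x,y$} with $x \in B_r$, $y \in B_{r'}$. If $r = r'$ the output is uniform on $\{x,y\}$ under both distributions, so contributes nothing. Otherwise, WLOG $r < r'$, and the probability of returning $y$ equals $1/(K^{r'-r}+1)$ under $\mathcal{D}^*$ and differs by at most $O(\eps/K^{r'-r})$ under $\mathcal{D}$; a direct calculation then gives a per-query $\chi^2$-divergence bound of $O(\eps^2/K^{r'-r}) \le O(\eps^2/K)$. Applying the chain rule for KL to the adaptive protocol, the total KL contribution across the $q$ \Call{Pcond}{} queries is at most $O(q\eps^2/K) = o(\eps^2/\sqrt{\log N \log K}) = o(1)$, using $K = \Theta(\eps^{-2}\log N)$ and the assumed bound on $q$. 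Pinsker's inequality converts this to $o(1)$ TV, and combining with the sample phase via a hybrid argument yields $\TV(\tau_{\mathcal{D}^*}, \tau_{\mathcal{P}}) < 1/3$ for $N$ sufficiently large. The main technical obstacle is the accounting for the cross-bucket \Call{Pcond}{} queries: equal-sized buckets would leak $\Theta(\eps)$ per query and ruin the $o(1)$ bound, so one must crucially exploit the geometric blow-up in bucket sizes, which makes the $\chi^2$ per query decay by a factor of $K^{|r-r'|}$.
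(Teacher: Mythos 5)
Your proposal is correct in substance and follows the same high-level strategy as the paper -- replace the \Call{Pcond}{} answers by a rule that can be simulated without access to the unknown distribution, and then reduce the remaining \Call{Samp}{}-only experiment to Paninski's lower bound (Theorem \ref{SampLowerBound}) via the data-processing inequality (Lemma \ref{DataProcessing}), using the fact that within each bucket both $\mathcal{D}^*$ and every $\mathcal{D}\in\mathcal{P}$ are uniform, so the samples factor through the bucket trace. Where you differ is in how the \Call{Pcond}{} phase is charged. The paper runs $q$ stepwise hybrids $A^{(k)}$ in which the simulated oracle is distribution-independent (uniform for same-bucket pairs, deterministically the smaller element for cross-bucket pairs), and bounds each swap by TV $O(1/K)\le 1/(20q)$, so it must pay this cost on both the $\mathcal{D}^*$ side and the $\mathcal{P}$ side. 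You instead take the exact $\mathcal{D}^*$ answer rule (probability $1/(K^{r'-r}+1)$ for cross-bucket pairs) as the simulator -- which is still computable from the transcript since bucket membership is public -- so the swap is free under $\mathcal{D}^*$, and you bound the entire \Call{Pcond}{} phase at once by the per-query $\chi^2$ estimate $O(\eps^2/K^{r'-r})$, the KL chain rule, and Pinsker; this gives a sharper per-query bound ($\eps^2/K$ versus $1/K$) though both suffice given $q=o(K)$. Two points in your sketch should be made explicit to be airtight: (i) name the intermediate process (real \Call{Samp}{} answers, $\mathcal{D}^*$-rule \Call{Pcond}{} answers) so the ``hybrid argument'' is a genuine three-term triangle inequality, with the first term identically zero and the third term reduced to bucket traces by Lemma \ref{DataProcessing}; and (ii) your chain-rule computation is stated for a fixed $\mathcal{D}\in\mathcal{P}$, while the comparison is against the mixture over $s$ -- the cleanest fix is to condition on $s$, apply the KL chain rule and Pinsker per fixed $s$ (the bound $O(q\eps^2/K)$ is uniform in $s$), and then use convexity of total variation to average over $s$; applying the chain rule directly to the mixed transcript would otherwise drag the \Call{Samp}{} steps into the KL accounting, for which the cited theorem only supplies a TV-type guarantee.
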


\begin{remark}
    To see why this implies Theorem \ref{IdPcondLower}, note that if an algorithm can use $q$ queries to \Call{Samp}{} and \Call{Cond}{}, it can make all the \Call{Samp}{} queries first, since they are nonadaptive queries and do not depend on any previous input or output. This implication was also used in \cite{CanonneRS15}.
\end{remark}

\begin{proof}
    We shall use the same setup used in the proof of \cite[Theorem 9]{CanonneRS15}. Namely, we first fix such an algorithm $A$, and define a \emph{transcript} for $A$ to be a pair $(Y, Z)$, such that $Y = (s_1, \dots, s_q) \in [N]^q$ and $Z = ((\{x_1, y_1\}, p_1), \dots, (\{x_q, y_q\}), p_q)),$ where each $x_i, y_i \in [N]$ and each $p_i$ is either $x_i$ or $y_i$. The $Y$ represents the $q$ samples drawn from \Call{Samp}{}, and $Z$ represents the queries $\{x_i, y_i\}$ and the output $p_i = \Call{Pcond}{x_i, y_i}$.
    
    Let $\Tr(\mathcal{D}^*)$ denote the distribution of transcripts generated by running $A$ on distribution $\mathcal{D}^*,$ and let $\Tr(\mathcal{P})$ denote the distribution of transcripts generated by first sampling $\mathcal{D} \leftarrow \mathcal{P}$ and then running $A$ on distribution $\mathcal{D}.$ Our goal will be to show that the total variation distance between the transcript distribution $\Tr(\mathcal{D}^*)$ and the transcript distribution $\Tr(\mathcal{P})$ is less than $\frac{1}{3}$ for $q = o(\sqrt{R} \cdot \eps^{-2}).$
    
    To do this, we consider the following modified algorithm $A^{(k)}$ for $0 \le k \le q.$
\begin{enumerate}
    \item $A^{(k)}$ simulates $A$ by making $q$ calls to \Call{Samp}{} and then simulates the first $k$ calls to \Call{Pcond}{}. 
    \item For each $k' > k,$ for the $(k')^{\text{th}}$ call to \Call{Pcond}{}, $A^{(k)}$ generates $(x_{k'}, y_{k'})$ as $A$ would given the output $Y$ and the output $(\{x_1, y_1\}, p_1), \dots, (\{x_{k'-1}, y_{k'-1}\}, p_{k'-1})$ that it has already seen. However, instead of calling \Call{Pcond}{}, $A^{(k)}$ does the following:
    \begin{enumerate}
        \item If $x_{k'}$ and $y_{k'}$ belong to the same block $B_\ell,$ then $p_{k'}$ is chosen uniformly from $\{x_{k'}, y_{k'}\}.$
        \item If $x_{k'}$ and $y_{k'}$ belong to different blocks, then $p_{k'}$ will just be the smaller of $x_{k'}$ and $y_{k'}$ (since the smaller element is in the smaller block).
    \end{enumerate}
\end{enumerate}
    
    We now define $\Tr^{(k)}(\mathcal{D}^*)$ as the distribution of transcripts generated by running $A^{(k)}$ on $\mathcal{D}^*$ and $\Tr^{(k)}(\mathcal{P})$ as the distribution of transcripts generated sampling $\mathcal{D} \leftarrow \mathcal{P}$ and running $A^{(k)}$ on $\mathcal{D}$. Note that $\Tr^{(q)}(\mathcal{D}^*)$ is just $\Tr(\mathcal{D}^*)$ and $\Tr^{(q)}(\mathcal{P})$ is just $\Tr(\mathcal{P})$. By an immediate application of the Triangle Inequality, the proof follows from the following two lemmas.
    
\begin{lemma} \label{15Orig}
    $\TV\left(\Tr^{(0)}(\mathcal{D}^*), \Tr^{(0)}(\mathcal{P})\right) = o(1).$
\end{lemma}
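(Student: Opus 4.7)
The plan is to reduce the transcript distance to a comparison of the sample distributions and then invoke Paninski's lower bound.

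The first observation is that $A^{(0)}$ never actually calls \Call{Pcond}{} on the underlying distribution: every coordinate of $Z = ((\{x_k, y_k\}, p_k))_{k=1}^q$ is computed from the algorithm's internal randomness together with the list $Y = (s_1, \ldots, s_q)$ of \Call{Samp}{} outputs. Hence the entire transcript $(Y, Z)$ is a (randomized) function of $Y$ only, and by the Data Processing Inequality (Lemma \ref{DataProcessing}) it suffices to prove
\[
\TV\bigl(\mathcal{D}^{*\,\otimes q},\;\mathbb{E}_{\mathcal{D}\leftarrow\mathcal{P}}\,\mathcal{D}^{\otimes q}\bigr) = o(1).
\]

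Next I would reduce from samples in $[N]$ to samples in $[2R]$ (the block index). For each sample $s_k$, let $\beta_k \in [2R]$ denote the index of the block $B_{\beta_k}$ containing $s_k$. Under $\mathcal{D}^*$ every element of $B_r$ has the same probability $\frac{1}{2R\cdot K^r}$, and under any $\mathcal{D} \in \mathcal{P}$ every element of $B_r$ also has a common probability (either $\frac{1\pm\eps}{2R\cdot K^r}$). Thus, conditioned on $(\beta_1,\ldots,\beta_q)$, the within-block indices are uniform on $B_{\beta_1}\times\cdots\times B_{\beta_q}$ under both $\mathcal{D}^*$ and any $\mathcal{D}\in\mathcal{P}$. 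Applying the Data Processing Inequality in the other direction (or just directly, since the within-block part is a common conditional distribution), the two TV distances agree and it suffices to bound
\[
\TV\bigl(\mathrm{Unif}([2R])^{\otimes q},\;\mathbb{E}_{s \leftarrow\{0,1\}^R}\,\mathcal{Q}_s^{\otimes q}\bigr),
\]
where $\mathcal{Q}_s$ is the distribution on $[2R]$ that assigns mass $\frac{1-\eps}{2R}$ and $\frac{1+\eps}{2R}$ to $B_{2r-1}, B_{2r}$ (in an order determined by $s_r$), for each $1 \le r \le R$.

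Finally, this is precisely the setting of Theorem \ref{SampLowerBound} with $m = R$: uniform on $2R$ elements versus a Paninski-style perturbation along $R$ pairs, with parameter $\eps$. That theorem states that for $q = o(\sqrt{R}\cdot\eps^{-2})$ no algorithm can distinguish the two sample distributions with advantage $\Omega(1)$; since optimal distinguishing advantage equals $\tfrac{1}{2}$ times the TV distance, this is exactly $\TV(\cdot,\cdot) = o(1)$, as required.

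I don't foresee a genuine technical obstacle: the reduction to block indices is the only nontrivial step, and it is clean because the within-block conditional distribution is the same uniform distribution on $B_r$ for every $\mathcal{D}\in\mathcal{P}\cup\{\mathcal{D}^*\}$. The main care needed is just to phrase both uses of the Data Processing Inequality correctly (first using the map from $(Y,Z)$ to $Y$, and then using the map from $Y$ to the sequence of block indices $(\beta_k)$), and to verify that the resulting perturbed distribution on $[2R]$ matches the Paninski hard instance with $m=R$ so that the quantitative bound $q = o(\sqrt{R}\cdot\eps^{-2}) = o\bigl(\sqrt{\log N/\log(\eps^{-1}\log N)}\cdot\eps^{-2}\bigr)$ follows by our choice of $R$.
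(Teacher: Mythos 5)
Your proposal is correct and follows essentially the same route as the paper: both rest on the observation that $A^{(0)}$'s transcript is a randomized function of the \Call{Samp}{} outputs alone, and that since every block is internally uniform under $\mathcal{D}^*$ and under each $\mathcal{D}\in\mathcal{P}$, the sample distributions are equivalent to the block-index distributions, which is exactly the Paninski instance of Theorem \ref{SampLowerBound} with $m=R$. The only cosmetic difference is that you phrase the reduction via two applications of the Data Processing Inequality, whereas the paper argues the contrapositive with an explicit simulation (draw a block index from $\mathcal{U}$ or $\mathcal{Q}_s$ and output a uniform element of that block); these are the same argument.
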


\begin{lemma} \label{16Orig}
    For all $0 \le k \le q-1,$ if $q = o(\sqrt{R} \cdot \eps^{-2})$ we have that $\TV\left(\Tr^{(k)}(\mathcal{D}^*), \Tr^{(k+1)}(\mathcal{D}^*)\right) \le \frac{1}{20 q}$ and $\TV\left(\Tr^{(k)}(\mathcal{P}), \Tr^{(k+1)}(\mathcal{P})\right) \le \frac{1}{20 q}.$
\end{lemma}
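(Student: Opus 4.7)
The plan is a standard hybrid/coupling argument combined with the Data Processing Inequality (Lemma \ref{DataProcessing}). Couple the executions of $A^{(k)}$ and $A^{(k+1)}$ so that both algorithms use the \emph{same} random tape (for the algorithm's internal coins, the $q$ \Call{Samp}{} calls, and the first $k$ \Call{Pcond}{} responses, which are generated identically in both algorithms; as well as the coins used to generate all subsequent queries and faked responses). Under this coupling, the two transcripts agree on the $q$ sample outputs and on $(\{x_i, y_i\}, p_i)$ for $1 \le i \le k$, and in particular the $(k+1)^{\text{st}}$ query pair $(x_{k+1}, y_{k+1})$ is the same. The only potential source of disagreement is the value of $p_{k+1}$: it is faked in $A^{(k)}$ and answered by a real \Call{Pcond}{} in $A^{(k+1)}$. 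The rest of the transcript is a common randomized function of $p_{k+1}$ and the shared randomness, so by Lemma \ref{DataProcessing},
\[
\TV\!\left(\Tr^{(k)}(\mathcal{D}^*), \Tr^{(k+1)}(\mathcal{D}^*)\right) \le \BE_{H}\,\TV\!\left(p_{k+1}^{\text{fake}}\mid H,\; p_{k+1}^{\text{real}}\mid H\right),
\]
where $H$ denotes the history up through $(x_{k+1}, y_{k+1})$, and analogously for $\mathcal{P}$.

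Next, bound the one-step TV distance for each fixed pair $(x, y) := (x_{k+1}, y_{k+1})$. If $x, y$ lie in the same block $B_\ell$, then $D^*(x) = D^*(y)$ (and $D(x) = D(y)$ for every $\mathcal{D} \in \mathcal{P}$ by construction of $\mathcal{P}$), so a true \Call{Pcond}{} is exactly uniform on $\{x, y\}$ and the fake response has the same distribution; contribution $0$. If $x \in B_r$ and $y \in B_s$ with $r < s$, then under $\mathcal{D}^*$ the ratio $D^*(x)/D^*(y) = K^{s-r} \ge K$, so a true \Call{Pcond}{} returns the smaller element $x$ with probability $\frac{K^{s-r}}{1+K^{s-r}} \ge 1 - 1/K$, matching the fake (deterministic) response up to TV distance $\le 1/K$. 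For $\mathcal{D} \leftarrow \mathcal{P}$, the same ratio is $K^{s-r} \cdot \frac{1\pm\eps}{1\pm\eps}$, still at least $K/2$ for small $\eps$, yielding a TV distance of $O(1/K)$. Thus in every case the per-step TV distance is at most $c/K$ for some constant $c$.

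Finally, verify the quantitative bound. With $K = \Theta(\eps^{-2}\log N)$ and $R = \Theta(\log N / \log(\eps^{-1}\log N))$, we have $\sqrt{R} \le \sqrt{\log N} \le \log N$, hence for $q = o(\sqrt{R}\,\eps^{-2})$,
\[
\frac{c}{K} = O\!\left(\frac{1}{\eps^{-2} \log N}\right) = o\!\left(\frac{1}{\sqrt{R}\,\eps^{-2}}\right) = o(1/q) \le \frac{1}{20q}
\]
for all sufficiently large $N$ (and by choosing the hidden constant in $K$ large enough). Applying this uniformly over the (random) history $H$ gives the desired bound $\TV(\Tr^{(k)}(\mathcal{D}^*), \Tr^{(k+1)}(\mathcal{D}^*)) \le 1/(20q)$. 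For the $\mathcal{P}$ case, first condition on the random string $s \in \{0,1\}^R$ that selects $\mathcal{D}$; the one-step bound above holds uniformly in $s$, so averaging over $s$ preserves the bound.

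The main obstacle is only notational: making the coupling precise enough that one can invoke Lemma \ref{DataProcessing} cleanly, and keeping track of the two sources of randomness (algorithm coins and distribution selection in the $\mathcal{P}$ case). The substantive content reduces to the elementary observation that across-block \Call{Pcond}{} queries are nearly deterministic (probability $\ge 1 - O(1/K)$ of returning the element in the smaller block) and within-block queries are genuinely symmetric, which is precisely what the faked oracle captures.
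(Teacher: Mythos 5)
Your proposal is correct and follows essentially the same argument as the paper's proof sketch: couple the two hybrids up to the $(k+1)$st \Call{Pcond}{} query, observe that within-block queries are answered identically (uniformly) while cross-block queries differ only with probability $O(1/K)$, and then apply the Data Processing Inequality to the remainder of the transcript, concluding via $q = o(\sqrt{R}\,\eps^{-2}) = o(K)$. The extra care you take with the coupling, the $\mathcal{P}$ case (conditioning on $s$ and averaging), and the quantitative check only fills in details the paper defers to \cite{CanonneRS15}.
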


    The theorem follows since by the triangle inequality, $\TV(Tr(\mathcal{D}^*), Tr(\mathcal{P})) \le o(1) + 2q \cdot \frac{1}{20 q} = o(1) + \frac{1}{10} < \frac{1}{3},$ so the algorithm cannot even statistically distinguish between $\mathcal{D}^*$ and $\mathcal{D} \leftarrow \mathcal{P}$ using $q$ queries with advantage at least $\frac{1}{3}$.
    
    Thus, we just need to prove Lemmas \ref{15Orig} and \ref{16Orig}. In fact, the proof of Lemma \ref{16Orig} doesn't need to be changed from the corresponding proof in Canonne et al. (\cite[Lemma 16]{CanonneRS15}) at all. So, we just prove Lemma \ref{15Orig} and give an outline of the proof of Lemma \ref{16Orig}.

\begin{proof}[Proof of Lemma \ref{15Orig}]
    Note that when $A^{(0)}$ runs on $\mathcal{D}^*$ or $\mathcal{D}$, it does not ever call from the \Call{Pcond}{} oracle, and only calls upon \Call{Samp}{} $q$ times, followed by generating $\{x_k, y_k\}, p_k$ for all $1 \le k \le q,$ which is done only using the return values of the previous calls to \Call{Samp}{} and the randomness which can be generated by the algorithm $A^{(0)}$ itself. Therefore, if  $\TV\left(\Tr^{(0)}(\mathcal{D}^*), \Tr^{(0)}(\mathcal{P})\right) = \Omega(1),$ we would have a way of distinguishing between $\mathcal{D}^*$ and $\mathcal{D} \leftarrow \mathcal{P}$ with $\Omega(1)$ advantage using only $q$ queries.
    
    This, however, would give us a way of distinguishing between a uniform distribution $\mathcal{U}$ over $[2R]$ and $\mathcal{Q}_s$ for $s \leftarrow \{0, 1\}^R,$ which contradicts Theorem \ref{SampLowerBound}. To see why, suppose we were trying to distinguish between $\mathcal{U}$ and $\mathcal{Q}_s.$ Then, we simply choose $N = K + K^2 + \cdots + K^{2R}.$ For $1 \le k \le q,$ we sample $i$ from either $\mathcal{U}$ or $\mathcal{Q}_s,$ and output a random element in bucket $B_i \subset [N]$. This will have the distribution $\mathcal{D}^*$ if the original distribution were $\mathcal{U},$ and would have the distribution $\mathcal{D} \leftarrow \mathcal{P}$ with $s$ being the randomness drawn. Therefore, if we could distinguish between $\mathcal{D}^*$ and $\mathcal{D} \leftarrow \mathcal{P},$ we could also distinguish between $\mathcal{U}$ and $\mathcal{Q}_s.$ This proves the lemma.
\end{proof} 

\begin{proof}[Proof Sketch of Lemma \ref{16Orig}]
    For simplicity, let's see how to show $\TV\left(\Tr^{(k)}(\mathcal{D}^*), \Tr^{(k+1)}(\mathcal{D}^*)\right) \le \frac{1}{20 q}$. Note that for both $\Tr^{(k)}(\mathcal{D}^*)$ and $\Tr^{(k+1)}(\mathcal{D}^*)$, the algorithms $A^{(k)}$ and $A^{(k+1)}$ operate identically until the point of calling \Call{Pcond}{$x_{k+1}, y_{k+1}$}. They still generate $x_{k+1}, y_{k+1}$ in the same way, which means that the transcripts until this point have the same distribution.
    
    Now, if $x_{k+1}, y_{k+1}$ are in the same bucket $B_\ell$, the output $p_{k+1}$ will be a random element in $\{x_{k+1}, y_{k+1}\}$ for both $A^{(k+1)}$ and $A^{(k)},$ since $A^{(k)}$ just chooses a random element, and $A^{(k+1)}$ actually calls \Call{Pcond}{$x_{k+1}, y_{k+1}$}, which will give a random element. However, if $x_{k+1}$ is in bucket $\ell$ and $y_{k+1}$ is in bucket $\ell'$ for $\ell < \ell',$ the transcript of $A^{(k)}$ will always choose $p_{k+1} = x_{k+1}$ and the transcript of $A^{(k+1)}$ will choose $p_{k+1} = x_{k+1}$ with probability at least $1 - O\left(\frac{1}{K}\right)$ and will choose $y_{k+1}$ otherwise. Thus, only the rare event that the transcript chooses $p_{k+1} = y_{k+1}$ will cause the transcripts to deviate in distribution, so up to this point, the total variation distance is at most $O\left(\frac{1}{K}\right)$.
    
    Now, the rest of the protocol is the same for both $A^{(k)}$ and $A^{(k+1)}$ - namely, the rest of the transcript is just a randomized function of the current transcript. Thus, we can use Lemma \ref{DataProcessing} to say that the overall total variation distance is at most $O\left(\frac{1}{K}\right) \le \frac{1}{20 q},$ since $K = \Theta(\eps^{-2} \log N)$ and $q = o(\sqrt{R} \cdot \eps^{-2}) = o(K)$.
\end{proof}
    
\end{proof}

\newpage
\section{Pseudocode} \label{Pseudocode}

\subsection{Algorithms for Section \ref{TolerantUniformity}}

In this subsection, we write the pseudocode for all algorithms in Section \ref{TolerantUniformity}, leading to the final algorithm for Theorem \ref{TolerantUnif}. We assume \Call{Samp}{} and \Call{Pcond}{} access to $\mathcal{D}$ and that we already know the size $N$. Recall that $\mathcal{D}$ is a distribution over $[N]$ and we are trying to determine $\TV(\mathcal{D}, \mathcal{U})$ where $\mathcal{U}$ is uniform over $[N]$.
    
\begin{algorithm}
\caption{Lemma \ref{SingleElement}: Determines the probability of a given element $x$, based on the oracle.}\label{SingleElementAlg}
\begin{algorithmic}[1]
\Procedure{SingleElement}{$\eps, x, \hat{D}(x)$} \Comment{The oracle $\mathcal{O}$ returns ACCEPT on $z$ if and only if $\Call{Oracle}{\eps, x, \hat{D}(x), z} = 0$. The $\hat{D}(x)$ estimate comes from Lemma \ref{ConstantApprox}, and will be a weaker estimate than $\tilde{D}(x).$ See Algorithm \ref{Oracle} for the \Call{Oracle}{} procedure.}
\State $K = O(\eps^{-2})$
\For{$k = 1$ to $K$}
    \State $y_k \leftarrow \mathcal{U}$
    \State $z_k \leftarrow \mathcal{D}$
\EndFor
\State $\tilde{\gamma}_1, \tilde{\gamma}_2, \gamma_3 = 0$
\For{$k = 1$ to $K$}
    \If{$\Call{Oracle}{\eps, x, \hat{D}(x), y_k} = 0$}
        \State $\tilde{\gamma}_1 \leftarrow \tilde{\gamma}_1 + \frac{1}{K}$
        \While{$\Call{Pcond}{x, y_k} = y_k$}
            \State $\gamma_3 \leftarrow \gamma_3 + \frac{1}{K}$
        \EndWhile
    \EndIf
    \If{$\Call{Oracle}{\eps, x, \hat{D}(x), z_k} = 0$}
        \State $\tilde{\gamma}_2 \leftarrow \tilde{\gamma_2} + \frac{1}{K}$ 
    \EndIf
\EndFor
\State \textbf{Return} $\tilde{\gamma}_1, \frac{\tilde{\gamma_2}}{\gamma_3}$ \Comment{$\frac{\tilde{\gamma}_2}{\gamma_3}$ is our improved estimate $\tilde{D}(x).$}
\EndProcedure
\end{algorithmic}
\end{algorithm}

\begin{algorithm}
\caption{Lemma \ref{ZEstimate}: Estimates the probability of an element $z$ based on distance to $\frac{1}{N}$.}\label{ZEstimateAlg}
\begin{algorithmic}[1]
\Procedure{ZEstimate}{$\beta, x, \tilde{D}(x), z$}
\State Initialize $\tilde{D}(z)$
\For{$i = 1$ to $\log_2 \beta^{-1} + O(1)$}
    \State $\alpha = \Call{Compare}{z, x, 2^{-i}/20}$
    \State $\tilde{D}(z) = \alpha \cdot \tilde{D}(x)$
    \If{$\frac{1}{N} \cdot (1 - 2^{-i}) > \tilde{D}(z)$ or $\frac{1}{N} \cdot (1 + 2^{-i}) < \tilde{D}(z)$}
        \State \textbf{Return} $\tilde{D}(z)$
    \EndIf
\EndFor
\State \textbf{Return} $\tilde{D}(z)$
\EndProcedure
\end{algorithmic}
\end{algorithm}

\begin{algorithm}
\caption{Lemma \ref{EstimateCloseTerms}: Estimates average distance to $\frac{1}{N}$ among elements accepted by the oracle.}\label{EstimateCloseTermsAlg}
\begin{algorithmic}[1]
\Procedure{EstimateCloseTerms}{$\eps, x, \hat{D}(x)$}
\State $\tilde{\gamma}_1, \tilde{D}(x) = \Call{SingleElement}{\eps,  x, \hat{D}(x)}$
\State $T = \log_2 (\tilde{\gamma}_1/\eps) + O(1)$
\For{$t = 1$ to $T-1$}
    \State $W_{+, t} = 0$
    \State $\delta = 2^{-(t+1)}$
    \State $C = O((\delta/\eps)^2 \cdot \log \eps^{-1})$
    \For{$i = 1$ to $C$}
        \State $z \sim \mathcal{U}$
        \If{$\Call{Oracle}{\eps, x, \hat{D}(x), z} = 0$ and \Call{ZEstimate}{$c \cdot \delta, x, \tilde{D}(x), z$} $\in \left[\frac{1 + \delta}{N}, \frac{1 + 2\delta}{N}\right]$} \Comment{We can replace $\beta = O(\eps/\gamma_1)$ with $c \cdot \delta$ for a small constant $c$ since we just need to know if \Call{ZEstimate}{} returns a value in the right range of $[(1+\delta)/N, (1+2\delta)/N]$}
            \State $C' = O(\delta^{-2})$
            \State $\bar{V} = 0$
            \For{$j = 1$ to $C'$}
                \While{$\Call{Pcond}{z, x} = z$}
                    \State $\bar{V} \leftarrow \bar{V} + \frac{1}{C'}$
                \EndWhile
            \EndFor
            \State $W_{+, t} \leftarrow W_{+, t} + \frac{N \cdot \tilde{D}(x) \cdot \bar{V} - 1}{C}$
        \EndIf
    \EndFor
\EndFor
\State Similarly create $W_{-, 1}, \dots, W_{-, T-1}, W_{T}.$
\State \textbf{Return} $\sum_{t = 1}^{T-1} W_{+, t} + \sum_{t = 1}^{T-1} W_{-, t} + W_{T}$
\EndProcedure
\end{algorithmic}
\end{algorithm}

\begin{algorithm}
\caption{Lemma \ref{ConstantApprox}: Determines the probability of a set of elements up to a $1 \pm 0.1$ factor.}\label{ConstantApproxAlg}
\begin{algorithmic}[1]
\Procedure{ConstantApprox}{$\eps$}
\State $R = O(\eps^{-1} \log^2 \eps^{-1})$
\For{$r = 1$ to $R$}
    \State $x_r \leftarrow \mathcal{D}$
    \State $w_r \leftarrow \mathcal{D}$
    \State $y_r \leftarrow \mathcal{U}$
\EndFor
\For{$r = 1$ to $R$}
    \State $\tilde{d}(x_r) = 0$
    \State $\tilde{u}(x_r) = 0$
    \State $\hat{D}(x_r) = \text{NULL}$
    \For{$i = 1$ to $R$}
        \If{$0.98 < \Call{Compare}{x_r, w_i, 0.01} < 1.02$}
            \State $\tilde{d}(x_r) \leftarrow \tilde{d}(x_r) + \frac{1}{R}$
        \EndIf
        \If{$0.98 < \Call{Compare}{x_r, y_i, 0.01} < 1.02$}
            \State $\tilde{u}(x_r) \leftarrow \tilde{U}(x_r) + \frac{1}{R}$
        \EndIf
    \EndFor
    \If{$\tilde{d}(x_r) > \frac{\eps}{250 \log \eps^{-1}}$ and $\tilde{u}(x_r) > \frac{\eps}{250 \log \eps^{-1}}$}
        \State $\hat{D}(x_r) = \frac{1}{N} \cdot \frac{\tilde{d}(x_r)}{\tilde{u}(x_r)}$
    \EndIf
\EndFor 
\State \textbf{Return} $\left\{(x_r, \hat{D}(x_r)): \hat{D}(x_r) \neq \text{NULL}, \hat{D}(x_r) \in \left[\frac{0.9 \cdot \eps}{N}, \frac{1.1 \cdot \eps^{-1}}{N}\right]\right\}$
\EndProcedure
\end{algorithmic}
\end{algorithm}

\begin{algorithm}
\caption{Lemma \ref{Oracle}: Generates the oracle used in Subsection \ref{OracleGiven}}\label{OracleAlg}
\begin{algorithmic}[1]
\Procedure{Oracle}{$\eps, x, \hat{D}(x), z$} \Comment{\Call{Oracle}{$\eps, x, \hat{D}(x), z$} is the same procedure as $\mathcal{O}'(z)$. We assume that $\hat{D}(x) \in \left[\frac{5}{9N}, \frac{9}{5N}\right].$}
\State $\alpha = \Call{Compare}{z, x, 0.01}$
\If{$\alpha < 0.45$}
    \State \textbf{Return} $-1$
\ElsIf{$\alpha \le 2.2$}
    \State \textbf{Return} $0$
\Else
    \State \textbf{Return} $1$
\EndIf
\EndProcedure
\end{algorithmic}
\end{algorithm}

\begin{algorithm}
\caption{Lemma \ref{GivenGoodElt}: Solves tolerant uniformity in COND assuming we found an element $x$ with $\hat{D}(x) \approx \frac{1}{N}$.}\label{GivenGoodEltAlg}
\begin{algorithmic}[1]
\Procedure{GivenGoodElt}{$\eps, x, \hat{D}(x)$} \Comment{We assume that $\hat{D}(x) \in \left[\frac{5}{9N}, \frac{9}{5N}\right]$}
\State $K = O(\eps^{-2})$
\State $a, b, c, d, e = 0$
\State $\tilde{\gamma}_1 = 0$ \Comment{Must verify that $\gamma_1 \ge 10 \eps$ to use \Call{SingleElement}{} properly}
\For{$i = 1$ to $K$}
    \State $z \leftarrow \mathcal{D}$
    \If{$\Call{Oracle}{\eps, x, \hat{D}(x), z} = -1$}
        \State $a \leftarrow a + \frac{1}{K}$
    \ElsIf{$\Call{Oracle}{\eps, x, \hat{D}(x), z} = -1$}
        \State $c \leftarrow c + \frac{1}{K}$
    \Else
        \State $\tilde{\gamma}_1 \leftarrow \tilde{\gamma}_1 + \frac{1}{K}$
    \EndIf
    $y \leftarrow \mathcal{U}$
    \If{$\Call{Oracle}{\eps, x, \hat{D}(x), y} = -1$}
        \State $b \leftarrow b + \frac{1}{K}$
    \ElsIf{$\Call{Oracle}{\eps, x, \hat{D}(x), y} = -1$}
        \State $d \leftarrow d + \frac{1}{K}$
    \EndIf
\EndFor
\If{$\tilde{\gamma}_1 \ge 11 \cdot \eps$}
    \State $e = \Call{EstimateCloseTerms}{\eps, x, \hat{D}(x)}$ \Comment{If not, we know this value will be $O(\eps)$ so we can just have $e = 0$}
\EndIf
\State \textbf{Return} $0.5 \cdot (e + a-b-c+d)$

\EndProcedure
\end{algorithmic}
\end{algorithm}

\begin{algorithm}
\caption{Theorem \ref{TolerantUnif}: Main algorithm for tolerant uniformity testing in COND/PAIRCOND.}\label{TolerantUnifAlg}
\begin{algorithmic}[1]
\Procedure{TolerantUnif}{$\eps$}
\State $S = \Call{ConstantApprox}{\eps}$
\If{$S = \{\}$}
    \State \textbf{Return} $1$
\ElsIf{$\exists (x, \hat{D}(x)) \in S$ such that $\hat{D}(x) \in \left[\frac{5}{9N}, \frac{9}{5N}\right]$}
    \State \textbf{Return} \Call{GivenGoodElt}{$\eps, x, \hat{D}(x)$}
\Else
    \State $a, b = 0$
    \State $K = O(\eps^{-2})$
    \If{$\exists (x, \hat{D}(x)) \in S$ such that $\hat{D}(x) \ge \frac{1}{N}$}
        \For{$i = 1$ to $K$}
            \State $y \leftarrow \mathcal{U}$
            \If{$\Call{Compare}{y, x, 0.01} \ge 0.78$}
                \State $a \leftarrow a + \frac{1}{K}$
            \EndIf
        \EndFor
    \EndIf
    \If{$\exists (x, \hat{D}(x)) \in S$ such that $\hat{D}(x) \le \frac{1}{N}$}
        \For{$i = 1$ to $K$}
            \State $z \leftarrow \mathcal{D}$
            \If{$\Call{Compare}{z, x, 0.01} \le 1.27$}
                \State $b \leftarrow b + \frac{1}{K}$
            \EndIf
        \EndFor
    \EndIf
    \State \textbf{Return} $(1-a-b)$
\EndIf
\EndProcedure
\end{algorithmic}
\end{algorithm}

\newpage

\subsection{Algorithms for Section \ref{TolerantIdentity}}

In this subsection, we write the pseudocode for all algorithms in Section \ref{TolerantIdentity}, leading to the final algorithm for Theorem \ref{TolerantId}. We assume \Call{Cond}{} access to $\mathcal{D}$ and that we already know the size $N$. Recall that $\mathcal{D}$ is a distribution over $[N]$ and we are trying to determine $\TV(\mathcal{D}, \mathcal{D}^*)$ where $\mathcal{D}^*$ is a known distribution over $[N]$.

\begin{algorithm}
\caption{Lemma \ref{Est1}: Estimates $P([z])$ for given $z$.}\label{Est1Alg}
\begin{algorithmic}[1]
\Procedure{Est1}{$\eps, \delta, z, \mathcal{P}$} \Comment{$\mathcal{P}$ is a distribution over $[M]$ that we assume we can conditionally sample from.}
\State $R = O(\eps^{-1} \cdot \log \eps^{-1} \cdot \delta^{-2})$
\State $\tilde{P}([z]) = 0$
\For{$i = 1$ to $R$}
    \State $x_i \leftarrow \mathcal{P}$
    \If{$x_i \le z$}
        \State $\tilde{P}([z]) \leftarrow \tilde{P}([z]) + \frac{1}{R}.$
    \EndIf
\EndFor
\State \textbf{Return} $\tilde{P}([z])$
\EndProcedure
\end{algorithmic}
\end{algorithm}

\begin{algorithm}
\caption{Lemma \ref{Est2}: Finds some $j \le z$ along with an estimate $\tilde{Q}(j)$ of $Q(j)$.}\label{Est2Alg}
\begin{algorithmic}[1]
\Procedure{Est2}{$\eps, \delta, z, \mathcal{Q}$} \Comment{$\mathcal{Q}$ is a distribution over $[k]$ that we assume we can conditionally sample from.}
\State Run Algorithm \Call{ConstantApprox}{} with $\Call{Pcond}{}_\mathcal{Q}$ and $\Call{Samp}{}_{\mathcal{Q}}$ access: outputs a set $S$.
\If{$S = \{\}$}
    \State \textbf{Return} NULL
\EndIf
\State Else $S$ is nonempty: let $j$ such that $(j, \hat{Q}(j))$ is the first pair returned in $S$.
\State $R = O(\eps^{-2} \cdot \log^3 \eps^{-1} \cdot \delta^{-2})$
\State $\tilde{u}(j), \tilde{q}(j) = 0$
\For{$i = 1$ to $R$}
    \State $y_i \leftarrow \mathcal{U}$
    \If{$\Call{Compare}{}_{\mathcal{Q}}(y_i, j, 0.01) \in [0.98, 1.02]$}
        \State $\tilde{u}(j) \leftarrow \tilde{u}(j) + \frac{1}{R}.$
    \EndIf
    \State $x_i \leftarrow \mathcal{Q}$
    \State $X_i = 0$
    \If{$\Call{Compare}{}_{\mathcal{Q}}(x_i, j, 0.01) \in [0.98, 1.02]$}
        \State $ctr = 0$ \Comment{Counter to make sure we don't call COND more than $C \log \eps^{-1}$ times}
        \While{$\Call{Cond}{}_{\mathcal{Q}}(\{x_i, j\}) = j$ and $ctr \le O(\log \eps^{-1})$}
            \State $X_i \leftarrow X_i + 1$
            \State $ctr \leftarrow ctr+1$
        \EndWhile
    \EndIf
    \State $\tilde{q}(j) \leftarrow \tilde{q}(j) + \frac{X_i}{R}$ \Comment{$\tilde{q}_j$ is the average of the $X_i$'s}
\EndFor
\State \textbf{Return} $\left(j, \frac{\tilde{d}(j)}{k \cdot \tilde{u}(j)}\right)$
\EndProcedure
\end{algorithmic}
\end{algorithm}

\begin{algorithm}
\caption{Lemma \ref{Est3}: Estimates $Q(k)/Q(j)$.}\label{Est3Alg}
\begin{algorithmic}[1]
\Procedure{Est3}{$\eps, \delta, j, \mathcal{Q}$} \Comment{$\mathcal{Q}$ is a distribution over $[k]$ that we assume we can conditionally sample from.}
\State $R = O(\eps^{-2} \log \eps^{-1})$
\State $ctr = 0$
\For{$i = 1$ to $R$}
    \If{$\Call{Cond}{}_{\mathcal{Q}}(\{j, k\}) = k$}
        \State $ctr \leftarrow ctr + 1$
    \EndIf
\EndFor
\State $\alpha = \frac{ctr}{R-ctr}$
\If{$\alpha < 0.06 \eps^2$ or $\alpha > 18 \eps^{-2}$}
    \State \textbf{Return} $\alpha$
\Else \Comment{We will combine the $\alpha \ge 1$ and $\alpha < 1$ case here}
    \State $Y = 0$
    \State $T = O(\max(\alpha^{-1}, 1) \cdot \log^2 \eps^{-1} \cdot \delta^{-2})$
    \For{$t = 1$ to $T$}
        \State $X_t = 0$
        \While{$\Call{Cond}{}_{\mathcal{Q}}(\{j, k\}) = k$}
            \State $X_t \leftarrow X_t+1$
        \EndWhile
        \State $X_t = \min(X_t, O(\max(\alpha, 1) \cdot \log \eps^{-1}))$
    \EndFor
    \State $Y \leftarrow Y + \frac{X_t}{T}$ \Comment{$Y$ is the average of the $X_t$'s}
    \State \textbf{Return} $Y$  
\EndIf
\EndProcedure
\end{algorithmic}
\end{algorithm}

\begin{algorithm}
\caption{Lemma \ref{Est}: Estimates either $\frac{P(z)}{P^*(z)}$ or tells us that $\sum_{i \le z} \min(c_1 P(i), c_2 P^*(i))$ is small. Assumed that $\mathcal{P}, \mathcal{P}^*$ have support size $M$ with $P^*(1) \le P^*(2) \le \cdots \le P^*(M).$}\label{EstAlg}
\begin{algorithmic}[1]
\Procedure{Est}{$\eps, \delta, z, \mathcal{P}, \mathcal{P}^*, c_1, c_2$}
\State $\tilde{P}([z]) \leftarrow \Call{Est1}{\eps, \delta, z, \mathcal{P}}$
\If{$\tilde{P}(z) \le \frac{\eps}{c_1}$}
    \State \textbf{Return} $([z], 0)$
\Else
    \State Partition $[z]$ into sets $S_1, S_2, \dots, S_k$ such that $S_k = \{z\}$ and $\max P^*(S_i) \le 2 \min P^*(S_i).$ \Comment{Can be done with a simple greedy procedure}
    \State Let $\mathcal{Q}$ be the distribution over $[k]$ with $Q(i) = P(S_i)/P([z])$ \Comment{Easy to conditionally sample from $\mathcal{Q}$ if we can conditionally sample from $\mathcal{P}$}
    \State $(j, \tilde{Q}(j)) \leftarrow \Call{Est2}{\eps, \delta, z, \mathcal{Q}}$
    \If{$(j, \tilde{Q}(j)) = \text{NULL}$} \Comment{i.e., if Lemma \ref{Est2} returned NULL}
        \State \textbf{Return} $([z], 0)$
    \EndIf
    \State $Y \leftarrow \Call{Est3}{\eps, \delta, j, \mathcal{Q}}$
    \State \textbf{Return} $\frac{c_1}{P^*(z)} \cdot \tilde{P}([z]) \cdot \tilde{Q}(j) \cdot Y.$
\EndIf
\EndProcedure
\end{algorithmic}
\end{algorithm}

\begin{algorithm}
\caption{Theorem \ref{PartialDetermining}: Returns $(S, \beta)$, where $S$ is a subset with $P^*(S) \ge \frac{1}{3}$, and $\beta$ is an estimate for $\sum_{x \in S} \min(c_1 D(x), c_2 D^*(x)) = O(\eps^{-1} \log \eps^{-1})$. Assumed that $\mathcal{P}, \mathcal{P}^*$ have support size $M$ with $P^*(1) \le P^*(2) \le \cdots \le P^*(M).$}\label{PartialDeterminingAlg}
\begin{algorithmic}[1]
\Procedure{PartialDetermining}{$\eps, \mathcal{P}, \mathcal{P}^*, c_1, c_2$}
\State Let $L$ be the smallest integer such that $P^*([L]) \ge \frac{1}{2}$
\If{$P^*(L) \ge \frac{1}{3}$} \Comment{Deal with this edge case first}
    \State Compute an estimate $\tilde{P}(L) = P(L) \pm O(\eps)$ using $\tilde{O}(\eps^{-2})$ samples to $\mathcal{P}.$
    \State \textbf{Return} $(\{L\}, \min(c_1 \tilde{P}(L), c_2 P^*(L)).$
\EndIf
\State $T = \log_2 \eps^{-1} + O(1)$
\State $Y_1, \dots, Y_{T-1} = 0$ \Comment{$Y_t$ will be the average of the $Z_t$ estimates}
\For{$t = 1$ to $T-1$}
    \State $\delta = 2^{-t}/20$
    \State $S = O((\delta/\eps)^2 \cdot \log \eps^{-1})$
    \For{$s = 1$ to $S$}
        \State $z_s \leftarrow \mathcal{P}^*$
        \State $Z_s = 0$
        \If{$z_s \ge L$}
            \State $\BI_{-, t} = 0$ \Comment{Lines 15 to 24 are solely for determining $\BI_{-, t}(z)$}
            \For{$i = 1$ to $t$}
                \State $\delta' = 2^{-i}/20$
                \State $X = \Call{Est}{\eps, \delta', z_s, \mathcal{P}, \mathcal{P}^*, c_1, c_2}$
                \If{$X = ([z_s], 0)$}
                    \State \textbf{Return} $([z_s], 0)$ \Comment{Whenever Algorithm \Call{Est}{} returns $([z], 0)$, this algorithm can also automatically return $([z], 0).$}
                \EndIf
                \If{$i = t$ and $X < c_2 (1-2^{-i})$}
                    \State $\BI_{-, t} = 1$
                \ElsIf{$X \not\in [c_2 (1-2^{-i}), c_2 (1+2^{-i})]$}
                    \State \textbf{break} (out of the \textbf{for} loop starting at line 15)
                \EndIf
            \EndFor
            \If{$\BI_{-, t} = 1$}
                \State $X_s = \Call{Est}{\eps, \delta, z_s, \mathcal{P}, \mathcal{P}^*, c_1, c_2}$
                \If{$X_s = ([z_s], 0)$}
                    \State \textbf{Return} $([z_s], 0)$
                \EndIf
                \State $Z_s = c_2 - X_s$ \Comment{In all other cases, we set $Z_s = 0$}
            \EndIf
        \EndIf
        \State $Y_t \leftarrow Y_t + \frac{Z_s}{S}$
    \EndFor
    \State \textbf{Return} $([L:M], Y_1 + \dots + Y_{t-1})$.
\EndFor
\EndProcedure
\end{algorithmic}
\end{algorithm}

\begin{algorithm}
\caption{Theorem \ref{TolerantId}: Main algorithm for tolerant identity testing in COND}\label{TolerantIdAlg}
\begin{algorithmic}[1]
\Procedure{TolerantId}{$\eps$}
\State $\eps' = O(\eps/\log^2 \eps^{-1})$
\State $T_0 = [N]$.
\State $\tilde{c}_{1, 0}, c_{2, 0} = 1.$
\State $k, \gamma = 0$ \Comment{$\gamma$ will be our approximation to $1-\TV(\mathcal{D}, \mathcal{D}^*)$}
\While{$\tilde{c}_{1, k}, c_{2, k} \ge 2 \eps'$}
    \State $\mathcal{P}_k, \mathcal{P}^*_k$ are the distributions of $\mathcal{D}, \mathcal{D}^*$ conditioned on $T_k$. \Comment{We can conditionally sample from $\mathcal{P}_k$ assuming we can conditionally sample from $\mathcal{D}$. We treat $\mathcal{P}_k, \mathcal{P}^*_k$ as distributions over $[M_k]$, where $M_k= |T_k|$.}
    \State $(S, \beta) \leftarrow \Call{PartialDetermining}{\eps, \mathcal{P}, \mathcal{P}^*, \tilde{c}_{1, k}, c_{2, k}}$ 
    \State $\gamma \leftarrow \gamma+\beta$
    \State $T_{k+1} = T_k \backslash S$
    \State $c_{2, k+1} = D^*(T_{k+1})$
    \State Compute $\tilde{c}_{1, k+1} = D(T_{k+1}) \pm \eps'$ using $\tilde{O}(\eps'^{-2})$ samples from $\mathcal{D}$ and computing the fraction of samples that land in $T_{k+1}$.
    \State $k \leftarrow k+1$
\EndWhile
\State \textbf{Return} $1-\gamma$
\EndProcedure
\end{algorithmic}
\end{algorithm}

\newpage

\subsection{Algorithms for Section \ref{Monotonicity}}

In this subsection, we write the pseudocode for the algorithms of Lemma \ref{MonoPart2} and Theorem \ref{TestMonotone}. We assume \Call{Cond}{} access to $\mathcal{D}$ and that we already know the size $N$. Recall that $\mathcal{D}$ is a distribution over $[N]$ and we are trying to distinguish between $\mathcal{D}$ being monotone or $\eps$-far from monotone. We use the algorithm \Call{DistToFlat}{} of Lemma \ref{MonoPart1} as a black box.

\begin{algorithm}
\caption{Lemma \ref{MonoPart2}: Algorithm for testing exponential property}\label{ExpoTesterAlg}
\begin{algorithmic}[1]
\Procedure{ExpoTester}{$\eps$}
\State $\alpha = \eps/4$
\For{$t, b \ge 0,$ $2^{-t} = \Omega(\eps^2),$ $2^{-(a+b)} = \Omega(\eps^2/\log \eps^{-1})$}
    \State $\tau = 2^{-t}, \beta = 2^{-b}$
    \State $R = O(\beta^{-1} \log \eps^{-1})$
    \For{$r = 1$ to $R$}
        \State $i_r \leftarrow \mathcal{D}$
        \If{$i_r \ge 2$ and $\Call{Compare}{i_r, i_r-1, \frac{\tau}{3}} \ge 1 + \alpha + \frac{\tau}{2}$}
            \State \textbf{Return} REJECT
        \EndIf
    \EndFor
\EndFor
\State \textbf{Return} ACCEPT
\EndProcedure
\end{algorithmic}
\end{algorithm}

\begin{algorithm}
\caption{Theorem \ref{TestMonotone}: Main algorithm for testing monotonicity}\label{TestMonotoneAlg}
\begin{algorithmic}[1]
\Procedure{TestMonotone}{$\eps$}
\State $\alpha = \eps/4$
\State Compute Birg\'{e} Decomposition $\mathcal{I} = (I_1, I_2, \dots, I_\ell)$ of $[N]$
\State Compute $\hat{d} := \Call{DistToFlat}{\eps, \alpha}$ \Comment{approximation to $\TV(\mathcal{D}, \Phi_{\alpha}(\mathcal{D}))$}
\State Run \Call{ExpoTester}{} on reduced distribution $\mathcal{D}_{\alpha}^{\text{red}}$ by simulating conditional samples of $\mathcal{D}_{\alpha}^{\text{red}}$ with conditional samples from $\mathcal{D}$.
\If{$\hat{d} > \eps/2$ or \Call{ExpoTester}{} subroutine returns REJECT}
    \State \textbf{Return} REJECT
\EndIf
\State \textbf{Return} ACCEPT
\EndProcedure
\end{algorithmic}
\end{algorithm}

\newpage

\subsection{Algorithm for Section \ref{PcondIdentity}}

In this subsection, we write the pseudocode for Theorem \ref{IdPcondUpper}. 
We assume \Call{Samp}{} and \Call{Pcond}{} access to $\mathcal{D}$ and that we know the size $N$. Recall that $\mathcal{D}$ is a distribution over $[N]$ and we are trying to distinguish between $\mathcal{D} = \mathcal{D}^*$ and $\TV(\mathcal{D}, \mathcal{D}^*) \ge \eps$ where $\mathcal{D}^*$ is a known distribution over $[N]$.

\begin{algorithm}
\caption{Theorem \ref{IdPcondUpper}: Full algorithm for identity testing in PAIRCOND}\label{PcondIdAlg}
\begin{algorithmic}[1]
\Procedure{PcondId}{$\eps$}
\State $K = \log (10N/\eps)$
\For{$k = 1$ to $K$}
    \State $S_k = \{i: 2^{-k} < D_i^* \le 2 \cdot 2^{-k}\}$
\EndFor
\State $S_{K+1} = [N] \backslash \left(\bigcup_{i = 1}^{K} S_i\right)$
\State $\mathcal{S}$ is the distribution over $[K+1]$ where we sample $k \leftarrow \mathcal{S}$ if we sample $i \leftarrow \mathcal{D}$ and $i \in S_k$.
\State $\mathcal{S}^*$ is the distribution over $[K+1]$ where $\BP_{x \sim \mathcal{S}^*} (x = k) = \BP_{x \sim \mathcal{D}^*} (x \in S_k)$.
\State Use \cite{ValiantV17} to determine if $\mathcal{S} = \mathcal{S}^*$. Outputs ACCEPT with at least $9/10$ probability if $\mathcal{S} = \mathcal{S}^*$ and REJECT with at least $9/10$ probability if $\TV(\mathcal{S}, \mathcal{S}^*) \ge \eps.$
\If{[VV17] outputs REJECT}
    \State \textbf{Return} REJECT
\EndIf
\For{$a, b \ge 0,$ $2^{-a} = \Omega(\eps),$ $2^{-(a+b)} = \Omega(\eps/\log \eps^{-1})$}
    \State $\alpha = 2^{-a}, \beta = 2^{-b}$
    \State $R = O(\beta^{-1} \log \eps^{-1})$
    \For{$r = 1$ to $R$}
        \State $i_r \leftarrow \mathcal{D}$
        \State $k_r := $ set index such that $S_{k_r}$ contains $i_r$
        \State $j_r \leftarrow Unif\left[S_{k_r}\right]$
        \State Let $c$ be the approximation to $\frac{D(i_r)}{D(i_r)+D(j_r)}$ formed by making $O(\alpha^{-2} \log \eps^{-1})$ calls to $\Call{Pcond}{i_r, j_r}$
        \If{$\left|c - \frac{D^*(i_r)}{D^*(i_r)+D^*(j_r)}\right| \ge \frac{2\alpha}{3}$}
            \State \textbf{Return} REJECT
        \EndIf
    \EndFor
\EndFor
\State \textbf{Return} ACCEPT
\EndProcedure
\end{algorithmic}
\end{algorithm}

\end{document}